\def\papershowtoc{}
\def \eps {\varepsilon}
\def\Pr{\mathrm{Pr}}
\newcommand{\nrm}[1]{\left\lVert #1 \right\rVert}
\newcommand{\Oo}{\ensuremath{\mathcal{O}}}
\newcommand{\paren}[1]{\left( #1 \right)}
\def\Tr{\mathrm{Tr}}
\newcommand{\polylog}[1]{\mathrm{polylog}\paren{#1}}
\newtheorem{theorem}{Theorem} 
\newtheorem{definition}[theorem]{Definition}
\newtheorem{lemma}[theorem]{Lemma}
\begin{document}

\title{Implementing any Linear Combination of Unitaries on Intermediate-term Quantum Computers} 

\author{Shantanav Chakraborty}  
\email{shchakra@iiit.ac.in}
\affiliation{CQST and CSTAR, International Institute of Information Technology Hyderabad, Telangana, India} 
\begin{abstract}

We develop three new methods to implement any Linear Combination of Unitaries (LCU), a powerful quantum algorithmic tool with diverse applications. While the standard LCU procedure requires several ancilla qubits and sophisticated multi-qubit controlled operations, our methods consume significantly fewer quantum resources. The first method (\textit{Single-Ancilla LCU}) estimates expectation values of observables with respect to any quantum state prepared by an LCU procedure while requiring only a single ancilla qubit, and no multi-qubit controlled operations. The second approach (\textit{Analog LCU}) is a simple, physically motivated, continuous-time analogue of LCU, tailored to hybrid qubit-qumode systems. The third method (\textit{Ancilla-free LCU}) requires no ancilla qubit at all and is useful when we are interested in the projection of a quantum state (prepared by the LCU procedure) in some subspace of interest. We apply the first two techniques to develop new quantum algorithms for a wide range of practical problems, ranging from Hamiltonian simulation, ground state preparation and property estimation, and quantum linear systems. Remarkably, despite consuming fewer quantum resources they retain a provable quantum advantage. The third technique allows us to connect discrete and continuous-time quantum walks with their classical counterparts. It also unifies the recently developed optimal quantum spatial search algorithms in both these frameworks, and leads to the development of new ones that require fewer ancilla qubits. Overall, our results are quite generic and can be readily applied to other problems, even beyond those considered here.
\end{abstract}
\newpage
\ifdefined\papershowtoc
\newpage
\tableofcontents
\newpage
\fi
\section{Introduction}
\label{sec:introduction}
We are currently in an era of quantum computing where theoretical advancements have been accompanied by drastic improvements in experimental capabilities \cite{arute2019quantum,zhong2020quantum,campagne2020quantum,madsen2022quantum,acharya2023supressing}. With rapid progress being made, it is reasonable to envision a stage in the near future, where quantum computing will transition away from the NISQ era \cite{preskill2018quantum,bharti2022noisy}. Quantum devices available immediately after the current NISQ stage, will most likely not have the capabilities of a large-scale, fully-programmable, fault-tolerant quantum computer. These devices, known as \textit{early fault-tolerant quantum computers} \cite{campbell2021early,Faehrmann2022randomizingmulti,dong2022ground,
zhang2022computing,lin2022heisenberg,wang2022quantum}, would have a limited number of logical qubits (restricting the availability of ancilla qubits), and short depth circuits with little to no multi-qubit controlled gates. On the other hand, for particular quantum technological platforms, it might be possible to engineer certain specific interactions more precisely, and for longer time-scales than others. For instance, it might be easier to engineer hybrid qubit-qumode systems in the intermediate-term  \cite{wallraff2004strong,pirkkalainen2013hybrid,kurizki2015quantum, andersen2015hybrid,campagne2020quantum,gan2020hybrid,sawaya2020resource} as many of the most promising quantum technological platforms such as superconducting systems \cite{wallraff2004strong}, ion-traps \cite{leibfried2003quantum}, and photonic systems \cite{gottesman2001encoding}, naturally have access to continuous variables. We refer to such devices, which will become available shortly after the current stage, as ``intermediate-term quantum computers''.  

It is thus crucial to develop quantum algorithms of practical interest that are implementable on intermediate-term quantum computers. Indeed, quantum algorithms tailored to early fault-tolerant quantum computers are already being developed \cite{campbell2021early,Faehrmann2022randomizingmulti,dong2022ground,
zhang2022computing,lin2022heisenberg,wang2022quantum}. With many quantum technological platforms vying for supremacy, it is  also essential to develop physically motivated quantum algorithms that can exploit the degrees of freedom that are naturally available to such platforms.  

There are only a handful of quantum algorithmic frameworks that can be applied to a diverse range of problems. However, most of these are only implementable on fully-fault tolerant quantum computers, which might be decades away. Linear Combination of Unitaries (LCU) is one such paradigm. Over the years, it has been widely applied and has been central to the development of a plethora of useful quantum algorithms ranging from Hamiltonian simulation \cite{childs2012hamiltonian, berry2014exponential,berry2015hamiltonian,berry2015simulating}, quantum linear systems \cite{childs2017quantum,chakraborty2019power} and differential equations \cite{berry2017quantum,liu2021efficient,childs2021high}, quantum walks \cite{apers2018FF,ambainis2019quadratic,apers2022quadratic}, ground state preparation \cite{ge2019faster,keen2021quantum,he2022quantum} and a large-class of optimization problems \cite{chowdhury2017quantum,van2020quantum}. 

The wide applicability of this procedure stems from the generic settings in which it can be applied. Given a Hermitian matrix $H$, and an initial state $\rho_0$, the LCU procedure implements any function $f(H)$ that can be well-approximated by a linear combination of unitaries, i.e.\ $f(H)\approx \sum_{j} c_j U_j$. Specifically, it prepares the quantum state
\begin{equation}
\label{eq:lcu-state}
\rho=\dfrac{f(H)\rho_0 f(H)^{\dag}}{\Tr[f(H)\rho_0 f(H)^{\dag}]},
\end{equation}
using $U_j$, controlled over multiple ancilla qubits \footnote{In order to prepare $\rho$, several rounds of amplitude amplification is also required}. In fact, for most of the applications mentioned previously, the problem boils down to applying a specific $f(H)$, to some initial state. Despite its broad applicability, LCU has its drawbacks when it comes to being implementable in the intermediate-term. First, for many problems of interest, there is a significant overhead in terms of the number of ancilla qubits needed. Second, the procedure requires implementing a sequence of sophisticated multi-qubit controlled-unitary operations, which is challenging for intermediate-term quantum computers. Furthermore, simply preparing the quantum state $\rho$ is often not useful. In most practical scenarios, we are interested in estimating some property of $\rho$, such as the expectation value of some observable $O$, i.e.\ $\Tr[O\rho]$. Extracting useful information about $\rho$ either requires additional runs of the LCU procedure or leads to even deeper quantum circuits.

In this work, we significantly enhance the applicability of the LCU framework to intermediate-term quantum devices. We develop three approaches that seek to reduce the resource required to implement any LCU. These methods either prepare the state $\rho$, or help extract useful information from it. Being considerably simpler than the standard LCU procedure, they are suitable for implementation using intermediate-term devices such as early fault-tolerant quantum computers and hybrid qubit-qumode systems. We apply each of them to develop quantum algorithms of practical interest.

Firstly, we develop a randomized quantum algorithm that estimates properties of the state $\rho$, prepared by any LCU procedure. More precisely, for any observable $O$, our algorithm estimates the quantity $\Tr[O\rho]$, to arbitrary accuracy. This technique, which we refer to as \textit{Single-Ancilla LCU}, requires only one ancilla qubit that acts as a control, and implements two (controlled) unitaries sampled according to the distribution of the LCU coefficients, followed by a single-shot measurement. By repeatedly running this simple short-depth quantum circuit, one obtains samples whose average converges to the expectation value we seek to estimate. Our procedure is suitable for early fault-tolerant quantum devices as it can implement any LCU (a) using only a single ancilla qubit, and (b) no multi-qubit controlled gates. In contrast, as mentioned previously, the standard LCU method requires several ancilla qubits and a series of sophisticated (multi-qubit) controlled operations. We rigorously compare the cost of implementing \textit{Single-Ancilla LCU} with the generic LCU procedure, and show that each coherent run of our method costs less, while requiring more classical runs. Furthermore, we apply our method to develop novel quantum algorithms for Hamiltonian simulation, estimating the properties of ground states of Hamiltonians, as well as quantum linear systems. 

Secondly, we develop \textit{Analog LCU}, a physically motivated, continuous-time analogue for implementing a linear combination of unitaries. This technique requires coupling the system Hamiltonian $H$ to a continuous-variable ancilla system (such as a one-dimensional quantum Harmonic oscillator), initialized in some easy-to-prepare continuous-variable quantum state (such as a Gaussian). The overall system is then evolved according to the resulting interaction Hamiltonian. Although this approach requires a continuous-variable ancilla register, the overall algorithm is considerably simpler than the standard LCU procedure. Moreover, this technique might be particularly useful for intermediate-term quantum computers (e.g. hybrid qubit-qumode systems) as such interactions can already be engineered on several quantum technological platforms. Examples of discrete systems coupled to continuous-variable ones include ion traps and superconducting systems \cite{wallraff2004strong,pirkkalainen2013hybrid,andersen2015hybrid, kurizki2015quantum,gan2020hybrid}. We show that this approach can be used to develop novel analog quantum algorithms for ground state preparation and solving quantum linear systems.

Suppose for a specific problem, we are interested in the projection of the LCU state $f(H)\ket{\psi_0}$ in some subspace, and it suffices to ensure that the measurement is successful, on average. In such scenarios, we show that the ancilla registers can be dropped entirely. We call this the \textit{Ancilla-free LCU} technique. This approach involves randomly sampling the unitaries $U_j$ according to the distribution of the LCU coefficients $c_j/\norm{c}_1$ without any ancilla registers. On average, this prepares some density matrix for which the projection in this subspace can be proven to be at least as large. This scenario arises in the context of quantum spatial search algorithms: the problem of finding an element in a marked subset of nodes of any ergodic, reversible, Markov chain. Indeed, the goal is to prove a generic quadratic speedup over classical random walks, for which, the expected number of steps to solve this problem is known as the hitting time ($HT$). This problem has only recently been resolved using generic LCU \cite{ambainis2019quadratic}, following a long line of works that provided speedups in particular cases \cite{szegedy2004quantum, magniez2007search, krovi2016quantum}. Consequently, we use \textit{Ancilla-free LCU} to design optimal spatial search algorithms, also placing recent results \cite{ambainis2019quadratic, apers2019unified, apers2022quadratic} within this framework, along with developing new ones. As compared to quantum spatial search algorithms using \textit{Standard LCU}, our methods achieve the same generic quadratic speedup while requiring $O(\log HT)$ fewer ancilla qubits. 

In addition to providing a unified framework for quantum spatial search, \textit{Ancilla-free LCU} also allows us to establish a relationship between the different frameworks of classical random walks and quantum walks. Finally, in order to complete the picture, we also establish a connection between discrete and continuous-time quantum walks by using the frameworks of block encoding \cite{low2019hamiltonian, chakraborty2019power} and quantum singular value transformation (QSVT) \cite{gilyen2018quantum, gilyen2019quantum,martyn2021grand}.

The paper is organized as follows. In the rest of this section, we provide a brief overview of the main results in Sec.~\ref{sec:contributions}, and also relate our work to prior results in Sec.~\ref{subsec:prior-work}. In Sec.~\ref{sec:preliminaries}, we review some basic definitions and techniques that we will be using in this article. We formally describe the three different approaches to implementing LCU in Sec.~\ref{sec:new-approaches-LCU}. The rest of the article involves applying these techniques to develop new quantum algorithms. In Sec.~\ref{sec:ham-sim}, we apply the \textit{Single-Ancilla LCU} method to develop a novel quantum algorithm for Hamiltonian simulation. In Sec.~\ref{sec:gsp}, we make use of our techniques to develop new quantum algorithms for ground state preparation of Hamiltonians (Sec.~\ref{subsec:analog-gsp}) and also for ground state property estimation (Sec.~\ref{subsec:gsp-single-ancilla}). In Sec.~\ref{sec:qls}, we develop novel analog quantum linear systems algorithms, tailored to hybrid qubit-qumode systems (Sec.~\ref{subsec:analog-qls}) and also use \textit{Single-Ancilla LCU} to estimate expectation values with respect to the solution of quantum linear systems (Sec.~\ref{subsec:single-ancilla-qls}). In Sec.~\ref{sec:quantum-random-walks}, we apply \textit{Ancilla-free LCU} we establish a relationship between different frameworks of classical and quantum walks by developing optimal quantum spatial search algorithms that reduce the number of ancilla qubits needed, also placing recently developed algorithms within this framework. Finally, we conclude and discuss possible future research directions in Sec.~\ref{sec:discussion}.
\subsection{Summary of our results}
\label{sec:contributions}
In this section, we state the main results of this article. We begin by briefly outlining each of the three variants of implementing LCU and the applications we consider. We summarize them in Fig.~\ref{fig:summary-relationship}.

\begin{figure}[h!]
\centering
\includegraphics[scale=0.47]{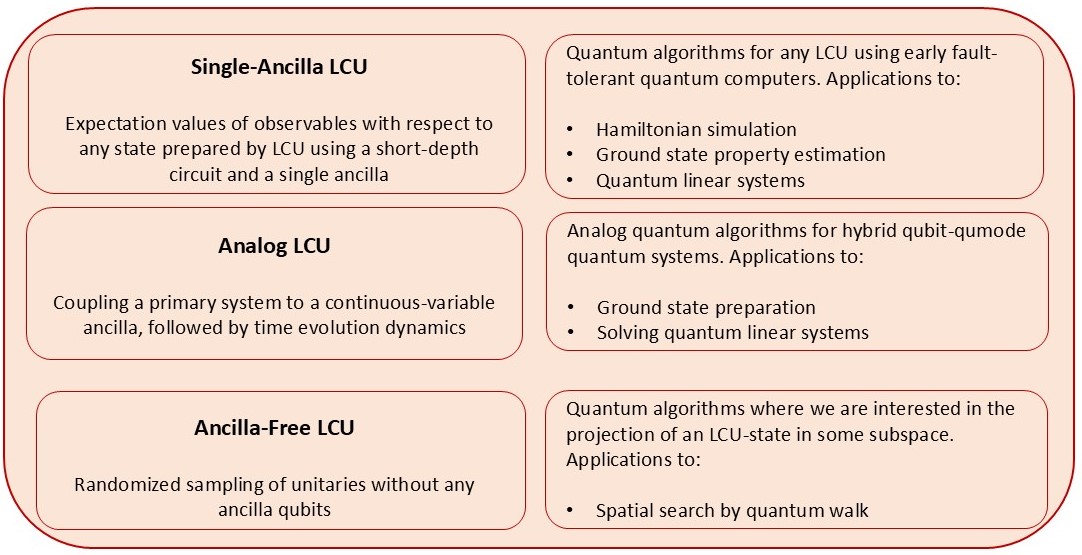}
\caption{\textit{Summary of the main results -- The three approaches to LCU and their applications}}
\label{fig:summary-relationship}
\end{figure}
\subsubsection{Single-Ancilla LCU: Estimating expectation values of observables}
\label{subsec:intro-single-ancilla}
Given any initial state $\rho_0$, and Hamiltonian $H$, we develop a randomized quantum algorithm that estimates the expectation value
\begin{equation}
\label{eq:expectation-lcu-state}
\Tr[O\rho]=\dfrac{\Tr[O~f(H)\rho_0f(H)^{\dag}]}{\Tr[f(H)\rho_0 f(H)^{\dag}]},
\end{equation} 
to arbitrary accuracy, for any function $f$ that can be approximated by a linear combination of unitaries, i.e.\ $f(H)\approx \sum_j c_j U_j$. For this task, the standard LCU procedure requires several ancilla qubits, and sophisticated (multi-qubit) controlled unitaries. In contrast, for quantum algorithms tailored to early fault-tolerant quantum devices, both the number of ancilla qubits, as well as the number of multi-qubit controlled operations should be as low as possible. Given these restrictions, the priority of such algorithms is to reduce the cost of each coherent run, even if this results in an increase in the number of classical repetitions required (and hence, the overall cost). As quantum coherence need not be maintained across multiple runs, this leads to having multiple repetitions of a low-cost quantum circuit (in terms circuit/gate depth), which is preferred in the intermediate regime.

The \textit{Single-Ancilla LCU} technique estimates the expectation value in Eq.~\eqref{eq:expectation-lcu-state} while satisfying the aforementioned features: (a) it uses only a single ancilla qubit, (b) requires no multi-qubit controlled operations, and (c) despite restrictions (a) and (b), the cost of each coherent run is lower than the \textit{Standard LCU} procedure. However, it requires more classical repetitions as compared to the generic LCU technique, and hence has a higher overall cost. To this end, we develop a randomized quantum algorithm makes use of the quantum circuit of Faerhmann et al. \cite{Faehrmann2022randomizingmulti} (shown in Fig.~\ref{fig:single-ancilla-circuit}), wherein the authors used it to generate randomized multi-product formulas. For any $f(H)$ that can be approximated by an LCU, our method repeatedly samples from this circuit to estimate the numerator of Eq.~\eqref{eq:expectation-lcu-state}. Note that the denominator is apriori unknown. We show that the knowledge of any rudimentary lower bound $\ell_*$ of this quantity allows us to leverage the same algorithm to estimate it \footnote{Apriori knowledge of $\ell_*$ is application specific. For instance, the minimum eigenvalue of $f(H)$ can be such a lower bound. Note that this information is also needed in case of \textit{Standard LCU}}. Overall, our algorithm separately estimates both the numerator as well as the denominator and we rigorously calculate the accuracy with both these quantities need to be estimated so that their ratio is $\varepsilon$-close the expectation value we seek to estimate. Our overall procedure, and its and its correctness has been proven in detail in Sec.~\ref{subsec:one-ancilla-LCU}. Here, we state the general result informally. 
~\\
\begin{figure}[h!]
\centering
\begin{quantikz}
\lstick{$\ket{+}$} & \qw &\ctrl{1} & \octrl{1} & \gate[2]{X\otimes O} \\
\lstick{$\rho_0$}    & \qwbundle{} &\gate{V_1} & \gate{V_2} &
\end{quantikz}
\caption{\small\textit{Quantum circuit corresponding for the \textit{Single-Ancilla LCU} procedure. For $f(H)\approx \sum_{j}c_j U_j$, repeated runs of this short-depth quantum circuit can estimate $\Tr[O~f(H)\rho_0 f(H)^{\dag}]/\Tr[f(H)\rho_0 f(H)^{\dag}]$, to arbitrary accuracy. For this, $V_1$ and $V_2$ are sampled at random according to $\mathcal{D}\sim \{c_j/\norm{c}_1, U_j\}$. Each run of the circuit outputs a random variable corresponding to the outcome of the measurement of the observable $X\otimes O$. Overall we need to repeat this circuit $T$ times, such that the sample mean of the $T$ observations to converge to the desired estimate.}}
\label{fig:single-ancilla-circuit}
\end{figure}
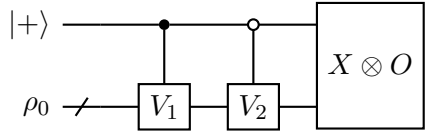
~\\

\begin{theorem}[Informal]
\label{thm:informal-randomized-time-evolution}
Let $O$ be some observable and $\rho_0$ be some initial state, prepared in cost $\tau_{\rho_0}$. Suppose there exists a Hermitian matrix $H\in\mathbb{C}^{N\times N}$, and a function $f:[-1,1]\mapsto\mathbb{R}$ such that $f(H)\approx \sum_{j} c_j U_j$, where $\norm{c}_1=\sum_{j}|c_j|$, and each $U_j$ is implemented with cost $\tau_j$. Define $\langle\tau\rangle=\sum_{j}c_j\tau_j/\norm{c}_1$. Also, suppose we know some $\ell_*$ such that $\ell^2=\Tr[f(H)\rho_0 f(H)^{\dag}]\geq \ell_*$. Then there exists a procedure that outputs $\mu$ and $\tilde{\ell}$ such that
$$
\left|\mu/\tilde{\ell} - \Tr[O\rho]\right| \leq \varepsilon,
$$
with a constant probability, using only one ancilla qubit and 
$$
T=O\left(\dfrac{\norm{O}^2 \norm{c}_1^4}{\varepsilon^2\ell_*^2}\right) 
$$  
repetitions of the quantum circuit in Fig.~\ref{fig:single-ancilla-circuit}, where the average cost of each such run is $2\langle \tau\rangle+\tau_{\rho_0}$. 
\end{theorem}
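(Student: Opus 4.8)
The plan is to turn the circuit of Fig.~\ref{fig:single-ancilla-circuit} into an unbiased estimator for the numerator $\Tr[O\,f(H)\rho_0 f(H)^{\dag}]$ of Eq.~\eqref{eq:expectation-lcu-state}, repeat the construction with $O$ replaced by the identity to estimate the denominator $\ell^2=\Tr[f(H)\rho_0 f(H)^{\dag}]$, and finally combine the two estimates by error propagation. First I would work out the action of the circuit: starting from $\ket{+}\!\bra{+}\otimes\rho_0$ and applying the controlled operation $\ket{0}\!\bra{0}\otimes V_2+\ket{1}\!\bra{1}\otimes V_1$, the ancilla-plus-system state has off-diagonal (in the ancilla) blocks $\tfrac12 V_1\rho_0 V_2^{\dag}$ and $\tfrac12 V_2\rho_0 V_1^{\dag}$, so the expectation value of the measured observable $X\otimes O$ conditioned on the choice of $V_1,V_2$ equals $\Re\,\Tr[O\,V_1\rho_0 V_2^{\dag}]$ (here I use that $O$ and $\rho_0$ are Hermitian). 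Since the eigenvalues of $X\otimes O$ lie in $[-\norm{O},\norm{O}]$, each single-shot outcome is a bounded random variable.

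Next I would introduce the randomness over $\mathcal{D}$: draw $V_1=U_j$ and $V_2=U_k$ independently with probabilities $|c_j|/\norm{c}_1$ and $|c_k|/\norm{c}_1$, and set $Y=\norm{c}_1^2\,\mathrm{sgn}(c_j c_k)\times(\text{single-shot outcome of }X\otimes O)$. Taking expectations and using that $f(H)=\sum_j c_j U_j$ is Hermitian (so the relevant trace is real), one finds $\Exp[Y]=\sum_{j,k} c_j c_k\,\Tr[O\,U_j\rho_0 U_k^{\dag}]=\Tr[O\,f(H)\rho_0 f(H)^{\dag}]$, i.e.\ $Y$ is unbiased for the numerator; replacing $f(H)$ by its LCU approximant contributes only an error controlled by $\norm{f(H)-\sum_j c_j U_j}$, which I fold into $\varepsilon$. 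Because $|Y|\le\norm{c}_1^2\norm{O}$ we get $\mathrm{Var}[Y]\le\norm{c}_1^4\norm{O}^2$, so the empirical mean $\mu$ of $T_1$ i.i.d.\ copies of $Y$ obeys $|\mu-\Tr[O f(H)\rho_0 f(H)^{\dag}]|\le\varepsilon_1$ with probability $\ge 7/8$, say, as soon as $T_1=O(\norm{c}_1^4\norm{O}^2/\varepsilon_1^2)$, by Chebyshev's inequality. Running the identical circuit with $O$ replaced by $I$ (and using $\norm{I}=1$) yields, from $T_2=O(\norm{c}_1^4/\varepsilon_2^2)$ samples, an estimate $\tilde\ell$ of $\ell^2$ with $|\tilde\ell-\ell^2|\le\varepsilon_2$; invoking the promised bound $\ell^2\ge\ell_*$ and taking $\varepsilon_2<\ell_*/2$ (clipping $\tilde\ell$ up to $\ell_*/2$ if necessary) guarantees $\tilde\ell\ge\ell_*/2>0$, so the quotient $\mu/\tilde\ell$ is well defined.

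It remains to propagate the two errors through the quotient. Writing $\mu=\text{num}+\delta_1$ and $\tilde\ell=\text{den}+\delta_2$ with $|\delta_1|\le\varepsilon_1$, $|\delta_2|\le\varepsilon_2$, a short computation gives $\mu/\tilde\ell-\text{num}/\text{den}=(\delta_1\,\text{den}-\text{num}\,\delta_2)/(\text{den}\cdot\tilde\ell)$, whence $|\mu/\tilde\ell-\text{num}/\text{den}|\le 2|\delta_1|/\ell_*+2\norm{O}|\delta_2|/\ell_*$, using $\text{den}=\ell^2\ge\ell_*$, $\tilde\ell\ge\ell_*/2$, and the bound $|\text{num}|=|\Tr[O f(H)\rho_0 f(H)^{\dag}]|\le\norm{O}\,\text{den}$ which holds because $f(H)\rho_0 f(H)^{\dag}$ is positive semidefinite. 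Hence it suffices to take $\varepsilon_1=\Theta(\varepsilon\ell_*)$ and $\varepsilon_2=\Theta(\varepsilon\ell_*/\norm{O})$ (note $\varepsilon_2<\ell_*/2$ automatically when $\varepsilon\lesssim\norm{O}$, and $\varepsilon>\norm{O}$ is the trivial regime since $|\Tr[O\rho]|\le\norm{O}$), which makes both $T_1$ and $T_2$ of order $\norm{O}^2\norm{c}_1^4/(\varepsilon^2\ell_*^2)$; a union bound over the two estimation stages then gives overall success with constant probability using $T=T_1+T_2$ repetitions, as claimed. The per-run cost is accounted for by observing that each execution applies one (controlled) copy of $V_1$, one of $V_2$, and one preparation of $\rho_0$; averaging over the sampling, the expected cost of each of $V_1,V_2$ is $\sum_j(|c_j|/\norm{c}_1)\tau_j=\langle\tau\rangle$, for a total of $2\langle\tau\rangle+\tau_{\rho_0}$.

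The part I expect to be the main obstacle is the error bookkeeping in the last step done rigorously rather than via a Taylor hand-wave: one must argue that $\tilde\ell$ stays bounded below throughout (which is why the a priori lower bound $\ell_*$ is essential), treat the non-small-$\varepsilon$ regime, and simultaneously track the LCU-truncation error so that all contributions -- numerator noise, denominator noise, and approximation error -- add up to at most $\varepsilon$. A secondary technicality is the complex-coefficient case: if the $c_j$ are not real, the single $X\otimes O$ measurement must be supplemented by a $Y\otimes O$ measurement (the real and imaginary Hadamard tests) so that $Y$ remains unbiased for $\Tr[O f(H)\rho_0 f(H)^{\dag}]$, with the variance and sample-complexity bounds unchanged up to constants.
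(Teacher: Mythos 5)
Your proposal is correct and follows essentially the same route as the paper: the same Hadamard-test circuit yielding an unbiased estimator $\|c\|_1^2\cdot(\text{outcome of }X\otimes O)$ for the numerator, a second run with $O=I$ for the denominator, a concentration bound giving $T=O(\|O\|^2\|c\|_1^4/(\varepsilon^2\ell_*^2))$, and error propagation through the quotient using the a priori lower bound $\ell_*$ (cf.\ Theorems~\ref{thm:randomized-time-evolution}, \ref{thm:norm-approx} and \ref{thm:single-ancilla-overall}). The only cosmetic differences are your use of Chebyshev in place of Hoeffding (sufficient for the constant-probability claim), carrying signs via $\mathrm{sgn}(c_jc_k)$ rather than absorbing them into the unitaries, and a direct algebraic bound on the quotient error where the paper expands $1/\tilde{\ell}$ as a geometric series.
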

~\\~\\
We compare the performance of \textit{Single-Ancilla LCU} with \textit{Standard LCU} in detail (See Sec.~\ref{subsec:one-ancilla-LCU} and Table \ref{table:comparison-standard-LCU}). We show that if implementing each $U_j$ costs $\tau_j$, and preparing the initial state $\rho_0$ costs $\tau_{\rho_0}$, the average cost of each coherent run of our algorithm is $2\langle \tau \rangle + \tau_{\rho_0}$. On the other hand, \textit{Standard LCU} requires implementing a \textit{prepare} gate $R$ and a multi-qubit controlled \textit{select} unitary $Q$, requiring cost $\tau_R$ and $\tau_Q$ respectively. So the total cost is $O(\tau_R+\tau_Q+\tau_{\rho_0})$. Then just $\tau_Q$ is at least as high as $\langle \tau \rangle$. Thus, despite requiring a solitary ancilla qubit and no multi-qubit controlled operations, the cost of each coherent run of our algorithm is lower than \textit{Standard LCU}, given both procedures implement $U_j$ with the same cost. However, the number of classical repetitions (and hence, the overall cost) required for \textit{Standard LCU} is lower. Moreover, being suitable for fully fault-tolerant quantum computers, \textit{Standard LCU} can also leverage procedures such as quantum amplitude amplification and estimation \cite{brassard2002quantum} to estimate $\Tr[O\rho]$ coherently, which further reduces the overall cost by requiring fewer classical repetitions, while increasing the cost of each coherent run substantially. However, procedures such as quantum amplitude amplification and estimation are too involved to be implemented in the intermediate-term. Next, we discuss the applications of our method.
~\\~\\
\textbf{Applications:~} We apply \textit{Single-Ancilla LCU} to several problems of practical interest such as Hamiltonian simulation, ground state property estimation and quantum linear systems. Throughout the article, for each of the applications of \textit{Single-Ancilla LCU}, we calculate $T$, the number
 of repetitions needed of the circuit in Fig.~\ref{fig:single-ancilla-circuit} as well as an upper bound on the cost of each coherent run, given by $\tau_{\max}$. The overall cost would then be the product of these quantities that is $O(\tau_{\max}\cdot T)$. 

For all our applications, we consider some Hamiltonian $H$, and implement a specific $f(H)\approx \sum_{j}c_j U_j$. However, depending on the application, how we can access $H$ (input model) varies. This also determines what the costs, $\tau_{\max}$ and $T$ characterize. Before moving on to the specific applications, we discuss them briefly:
~\\
\begin{itemize}
\item For Hamiltonian simulation, we assume $H$ is a linear combination of strings of Pauli matrices, which can be accessed and implemented directly. We characterize the cost in terms of the gate depth per coherent run ($\tau_{\max}$) as well as the overall gate depth $O(T\cdot \tau_{\max})$.

\item For both ground state property estimation as well as quantum linear systems, we assume that we can access the Hamiltonian $H$ through the time evolution operator $U_{t}=\exp[-itH]$. Furthermore, given access to $U_{t}$, we can perform the time evolution controlled on a single ancilla qubit. This is referred to as the Hamiltonian evolution (HE) model as has been used in prior results specific to ground energy estimation using early fault-tolerant quantum computers \cite{lin2022heisenberg,zhang2022computing,wang2022quantum,dong2022ground}. Much like these works, we calculate: (a) the maximal time of evolution of $H$ (controlled by a single ancilla qubit) required in each coherent run, which will be denoted by $\tau_{\max}$, and (b) the total number of repetitions of the circuit $T$. The total evolution time is then $O(\tau_{\max}\cdot T)$. Note that $\tau_{\max}$ is different from the actual circuit depth. In fact, this relationship depends on how the time evolution is performed. If $U_{t}$ can be performed exactly in time $O(t)$, then the circuit depth scales linearly with the maximal evolution time. However, if $U_t$ is implemented by a Hamiltonian simulation algorithm, then the circuit depth depends on the particular choice of the algorithm. Recall that in the early fault-tolerant regime, we are limited by a small ancilla qubit space and the inability to perform multi-qubit controlled operations. This restricts the choice for the underlying simulation algorithm. In Appendix \ref{sec-appendix:single-ancilla-LCU}, we indeed characterize the complexities of both our algorithms in terms of the circuit depth (more precisely, the gate depth) by choosing particular Hamiltonian simulation algorithms that are suited to this regime.
\end{itemize}

Having discussed the various access models we consider, we move on to the specific applications to which we apply \textit{Single-Ancilla LCU}:

\begin{itemize}
\item[(a)] \textbf{Hamiltonian simulation:~} Consider any Hamiltonian $H$ such that it is expressed as a linear combination of strings of Pauli operators $P_k$. That is, $H=\sum_{k=1}^{L} \lambda_k P_k$, such that $\beta=\sum_{k=1}^{L}|\lambda_k|$. If $\rho_0$ is some initial state prepared in cost $\tau_{\rho_0}$, then our randomized quantum algorithm outputs a parameter $\mu$, with probability at least $1-\delta$, such that 
$$\left|\mu-\Tr[O~e^{-iHt}\rho_0 e^{iHt}]\right|\leq \varepsilon,$$ 
for any measurable observable $O$. Each coherent run of our algorithm makes use of the quantum circuit in Fig.~\ref{fig:single-ancilla-circuit} which uses only a single ancilla qubit. The gate depth is at most $\tau_{\rho_0}+2\tau_{\max}$, where
$$
\tau_{\max}=O\left(\beta^2t^2\dfrac{\log(\beta t\norm{O}/\varepsilon)}{\log\log(\beta t\norm{O}/\varepsilon)}\right).
$$ 
Overall,
$$
T=O\left(\dfrac{\norm{O}^2\ln(1/\delta)}{\varepsilon^2}\right)
$$ 
classical repetitions of this quantum circuit is needed. For this, we decompose $f(H)=e^{-itH}$ as a linear combination of unitaries using ideas from the truncated Taylor series approach \cite{berry2015simulating}, as well as from \cite{wang2022quantum}. We describe our method in detail in Sec.~\ref{sec:ham-sim}. The overall gate depth is given by $T\cdot\tau_{\max}=\widetilde{O}\left(\beta^2t^2\|O\|^2/\varepsilon^2\right)$.

Our method outperforms all first order Trotter methods \cite{childs2021theory} and their randomized variants \cite{childs2018toward, zhao2022hamiltonian}, requiring an exponentially shorter gate depth per coherent run (in terms of the precision $1/\varepsilon$), as well as a shorter overall gate depth. Our algorithm also uses fewer ancilla qubits than the Truncated Taylor series method \cite{berry2015simulating}, which additionally makes multiple uses of complicated subroutines such as oblivious amplitude amplification. The gate depth per coherent run of this algorithm is quadratically better than our method in terms of $\beta$ and $t$, but has a linear dependence on $O(L)$, which implies that there are Hamiltonians (satisfying $\beta\ll L$) for which our method provides an advantage. The state-of-the-art Hamiltonian simulation procedure by Low and Chuang \cite{low2019hamiltonian} is optimal in terms of the number of queries to a block encoding of $H$. However, the construction of the block encoding requires $O(\log L)$ ancilla qubits and multi-qubit controlled operations, and hence is not implementable in the early fault-tolerant regime. This also adds an overhead of $O(L)$ to the gate depth per coherent run. We compare our algorithm rigorously with other methods in Sec.~\ref{sec:ham-sim}. A comparison of the complexities are summarized in Table \ref{table:comparison-ham-sim}.
~\\
\item[(b)] \textbf{Ground state property estimation:~} For any Hamiltonian $H$ with unknown ground state $\ket{v_0}$, and any measurable observable $O$, we provide a randomized quantum algorithm that outputs an $\epsilon$-additive accurate estimate of the expectation value of $O$ with respect to $\ket{v_0}$, i.e.\ $\braket{v_0|O|v_0}$. For this, we make the following standard assumptions: (i) a lower bound on spectral gap of $H$ is known (say $\Delta$), (ii) an initial state $\ket{\psi_0}$ having overlap of at least $\eta$ with $\ket{v_0}$, can be prepared in cost $\tau_{\psi_0}$, (iii) the ground energy of $H$ is known to precision $O(\Delta\cdot(\log (\varepsilon^{-1}\eta^{-1}))^{-1/2})$. 
Our algorithm involves expressing the function $f(H)=e^{-tH^2}$ as a LCU: we show $f(H)=\sum_{j} c_j e^{-ij\sqrt{2t}H}$. For a judiciously chosen value of $t$, $f(H)\ket{\psi_0}$, has the effect of preserving the component of $\ket{\psi_0}$ in the direction of $\ket{v_0}$ while exponentially suppressing all other components that are orthogonal to it, resulting in a state that is close to $\ket{v_0}$. In the Hamiltonian evolution model ($H$ is accessed via the time evolution operator), we show that the \textit{Single-Ancilla LCU} algorithm estimates $\braket{v_0|O|v_0}$ with additive accuracy $\varepsilon$, with probability at least $(1-\delta)^2$, using
$$
T= O\left(\dfrac{\norm{O}^2\ln(1/\delta)}{\varepsilon^2\eta^4}\right),
$$ 
repetitions of the quantum circuit in Fig.~\ref{fig:single-ancilla-circuit} and only a single ancilla qubit. The maximal time evolution of $H$ in any coherent run is at most $2\tau_{\max}+\tau_{\psi_0}$, where
$$
\tau_{\max}=O\left(\dfrac{1}{\Delta}\log\left(\dfrac{\norm{O}}{\varepsilon\eta}\right)\right).
$$ 
The total evolution time is then $O(T.\tau_{\max})=\widetilde{O}(\Delta^{-1}\eta^{-4}\|O\|^2/\varepsilon^2)$. The overall method and its correctness has been formally stated via Theorem \ref{thm:ground-state-estimation}. We also compare our method with other algorithms in detail (See Table \ref{table:comparison-gsp}). The \textit{Standard LCU} procedure \cite{ge2019faster,keen2021quantum} requires $O(\log(1/\Delta)+\log(~\norm{O}\eta^{-1}\varepsilon^{-1}))$ ancilla qubits
and sophisticated multi-qubit controlled operations. Despite this, in the Hamiltonian Evolution access model, the maximal time evolution of $H$, for \textit{Standard LCU} is never better than our method. However, \textit{Standard LCU} requires fewer classical runs (by a factor of $1/\eta^2$) and hence has a lower total evolution time. It can also make use of involved subroutines such as quantum amplitude amplification and estimation to further lower the total evolution time. However these methods substantially increase the maximal time of evolution of $H$. Furthermore, it also increases the number of ancilla qubits needed. Our algorithm also compares favourably to other early fault-tolerant quantum algorithms \cite{zhang2022computing, dong2022ground}. 
~\\
\item[(c)]~ \textbf{Quantum linear systems:~} Suppose we have a Hermitian matrix $H$ such that its eigenvalues in $[-1, -1/\kappa]\cup [1/\kappa, 1]$, such that $\kappa$ is an upper bound on the ratio between the maximum and minimum eigenvalue of $H$ (condition number). Let us assume that the initial quantum state $\ket{b}$ can be prepared in cost $\tau_b$. Then, we show use \textit{Single-Ancilla LCU} to estimate $\braket{x|O|x}$ up to $\varepsilon$ additive accuracy, with probability at least $(1-\delta)^2$, using
$$
T=O\left(\dfrac{\norm{O}^2\kappa^4\log^2\left(\dfrac{\norm{O}\kappa}{\varepsilon}\right)\ln(1/\delta)}{\varepsilon^2}\right),
$$ 
repetitions of the quantum circuit in Fig.~\ref{fig:single-ancilla-circuit} and only one ancilla qubit. The maximal time evolution of $H$ is at most $2\tau_{\max}+\tau_b$, where
$$
\tau_{\max}=O\left(\kappa\log\left(\dfrac{\norm{O}\kappa}{\varepsilon}\right)\right).
$$
The total evolution time is given by $\widetilde{O}(\kappa^5\|O\|^2/\varepsilon^2)$. This approach makes use of the LCU decomposition of $f(H)=H^{-1}$ in Ref.~\cite{childs2017quantum}. We analyze the correctness of our method via Theorem \ref{thm:inversion-estimation-2}. The Childs, Kothari and Somma algorithm \cite{childs2017quantum}, which makes use of \textit{Standard LCU} requires $O(\kappa\log(\kappa\norm{O})/\varepsilon)$ ancilla qubits and sophisticated multi-qubit controlled operations. Despite this, the maximal time evolution of $H$ is never better than our method. The number of classical repetitions (and hence the total evolution time), however scales better than our method. The quantum linear systems algorithm using QSVT \cite{gilyen2019quantum} requires access to a block encoding of $H$. This has near optimal complexity in terms of the number of queries to the block encoding. However, constructing a block encoding of $H$ is resource consuming and is not implementable in the intermediate-term. This is also the case with the state-of-the-art algorithm of Costa et al. \cite{costa2022optimal} which has optimal query complexity on all parameters (the query depth per coherent run is $O(\log \kappa)$ better than the maximal time evolution of $H$ in our case). In Sec.~\ref{subsec:single-ancilla-qls}, we compare our method in detail with other algorithms for solving quantum linear systems (See Table \ref{table:comparison-qls}).  
\end{itemize}

As discussed previously, we analyzed the complexity of both the ground state property estimation algorithm as well as the quantum linear systems algorithm in the Hamiltonian Evolution input model, which is indeed the case for other early fault-tolerant quantum algorithms. In this case, $\tau_{\max}$ measures the maximal time evolution of $H$, while $O(\tau_{\max}.T)$ is the total evolution time. However, this is different from the actual circuit depth of the algorithm, for which one needs to specify how the time-evolution operator $U_t=\exp[-itH]$ is implemented. If this is performed by a Hamiltonian simulation algorithm, both the circuit depth per coherent run, as well as the overall circuit depth depends on the choice of the underlying simulation algorithm. In the early fault-tolerant regime, we intend to leverage algorithms that do not add any overhead in terms of the number of ancilla qubits or multi-qubit controlled gates. This limits the Hamiltonian simulation algorithms that can be implemented in the early fault-tolerant era. For instance, state-of-the-art Hamiltonian simulation algorithms \cite{low2017hamiltonian,low2019hamiltonian} require access to a block encoding of $H$: a unitary where $H$ is in the top left block subnormalized by some factor (See Sec.~\ref{subsec:prelim-block-encoding-qsvt}). This is resource demanding as it increases the number of ancilla qubits, as well as multi-qubit controlled operations and hence are unsuitable for early fault-tolerant quantum devices.

On the other hand, if $H$ can be expressed as a linear combination of Pauli operators, both Trotter-based methods \cite{childs2021theory} as well as the Hamiltonian simulation algorithm by \textit{Single-Ancilla LCU} can be incorporated into both our algorithms. This does not require any additional ancilla qubits or multi-qubit controlled gates. However, both these methods have a suboptimal dependence on $t$, which increases the circuit depth of both our ground state property estimation and quantum linear systems algorithms.    

In Appendix \ref{subsec-app:gsp-qls}, we analyze the circuit depth (in terms of the gate depth per coherent run, as well as the overall gate depth) of both these algorithms while using (a) Hamiltonian simulation by \textit{Single-Ancilla LCU} and (b) $2k$-order Trotter \cite{childs2021theory}. For this we assume that:

\begin{itemize}
\item [(i)] $H$ is a linear combination of strings of Pauli operators, i.e. $H=\sum_{j=1}^{L}\lambda_j P_j$, with $\beta=\sum_{k=1}^{L}|\lambda_k|$.
\item [(ii)] The observable $O$ to be measured is also a linear combination of easy-to-implement unitary observables, i.e.\ $O=\sum_{j=1}^{L_O} h_j O_j$ such that $\norm{O_j}=1$ and $\norm{h}_1=\sum_{j}|h_j|$. This is motivated by the fact that such observables are ubiquitous across condensed matter physics \cite{sachdev2011quantum} and quantum chemistry \cite{ mcardle2020quantum}.
\end{itemize}

We demonstrate that under these assumptions both algorithms can still be performed using only a single ancilla qubit and no multi-qubit controlled operations (See Table \ref{table:circuit-depth-single-ancilla-gsp-qls}). We comprehensively compare our methods with other quantum algorithms all of which require multiple ancilla qubits and multi-qubit controlled operations.  Furthermore, we show that despite this, there are regimes where both our algorithms require a shorter gate depth per coherent run, as compared to even state-of-the-art quantum algorithms (See Table \ref{table:comparison-gsp-depth} for ground state property estimation and Table \ref{table:comparison-qls-depth} for quantum linear systems). Next, we discuss our results in the \textit{Analog LCU} framework.

\subsubsection{Analog LCU: Coupling discrete systems with continuous variable systems} 
We develop a more physical model for LCU in continuous-time. Consider any Hamiltonian $H$, consider any $f(H)$ that can be well approximated by a truncated Fourier transform, i.e.,
$$
\norm{f(H)-\int_{a}^{b}~dz~c(z) \cdot e^{-iHzt}}\leq \varepsilon,
$$ 
where $c:\mathbb{R}\mapsto \mathbb{R}\backslash \{0\}$. Then by a purely continuous-time procedure, for any initial state $\ket{\psi_0}$, we can prepare a state that $O\left(\varepsilon/\norm{c}_1\right)$-close to $f(H)\ket{\psi_0}/\norm{f(H)\ket{\psi_0}}$,  
where $\norm{c}_1=\int_{a}^{b} dz~|c(z)|$. 

This requires coupling the primary system to a continuous variable ancilla (such as a one-dimensional quantum Harmonic oscillator), prepared in a continuous variable state. We show in this work that for several applications, this state is easy to prepare (such as a Gaussian). The overall system is then evolved according to the interaction Hamiltonian $H'=H\otimes \hat{z}$ for an appropriate time $T$, which prepares the desired LCU state in the first register. 

The technique is considerably simpler than its discrete-time counterpart.  Furthermore such hybrid qubit-qumode interactions can be implemented in a number of quantum technological platforms such as trapped ions, Cavity (or Circuit QED), photonic systems, and superconducting qubits \cite{wallraff2004strong,pirkkalainen2013hybrid,andersen2015hybrid, kurizki2015quantum,gan2020hybrid}. Our motivation is not only to provide an alternative approach to implementing LCU but make this paradigm more implementable for intermediate-term quantum devices.
\\~\\
\textbf{Applications:~} We apply these techniques to develop analog quantum algorithms for ground state preparation and for solving quantum linear systems. As mentioned previously, our aim is to couple the system Hamiltonian with an ancillary continuous-variable system.

\begin{itemize}
\item[(a)]~\textbf{Ground state preparation:~} Given a Hamiltonian $H$, we couple this system with a continuous variable ancillary system via the interaction Hamiltonian $H'=H\otimes \hat{z}$. The ancilla system is prepared in an easy-to-prepare continuous variable state, namely a Gaussian. We show that given an initial state $\ket{\psi_0}$ that has an overlap of at least $\eta$ with the ground state, simply evolving the system according to $H'$ results in a state proportional to $f(H)\ket{\psi_0}$ in the first register, where $f(H)=e^{-tH^2}$. We show that, with probability $\eta^2$, this state is $\varepsilon$-close to the ground state of $H$ (provided its ground energy is known up to some precision). The overall time required is
$$
T=O\left(\dfrac{1}{\Delta}\sqrt{\log\left(\dfrac{1}{\eta\varepsilon}\right)}\right),
$$ 
where $\Delta$ is the spectral gap of $H$ (Lemma \ref{lem:analog-gsp-ham}). This quantum algorithm appeared in \cite{apers2022quadratic} and also independently in Ref.~\cite{he2022quantum}. Here we place this in the context of \textit{Analog LCU}: it provides useful intuition for (i) the quantum linear systems algorithms we develop using similar techniques and (ii) the \textit{Single-Ancilla LCU} method for ground state property estimation.
\\
\item[(b)]~\textbf{Quantum linear systems:~} We provide two quantum algorithms for this problem. For both these problems, we couple $H$ to two ancillary continuous variable systems (Harmonic oscillators), i.e. $H'=H\otimes \hat{y}\otimes \hat{z}$.  The first approach works for any Hermitian matrix $H$ with eigenvalues in the domain $[-1,-1/\kappa]\cup [1/\kappa,1]$, where $\kappa$ is an upper bound on the condition number (ratio between the maximum and the minimum non-zero eigenvalue) of $H$. The first register is prepared in the initial state $\ket{b}$, the second register is prepared in the first excited state of the quantum Harmonic oscillator, while the third register is prepared in the ground state of a ``particle in a ring'' of unit radius \cite{griffiths2018introduction}. 

This algorithm (see Sec.~\ref{subsec:analog-qls}) can be seen as an analog variant of the quantum linear systems algorithm of Childs, Kothari and Somma \cite{childs2017quantum}. In order to obtain a quantum state that is $\varepsilon/\kappa$-close to $\ket{x}=H^{-1}\ket{b}/\|H^{-1}\ket{b}\|$ in the first register, with overlap at least $1/\kappa$, we require evolving the system according to $H'$ for a time 
$$
T=O\left(\kappa\sqrt{\log\left(\dfrac{\kappa}{\varepsilon}\right)}\right).
$$

Typically in continuous variable systems, Gaussian states are easier to prepare, and engineer \cite{weedbrook2012gaussian}. Thus, we also provide an analog quantum algorithm for solving quantum linear systems (for positive semidefinite Hamiltonians) in which both the ancilla registers are now prepared in Gaussian states. Evolving this system according to $H'$ prepares a state that is $(\varepsilon/\kappa)^{3/2}$ - close to $\ket{x}$, with overlap $\Omega(1/T)$ in time
$$
T=O\left(\dfrac{\kappa^{3/2}}{\sqrt{\varepsilon}}\right).
$$
Although the complexity is worse than the first approach, this quantum algorithm requires preparing only Gaussian states, which we expect to be easier for intermediate-term quantum computers to implement.
\end{itemize}
\subsubsection{Ancilla - free LCU:~Randomized sampling of unitaries}
Suppose for some Hermitian matrix $H\in \mathbb{C}^{N\times N}$, we intend to implement $f(H)$ such that
$$
\norm{f(H)-\sum_{j=1}^{M}c_j U_j}\leq \gamma,
$$
where $\gamma\in [0,1)$, $c_j\in \mathbb{R}$ and $U_j$ is some unitary. Furthermore, we are interested in the projection of $f(H)\ket{\psi_0}$ in some subspace of interest and for the underlying problem, it suffices to ensure that the projective measurement is successful, on average. Then, dropping the ancilla register altogether and simply sampling $V$ according to $\mathcal{D}\sim \{c_j/\norm{c}_1, U_j\}$, results in the following mixed state, on average
$$
\bar{\rho} = \mathbb{E}[V\rho_0 V^{\dag}]=\dfrac{1}{\norm{c}_1} \sum_{j=1}^{M} c_j U_j \rho_0 U^{\dag}_j,
$$
where $\norm{c}_1=\sum_{j=1}^{M} |c_j|$. Then, if $\norm{c}_1\leq 1$, the average projection of this density matrix $\rho=V\rho_0 V^{\dag}$ on the subspace of interest is at least as large as $f(H)\rho_0 f(H)^{\dag}$. That is, if $\Pi$ is the projector on to this subspace, 
$$\mathbb{E}[\Tr(\Pi \rho)]=\Tr[\Pi \bar{\rho}]\geq \Tr[\Pi f(H)\rho_0 f(H)^{\dag}]. 
$$
We call this technique \textit{Ancilla-free LCU} as it does not require any ancilla qubits, by avoiding the need to prepare $f(H)\rho_0 f(H)^{\dag}$, by \textit{Standard LCU}. Formally, we prove the following theorem:
\begin{restatable}[Randomized unitary sampling]{theorem}{thmbodyunisamp}
  \label{thm:ancilla-free-LCU-theorem}
  Let $H\in \mathbb{C}^{N\times N}$ is a Hermitian matrix. Also let $\varepsilon\in (0,1)$ and suppose $f:[-1,1]\mapsto\mathbb{R}$ be some function such that
  $$
  \norm{f(H)-\sum_{j=1}^{M} c_j U_j}\leq \dfrac{\varepsilon}{3\norm{f(H)}},
  $$
 for some unitaries $U_j$ and $c_j\in \mathbb{R}\backslash\{0\}$ such that $\norm{c}_1=\sum_{j}|c_j|\leq 1$. Suppose $V$ is a unitary sampled from the ensemble $\{c_j/\norm{c}_1, U_j\}$, and applied to some initial state $\rho_0$. Then, the average density matrix, defined as
$$
\bar{\rho} = \mathbb{E}\left[V\rho_0V^{\dag}\right]=\dfrac{1}{\norm{c}_1} \sum_{j=1}^{M} c_j U_j \rho_0 U^{\dag}_j,
$$
satisfies,
$$
\Tr\left[\Pi\bar{\rho}\right]\geq \Tr[\Pi f(H)\rho_0 f(H)^{\dag}] -\varepsilon,
$$
for any projector $\Pi$ acting on the state space of $\bar{\rho}$.
\end{restatable}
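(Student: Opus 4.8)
The plan is to reduce everything to two elementary facts: first, that $\bar\rho$ differs from $g(H)\rho_0 g(H)^\dagger$ (where $g(H) := \sum_j c_j U_j$, the LCU approximant) only by a term governed by the approximation error $\gamma := \varepsilon/(3\norm{f(H)})$; and second, that $\norm{c}_1 \le 1$ forces $g(H)\rho_0 g(H)^\dagger$ to dominate $\bar\rho$ in a sense visible to any projector. I would begin by spelling out the second point, since it is the conceptual heart. Write $\sigma_j := U_j \rho_0 U_j^\dagger$, so that $\bar\rho = \norm{c}_1^{-1}\sum_j c_j \sigma_j$ while $g(H)\rho_0 g(H)^\dagger = \sum_{j,k} c_j c_k\, U_j \rho_0 U_k^\dagger$. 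These are genuinely different operators, so I cannot compare them termwise; instead I pass through a purification / Cauchy–Schwarz argument. For any fixed projector $\Pi$, set $A_j := \Pi^{1/2} U_j \rho_0^{1/2}$, so that $\Tr[\Pi\, g(H)\rho_0 g(H)^\dagger] = \norm{\sum_j c_j A_j}_F^2$ and $\Tr[\Pi\,\sigma_j] = \norm{A_j}_F^2$. By the triangle inequality in Frobenius norm followed by Cauchy–Schwarz,
$$
\Big\|\sum_j c_j A_j\Big\|_F \le \sum_j |c_j|\,\norm{A_j}_F \le \Big(\sum_j |c_j|\Big)^{1/2}\Big(\sum_j |c_j|\,\norm{A_j}_F^2\Big)^{1/2},
$$
so squaring and using $\norm{c}_1 \le 1$ gives $\Tr[\Pi\, g(H)\rho_0 g(H)^\dagger] \le \sum_j |c_j|\,\norm{A_j}_F^2$. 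The right side is $\norm{c}_1 \cdot \Tr[\Pi\,\bar\rho]$ where I have used $\sum_j |c_j| \Tr[\Pi\sigma_j] = \norm{c}_1 \Tr[\Pi\bar\rho]$ — but note that $c_j$ may be negative, so I should instead bound $\Tr[\Pi\,\bar\rho] = \norm{c}_1^{-1}\sum_j c_j\norm{A_j}_F^2 \ge \norm{c}_1^{-1}\big(\sum_j |c_j|\norm{A_j}_F^2 - 2\sum_{j:c_j<0}|c_j|\norm{A_j}_F^2\big)$; this sign issue is where I must be careful, and I would handle it by observing that $\Tr[\Pi\,\bar\rho]\ge 0$ trivially and that the clean inequality $\Tr[\Pi\, g(H)\rho_0 g(H)^\dagger] \le \Tr[\Pi\,\bar\rho]$ actually follows more directly: $g(H)\rho_0 g(H)^\dagger = \big(\sum_j c_j U_j\big)\rho_0\big(\sum_k c_k U_k\big)^\dagger$, and for the PSD operator $\rho_0$ one writes $\rho_0 = \sum_a p_a \ketbra{\phi_a}{\phi_a}$ and applies the vector Cauchy–Schwarz $\big\|\sum_j c_j U_j \ket{\phi_a}\big\|^2 \le \norm{c}_1 \sum_j |c_j|\, \norm{U_j\ket{\phi_a}}^2 = \norm{c}_1^2$ componentwise after projecting by $\Pi$, which is exactly the computation above done at the pure-state level. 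I expect the sign bookkeeping in converting between $|c_j|$ and $c_j$ to be the one genuinely fiddly point, and I would resolve it by doing the argument on $\Pi^{1/2}$-sandwiched vectors rather than on density matrices.

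Next I would insert the approximation error. Since $\norm{f(H) - g(H)} \le \gamma$, a standard perturbation estimate gives
$$
\big|\Tr[\Pi f(H)\rho_0 f(H)^\dagger] - \Tr[\Pi\, g(H)\rho_0 g(H)^\dagger]\big| \le 2\gamma\,\norm{f(H)} + \gamma^2 \le 3\gamma\,\norm{f(H)} = \varepsilon,
$$
using $\norm{\Pi}\le 1$, $\Tr\rho_0 = 1$, $\norm{g(H)} \le \norm{f(H)} + \gamma$, and (say) $\gamma \le \norm{f(H)}$ to absorb the $\gamma^2$ term; I would state the one-line triangle-inequality expansion $f\rho_0 f^\dagger - g\rho_0 g^\dagger = (f-g)\rho_0 f^\dagger + g\rho_0(f-g)^\dagger$ and bound each trace by operator norm times $\Tr\rho_0$. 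Combining this with the inequality $\Tr[\Pi g(H)\rho_0 g(H)^\dagger] \le \Tr[\Pi\bar\rho]$ from the previous step yields
$$
\Tr[\Pi\bar\rho] \ge \Tr[\Pi g(H)\rho_0 g(H)^\dagger] \ge \Tr[\Pi f(H)\rho_0 f(H)^\dagger] - \varepsilon,
$$
which is the claim. The identity $\bar\rho = \norm{c}_1^{-1}\sum_j c_j U_j\rho_0 U_j^\dagger$ itself is immediate from $\mathbb{E}[V\rho_0 V^\dagger] = \sum_j (c_j/\norm{c}_1) U_j \rho_0 U_j^\dagger$ by definition of the sampling ensemble, so I would just note it.

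The only real obstacle, as flagged, is the core inequality $\Tr[\Pi g(H)\rho_0 g(H)^\dagger] \le \Tr[\Pi\bar\rho]$ when the $c_j$ are allowed to be negative: the naive termwise comparison fails, and one must exploit that $\norm{c}_1 \le 1$ together with convexity of $x\mapsto \norm{x}^2$ (equivalently, Cauchy–Schwarz with weights $|c_j|/\norm{c}_1$ summing to at most $1$). Everything else is routine. I would present it cleanly by fixing a purification of $\rho_0$, or by the spectral decomposition $\rho_0 = \sum_a p_a\ketbra{\phi_a}{\phi_a}$ and proving the pure-state inequality $\bra{\phi_a} g(H)^\dagger \Pi g(H)\ket{\phi_a} \le \sum_j (|c_j|/\norm{c}_1)\cdot \norm{c}_1 \cdot \bra{\phi_a} U_j^\dagger \Pi U_j\ket{\phi_a}$ — wait, more precisely $\le \sum_j |c_j|\, \bra{\phi_a}U_j^\dagger\Pi U_j\ket{\phi_a}$ after using $\norm{c}_1\le 1$ on the outer Cauchy–Schwarz factor — then multiplying by $p_a$ and summing over $a$ and noting $\sum_j |c_j| \ge \sum_j c_j$ cannot be used directly, so instead I keep the bound as $\le \norm{c}_1 \Tr[\Pi\bar\rho']$ with $\bar\rho' := \norm{c}_1^{-1}\sum_j |c_j| U_j\rho_0 U_j^\dagger$ and then... actually the cleanest fix is: the stated $\bar\rho$ in the theorem uses signed $c_j$, but $\Tr[\Pi\bar\rho]$ can be negative for general $\Pi$ and mixed signs, which would make the theorem false as literally stated unless the intended reading is that $c_j \ge 0$ or that the relevant $\Pi$ and $f(H)$ in applications force positivity; I would therefore check the surrounding text and, assuming the operative case is $c_j > 0$ (as in the spatial-search applications where LCU coefficients are positive), present the proof for $c_j \ge 0$, where the weighted Cauchy–Schwarz goes through without any sign subtlety and the result is exactly as stated.
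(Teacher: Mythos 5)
Your proposal is correct, and it reaches the conclusion by a genuinely different route from the paper. The paper's proof never touches Cauchy--Schwarz directly: it interprets $\bar{\rho}$ as the reduced state of the intermediate LCU state $\ket{\psi'_t}=Q(R\otimes I)\ket{\bar{0}}\ket{\psi_0}=\sum_j\sqrt{c_j/\norm{c}_1}\ket{j}U_j\ket{\psi_0}$, which is a purification of $\bar{\rho}$; since the final \textit{prepare}-inverse $R^{\dag}$ acts only on the ancilla register while $\Pi$ acts only on the system, $\Tr[\Pi\bar{\rho}]=\Tr[(I\otimes\Pi)\ket{\psi_t}\bra{\psi_t}]$, which splits into the ``success'' branch $\norm{c}_1^{-2}\bra{\psi_0}g(H)^{\dag}\Pi g(H)\ket{\psi_0}$ plus a manifestly non-negative remainder supported on $(I-\ket{\bar{0}}\bra{\bar{0}})\otimes\Pi$. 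That is exactly the same convexity fact your Frobenius-norm Cauchy--Schwarz encodes, but dressed in circuit language; your version is more self-contained and elementary, while the paper's makes the operational point that motivates the whole \textit{Ancilla-free LCU} idea (dropping $R^{\dag}$ and the ancilla cannot decrease the projection). For the $\varepsilon$ loss, the paper invokes its Theorem~\ref{thm:distance-expectation} (which assumes $\norm{f(H)}\geq 1$ and gives the bound $3\norm{\Pi}\norm{f(H)}\gamma=\varepsilon$), whereas you rederive the same perturbation estimate inline; your side condition $\gamma\leq\norm{f(H)}$ is the same implicit normalization assumption. Finally, your flag about the signs is well taken: both proofs genuinely require $c_j>0$ (the paper's purification needs $\sqrt{c_j}$, your identification $\sum_j|c_j|\norm{A_j}_F^2=\norm{c}_1\Tr[\Pi\bar{\rho}]$ needs it too), and the paper resolves this exactly as you guessed, by a global convention in its notation section that signs and phases of the $c_j$ are absorbed into the unitaries, so that $c_j\in\mathbb{R}^{+}\backslash\{0\}$ throughout.
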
  
~\\~\\
We prove Theorem \ref{thm:ancilla-free-LCU-theorem} in Sec.~\ref{subsec:ancilla-free-LCU}. The result of Theorem \ref{thm:ancilla-free-LCU-theorem} can be interpreted as follows: Suppose we are interested in the projection of $f(H)\ket{\psi_0}$ in some subspace of interest, and in the average projection of the resulting state in this subspace. Then, instead of implementing the standard LCU procedure, we can simply sample a $V$ as described above and apply it to the initial state $\ket{\psi_0}$. On average, this leads to the density matrix $\bar{\rho}$. Theorem \ref{thm:ancilla-free-LCU-theorem} states that the projection of $\bar{\rho}$ in this subspace is guaranteed to be is at least as large. In other words, the average success probability of the projective measurement on $\rho=V\rho_0 V^{\dag}$ would be at least as high even when \textit{Standard LCU} is replaced with just importance sampling. This also extends to the continuous-time setting. 

Thus while \textit{Standard LCU} requires $\lceil\log M\rceil$ ancilla qubits, our method requires none. However, it is important to note that Theorem \ref{thm:ancilla-free-LCU-theorem} does not always guarantee a high success probability: it only does so, on average. However, this is useful, for instance, if for the underlying problem, when we care about the average success probability. This makes \textit{Ancilla-free LCU} well suited to tackle the spatial search problem, defined as follows for random walks: Consider any reversible Markov chain $P$ such that a certain subset of its nodes (say $M$ out of $N$) are marked. Then, the expected number steps required by a classical random walk to reach the marked nodes is known as the hitting time ($HT$). Analogously, quantum spatial search involves determining the expected number of steps after which the projection of the quantum walk on to the marked subspace is high. Indeed, demonstrating that quantum walks require $O(\sqrt{HT})$ steps on average, for any $P$ and any $M$, had been a long standing open problem. Following a series of works \cite{szegedy2004quantum, childs2004spatial, magniez2007search, krovi2016quantum} that only partially resolved this problem (such as for specific graphs or a single marked element), a generic quadratic speedup (up to a log factor) was finally proven in Ref.~ \cite{ambainis2019quadratic}. This algorithm relies on \textit{Standard LCU}, requiring $O(\log(HT))$ ancilla qubits, in addition to the walk space (edges of $P$). Here we show that the same generic quadratic speedup can be obtained without using any of the $O(\log(HT))$ ancilla qubits by using \textit{Ancilla-free LCU}.
\\~\\
\textbf{Applications:~} We discuss two separate quantum algorithms by discrete-time quantum walks that solve the spatial search problem quadratically faster than classical random walks, while requiring fewer ancilla qubits than the prior art. The first one relies on fast-forwarding discrete-time random walks and formalizes an observation in \cite{apers2019unified}. The second quantum algorithm relies on the fast-forwarding of continuous-time random walks, allowing us to relate both frameworks of classical random walks with discrete-time quantum walks. The running time of both these algorithms scales as the square root of the hitting time of classical random walks (up to log factors) even in the presence of multiple marked elements, which is optimal. Overall, we demonstrate that the \textit{Ancilla-free LCU} provides a unified framework for optimal quantum spatial algorithms. It also allows us to connect discrete and continuous-time random walks with discrete and continuous-time quantum walks. We briefly describe the results we obtain:
~\\
\begin{itemize}
\item[(a)]~\textbf{Quantum spatial search by discrete-time quantum walks:~} We use the fact that if any Hamiltonian $H$ is encoded in the top-left block of a unitary $U_H$, we can obtain a block encoding \cite{low2019hamiltonian, chakraborty2019power} of $H^t$ or $e^{-t(I-H)}$ by implementing a linear combination of Chebyshev polynomials of $H$ \cite{childs2017quantum, ambainis2019quadratic}. Both these procedures require roughly $O(\sqrt{t})$ cost to be implemented. When $H=D$ (the discriminant matrix of $P$), $D^t$ results in a fast-forwarding of discrete-time random walks \cite{apers2018FF}, which is the key subroutine of the optimal spatial search algorithm of \cite{ambainis2019quadratic}. One the other hand, when $H=e^{-t(I-D)}$, a fast-forwarding of continuous-time random walks can be achieved. However, quantum fast forwarding makes use of \textit{Standard LCU}, requiring $O(\log t)$ ancilla qubits. However, for quantum spatial search, we can invoke Theorem \ref{thm:ancilla-free-LCU-theorem} instead, and make use of \textit{Ancilla-free LCU}. This is possible as for this problem we are interested in proving that quantum walks require $\widetilde{O}(\sqrt{HT})$ steps, on average. Thus, our methods do not require any ancilla qubits (other than the walk space).

More precisely, using the framework of interpolated Markov chains (see Sec.~\ref{subsec:rw-qw} for details of these terms), and for a specific initial state $\ket{\sqrt{\pi_U}}$ (related to the stationary distribution of the interpolated random walk), we can show for the first spatial search algorithm, the \textit{Ancilla-free LCU} procedure, on average, prepares a mixed state $\bar{\rho}$ such that
$$
\Tr[(I\otimes\Pi_M)\bar{\rho}]\geq  \norm{\Pi_M D(s)^T\ket{\sqrt{\pi_U}}}^2-\varepsilon,
$$
where $\Pi_M$ is the projector on to the marked subspace. In Ref.~\cite{ambainis2019quadratic}, the authors proved that, on average, the RHS of the aforementioned inequality is $\tilde{\Omega}(1)$, for $T=\widetilde{O}\left(\sqrt{HT}\right)$, for some randomly chosen value of $s\in [0,1)$ (See Algorithm \ref{algo:search-dtqw-1}). Thus our algorithm achieves the same quadratic speedup as \cite{ambainis2019quadratic}, while requiring $O(\log HT)$ fewer ancilla qubits. This also formalizes the observation of Ref.~\cite{apers2019unified}.

For our second spatial search algorithm by discrete-time quantum walk (See Algorithm \ref{algo:search-dtqw-2}), Theorem \ref{thm:ancilla-free-LCU-theorem}, prepares, on average, the mixed state $\bar{\rho}$ such that 
$$
\Tr[(I\otimes \Pi_M)\bar{\rho}]\geq \norm{\Pi_M e^{T(D(s)-I)}\ket{\sqrt{\pi_U}}}^2-\varepsilon,
$$
where again, we prove that the expected value of the RHS is in $\widetilde{\Omega}(1)$, for $T=\widetilde{O}\left(\sqrt{HT}\right)$. For this we prove that the probability of finding a marked vertex is at least as large as the probability of a certain event occurring in the continuous-time interpolated Markov chain $P(s)$, which in turn can be lower bounded by the probability of the same event occurring for the corresponding discrete-time Markov chain. These reductions allow us to leverage the results of \cite{ambainis2019quadratic}. Both these algorithms require no ancilla qubits (other than the walk space), while solving this problem via quantum fast-forwarding by \textit{Standard LCU} would require $O(\log HT)$ ancilla qubits.
\end{itemize}
\begin{figure}[h!]
\centering
\captionsetup{justification=centering}
\includegraphics[scale=0.45]{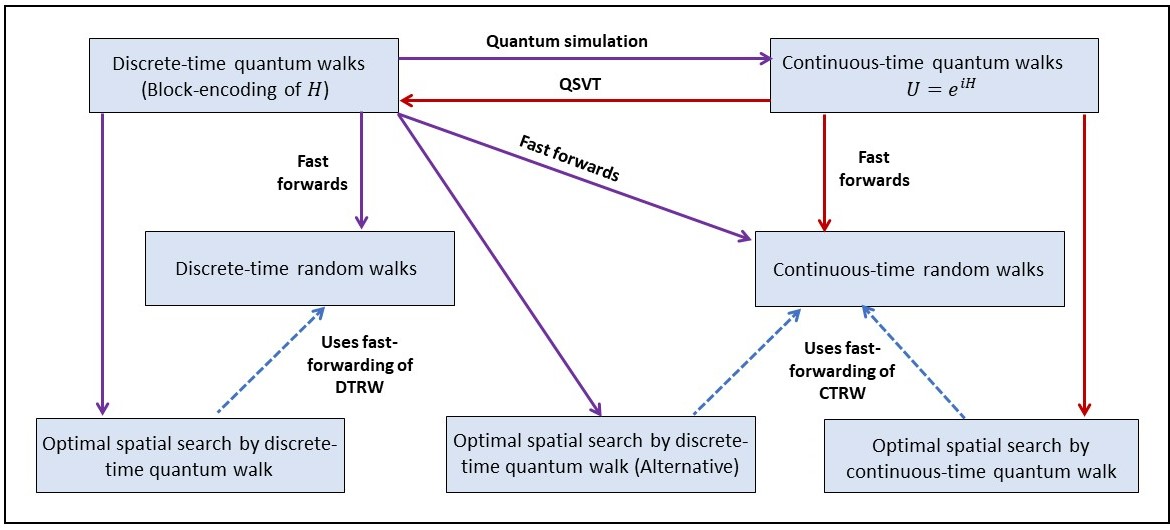}
\caption{\textit{Relationship between discrete and continuous-time quantum walks, and their classical counterparts:} Given the block encoding of the discriminant matrix $D$ of any reversible Markov chain $P$, we can generate a continuous-time quantum walk on both the vertices and edges of $P$, following Ref.~\cite{childs2010relationship}. Conversely, given access to the continuous-time quantum time-evolution operator $U=e^{-iH/2}$, we can implement a discrete-time quantum walk on the edges of $H$. For this, we make use of the framework of Quantum Singular Value Transformation (QSVT). Apers and Sarlette demonstrated that discrete-time random walks can be fast forwarded by discrete-time quantum walks \cite{apers2018FF} which is the cornerstone of the recently developed optimal quantum spatial search algorithm \cite{ambainis2019quadratic}. In this work, we show that for quantum spatial search, the fast-forwarding can be done without needing any ancilla qubits (other than the quantum walk space) through \textit{Ancilla-free LCU}. Additionally, we also demonstrate that discrete-time quantum walks can fast forward continuous-time random walks. This fact, can also be leveraged to develop new optimal quantum spatial search algorithms, which do not require any ancilla qubits (other than the quantum walk space). Finally continuous-time random walks can also be fast-forwarded by continuous-time quantum walks which is central to the optimal quantum spatial search algorithm of \cite{apers2022quadratic}. Thus overall, through this work we connect all different frameworks random and quantum walks.}
\label{fig:walk-rel}
\end{figure}

It is worth mentioning that optimal spatial search algorithm by continuous-time quantum walk \cite{apers2022quadratic} is yet another demonstration of (a continuous-time variant of)\textit{Ancilla-free LCU}, where a quadratic speedup for this problem could be obtained by the fast forwarding of a continuous-time random walk. This algorithm involved using randomized time-evolution through which the full LCU procedure could be bypassed. Additionally, these optimal quantum spatial search algorithms allow us to relate discrete-time random walks and continuous-time random walks with their quantum counterparts through quantum spatial search. In the Appendix (Sec.~\ref{sec:relationship-ctqw-dtqw}), we also establish a relationship between discrete and continuous-time quantum walks. In a seminal work, Childs showed that any dynamics generated by a continuous-time quantum walk can be simulated by a discrete-time quantum walk \cite{childs2010relationship}. We show that given access to a discrete-time quantum walk, one can also generate a continuous-time quantum walk (on the edges of the underlying Markov chain). In the other direction, whether discrete-time quantum walks can be obtained from the continuous-time quantum walk evolution operator, has been unknown. In fact, using the frameworks of block encoding and quantum singular value transformation, we show that given access to a quantum walk time evolution operator, one can obtain a discrete-time quantum walk. We also discuss the subtleties of this approach with regards to quantum fast forwarding and also suggest possible improvements. Overall, this helps us complete the picture (See Fig.~\ref{fig:walk-rel}) by relating both frameworks of quantum and random walks.
\subsection{Prior work}
\label{subsec:prior-work}
In this section, we briefly sketch relevant prior work and relate them to the results we obtain. The linear combination of unitaries technique was first introduced by Childs and Wiebe \cite{childs2012hamiltonian} to develop quantum Hamiltonian simulation algorithms based on multi-product formulas. Since then, LCU has been extensively used to develop improved quantum algorithms for Hamiltonian simulation \cite{berry2014exponential, berry2015simulating, berry2015hamiltonian}. Subsequently, it has been used to develop a wide variety of quantum algorithms for linear algebra, such as for solving quantum linear systems, and linear regression \cite{childs2017quantum, chakraborty2019power}, preparing ground states of Hamiltonians \cite{ge2019faster} and solving optimization problems \cite{chowdhury2017quantum, van2020quantum}. Many of these quantum algorithms have been unified by the more recent framework of quantum singular value transformation (QSVT) \cite{gilyen2019quantum}, which implements polynomial transformations to the singular values of a matrix. Given access to a unitary that block encodes \cite{chakraborty2019power} the matrix, QSVT provides near-optimal query complexities for these problems (in terms of the number of queries made to the block encoding) and requires fewer ancilla qubits than LCU. However, constructing the block encoding itself can be resource consuming and may require a large number of ancilla qubits and multi-qubit controlled gates which increases the overall circuit depth. Thus, it is not likely to be implemented on early fault-tolerant quantum computers.   

The main contribution of this article is to demonstrate that the framework of implementing LCU can be modified so that this framework, which is also applicable to a wide variety of problems, is implementable on intermediate-term quantum computers. As discussed in the previous sections, we introduce three main variants of LCU.

The first technique, \textit{Single-Ancilla LCU}, implements any LCU using only a single ancilla qubit and no multi-qubit controlled operations. Despite this, the cost per coherent run is lower than the generic LCU procedure. Consequently, it is useful for early fault-tolerant quantum computers. A few quantum algorithms, tailored to early fault-tolerant quantum computers have recently been developed. However, these algorithms tackle a single problem, namely, the estimation of the ground state energy of a given Hamiltonian \cite{kyriienko2020quantum,lin2022heisenberg,zhang2022computing,wang2022quantum,dong2022ground}. Our method makes use of the quantum circuit of Faerhmann et al.~\cite{Faehrmann2022randomizingmulti}, wherein it was used for multi-product Hamiltonian simulation. In a way, it is a non-trivial generalization of their technique to implement any LCU. This opens up the possibility of developing several novel quantum algorithms using early fault-tolerant quantum devices. In this work, we apply our method for Hamiltonian simulation, ground state property estimation, and quantum linear systems.

For each of these problems, several quantum algorithms have been developed over the years. Hamiltonian simulation has been widely considered as one of the potential applications of the first useful quantum computer. Algorithms more suited to near-term applicability include Trotter-based approaches \cite{childs2021theory} and their randomized variants such as qDRIFT \cite{campbell2019random, chen2021concentration} and others \cite{childs2018toward, zhao2022hamiltonian}. The standard LCU procedure has also been quite useful for developing near-optimal quantum simulation algorithms \cite{berry2015hamiltonian,berry2015simulating,berry2017quantum}. For our Hamiltonian simulation algorithm, we make use of the LCU decomposition of the time evolution operator \cite{wang2022quantum}, along with the Truncated Taylor series method of Berry et al. \cite{berry2015simulating}. State-of-the-art quantum simulation algorithms use the framework of quantum signal processing \cite{low2017optimal}, and QSVT \cite{gilyen2019quantum}, but require access to a block encoding of the Hamiltonian, which, as discussed previously, is resource consuming. We extensively compare our Hamiltonian simulation procedure with other algorithms (Sec.~\ref{sec:ham-sim}).

The first quantum algorithm for ground state preparation involved using Hamiltonian simulation along with quantum phase estimation \cite{abrams1999quantum}. Subsequently, Refs.~\cite{ge2019faster, keen2021quantum} took advantage of the fact that functions of Hamiltonians can be expressed as a linear combination of unitaries to develop fast quantum algorithms for ground state preparation and estimation. A QSVT-based quantum algorithm has also been developed recently \cite{lin2020near}, which requires an optimal number of queries to the block encoding of the Hamiltonian. The problem of ground state preparation/property estimation is also considered to be one of the first problems to be solved using near/intermediate-term quantum computers. In Sec.~\ref{subsec:gsp-single-ancilla} and Sec.~\ref{subsec-app:gsp-qls}, we substantively compare our procedure with other early fault-tolerant quantum algorithms for ground energy estimation, as well as with state-of-the-art quantum algorithms, suitable for fully fault-tolerant quantum computers.

Ever since the seminal algorithm by Harrow, Hassidim and Lloyd \cite{harrow2009quantum}, quantum linear systems has been analyzed extensively. In particular, the LCU-based approach of Childs, Kothari, and Somma \cite{childs2017quantum} provided a linear dependence on the condition number of the underlying sparse matrix and an exponentially improved dependence on the error. This algorithm was improved to also work for non-sparse matrices \cite{wossnig2018quantum} and in the more general block encoding framework \cite{chakraborty2023quantum}. Recently, QSVT-based approaches for this problem have also been developed \cite{gilyen2019quantum, martyn2021grand, chakraborty2023quantum}. Another direction of research has been to develop quantum algorithms for this problem in the adiabatic quantum computing framework \cite{subacsi2019quantum, lin2020optimal, costa2022optimal}. This approach provides an optimal dependence on all parameters but requires access to a block encoding of the matrix to be inverted. The possibility of applying quantum linear systems algorithm on the near-term quantum devices has been explored in \cite{huang2021near}. However, the techniques discussed are either variational and hence heuristic, or requires substantially higher resources as compared to our method. We rigorously compare our approach with these techniques in Sec.~\ref{subsec:single-ancilla-qls} and the Appendix (Sec.~\ref{subsec-app:gsp-qls}).

Our second technique is a continuous-time variant of LCU, which is more physical. The key idea is to couple discrete systems with continuous-variable systems. Such interactions have been explored in the context of quantum phase estimation, where the system Hamiltonian is coupled with a one-dimensional free particle, acting as the pointer variable - which is the so-called von Neumann measurement model \cite{childs2002quantum, chakraborty2020analog}. In Ref.~\cite{apers2022quadratic}, the continuous-time quantum walk Hamiltonian $H$ was coupled to a one-dimensional quantum Harmonic oscillator to implement $e^{-tH^2}$. This was a key ingredient of their spatial search algorithm by continuous-time quantum walk. In this work, we formalize this technique and show that it is more widely applicable and in fact, can serve as a continuous-time variant to any LCU-based quantum algorithm. We develop an analog variant of the quantum linear systems algorithm of Childs et al. \cite{childs2017quantum} and a new quantum algorithm for this problem (using only Gaussian states) that is more suited for intermediate-term implementation.

Finally, we apply the \textit{Ancilla-free LCU} technique to develop optimal quantum spatial search algorithms. LCU has been central to the development of several quantum walk algorithms. The quantum fast-forwarding scheme by Apers and Sarlette \cite{apers2018FF} quadratically fast-forwards the dynamics of a discrete-time random walk by implementing a linear combination of discrete-time quantum walk steps. Recently, Ambainis et al. \cite{ambainis2019quadratic} proved that for the spatial search problem, fast-forwarding $T$ random walk steps on an interpolated Markov chain, prepares a quantum state that has, on average, a $\tilde{\Omega}(1)$ overlap with the marked space for $T=\widetilde{O}(\sqrt{HT})$, where $HT$ is the classical hitting time of the random walk. Overall, this algorithm requires $O(\log n + \log HT)$ ancilla qubits, for any reversible Markov chain of $n$ nodes. Their LCU-based quantum spatial search algorithm for discrete-time quantum walks completely solves the spatial search problem quadratically faster than classical random walks, for any number of marked nodes. This closed a long line of work which made partial progress towards solving this problem (such as for only particular reversible Markov chains or when only a single node was marked) \cite{szegedy2004quantum, magniez2007search, krovi2016quantum}. Subsequently, Apers et al. \cite{apers2019unified} provided a unified framework that connected the different variants of discrete-time quantum walk search. Therein, the authors observed that the LCU procedure of \cite{ambainis2019quadratic} could be replaced by randomly sampling quantum walk steps. In the continuous-time quantum walk framework, whether the spatial search problem offered a generic quadratic speedup was also open for a long time, with a series of works that partially resolved this problem \cite{childs2004spatial, chakraborty2016spatial, chakraborty2020optimality, chakraborty2020finding}. It was only recently solved in \cite{apers2022quadratic}. Their analog quantum algorithm indeed managed to bypass the LCU procedure by evolving the system under the quantum walk Hamiltonian for a random time. In this work, we demonstrate that by using \textit{Ancilla-free LCU} instead of using the full LCU procedure (thereby requiring fewer ancilla qubits), one can develop two spatial search algorithms by discrete-time quantum walks: one that relies on fast-forwarding discrete-time random walks and the other, relying on fast-forwarding continuous-time random walks. Similar to \cite{ambainis2019quadratic}, these quantum algorithms provide a generic quadratic speedup over their classical counterparts, while requiring $O(\log HT)$ fewer ancilla qubits.

In the following section, we briefly define some of the key preliminary concepts that we will borrow to develop our results.   

\section{Preliminaries}
\label{sec:preliminaries}
In this section, we introduce some of the techniques that we use in this article as well as discuss the key algorithmic primitives required to develop our results. We begin by stating the complexity theoretic notations we shall be using throughout the article.
\subsection{Notation}
\label{subsec:notation}

\textbf{Complexity theoretic notations:~} Throughout the article, we shall be using the standard complexity theoretic notations. The \textit{Big-O} notation, $g(n)= O(f(n))$ or $g(n)\in O(f(n))$, implies that $g$ is upper bounded by $f$. That is, there exists a a constant $k>0$ such that $g(n)\leq k\cdot f(n)$. The \textit{Big-Omega} notation, $g(n)=\Omega(f(n))$, implies $g(n)\geq k f(n)$ ($g$ is lower bounded by $f$). The \textit{Theta} notation is used when $g$ is both bounded from above and below by $f$, i.e.\ $g(n)=\Theta(f(n))$, if there exists non-negative constants $k_1$, and $k_2$ such that $k_1 f(n) \leq g(n) \leq k_2 f(n)$. The \textit{Little-o} notation, $g(n)=o(f(n))$, when $g$ is dominated by $f$ asymptotically, i.e.\ $\lim_{n\rightarrow\infty} g(n)/f(n)=0$. 

For each of these notations, it is standard to use \textit{tilde} ($\sim$) to hide polylogarithmic factors. For instance, $\widetilde{O}(f(n))=O(f(n)\mathrm{polylog}(f(n)))$. This applies to the other notations as well.
~\\~\\
\textbf{Trace, Expectation and Probability:~} The trace of an operator $A$ will be denoted by $\Tr[A]$, while the expectation value of the operator will be denote by $\mathbb{E}[A]$. The probability of an event $X$ occurring will be denoted by $\Pr[X]$. 
~\\~\\
\textbf{Schatten norms:~} The Schatten p-norm of the operator $X$ is defined as 
$$
\norm{X}_p =\left(\sum_{j} \sigma^{p}_j(X)\right)^{1/p},
$$
where $\sigma_j(X)$ is the $j^{\mathrm{th}}$ singular value of $X$. So if $\sigma_{\max}(X)$ denotes the maximum singular value of $X$, we have 
$$
\lim_{p\rightarrow\infty} \norm{X}_p = \sigma_{\max} \cdot \lim_{p\rightarrow\infty} \left(\sum_{j} \dfrac{\sigma^{p}_j(X)}{\sigma^{p}_{\max}(X)}\right)^{1/p}=\sigma_{\max},
$$
which is the spectral norm of the operator $X$, which we denote as simply $\norm{X}$.
~\\~\\ 
\textbf{LCU coefficients:~} For LCU, we implement some operator $V=\sum_{j}c_j U_j$, where each $U_j$ is a unitary. Note that $c_j$ can, in general, be any non-zero real or complex number (positive or negative). When $c_j$ is complex or negative, we can absorb the sign as well as the imaginary phase into the description of the unitary itself. Consequently, without loss of generality, it suffices to consider that $c_j\in \mathbb{R}^{+} \backslash \{0\}$. This is what we shall be considering throughout the article. In the next section, we describe the formalism of LCU.

\subsection{Linear Combination of Unitaries}
\label{subsec:prelim-lcu}
We will begin by stating the general framework of Linear Combination of Unitaries (LCU). Throughout the article, we shall refer to this as the \textit{Standard LCU} procedure. Suppose for $H\in \mathbb{C}^{N\times N}$, we wish to apply some function of the Hamiltonian to an initial state $\ket{\psi_0}$. More precisely, if $H$ has spectral decomposition $H=\sum_{j=1}^{N}\lambda_j\ket{v_j}\bra{v_j}$, define $f(H)=\sum_{j=1}^{N}f(\lambda_j)\ket{v_j}\bra{v_j}$. Now suppose $f(H)$ can be well approximated by linear combinations of unitaries. That is, for some $\gamma\in (0,1)$ suppose,
$$
\norm{f(H)-\sum_{j=1}^M c_j U_j}\leq \gamma,
$$
where each $U_j$ is a unitary matrix that we have access to, i.e $f(H)$. Without loss of generality let us define the parameters $c_j\in \mathbb{R}^{+} / \{0\}$. Even though $f(H)$ is not necessarily unitary, the LCU technique allows us to implement $f(H)\ket{\psi_0}$.

For this, the procedure requires $m=\lceil \log_2 M \rceil$ ancilla qubits. First, a \textit{prepare} unitary $R$ is applied to this ancilla register such that 
$$
R\ket{\bar{0}}=\sum_{j=1}^{M}\sqrt{\dfrac{c_j}{\norm{c}_1}}\ket{j},
$$
where $c=(c_1,\dots, c_M)^T$. Suppose the cost of implementing this unitary is $\tau_R$. 

Furthermore, a \textit{select} unitary $Q$, defined in the following way,
$$
Q=\sum_{j}\ket{j}\bra{j}\otimes U_j,
$$
is also used. Note that $Q$ is a sophisticated operation, controlled on each of the $m$ ancilla qubits. Suppose the cost of implementing $Q$ is $\tau_Q$.

Then,
\begin{equation}
\ket{\psi_t}=(R^{\dag}\otimes I) Q (R\otimes I)\ket{\bar{0}}\ket{\psi_0}=\dfrac{1}{\norm{c}_1}\ket{\bar{0}}\sum_{j=1}^{M} c_jU_j\ket{\psi_0}+\ket{\Phi}^{\perp}
\end{equation}
where $(\ket{\bar{0}}\bra{\bar{0}}\otimes I)\ket{\Phi}^{\perp}=0$. Note that, controlled on $\ket{\bar{0}}$, the state in the second register is $\gamma/\norm{c}_1$-close to
$$
\ket{\psi}=\dfrac{f(H)\ket{\psi_0}}{\norm{f(H)\ket{\psi_0}}},
$$
with probability $p=\norm{f(H)\ket{\psi_0}}^2/\norm{c}^2_1$. The total cost of using the LCU procedure is then
\begin{equation}
\label{eq:standard-lcu-complexity}
\Gamma_{\max}= 2\tau_R+\tau_Q + \tau_{\psi_0}.
\end{equation}
Let us now explore the applicability of this procedure. 
For instance, $f(H)$ could be well approximated by a Fourier series, in which case, $U_j=e^{-ijH}$. Since, for several applications, it indeed boils down to applying some such $f(x)$ to an initial state, LCU provides a versatile framework that has wide applicability. Consequently, several near-optimal quantum algorithms have been designed in this framework ranging from quantum algorithms for linear systems \cite{childs2017quantum}, ground state preparation \cite{ge2019faster, keen2021quantum}, sampling from thermal states \cite{chowdhury2017quantum, chowdhury2021DQC1} to Hamiltonian simulation \cite{childs2012hamiltonian, berry2014exponential, berry2015hamiltonian, berry2015simulating} and many others. 

However, the twin requirements of several ancilla qubits, as well as sophisticated multi-qubit controlled operations, imply that this standard technique to implement any LCU is useful for only full-scale fault-tolerant quantum computers. Additionally, in most applications, simply preparing the quantum state $\ket{\psi}$ is not useful, as access to the entries of the underlying quantum state are required. However, even state-of-the-art techniques in quantum state tomography result in an exponential overhead. Thus, in most cases, we are interested in predicting certain properties of the state $\ket{\psi}$, such as estimating the expectation values of observables of interest, i.e.\ $\braket{\psi|O|\psi}$. 
\subsection{Block encoding and Quantum Singular Value Transformation}
\label{subsec:prelim-block-encoding-qsvt}
For some of the algorithms in this article, we will consider the framework of \textit{block encoding}, wherein it is assumed that the input matrix $H$ (up to some sub-normalization) is stored in the left block of some unitary. The advantage of the block encoding framework, which was introduced in a series of works \cite{low2019hamiltonian, chakraborty2019power, gilyen2018quantum}, is that it can be applied to a wide variety of input models.
\begin{definition}[Block Encoding~\cite{chakraborty2019power}]
    \label{def:block_encoding}
Suppose that $H$ is an $s$-qubit operator, $\alpha, \varepsilon \in \mathbb{R}^+$ and $a \in \mathbb{N} $, then we say that the $(s + a)$-qubit unitary $U_H$ is an $(\alpha, a, \varepsilon)$-block encoding of $H$, if

\begin{equation}
 \norm{ H - \alpha (\bra{0}^{\otimes a} \otimes I)U_H(\ket{0}^{\otimes a} \otimes I) } \leq \varepsilon .
\end{equation}

\end{definition}
Let $\ket{\psi}$ be an $s$-qubit quantum state. Then applying $U_H$ to $\ket{\psi}\ket{0}^{\otimes a}$ outputs a quantum state that is $\frac{\varepsilon}{\alpha}$-close to 
$$
\ket{0}^{\otimes a}\dfrac{H}{\alpha}\ket{\psi}+\ket{\Phi^{\perp}},
$$
where $\left(\ket{0}^{\otimes a}\bra{0}^{\otimes a}\otimes I_s \right)\ket{\Phi^{\perp}}=0$. Equivalently, suppose $\tilde{H} := \alpha\paren{\bra{0}^{\otimes a} \otimes I_s} U_H \paren{\ket{0}^{\otimes a} \otimes I_s}$ denotes the actual matrix that is block-encoded into $U_H$, then $\norm{H-\tilde{H}}\leq\varepsilon$. 

Quantum Singular Value Transformation (QSVT) applies a polynomial transformation to the singular values of a block-encoded matrix \cite{gilyen2019quantum}. Formally, let $P \in \mathbb{R}[x]$ be a polynomial with real coefficients of degree $n \ge 2$, such that $P$ is either even or odd (has parity-$n\mod 2$), and for all $x\in [-1,1]$, $|P(x)|\leq 1$. Then, QSVT allows us to implement any polynomial $P(x)$ that satisfies the aforementioned requirements. Next, we formally introduce QSVT formally via the following theorem.

\begin{theorem}[Quantum Singular Value Transformation \cite{gilyen2018quantum}]
\label{thm:qsvt-def}
Suppose $A\in \mathbb{R}^{N\times d}$ is a matrix with singular value decomposition $A=\sum_{j=1}^{d_{\min}}\sigma_j\ket{v_j}\bra{w_j}$, where $d_{\min}=\min\{N,d\}$ and $\ket{v_j}$ $(\ket{w_j})$ is the left (right) singular vector with singular value $\sigma_j$. Furthermore, let $U_A$ be a unitary such that $A=\widetilde{\Pi}U_A\Pi$, where $\Pi$ and $\widetilde{\Pi}$ are orthogonal projectors. Then, if any real polynomial $P(x)$ of degree $n$ is even or odd, and satisfies $|P(x)|\leq 1$, for all $x\in[-1,1]$, there exists a vector $\Phi=\left(\phi_1,\phi_2,\cdots\phi_n\right)\in\mathbb{R}^n$ and a unitary
\begin{equation}
    \label{eqn:qsvt_sequence}
    U_{\Phi} = \begin{cases}
        e^{i \phi_1 ( 2 \widetilde{\Pi} - I )} U_A \left[ \prod_{k = 1}^{(n - 1) / 2} e^{i \phi_{2k} ( 2 \widetilde{\Pi} - I )} U_A^{\dagger} e^{i \phi_{2k+1} ( 2 \widetilde{\Pi} - I )} U_A \right], & n \text{ is odd } \\
        \left[ \prod_{k = 1}^{n/ 2} e^{i \phi_{2k-1} ( 2 \widetilde{\Pi} - I )} U_A^{\dagger} e^{i \phi_{2k} ( 2 \widetilde{\Pi} - I )} U_A \right], & n \text{ is even},
    \end{cases}
\end{equation}
such that
\begin{equation}
    \label{eqn:qsvt_sequence_projected}
    P^{SV} (A) = \begin{cases}
        \left(\bra{+}\otimes \widetilde{\Pi} \right) \left(\ket{0}\bra{0}\otimes U_{\Phi} +\ket{1}\bra{1}\otimes U_{-\Phi}\right)\left(\ket{+}\otimes \Pi\right), & n \text{ is odd} \\
        \left(\bra{+}\otimes \Pi \right) \left(\ket{0}\bra{0}\otimes U_{\Phi} +\ket{1}\bra{1}\otimes U_{-\Phi}\right)\left(\ket{+}\otimes \Pi\right), & n \text{ is even},
    \end{cases}
\end{equation}
where $ P^{SV} (A) $ is the polynomial transformation of the matrix $ A $ defined as
\begin{equation}
    \label{eqn:poly_sv_transformation_of_matrix}
    P^{SV} (A) := \begin{cases}
        \sum_j P(\sigma_j) \ket{v_j} \bra{w_j}, & P \text{ is odd} \\
        \sum_j P(\sigma_j) \ket{w_j} \bra{w_j}, & P \text{ is even}.
    \end{cases}
\end{equation}
\end{theorem}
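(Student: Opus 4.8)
The plan is to reduce this operator identity to a family of scalar quantum signal processing (QSP) identities, one per singular value, by exhibiting a common invariant-subspace decomposition of the whole Hilbert space under $U_A$, $\Pi$, and $\widetilde\Pi$. First I would invoke the cosine--sine (Jordan's lemma) structure: since $A=\widetilde\Pi U_A\Pi$ with SVD $A=\sum_j \sigma_j\ket{v_j}\bra{w_j}$, the unitary sends each right singular vector into the codomain as $U_A\ket{w_j}=\sigma_j\ket{v_j}+\sqrt{1-\sigma_j^2}\,\ket{v_j^\perp}$, with $\ket{v_j}$ in the range of $\widetilde\Pi$ and $\ket{v_j^\perp}$ in its kernel. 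The pairs $\mathrm{span}\{\ket{w_j},\ket{w_j^\perp}\}$ and $\mathrm{span}\{\ket{v_j},\ket{v_j^\perp}\}$ are jointly invariant under $U_A$, $U_A^\dagger$, and the reflections $2\widetilde\Pi-I$ and $2\Pi-I$, so the entire space decomposes as an orthogonal direct sum of such two-dimensional blocks, together with trivial one-dimensional blocks for $\sigma\in\{0,1\}$.

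Next I would restrict every building block of $U_\Phi$ to a single two-dimensional sector. In the basis $\{\ket{w_j},\ket{w_j^\perp}\}$ the factor $e^{i\phi(2\widetilde\Pi-I)}$ acts as $\diag(e^{i\phi},e^{-i\phi})$, a $Z$-rotation, while $U_A$ acts as a fixed reflection--rotation with $(1,1)$ entry $\sigma_j$, i.e.\ exactly the QSP signal operator $W(\sigma_j)$. Hence on each sector the alternating product collapses to the standard QSP sequence $\prod_k e^{i\phi_k Z}\,W(\sigma_j)$, and the crucial point is that the \emph{same} phase vector $\Phi$ governs every sector because the $\phi_k$ do not depend on $\sigma_j$.

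The analytic heart, which I expect to be the main obstacle, is the QSP existence theorem: for any real $P$ of degree $n$, parity $n\bmod 2$, with $\norm{P}_{[-1,1]}\le 1$, there is a $\Phi\in\mathbb{R}^n$ so that the $(1,1)$ entry of the QSP sequence equals a complex polynomial whose real part is $P$. I would prove this by induction on $n$: peel off the leading Chebyshev component with one signal rotation, which lowers the degree by one, and use a Fej\'er--Riesz / complementary-polynomial argument to certify that a companion polynomial $Q$ with $P^2+(1-x^2)Q^2=1$ exists at each stage, so the induction closes while keeping the sequence unitary. Finally, to convert the complex QSP output into the genuine real polynomial $P(\sigma_j)$ I would use the single ancilla: the controlled operator $\ket{0}\bra{0}\otimes U_\Phi+\ket{1}\bra{1}\otimes U_{-\Phi}$ sandwiched between $\ket{+}$ and $\bra{+}$ realises the average $(U_\Phi+U_{-\Phi})/2$, extracting exactly the real part on each sector.

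To conclude I would reassemble the sectors: summing the block contributions and tracking the parity-dependent codomain---odd $P$ carries $\ket{w_j}$ to $\ket{v_j}$ (projected by $\widetilde\Pi$), whereas even $P$ returns to $\ket{w_j}$ (projected by $\Pi$)---recovers precisely the two cases of $P^{SV}(A)$ in Eq.~\eqref{eqn:poly_sv_transformation_of_matrix}. The linear-algebra reduction is routine once the invariant decomposition is in hand; the genuine work is the inductive construction of $\Phi$ together with the Fej\'er--Riesz completion guaranteeing unitarity at every step.
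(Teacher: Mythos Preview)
The paper does not prove this theorem; it is stated as a preliminary result cited from \cite{gilyen2018quantum} and used as a black box, so there is no ``paper's own proof'' to compare against. Your sketch is essentially the standard argument from the cited reference (Jordan/CS decomposition into qubit-like invariant blocks, reduction to scalar QSP on each block, inductive phase construction with a Fej\'er--Riesz completion, and the $\ket{+}$-ancilla trick to extract the real part), so it is correct in spirit and matches the original source rather than anything in this paper.
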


Theorem \ref{thm:qsvt-def} tells us that for any real, bounded $n$-degree polynomial $P$ with definite parity, we can implement $P^{SV}(A)$ using one ancilla qubit, $\Theta(n)$ applications of $U_A$, $U^{\dag}_A$ and controlled reflections $I-2\Pi$ and $I-2\widetilde{\Pi}$. Furthermore, if in some well-defined interval, some function $f(x)$ is well approximated by an $n$-degree polynomial $P(x)$, then Theorem \ref{thm:qsvt-def} also allows us to implement a transformation that approximates $f(A)$, where

\begin{equation}
    \label{eqn:sv_transformation_of_matrix}
     f(A) := \begin{cases}
        \sum_j f(\sigma_j) \ket{v_j} \bra{w_j}, & P\text{ is odd} \\
        \sum_j f(\sigma_j) \ket{w_j} \bra{w_j}, & P \text{ is even}.
    \end{cases}
\end{equation}

The following theorem from Ref.~\cite{gilyen2019quantum} deals with the robustness of the QSVT procedure, i.e. how errors propagate in QSVT. In particular, for two matrices $A$ and $\tilde{A}$, it shows how close their polynomial transformations ($P^{SV}(A)$ and $P^{SV}(\widetilde{A})$, respectively) are, as a function of the distance between $A$ and $\tilde{A}$.
~\\~\\
\begin{lemma}[Robustness of Quantum Singular Value Transformation, \cite{gilyen2019quantum}]
    \label{lem:robustness_of_QSVT}
    Let $P \in \mathbb{R}[x]$ be a $n$-degree real polynomial that satisfies the requirements of QSVT. Let $A, \tilde{A} \in \mathbb{C}^{N \times M}$ be matrices of spectral norm at most 1. Then,
    \begin{equation*}
        \norm{P^{SV}(A) - P^{SV}(\tilde{A})} \leq 4n \sqrt{\norm{A - \tilde{A}}}. 
    \end{equation*}
\end{lemma}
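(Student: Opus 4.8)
The plan is to avoid perturbing the singular vectors of $A$ directly: these can swing wildly under a small perturbation when the singular values are nearly degenerate, so no bound phrased purely in terms of $\norm{A-\tilde{A}}$ could survive such an approach. Instead I would exploit the circuit representation of $P^{SV}$ supplied by Theorem~\ref{thm:qsvt-def}. Recall that $P^{SV}(A)$ is extracted from the phased alternating sequence $U_{\Phi}$ (and its sign-flipped partner $U_{-\Phi}$) by sandwiching between the fixed projectors $\Pi$ and $\widetilde{\Pi}$ together with the control-ancilla structure of Eq.~\eqref{eqn:qsvt_sequence_projected}. Since the phase factors $e^{i\phi_k(2\widetilde{\Pi}-I)}$ depend only on $\Phi$ and not on the encoded matrix, the entire dependence of $U_{\Phi}$ on $A$ enters through the $n$ occurrences of $U_A$ and $U_A^{\dag}$. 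Because the projectors and ancilla states are contractions, the first reduction is $\norm{P^{SV}(A)-P^{SV}(\tilde{A})}\le \norm{U_{\Phi}(A)-U_{\Phi}(\tilde{A})}$, where I write $U_{\Phi}(A)$ for the sequence built from $U_A$.

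Next I would collapse the circuit difference onto a single block difference. Writing $U_{\Phi}$ as a product of $n$ factors of the form $U_A^{(\dag)}$ interleaved with the matrix-independent phase unitaries, I would telescope: replacing $U_A$ by $U_{\tilde{A}}$ one factor at a time and using that all intervening factors are norm-preserving, the triangle inequality gives $\norm{U_{\Phi}(A)-U_{\Phi}(\tilde{A})}\le n\,\norm{U_A-U_{\tilde{A}}}$. This subadditivity of distance over products of unitaries is what produces the linear dependence on the degree $n$. The problem is thereby reduced to the following: given that $A=\widetilde{\Pi}U_A\Pi$ and $\tilde{A}=\widetilde{\Pi}U_{\tilde{A}}\Pi$ for the \emph{same} pair of projectors, bound $\norm{U_A-U_{\tilde{A}}}$ in terms of $\norm{A-\tilde{A}}$.

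To control $\norm{U_A-U_{\tilde{A}}}$ I would use the explicit qubitized (cosine--sine) extension of a contraction. Writing the singular value decomposition $A=\sum_j\sigma_j\ket{v_j}\bra{w_j}$ with $\sigma_j=\cos\theta_j\in[0,1]$, there is a canonical unitary $U_A$ acting as a planar rotation by $\theta_j$ on each two-dimensional invariant (Jordan) subspace, so that its blocks relative to $\Pi,\widetilde{\Pi}$ are built from $A$ together with the ``sine'' operators $\sqrt{I-AA^{\dag}}$ and $\sqrt{I-A^{\dag}A}$; the same construction fixes $U_{\tilde{A}}$. The diagonal blocks of $U_A-U_{\tilde{A}}$ then differ by $\norm{A-\tilde{A}}$, while the off-diagonal blocks differ by the gap between the two square roots. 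Here is the crux: the matrix square root is \emph{not} Lipschitz, only H\"older-$1/2$, and the key inequality $\norm{\sqrt{B}-\sqrt{B'}}\le\sqrt{\norm{B-B'}}$ for positive semidefinite $B,B'$ (a consequence of operator monotonicity of $\sqrt{\cdot}$) is precisely what introduces the square root into the final bound. Applying it with $B=I-A^{\dag}A$ and $B'=I-\tilde{A}^{\dag}\tilde{A}$, and using $\norm{A^{\dag}A-\tilde{A}^{\dag}\tilde{A}}\le(\norm{A}+\norm{\tilde{A}})\norm{A-\tilde{A}}\le 2\norm{A-\tilde{A}}$, bounds the sine-block difference by $\sqrt{2\norm{A-\tilde{A}}}$.

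Finally I would assemble the pieces. Since $\norm{A-\tilde{A}}\le 2$, the linear contributions of the diagonal blocks are themselves dominated by the $\sqrt{\,\cdot\,}$ contributions, so collecting all block contributions yields $\norm{U_A-U_{\tilde{A}}}\le 4\sqrt{\norm{A-\tilde{A}}}$; combined with the factor $n$ from the telescoping and the contraction by the projectors, this gives $\norm{P^{SV}(A)-P^{SV}(\tilde{A})}\le 4n\sqrt{\norm{A-\tilde{A}}}$. The only genuinely delicate step is the third one: everything else is routine triangle-inequality bookkeeping, whereas the passage through the square roots---forced by the cosine--sine structure of any unitary dilation of a contraction---is what simultaneously dictates the $\sqrt{\norm{A-\tilde{A}}}$ scaling and fixes the constant. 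The remaining edge cases (odd versus even parity, and one-dimensional invariant subspaces where $\sigma_j\in\{0,1\}$) I would handle separately, but they contribute nothing beyond the bounds already obtained.
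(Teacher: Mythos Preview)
The paper does not prove this lemma; it is quoted from Gily\'en et al.\ \cite{gilyen2019quantum} and used as a black box. Your sketch is essentially the argument given in that reference: since $P^{SV}(A)$ depends only on $A$ and not on the particular encoding unitary, one is free to pick the canonical cosine--sine (Halmos) dilation for $U_A$; telescoping the phased alternating sequence $U_{\Phi}$ over its $n$ occurrences of $U_A$ then reduces everything to $\norm{U_A-U_{\tilde A}}$, and the operator H\"older-$\tfrac12$ inequality $\norm{\sqrt{B}-\sqrt{B'}}\le\sqrt{\norm{B-B'}}$ applied to the defect operators $\sqrt{I-A^{\dag}A}$ and $\sqrt{I-AA^{\dag}}$ supplies both the $\sqrt{\norm{A-\tilde A}}$ scaling and the constant. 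Your identification of the square-root step as the only genuinely nontrivial ingredient is accurate.
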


Having discussed the preliminary concepts, in the next section, we explain the three variants of LCU we consider in this article.
\section{Three different approaches for implementing LCU on intermediate-term quantum computers}
\label{sec:new-approaches-LCU}
In this section, we present the three different LCU techniques.  Before stating these techniques, we prove a general lemma that will be invoked while proving results related to our approaches.

\subsection{Robustness of expectation values of observables}
In this section, we develop general results on the robustness of expectation values of observables which we shall use for both the \textit{Single-Ancilla LCU} (Sec.~\ref{subsec:one-ancilla-LCU}) and the \textit{Ancilla-free LCU} (Sec.~\ref{subsec:ancilla-free-LCU})  approaches. 

Consider that there exist two operators $P$ and $Q$ such that $\norm{P-Q}\leq \gamma$. In this section, we demonstrate that the expectation value of $O$ with respect to $P\rho P^{\dag}$ is not far off from the expectation value of $O$ with respect to $Q\rho Q^{\dag}$, for any density matrix $\rho$. More precisely, we prove
$$
\left|\Tr[O~P\rho P^{\dag}]-\Tr[O~Q\rho Q^{\dag}]\right|\leq 3\norm{P}\norm{O}\gamma.
$$
In order to prove this result, we need to use the tracial version H\"{o}lder's inequality which is stated below for completeness:
~\\
\begin{lemma}[Tracial version of H\"{o}lder's inequality \cite{ruskai1972inequalities}]
\label{thm:holder}
Define two operators $A$ and $B$ and parameters $p,q\in [1,\infty]$ such that $1/p+1/q =1 $. Then the following holds:
$$
\Tr[A^{\dag}B]\leq \norm{A}_p \norm{B}_q.
$$
\end{lemma}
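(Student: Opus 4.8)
The plan is to reduce the operator inequality to the elementary scalar Hölder inequality by way of von Neumann's trace inequality, which controls $\left|\Tr[A^{\dag}B]\right|$ in terms of the sorted singular values of the two operators. Concretely, I would first establish that
$$
\left|\Tr[A^{\dag}B]\right| \leq \sum_j \sigma_j(A)\,\sigma_j(B),
$$
where the singular values of $A$ (equivalently of $A^{\dag}$) and of $B$ are each listed in non-increasing order; this proves a form slightly stronger than the stated bound and hence implies it. Granting the above, the lemma follows immediately: applying the scalar Hölder inequality to the non-negative sequences $\{\sigma_j(A)\}$ and $\{\sigma_j(B)\}$ with conjugate exponents $p,q$ gives
$$
\sum_j \sigma_j(A)\,\sigma_j(B) \leq \left(\sum_j \sigma_j(A)^p\right)^{1/p}\left(\sum_j \sigma_j(B)^q\right)^{1/q} = \norm{A}_p\,\norm{B}_q,
$$
by the definition of the Schatten norms recalled in Sec.~\ref{subsec:notation}. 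The scalar Hölder step is standard: after normalizing the sequences it follows termwise from Young's inequality $ab \leq a^p/p + b^q/q$, and the boundary cases $p=1,q=\infty$ (and vice versa) reduce to the trivial estimate $\sum_j \sigma_j(A)\sigma_j(B) \leq \sigma_{\max}(B)\sum_j \sigma_j(A)$.

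The substantive work is therefore the von Neumann trace inequality. I would prove it by diagonalizing via the singular value decompositions $A = U_A \Sigma_A W_A^{\dag}$ and $B = U_B \Sigma_B W_B^{\dag}$, so that $\Tr[A^{\dag}B] = \Tr[\Sigma_A S \Sigma_B R]$ for the unitaries $S = W_A^{\dag} U_B$ and $R = W_B^{\dag} U_A$. Expanding the trace in coordinates and using $|S_{ij}|\,|R_{ji}| \leq \tfrac12(|S_{ij}|^2 + |R_{ji}|^2)$ yields
$$
\left|\Tr[A^{\dag}B]\right| \leq \sum_{i,j} \sigma_i(A)\,\sigma_j(B)\,M_{ij}, \qquad M_{ij} := \tfrac12\!\left(|S_{ij}|^2 + |R_{ji}|^2\right).
$$
The matrix $M$ is doubly stochastic, since unitarity of $S$ and $R$ forces every row and column of $(|S_{ij}|^2)$ and of $(|R_{ij}|^2)$ to sum to one.

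The main obstacle is the final combinatorial estimate $\sum_{i,j}\sigma_i(A)\sigma_j(B)M_{ij} \leq \sum_j \sigma_j(A)\sigma_j(B)$ for doubly stochastic $M$ and decreasingly sorted singular values. I would close this using Birkhoff's theorem — every doubly stochastic matrix is a convex combination of permutation matrices — which, since the left-hand side is linear in $M$, reduces the bound to its value at the extreme points, namely to showing $\sum_i \sigma_i(A)\,\sigma_{\pi(i)}(B) \leq \sum_i \sigma_i(A)\,\sigma_i(B)$ for every permutation $\pi$. That last inequality is the rearrangement inequality for two similarly ordered non-negative sequences. I expect this majorization argument to be the only delicate point; the singular value decomposition manipulations and the scalar inequalities are routine bookkeeping. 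An alternative route, should the Birkhoff step prove cumbersome to present, is to factor the argument as $\left|\Tr[A^{\dag}B]\right| \leq \norm{A^{\dag}B}_1 \leq \norm{A}_p\norm{B}_q$, but the second inequality there is itself a Schatten-norm Hölder estimate of comparable difficulty, so I would prefer the direct von Neumann approach.
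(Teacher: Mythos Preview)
The paper does not prove this lemma at all: it is stated with a citation to Ruskai~\cite{ruskai1972inequalities} and used as a black box, so there is no ``paper's own proof'' to compare against. Your proposal is the standard and correct route---von Neumann's trace inequality followed by scalar H\"older on the singular-value sequences---and would constitute a complete proof.

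One small bookkeeping slip: with $A = U_A\Sigma_A W_A^{\dag}$ and $B = U_B\Sigma_B W_B^{\dag}$, cyclicity gives $\Tr[A^{\dag}B] = \Tr[\Sigma_A\, (U_A^{\dag}U_B)\,\Sigma_B\,(W_B^{\dag}W_A)]$, so the two unitaries should be $S = U_A^{\dag}U_B$ and $R = W_B^{\dag}W_A$, not $W_A^{\dag}U_B$ and $W_B^{\dag}U_A$ as you wrote. This does not affect the argument, since all that matters downstream is that $S$ and $R$ are unitary and hence $(|S_{ij}|^2)$ and $(|R_{ij}|^2)$ are doubly stochastic.
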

Here $\norm{X}_p$ corresponds to the Schatten p-norm of the operator $X$. For the special case of $p=\infty$ and $q=1$, the statement of Lemma \ref{thm:holder} can be rewritten as
\begin{equation}
\label{eq:holder-special-case}
\Tr[A^{\dag}B]\leq \norm{A}_{\infty} \norm{B}_1=\norm{A} \norm{B}_1.
\end{equation}
Now we are in a position to formally state the main result.
~\\
\begin{theorem}
\label{thm:distance-expectation}
Suppose $P$ and $Q$ are operators such that $\norm{P-Q}\leq \gamma$ for some $\gamma\in [0,1]$. Furthermore, let $\rho$ be any density matrix and $O$ be some Hermitian operator with spectral norm $\norm{O}$. Then, if $\norm{P}\geq 1$, the following holds:
$$
\left|\Tr[OP\rho P^{\dag}]-\Tr[OQ\rho Q^{\dag}]\right| \leq 3\norm{O}\norm{P}\gamma.
$$
\end{theorem}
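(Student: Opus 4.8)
The plan is to expand $Q$ around $P$ and control the resulting cross terms with the tracial Hölder inequality (Lemma~\ref{thm:holder}). First I would set $E := P - Q$, so that $\norm{E}\leq\gamma$ and $Q = P - E$. Substituting and expanding,
$$
Q\rho Q^{\dag} = P\rho P^{\dag} - P\rho E^{\dag} - E\rho P^{\dag} + E\rho E^{\dag},
$$
so that, after applying $\Tr[O\,\cdot\,]$ and cancelling the common term,
$$
\Tr[OP\rho P^{\dag}] - \Tr[OQ\rho Q^{\dag}] = \Tr[OP\rho E^{\dag}] + \Tr[OE\rho P^{\dag}] - \Tr[OE\rho E^{\dag}].
$$

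Next I would bound each of the three terms. Using cyclicity of the trace and the $p=\infty$, $q=1$ case of Hölder's inequality in Eq.~\eqref{eq:holder-special-case}, with the density matrix $\rho$ playing the role of the Schatten-$1$ factor (recall $\norm{\rho}_1 = 1$ since $\rho\succeq 0$ and $\Tr[\rho]=1$), each term reduces to $\norm{\rho}_1=1$ times the spectral norm of the remaining product of operators, which is then estimated by submultiplicativity of $\norm{\cdot}$. Concretely, $\big|\Tr[OP\rho E^{\dag}]\big| = \big|\Tr[(P^{\dag}O^{\dag}E)^{\dag}\rho]\big| \leq \norm{P^{\dag}O^{\dag}E}\,\norm{\rho}_1 \leq \norm{P}\norm{O}\norm{E} \leq \norm{O}\norm{P}\gamma$, using $\norm{O^{\dag}}=\norm{O}$; the term $\Tr[OE\rho P^{\dag}]$ is bounded identically after cycling $P^{\dag}$ to the front, giving the same bound $\norm{O}\norm{P}\gamma$. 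For the quadratic term, $\big|\Tr[OE\rho E^{\dag}]\big| \leq \norm{E^{\dag}OE}\,\norm{\rho}_1 \leq \norm{O}\norm{E}^2 \leq \norm{O}\gamma^2$.

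Finally I would combine the three estimates via the triangle inequality to get $\big|\Tr[OP\rho P^{\dag}] - \Tr[OQ\rho Q^{\dag}]\big| \leq 2\norm{O}\norm{P}\gamma + \norm{O}\gamma^2$, and then invoke the hypotheses $\gamma\leq 1$ and $\norm{P}\geq 1$ to absorb the last term, $\norm{O}\gamma^2 \leq \norm{O}\gamma \leq \norm{O}\norm{P}\gamma$, yielding the claimed bound $3\norm{O}\norm{P}\gamma$. This is really a one-line perturbation expansion followed by Hölder, so there is no substantial obstacle; the only points needing care are (i) that Hölder can be applied with an absolute value on the trace — which is immediate after inserting a global phase, and in any case all the traces here are real since $O$ is Hermitian and $P\rho P^{\dag}$, $Q\rho Q^{\dag}$ are positive semidefinite — and (ii) remembering that the assumptions $\gamma\leq 1$ and $\norm{P}\geq 1$ are used precisely to fold the $\gamma^2$ contribution into a term linear in $\gamma$, which is what pins the constant at $3$.
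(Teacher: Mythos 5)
Your proposal is correct and is essentially the paper's own argument: the decomposition $P\rho P^{\dag}-Q\rho Q^{\dag}=P\rho E^{\dag}+E\rho P^{\dag}-E\rho E^{\dag}$ with $E=P-Q$ is exactly Eq.~\eqref{eq:diff-operators-applied-to-states}, and the term-by-term bounds via tracial H\"older with $\norm{\rho}_1=1$, followed by absorbing the $\gamma^2$ term using $\gamma\leq 1$ and $\norm{P}\geq 1$, match the paper's steps. The only cosmetic difference is that the paper first factors out $\norm{O}$ via one H\"older application against $\norm{P\rho P^{\dag}-Q\rho Q^{\dag}}_1$ and then bounds that trace norm, whereas you apply H\"older directly to each trace term after cycling; this changes nothing of substance.
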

\begin{proof}
For the operators $P$ and $Q$, we have
\begin{align}
(Q-P)\rho(P^{\dag}-Q^{\dag})=Q\rho (P^{\dag}-Q^{\dag})-P\rho(P^{\dag}-Q^{\dag})
\end{align}
Now adding and subtracting $P\rho P^{\dag}$ in the RHS we obtain
\begin{align}
(Q-P)\rho(P^{\dag}-Q^{\dag})+P\rho(P^{\dag}-Q^{\dag})&= Q\rho (P^{\dag}-Q^{\dag})+P\rho P^{\dag}-P\rho P^{\dag}\\
					&= P\rho P^{\dag} - Q\rho Q^{\dag} - (P-Q)\rho P^{\dag}
\end{align}
This gives us
\begin{equation}
\label{eq:diff-operators-applied-to-states}
P\rho P^{\dag} - Q\rho Q^{\dag} = (Q-P)\rho(P^{\dag}-Q^{\dag})+P\rho(P^{\dag}-Q^{\dag})+(P-Q)\rho P^{\dag}.
\end{equation}
Now, using the tracial version of Holder's inequality with $p=\infty$ and $q=1$, we have
\begin{equation}
\label{eq:robustness-exp-inequality}
\left|\Tr[O~P\rho P^{\dag}]-\Tr[O~Q\rho Q^{\dag}]\right|\leq \norm{O}\cdot \norm{P\rho P^{\dag}-Q\rho Q^{\dag}}_1.
\end{equation}
For the second term in the RHS of the above equation, we can use the expression in Eq.~\eqref{eq:diff-operators-applied-to-states}. That is,
\begin{align}
\norm{P\rho P^{\dag}-Q\rho Q^{\dag}}_1 &=\norm{(Q-P)\rho(P^{\dag}-Q^{\dag})+P\rho(P^{\dag}-Q^{\dag})+(P-Q)\rho P^{\dag}}_1.
\end{align}
At this stage we can successively apply the tracial version of Holder's inequality (Lemma \ref{thm:holder} with $p=\infty$ and $q=1$) to the RHS of the expression above to obtain
\begin{align}
\norm{P\rho P^{\dag}-Q\rho Q^{\dag}}_1 &\leq \norm{Q-P}\norm{\rho(P^{\dag}-Q^{\dag})}_1+\norm{P}\norm{\rho(P^{\dag}-Q^{\dag})}_1+\norm{P-Q}\norm{\rho P^{\dag}}_1\\
                                       & \leq \norm{P-Q}^2 \norm{\rho}_1+ \norm{P} \norm{P-Q} \norm{\rho}_1+\norm{P-Q} \norm{P} \norm{\rho}_1\\
                                       &\leq \norm{P-Q}^2+2\norm{P-Q}\norm{P}~~~~~~~~~~~~[\mathrm{~Using~} \norm{\rho}_1=1~]
\end{align}
Now, substituting this back in the RHS of Eq.~\eqref{eq:robustness-exp-inequality}, we obtain
\begin{align}
\left|\Tr[O P\rho P^{\dag}] - \Tr[O Q\rho Q^{\dag}]\right| &\leq \norm{O} \norm{P-Q}^2+2\norm{O}\norm{P}\norm{P-Q}\\
								&\leq \gamma^2\norm{O}+2\norm{O}\norm{P}\gamma~~~~~~~~~~~~~~~~~~~~~~~~~~~~[~\mathrm{As~} \norm{P-Q}\leq \gamma]\\
								&\leq 3\gamma\norm{O}\norm{P}~~~~~~~~~~~~~~~~~~~~~~~~~~~~~~~~~~~~~~~~[~\mathrm{As~} \norm{P}\geq 1]
\end{align}
\end{proof}
It is easy to see why Theorem \ref{thm:distance-expectation} is useful to develop robust versions of \textit{Single Ancilla LCU} and \textit{Ancilla-free LCU}.  Typically, $f(H)$ is often not exactly equal to a linear combination of unitaries but is $\gamma$-close to it. Formally, $\norm{f(H)-g(H)}\leq \gamma$, where $g(H)$ can be exactly expressed as an LCU. Consequently, for the variants of LCU that we develop in the subsequent sections, we will be estimating $\Tr[O~g(H)\rho g(H)^{\dag}]$. But, by Theorem \ref{thm:distance-expectation}, we can always bound
$$
\left|\Tr[O~f(H)\rho f(H)^{\dag}]-\Tr[O~g(H)\rho g(H)^{\dag}]\right|\leq 3\norm{O}\norm{f(H)}\gamma.
$$ 
We shall be using this result in the subsequent sections.
\subsection{Single-Ancilla LCU: Estimating expectation values of observables}
\label{subsec:one-ancilla-LCU}
In this section, we describe the \textit{Single-Ancilla LCU} technique, which allows us to sample from quantum states obtained by applying LCU. Suppose we are given a Hermitian matrix $H\in \mathbb{C}^{N\times N}$ and we wish to implement $f(H)$, which can be approximated by a linear combination of unitaries. That is, for some $\gamma\in [0,1)$,
$$
\norm{f(H)-\sum_{j=1}^{M} c_j U_j}\leq \gamma,
$$
for unitaries $U_j$ and $c_j\in\mathbb{R}^{+}\backslash \{0\}$. Let us define the $\ell_1$-norm of the LCU coefficients as $\norm{c}_1=\sum_{j=1}^{M} c_j$. Furthermore, suppose we have access to the quantum circuit for $U_j$. Then given any initial state $\rho_0$, we can estimate the expectation value 
$$
\Tr[O\rho]=\dfrac{\Tr[O~f(H)\rho_0 f(H)^{\dag}]}{\Tr[f(H)\rho_0 f(H)^{\dag}]},
$$
to arbitrary accuracy, for any observable $O$, using the quantum circuit in Fig.~\ref{fig:single-ancilla-circuit}. We estimate both the expectation value (numerator) and the norm (denominator), by making use of Algorithm \ref{algo:randomized-time-evolution},whose steps are stated thusly: the ancilla qubit is prepared in the state $\ket{+}$ so that the overall initial state is $\rho_1=\ket{+}\bra{+}\otimes \rho_0$. We sample two unitaries $V_1$ and $V_2$, independently from the ensemble $\mathcal{D}=\{U_j, c_j/\norm{c}_1\}$ and then implement controlled and anti-controlled versions of $V_1$ and $V_2$, respectively. Finally, we make a measurement of the observable $X\otimes O$ and store the outcome.
\RestyleAlgo{boxruled}
\begin{algorithm}[ht]
  \caption{$\texttt{Expectation-observable~}(O, \{c_j,U_j\},\rho_0, T$)}\label{algo:randomized-time-evolution}
  ~\\~\\
   \begin{itemize}
  \item[1.~] Prepare the state $\rho_1=\ket{+}\bra{+}\otimes \rho_0$.
  \item[2.~] Obtain i.i.d.~samples $V_1, V_2$ from the distribution $\left\{U_j, \dfrac{c_j}{\norm{c}_1}\right\}$.
  \item[3.~] For $\tilde{V}_1=\ket{0}\bra{0}\otimes I+\ket{1}\bra{1}\otimes V_1$ and $\tilde{V}_2=\ket{0}\bra{0}\otimes V_2+\ket{1}\bra{1}\otimes I$, measure \\ $(X\otimes O)$ on the state
  $$
  \rho'= \tilde{V}_2 \tilde{V}_1\rho_1 \tilde{V}^{\dag}_1\tilde{V}^{\dag}_2.
  $$
  \item[4.~] For the $j^{\mathrm{th}}$ iteration, store into $\mu_j$, the outcome of the measurement in Step 4.
  \item[5.~] Repeat Steps 1 to 4, a total of $T$ times.
  \item[6.~] Output 
  $$
  \mu=\dfrac{\norm{c}^2_1}{T}\sum_{j=1}^{T} \mu_j.
  $$
\end{itemize}
\end{algorithm}
We then sample from this quantum circuit $T$ times and calculate the mean of all the outcomes. This is the final output of Algorithm \ref{algo:randomized-time-evolution}. 

If $\tau_j$ is the cost of implementing the unitary $U_j$, then the cost of implementing $V_1$ (or $V_2$) depends on this quantity. Indeed, the average cost of implementing these unitaries (in each coherent run of Algorithm \ref{algo:randomized-time-evolution}) is given by $2\langle\tau\rangle$, where
\begin{equation}
\label{eq:avg-cost-per-run}
\langle \tau \rangle =\dfrac{1}{\norm{c}_1}\sum_{j=1}^{M} c_j\tau_j.
\end{equation} 
Additionally, if $\tau_{\rho_0}$ is the cost of preparing the initial state $\rho_0$, we can we can upper bound the average cost of each coherent run of Algorithm \ref{algo:randomized-time-evolution} is $2\langle \tau\rangle +\tau_{\rho_0}$. In the worst case, if $\tau_{\max}=\max_j \tau_j$, we have $\langle\tau\rangle\leq \tau_{\max}$. Thus, a pessimistic upper bound on the cost per coherent run would be $2\tau_{\max}+\tau_{\rho_0}$. Next, we calculate the number of classical iterations $T$ required to estimate the desired quantity. 

In order to estimate $\Tr[O\rho]$, we need to run this Algorithm twice: first to estimate the expectation value, and then to estimate the norm. First, we formally show via Theorem \ref{thm:randomized-time-evolution} that if the number of samples obtained $T$ is large enough, the sample mean of the outcomes converges to the expectation value $\Tr[O~f(H)\rho_0 f(H)^{\dag}]$.
 \\~\\
\begin{restatable}[Estimating expectation values of observables]{theorem}{thmbodyrandomizedtimeevolve}
\label{thm:randomized-time-evolution}
Let $\varepsilon, \delta, \gamma \in (0,1)$ be some parameters. Let $O$ be some observable and $\rho_0$ be some initial state. Suppose there is a Hermitian matrix $H\in\mathbb{C}^{N\times N}$, such that $\norm{f(H)-\sum_{j}c_j U_j}\leq \gamma$, where $U_j$ is some unitary such that 
$$
\gamma \leq \dfrac{\varepsilon}{6\norm{O}\norm{f(H)}}.
$$ 
Furthermore, let
$$
T\geq \dfrac{8\norm{O}^2\ln(2/\delta)\norm{c}_1^4}{\varepsilon^2}. 
$$ 
Then, Algorithm \ref{algo:randomized-time-evolution} estimates $\mu$ such that
$$
\left|\mu - \Tr[O~f(H)\rho_0f(H)^{\dag}]\right| \leq \varepsilon,
$$
with probability at least $1-\delta$, using of one ancilla qubit and $T$ repetitions of the quantum circuit in Fig.~\ref{fig:single-ancilla-circuit}. 
\end{restatable}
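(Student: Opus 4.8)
The plan is to trace the circuit of Fig.~\ref{fig:single-ancilla-circuit} exactly in expectation, then add a Hoeffding-type concentration bound, and finally absorb the LCU truncation error via Theorem~\ref{thm:distance-expectation}.

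\textbf{Step 1 (circuit analysis).} Write $g=\sum_{j=1}^{M}c_jU_j$. Since $\tilde V_2\tilde V_1=\ket{0}\bra{0}\otimes V_2+\ket{1}\bra{1}\otimes V_1=:W$, one run produces the pre-measurement state $\rho'=W(\ket{+}\bra{+}\otimes\rho_0)W^{\dagger}$. Expanding $\ket{+}\bra{+}$ and using $\bra{0}X\ket{1}=\bra{1}X\ket{0}=1$, $\bra{0}X\ket{0}=\bra{1}X\ket{1}=0$, only the ancilla-off-diagonal terms survive, giving
$$
\Tr\!\big[(X\otimes O)\rho'\big]=\tfrac12\big(\Tr[O\,V_2\rho_0V_1^{\dagger}]+\Tr[O\,V_1\rho_0V_2^{\dagger}]\big)=\Re\Tr[O\,V_2\rho_0V_1^{\dagger}],
$$
where the second equality uses Hermiticity of $O$ and $\rho_0$. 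As $V_1,V_2$ are drawn i.i.d.\ from $\{U_j,c_j/\norm{c}_1\}$, independence yields $\mathbb{E}[V_2\rho_0V_1^{\dagger}]=\mathbb{E}[V_2]\,\rho_0\,\mathbb{E}[V_1^{\dagger}]=g\rho_0g^{\dagger}/\norm{c}_1^{2}$; since $g\rho_0g^{\dagger}$ is positive semidefinite the real part is inert, so the per-run outcome $\mu_j$ obeys $\mathbb{E}[\mu_j]=\Tr[O\,g\rho_0g^{\dagger}]/\norm{c}_1^{2}$ and the rescaled estimator $\mu=(\norm{c}_1^{2}/T)\sum_j\mu_j$ is unbiased for $\Tr[O\,g\rho_0g^{\dagger}]$.

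\textbf{Step 2 (concentration).} Each $\mu_j$ is the outcome of measuring $X\otimes O$, whose spectral norm is $\norm{X}\norm{O}=\norm{O}$, so $\abs{\mu_j}\le\norm{O}$ almost surely and the $\mu_j$ are i.i.d. Hoeffding's inequality applied to the sum $\mu$ of $T$ independent terms each of range $2\norm{c}_1^{2}\norm{O}/T$ gives
$$
\Pr\!\Big[\,\big|\mu-\Tr[O\,g\rho_0g^{\dagger}]\big|\ge\tfrac{\varepsilon}{2}\,\Big]\le 2\exp\!\Big(-\tfrac{\varepsilon^{2}T}{8\norm{c}_1^{4}\norm{O}^{2}}\Big),
$$
which is at most $\delta$ exactly when $T\ge 8\norm{O}^{2}\norm{c}_1^{4}\ln(2/\delta)/\varepsilon^{2}$, matching the hypothesis.

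\textbf{Step 3 (robustness) and conclusion.} Applying Theorem~\ref{thm:distance-expectation} with $P=f(H)$, $Q=g$, $\norm{P-Q}\le\gamma$ (and $\norm{f(H)}\ge1$, the regime of interest) gives $\big|\Tr[O\,f(H)\rho_0f(H)^{\dagger}]-\Tr[O\,g\rho_0g^{\dagger}]\big|\le 3\norm{O}\norm{f(H)}\gamma\le\tfrac{\varepsilon}{2}$ by the assumed bound on $\gamma$. A triangle inequality with Step~2 then yields $\abs{\mu-\Tr[O\,f(H)\rho_0f(H)^{\dagger}]}\le\varepsilon$ with probability at least $1-\delta$; the single-ancilla and no-multi-qubit-control claims are immediate from the circuit. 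I expect the only real obstacle to be Step~1: checking that the controlled/anti-controlled arrangement of $V_1,V_2$ (rather than a plain product) is precisely what turns the $\ket{+}$-ancilla interference into the bilinear form $\Tr[O\,V_2\rho_0V_1^{\dagger}]$, and that the real part appearing there is harmless because $g\rho_0g^{\dagger}\succeq0$; Steps~2 and~3 are then routine.
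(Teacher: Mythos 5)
Your proposal is correct and follows essentially the same route as the paper's own proof: compute the post-circuit state, observe that measuring $X\otimes O$ isolates the ancilla-off-diagonal cross terms giving $\tfrac12\Tr[O(V_1\rho_0V_2^{\dagger}+V_2\rho_0V_1^{\dagger})]$, use independence of $V_1,V_2$ for unbiasedness, apply Hoeffding with range $\|c\|_1^2\|O\|$ to get the stated $T$, and absorb the truncation error via Theorem~\ref{thm:distance-expectation} and a triangle inequality. The only cosmetic difference is your packaging of the cross terms as $\Re\Tr[O\,V_2\rho_0V_1^{\dagger}]$, which is equivalent to the paper's symmetric form.
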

~\\~\\
\begin{proof}
Let $g(H)=\sum_{j}c_j U_j$. First observe from Algorithm \ref{algo:randomized-time-evolution} that the initial state $\rho_1=\ket{+}\bra{+}\otimes\rho_0$ transforms to
\begin{align}
\rho'&= \tilde{V}_2 \tilde{V}_1 \rho_1 \tilde{V}^{\dag}_1\tilde{V}^{\dag}_2\\
	 &=\dfrac{1}{2}\left[\ket{0}\bra{0}\otimes V_2 \rho_0 V^{\dag}_2 +\ket{0}\bra{1}\otimes V_2 \rho_0 V^{\dag}_1+\ket{1}\bra{0}\otimes V_1 \rho_0 V^{\dag}_2+\ket{1}\bra{1}\otimes V_1 \rho_0 V^{\dag}_1\right].	
\end{align}
So after measuring the observable $X\otimes O$, we have
$$
\Tr\left[(X\otimes O)\rho'\right]=\dfrac{1}{2}\Tr\left[O\left(V_1 \rho_0 V^{\dag}_2+ V_2 \rho_0 V^{\dag}_1\right)\right].
$$
Note that the expected values, 
$$
\mathbb{E}\left[V_1\right]=\mathbb{E}\left[V_2\right]=\dfrac{1}{\norm{c}_1}\sum_{j}c_j U_j.
$$
So, the expected outcome of the $j^{\mathrm{th}}$ iteration is
$$
\mathbb{E}\left[\mu_j\right]=\mathbb{E}\left[\Tr[(X\otimes O)\rho']\right]=\dfrac{1}{\norm{c}_1^2}\Tr[O~g(H)\rho_0 g(H)^{\dag}].
$$
Next, we need to estimate two things: 
\begin{itemize}
\item[(a)]~How fast does the sample mean $\mu=\sum_{j}\norm{c}^2_1\mu_j/T$ converge to its expectation value? For this, we use Hoeffding's inequality.  
\item[(b)]~What is the accuracy of the observation with respect to $f(H)$ as a function of the distance between $f(H)$ and $g(H)$? For this, we invoke Theorem \ref{thm:distance-expectation}. 
\end{itemize}
~\\
Observe that the POVM measurement yields some outcome of $O$ in the range $[-\norm{O},\norm{O}]$. So each random variable lies in the range 
$$
-{\norm{O}\norm{c}^2_1 \leq \norm{c}^2_1\mu_j\leq +\norm{O}\norm{c}^2_1}.
$$ 

We evaluate (a), by using Hoeffding's inequality to obtain
\begin{align}
\Pr\left[\left|\mu - \Tr[O~g(H)\rho_0~g(H)^{\dag}]\right|\geq \varepsilon/2\right] \leq 2\exp\left[-\dfrac{T\varepsilon^2}{8\norm{c}^4_1\norm{O}^2}\right].
\end{align}
This immediately gives us that for
\begin{equation}
\label{eq:number-of-repititions}
T\geq \dfrac{8\norm{O}^2\ln(2/\delta)\norm{c}_1^4}{\varepsilon^2},
\end{equation}
$$
\left|\mu -\Tr[O~g(H)~\rho_0~g(H)^{\dag}]\right| \leq \varepsilon/2,
$$
with probability at least $1-\delta$. Now, in order to evaluate (b), we first apply triangle inequality, we obtain
\begin{align}
\left|\mu - \Tr[O~f(H)\rho_0 f(H)^{\dag}]\right| \leq &\left|\mu - \Tr[O~g(H)\rho_0 g(H)^{\dag}]\right|+ \\
&\left|\Tr[O~f(H)\rho_0 f(H)^{\dag}]-
\Tr[O~g(H)\rho_0 g(H)^{\dag}]\right|.
\end{align}
The first term in the RHS of the above inequality is upper bounded by $\varepsilon/2$. In order to bound the second term, note that $\norm{f(H)-g(H)}\leq \gamma$. For any such operators that are at most $\gamma$-separated, we can use Theorem \ref{thm:distance-expectation} to obtain:
$$
\left|\Tr[O~f(H)\rho_0 f(H)^{\dag}]-
\Tr[O~g(H)\rho_0 g(H)^{\dag}]\right|\leq 3\norm{O}\norm{f(H)}\gamma\leq \varepsilon/2,
$$
for $\gamma$ upper bounded as in the statement of Theorem \ref{thm:randomized-time-evolution}.
So, overall we have
$$
\left|\mu - \Tr[O~f(H)\rho_0 f(H)^{\dag}]\right| \leq \varepsilon,
$$
which completes the proof.
\end{proof}
~\\~\\
When $f(H)$ is unitary, the operator $\rho=f(H)\rho_0 f(H)^{\dag}$ is a normalized quantum state and hence, $\Tr[O\rho]=\Tr[O~f(H)\rho_0 f(H)^{\dag}]$. However, this is not the case in general. In such scenarios, as described previously, 
$$
\Tr[O\rho]=\dfrac{\Tr[O~f(H)\rho_0 f(H)^{\dag}]}{\Tr[f(H)\rho_0 f(H)^{\dag}]}.
$$ 
Thus far, we could only obtain the numerator of the RHS of this equation. Next, we describe the procedure to obtain an estimate of the norm $\ell^2=\Tr[f(H)\rho_0 f(H)^{\dag}]$. 

Note that we do not have an accurate knowledge of this quantity apriori. Provided we have knowledge of some rudimentary lower bound  of the norm, using Algorithm \ref{algo:randomized-time-evolution} and Theorem \ref{thm:randomized-time-evolution}, we can obtain an estimate $\tilde{\ell}$ of $\ell^2$ to arbitrary accuracy by simply setting $O=I$. The crucial question in this regard is, how accurate should the estimate $\tilde{\ell}$ be such that $\mu/\tilde{\ell}$ is $\varepsilon$-close to $\Tr[O\rho]$? 

Suppose we have knowledge of some lower bound (say $\ell_*$) of this quantity, i.e.\ $\Tr[f(H)\rho_0f(H)^{\dag}]=\ell^2\geq \ell_*$. Then, we prove the following: 

\begin{restatable}[Robustness of normalization factors]{theorem}{thmbodynormrobustness}
\label{thm:norm-approx}
Let $\varepsilon\in (0,1)$, $\rho_0$ be some initial state and $P$ be an operator. Furthermore, let $\ell_*\in \mathbb{R}^+$ satisfies $\ell^2=\Tr[P\rho_0 P^{\dag}]\geq \ell_*$, and $O$ be some observable with $\norm{O}\geq 1$. Suppose we obtain an estimate $\tilde{\ell}$ such that
\begin{equation}
\label{eq:normalization-dist}
\left|\tilde{\ell}-\ell^2\right|\leq \dfrac{\varepsilon \ell_*}{3\norm{O}},
\end{equation}
and some parameter $\mu$ such that,
\begin{equation}
\label{eq:estimation-dist-1}
\left|\mu-\Tr[O~P \rho_0 P^{\dag}]\right| \leq  \dfrac{\varepsilon\ell_*}{3},
\end{equation}
then,
$$
\left|\dfrac{\mu}{\tilde{\ell}}-\dfrac{\Tr[O~P \rho_0 P^{\dag}]}{\ell^2}\right| \leq \varepsilon.
$$
\end{restatable}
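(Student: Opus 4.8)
The plan is to reduce the desired bound to two elementary ingredients: a perturbation identity for the ratio of two approximated quantities, and a crude lower bound on $\tilde{\ell}$ that keeps it away from zero. To lighten notation I would write $a=\Tr[OP\rho_0P^{\dag}]$ and $b=\ell^2=\Tr[P\rho_0P^{\dag}]$, so that the hypotheses become $|\mu-a|\leq \varepsilon\ell_*/3$, $|\tilde{\ell}-b|\leq \varepsilon\ell_*/(3\norm{O})$, and $b\geq \ell_*$, while the goal is $|\mu/\tilde{\ell}-a/b|\leq \varepsilon$.

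First I would record the a priori bound $|a|\leq \norm{O}\,b$. Since $\rho_0$ is a density matrix, $\sigma:=P\rho_0P^{\dag}$ is positive semidefinite with $\Tr[\sigma]=b$; as $O$ is Hermitian with $-\norm{O}I\preceq O\preceq \norm{O}I$, this gives $-\norm{O}\,b\leq \Tr[O\sigma]\leq \norm{O}\,b$ (equivalently, apply Lemma~\ref{thm:holder} with $p=\infty$, $q=1$ to $\pm O$). Hence $|a/b|\leq \norm{O}$.

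Next I would expand the ratio error with the usual "add and subtract" manipulation,
$$
\frac{\mu}{\tilde{\ell}}-\frac{a}{b}=\frac{\mu b-a\tilde{\ell}}{\tilde{\ell}\,b}=\frac{b(\mu-a)+a(b-\tilde{\ell})}{\tilde{\ell}\,b},
$$
so that, inserting the two accuracy hypotheses together with $|a|\leq \norm{O}\,b$,
$$
\left|\frac{\mu}{\tilde{\ell}}-\frac{a}{b}\right|\leq \frac{b\cdot\frac{\varepsilon\ell_*}{3}+\norm{O}\,b\cdot\frac{\varepsilon\ell_*}{3\norm{O}}}{|\tilde{\ell}|\,b}=\frac{2\varepsilon\ell_*}{3|\tilde{\ell}|}.
$$
Finally I would lower-bound $\tilde{\ell}$: from $\tilde{\ell}\geq b-\varepsilon\ell_*/(3\norm{O})\geq \ell_*\bigl(1-\varepsilon/(3\norm{O})\bigr)$ together with $\varepsilon<1$ and $\norm{O}\geq 1$, we get $\tilde{\ell}\geq 2\ell_*/3$, and substituting this into the previous display yields $|\mu/\tilde{\ell}-a/b|\leq \varepsilon$.

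There is essentially no obstacle here; the only point needing care is ensuring $\tilde{\ell}$ is bounded away from $0$, which is precisely where the hypotheses $\norm{O}\geq 1$ and $\varepsilon<1$ enter, and where the appearance of the factor $\norm{O}$ in the denominator of the allowed error on $\tilde{\ell}$ (rather than on $\mu$) matters: it is this asymmetry in the required accuracies that makes the two contributions to the numerator balance to exactly $2\varepsilon\ell_*/3$, which then cancels against the $2\ell_*/3$ lower bound on $\tilde{\ell}$.
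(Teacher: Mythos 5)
Your proof is correct. It rests on the same two pillars as the paper's argument -- the tracial H\"older bound $|\Tr[O\,P\rho_0P^{\dag}]|\leq \norm{O}\,\ell^2$ (valid because $P\rho_0P^{\dag}$ is positive semidefinite with trace $\ell^2$) and a perturbation bound for the ratio -- but it organizes the ratio estimate differently. The paper (in its general form, Theorem \ref{thm:norm-approx-general}) splits $\mu/\tilde{\ell}-a/\ell^2$ by a two-step triangle inequality into $|\mu|\cdot|1/\ell^2-1/\tilde{\ell}|$ plus the numerator error, and then controls $1/\tilde{\ell}$ by expanding it as a geometric series in $\varepsilon\ell_*/(3\ell^2\norm{O})$; the resulting bound is $\varepsilon/2+\varepsilon/6+\varepsilon/3=\varepsilon$ after substituting $a=b=3$. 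You instead use the single common-denominator identity $\mu/\tilde{\ell}-a/b=\bigl(b(\mu-a)+a(b-\tilde{\ell})\bigr)/(\tilde{\ell}b)$, which makes the two error contributions balance to exactly $2\varepsilon\ell_*/3$ in the numerator, and you replace the series expansion by the crude lower bound $\tilde{\ell}\geq \ell_*(1-\varepsilon/(3\norm{O}))\geq 2\ell_*/3$ (using $\varepsilon<1\leq\norm{O}$, which also guarantees $\tilde{\ell}>0$ so the division is legitimate). Your route is slightly more elementary and arguably cleaner, since it avoids the infinite series and makes transparent why the asymmetric accuracy requirements on $\mu$ and $\tilde{\ell}$ are exactly what is needed; the paper's version has the advantage of yielding the parametrized bound $\varepsilon/(a-1)+\varepsilon/(b(a-1))+\varepsilon/b$ for general accuracy constants $a,b$, which is reused elsewhere (e.g.\ with $a=b=5$ in Theorem \ref{thm:ground-state-estimation}).
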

\begin{proof}
This is proven in Sec.~\ref{sec:appendix-robustness-norm} of the Appendix.
\end{proof}
This Theorem tells us the precision with which both $\mu$ and $\tilde{\ell}$ should be estimated so that $\mu/\tilde{\ell}$ is $\varepsilon$-close to $\Tr[O\rho]$. The overall procedure makes use of Algorithm \ref{algo:randomized-time-evolution}, and is formally stated via Algorithm \ref{algo:single-ancilla-overall}. This involves running Algorithm \ref{algo:randomized-time-evolution} twice: first obtain $\mu$ as stated previously, and then obtain $\tilde{\ell}$, by setting $O=I$ (the identity matrix) and following the same steps. 
\RestyleAlgo{boxruled}
\begin{algorithm}[ht]
  \caption{$\texttt{Single-ancilla-LCU~}(O, \{c_j,U_j\},\rho_0$)}\label{algo:single-ancilla-overall}
  ~\\~\\
  Choose some $T\in O\left(\dfrac{\norm{O}^2\ln(1/\delta)\norm{c}_1^4}{\varepsilon^2\ell_*^2}\right)$.~\\
   \begin{itemize}
  \item[1.~] Run $\texttt{Expectation-observable~}\left(O, \{c_j,U_j\},\rho_0, T\right)$ to obtain $\mu$ such that 
  $$|\mu-\Tr[O~f(H)\rho_0 f(H)^{\dag}]|\leq \varepsilon \ell_*/3.$$
  \item[2.~] Run $\texttt{Expectation-observable~}\left(I, \{c_j,U_j\},\rho_0, T\right)$ to obtain $\tilde{\ell}$ such that 
  $$|\tilde{\ell}-\ell^2|\leq \dfrac{\varepsilon\ell_*}{3\norm{O}}.$$
  \item[3.~] Output $\mu/\tilde{\ell}$.
\end{itemize}
\end{algorithm}

The correctness of the Algorithm is stated formally through the following Theorem.
~\\
\begin{theorem}
\label{thm:single-ancilla-overall}
Let $\varepsilon,\delta\in (0,1)$, $O$ be some observable and $\rho_0$ be some initial state, prepared in cost $\tau_{\rho_0}$. Suppose $H\in\mathbb{C}^{N\times N}$ be a Hermitian matrix such that for some function $f:[-1,1]\mapsto \mathbb{R}$ and unitaries $\{U_j\}_{j}$, $\norm{f(H)-\sum_{j}c_j U_j}\leq \gamma$, where
$$
\gamma\leq\dfrac{\varepsilon\ell_*}{18\norm{O}\norm{f(H)}},
$$ 
and $\ell^2=\Tr[f(H)\rho_0 f(H)^{\dag}]\geq \ell_*$. Furthermore, suppose each unitary $U_j$ is implementable in cost $\tau_j$ such that $\langle\tau\rangle=\sum_{j}c_j\tau_j/\norm{c}_1$. Then Algorithm \ref{algo:single-ancilla-overall} outputs $\mu$ and $\tilde{\ell}$ such that
$$
\left|\mu/\tilde{\ell} - \Tr[O\rho]\right| \leq \varepsilon,
$$
with probability $(1-\delta)^2$, using 
$$
T=O\left(\dfrac{\norm{O}^2 \norm{c}_1^4 \ln(1/\delta)}{\varepsilon^2\ell^{2}_{*}}\right) 
$$  
repetitions of the quantum circuit in Fig.~\ref{fig:single-ancilla-circuit}, the average cost of each coherent run is $2\langle\tau\rangle+\tau_{\rho_0}$. 
\end{theorem}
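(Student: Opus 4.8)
The plan is to obtain Theorem~\ref{thm:single-ancilla-overall} as a corollary of Theorem~\ref{thm:randomized-time-evolution}, applied twice, glued together by the error-propagation bound of Theorem~\ref{thm:norm-approx}: indeed, Algorithm~\ref{algo:single-ancilla-overall} is nothing more than two calls to \texttt{Expectation-observable} followed by a ratio. Write $\varepsilon_1 = \varepsilon\ell_*/3$ and $\varepsilon_2 = \varepsilon\ell_*/(3\norm{O})$. First I would analyze Step~1 of Algorithm~\ref{algo:single-ancilla-overall} by invoking Theorem~\ref{thm:randomized-time-evolution} with error parameter $\varepsilon_1$, failure probability $\delta$, and observable $O$. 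Its hypothesis on the LCU error then reads $\gamma \le \varepsilon_1/(6\norm{O}\norm{f(H)}) = \varepsilon\ell_*/(18\norm{O}\norm{f(H)})$, which is exactly the bound assumed in the statement, and it requires $T \ge 8\norm{O}^2\ln(2/\delta)\norm{c}_1^4/\varepsilon_1^2 = 72\,\norm{O}^2\ln(2/\delta)\norm{c}_1^4/(\varepsilon^2\ell_*^2)$ repetitions, which is the $T$ chosen in Algorithm~\ref{algo:single-ancilla-overall}. Hence $\abs{\mu - \Tr[O\,f(H)\rho_0 f(H)^{\dag}]} \le \varepsilon_1$ with probability at least $1-\delta$.

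Next I would analyze Step~2, invoking Theorem~\ref{thm:randomized-time-evolution} with error parameter $\varepsilon_2$, failure probability $\delta$, and observable $I$ (so $\norm{I}=1$), which estimates $\ell^2 = \Tr[f(H)\rho_0 f(H)^{\dag}]$. The LCU-error hypothesis becomes $\gamma \le \varepsilon_2/(6\norm{f(H)}) = \varepsilon\ell_*/(18\norm{O}\norm{f(H)})$, the \emph{same} bound as before, and the repetition count becomes $T \ge 8\ln(2/\delta)\norm{c}_1^4/\varepsilon_2^2 = 72\,\norm{O}^2\ln(2/\delta)\norm{c}_1^4/(\varepsilon^2\ell_*^2)$, the \emph{same} $T$ as before --- the $\norm{O}$ factor absent from the identity-observable run is exactly compensated by the tighter accuracy $\varepsilon_2$ demanded there. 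So with that single choice of $(\gamma, T)$ we get $\abs{\tilde{\ell} - \ell^2} \le \varepsilon_2$ with probability at least $1-\delta$.

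Since the two calls use independent randomness, both estimates meet their tolerances simultaneously with probability at least $(1-\delta)^2$. On that event I would apply Theorem~\ref{thm:norm-approx} with $P = f(H)$: its two hypotheses, Eqs.~\eqref{eq:normalization-dist} and~\eqref{eq:estimation-dist-1}, are precisely $\abs{\tilde{\ell} - \ell^2} \le \varepsilon\ell_*/(3\norm{O})$ and $\abs{\mu - \Tr[O\,f(H)\rho_0 f(H)^{\dag}]} \le \varepsilon\ell_*/3$ (using $\ell^2 \ge \ell_*$), so it yields $\abs{\mu/\tilde{\ell} - \Tr[O\,f(H)\rho_0 f(H)^{\dag}]/\ell^2} \le \varepsilon$, i.e.\ $\abs{\mu/\tilde{\ell} - \Tr[O\rho]} \le \varepsilon$. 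For the cost, each coherent run of \texttt{Expectation-observable} prepares $\rho_0$ and samples two independent unitaries $V_1, V_2$ from $\{c_j/\norm{c}_1, U_j\}$, whose expected implementation cost is $\langle\tau\rangle$ each, while the single-shot measurement of $X\otimes O$ (or $X\otimes I$) adds no asymptotic overhead; hence the expected cost per coherent run is $2\langle\tau\rangle + \tau_{\rho_0}$, the same for both calls.

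The only genuinely delicate point --- and where I would spend the most care --- is this compatibility check: verifying that one pair $(\gamma\text{-bound},\, T)$ simultaneously satisfies both invocations of Theorem~\ref{thm:randomized-time-evolution}, which hinges on the precise choice $\varepsilon_1 = \varepsilon\ell_*/3$, $\varepsilon_2 = \varepsilon\ell_*/(3\norm{O})$ made in Algorithm~\ref{algo:single-ancilla-overall}. One should also keep track of the mild normalization assumption $\norm{O}\ge 1$ needed by Theorem~\ref{thm:norm-approx} (harmless, since $O$ may be rescaled). Everything else is routine substitution and bookkeeping.
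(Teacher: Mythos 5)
Your proof is correct and follows essentially the same route as the paper: two invocations of Theorem~\ref{thm:randomized-time-evolution} with accuracies $\varepsilon\ell_*/3$ and $\varepsilon\ell_*/(3\norm{O})$, combined via Theorem~\ref{thm:norm-approx}, with the cost and repetition counts read off directly. Your explicit verification that a single pair $(\gamma,T)$ satisfies both invocations simultaneously is a detail the paper states only implicitly, and it checks out.
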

\begin{proof}
Algorithm \ref{algo:single-ancilla-overall} calls Algorithm \ref{algo:randomized-time-evolution} twice: first to estimate $\mu$ and then to estimate $\tilde{\ell}$. For the upper bound of $\gamma$ defined in the statement of the theorem, we can indeed obtain a $\mu$, using Algorithm \ref{algo:randomized-time-evolution} such that
$$
\left|\mu - \Tr[O~f(H)\rho_0 f(H)^{\dag}]\right|\leq \varepsilon \ell^*/3.
$$ 
This follows from Theorem \ref{thm:randomized-time-evolution}, by simply replacing $\varepsilon$ with $\varepsilon\ell^*/3$. The number of iterations of Theorem \ref{thm:randomized-time-evolution} scales as
$$
T_1=O\left(\dfrac{\norm{O}^2\norm{c}^4_1\ln(1/\delta)}{\varepsilon^2 \ell^2_*}\right),
$$
with each coherent run costing at most $\tau_{\rho_0}+2\langle\tau\rangle$.

For obtaining the estimate $\tilde{\ell}$, we set $O=I$. Furthermore, in Theorem \ref{thm:randomized-time-evolution}, we replace $\varepsilon$ by $\frac{\varepsilon\ell_*}{3~\norm{O}}$. For these parameters, Algorithm \ref{algo:randomized-time-evolution} outputs $\tilde{\ell}$ such that
$$
\left|\tilde{\ell}-\ell^2\right|\leq \dfrac{\varepsilon\ell^*}{3\norm{O}}.
$$
The total number of iterations is
$$
T_2=O\left(\dfrac{\norm{O}^2\norm{c}^4_1\ln(1/\delta)}{\varepsilon^2\ell^2_*}\right),
$$ 
with each coherent run having an average cost $\tau_{\rho_0}+2\langle \tau\rangle$.

Finally from Theorem \ref{thm:norm-approx}, we have that these estimates $\mu$ and $\tilde{\ell}$ satisfy
$$
\left|\dfrac{\mu}{\tilde{\ell}}-\Tr[O\rho]\right|\leq \varepsilon,
$$
where the total number of iterations scales as
$$
T=T_1+T_2=O\left(\dfrac{\norm{O}^2\norm{c}^4_1\ln(1/\delta)}{\varepsilon^2\ell^2_*}\right),
$$ 
with each coherent run requiring an average cost $\tau_{\rho_0}+2\langle\tau\rangle$.
\end{proof}

\subsubsection{Comparison with the standard LCU procedure}

First, we compare the performance of the \textit{Single-Ancilla LCU} technique with the standard LCU procedure. In order to fairly compare the two approaches, we will analyze the various methods by which the standard LCU technique can be used to estimate $\Tr[O\rho]$. As described in Sec.~\ref{subsec:prelim-lcu}, standard LCU prepares the quantum state
\begin{equation}
\label{eq:single_round_LCU}
\ket{\psi_t}=\dfrac{\norm{f(H)\ket{\psi_0}}}{\norm{c}_1}\ket{\bar{0}}\ket{\psi}+\ket{\Phi}^{\perp},
\end{equation}
where
\begin{equation}
\label{eq:normalized-lcu-state}
\ket{\psi}=\dfrac{f(H)\ket{\psi_0}}{\norm{f(H)\ket{\psi_0}}}.
\end{equation}
One obvious advantage of \textit{Single-Ancilla LCU} is that it requires only a single ancilla qubit, while the standard LCU procedure and its variants described below require atleast $\lceil\log_2 M\rceil$ ancilla qubits, and more sophisticated controlled operations. 

Also from Sec.~\ref{subsec:prelim-lcu}, the cost of preparing $\ket{\psi_t}$ is $O(2\tau_R+\tau_Q+\tau_{\psi_0})$. Here, $\tau_{\psi_0}$ is the cost of preparing the initial state $\ket{\psi_0}$, $\tau_R$ is the cost of implementing the \textit{prepare} unitary $R$ that initializes the state of the ancilla registers, and $\tau_Q$ is the cost of implementing the multi-qubit controlled \textit{select} unitary $Q=\sum_{j=1}^{M} \ket{j}\bra{j}\otimes U_j$. 

Let us begin by comparing $\tau_Q$ with $\langle \tau \rangle$, the average cost of implementing unitaries $V_1$ and $V_2$ in each coherent run of the \textit{Single-Ancilla LCU} algorithm. Clearly, $\langle \tau \rangle \leq \tau_Q$. Also, in the general setting, the average cost $\langle\tau\rangle$ cannot exceed the cost of implementing the most expensive unitary, $\tau_{\max}$. At the same time, $\tau_Q$ cannot be lower than the cost of implementing the most expensive unitary. Combining these facts, we have
\begin{equation}
\langle \tau \rangle \leq \tau_{\max} \leq \tau_Q.
\end{equation}
Indeed, when there is no apriori information about the $U_j$'s, $\tau_Q$ can be much greater than even the cost of implementing the most expensive unitary, $\tau_{\max}$. In fact, in the worst case, if every $U_j$ costs $\tau_{\max}$, $\tau_Q=O(M\tau_{\max})$. However, in particular cases, when the $U_i$'s are related, it is possible that both $\tau_Q$ and $\langle\tau\rangle$ scale as $O(\tau_{\max})$. In fact in Lemma 8 of Ref.~\cite{childs2017quantum}, it was shown that when the $U_j$'s  are powers of one single unitary, i.e.\ $U_j=Y^j$, for some unitary $Y$, $\tau_Q=O(\tau_{\max})$. 

Thus, despite requiring only a single ancilla qubit, the cost per coherent run of \textit{Single-Ancilla LCU} is lower than \textit{Standard LCU} (as it additionally requires implementing the \textit{prepare} unitary, requiring $\tau_R$ cost), provided the cost of implementing each $U_i$ is the same for both procedures. Additionally, our method requires no multi-qubit controlled gates, which are unlikely to be implemented in the intermediate-term.

After having prepared $\ket{\psi_t}$, there are three ways in which $\Tr[O\rho_t]$ may be estimated. The overall cost varies based on the choice of the particular variant. We compare each of these three approaches with \textit{Single-Ancilla LCU} by considering the cost of each coherent run, the number of ancilla qubits needed, and the overall cost (which is the product of the cost per run and the number of classical repetitions needed). Overall, there is a trade-off between the cost of each run and the number of classical repetitions as we discuss below.    
~\\~\\
\textbf{Standard LCU with amplitude amplification and classical repetitions:~} One can use the generic LCU procedure to prepare the state $\ket{\psi_t}$ with probability $\norm{f(H)\ket{\psi_0}}^2/\norm{c}^2_1$, and then apply quantum amplitude amplification in order to prepare the state $\ket{\psi}$, with a high probability. This procedure requires a cost
$$
\Gamma_{\max}=O\left(\dfrac{\norm{c}_1}{\sqrt{\ell_*}} \left(2\tau_R+ \tau_Q + \tau_{\psi_0}\right)\right).
$$
This is already higher than the cost of each run by \textit{Single-Ancilla LCU}. 

It is then possible to estimate $\Tr[O\rho]=\braket{\psi|O|\psi}$, by repeatedly measuring the observable $O$, which requires several classical repetitions of the procedure to prepare $\ket{\psi}$, costing $\Gamma_{\max}$. 

For this, we first prove a general result. Broadly, consider any state preparation procedure $V$ that prepares a quantum state $\ket{x}$ to (roughly) $\varepsilon/\norm{O}$-accuracy. Then, for any observable $O$, we can obtain an $\varepsilon$-accurate estimate of $\braket{x|O|x}$ using $O(~\norm{O}^2\ln(1/\delta)/\varepsilon^2)$ runs of $V$. Formally, we have the following lemma:
~\\
\begin{lemma}
\label{lem:state-preparation-observable}
Suppose, there exists a quantum procedure $V$ which, starting from the state $\ket{\psi_0}$, prepares a quantum state $\ket{\tilde{x}}$, such that 
$$
\norm{\ket{\tilde{x}}-\ket{x}}\leq \dfrac{\varepsilon}{2\sqrt{2}\norm{O}}.
$$
Then, to in order to output $\mu$ such that
$$
\left|\mu-\braket{x|O|x}\right|\leq \varepsilon,
$$
with probability at least $1-\delta$,
$$
T=O\left(\dfrac{\norm{O}^2\ln(1/\delta)}{\varepsilon^2}\right)
$$
repetitions of the procedure $V$ is required.
\end{lemma}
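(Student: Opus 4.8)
The plan is to reduce the estimation of $\braket{x|O|x}$ to a standard mean-estimation problem for a bounded random variable, controlling two sources of error separately: the systematic error coming from the fact that $V$ prepares $\ket{\tilde x}$ rather than $\ket{x}$, and the statistical error coming from finitely many measurements. First I would show that $\bigl|\braket{\tilde x|O|\tilde x} - \braket{x|O|x}\bigr| \leq \varepsilon/2$. This is the perturbation step: writing $\ket{\tilde x} = \ket{x} + \ket{e}$ with $\norm{\ket{e}} \leq \varepsilon/(2\sqrt{2}\norm{O})$, expand $\braket{\tilde x|O|\tilde x} = \braket{x|O|x} + \braket{e|O|x} + \braket{x|O|e} + \braket{e|O|e}$ and bound each cross term by $\norm{O}\norm{\ket{e}}$ (Cauchy--Schwarz) and the last term by $\norm{O}\norm{\ket{e}}^2$; with $\norm{\ket{e}}\leq \varepsilon/(2\sqrt 2 \norm{O}) \leq 1$ this yields a bound of $2\norm{O}\norm{\ket{e}} + \norm{O}\norm{\ket{e}}^2 \leq 3\norm{O}\norm{\ket{e}} \leq 3\varepsilon/(2\sqrt 2) $, which is a little too large, so I would instead track constants more carefully (or invoke Theorem \ref{thm:distance-expectation} with $P = V$, $Q$ the ideal preparation, $\rho = \ket{\psi_0}\bra{\psi_0}$, $\gamma = \norm{\ket{\tilde x}-\ket{x}}$) to land the clean bound $\varepsilon/2$. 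The precise constant $2\sqrt2$ in the hypothesis is chosen exactly so this works out.

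Next I would set up the sampling estimator. Running $V$ once and measuring the observable $O$ on $\ket{\tilde x}$ produces a random variable $Y$ whose outcomes lie in the spectrum of $O$, hence in $[-\norm{O},\norm{O}]$, and with $\mathbb{E}[Y] = \braket{\tilde x|O|\tilde x}$. Taking $T$ i.i.d.\ copies $Y_1,\dots,Y_T$ and forming $\mu = \frac1T\sum_{j=1}^T Y_j$, Hoeffding's inequality for bounded variables gives
\[
\Pr\!\left[\,\bigl|\mu - \braket{\tilde x|O|\tilde x}\bigr| \geq \varepsilon/2\,\right] \;\leq\; 2\exp\!\left(-\frac{T\varepsilon^2}{8\norm{O}^2}\right).
\]
Demanding the right-hand side be at most $\delta$ gives $T \geq \frac{8\norm{O}^2\ln(2/\delta)}{\varepsilon^2} = O\!\left(\norm{O}^2\ln(1/\delta)/\varepsilon^2\right)$, matching the claimed bound.

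Finally I would combine the two estimates by the triangle inequality: with probability at least $1-\delta$,
\[
\bigl|\mu - \braket{x|O|x}\bigr| \;\leq\; \bigl|\mu - \braket{\tilde x|O|\tilde x}\bigr| + \bigl|\braket{\tilde x|O|\tilde x} - \braket{x|O|x}\bigr| \;\leq\; \varepsilon/2 + \varepsilon/2 \;=\; \varepsilon,
\]
which is the desired conclusion. I do not expect any genuine obstacle here — the lemma is a routine concentration argument. The only place requiring a little care is bookkeeping the constant in the perturbation step so that the stated hypothesis $\norm{\ket{\tilde x}-\ket{x}}\leq \varepsilon/(2\sqrt2\norm{O})$ is exactly what is needed to get error $\varepsilon/2$ there; this is most cleanly handled by citing Theorem \ref{thm:distance-expectation} rather than re-deriving the cross-term bounds by hand.
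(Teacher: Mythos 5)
Your proposal is correct and follows essentially the same route as the paper: bound the systematic error $\left|\braket{\tilde x|O|\tilde x}-\braket{x|O|x}\right|$ by $\varepsilon/2$, apply Hoeffding's inequality to the bounded measurement outcomes to get the statistical error $\varepsilon/2$ with $T=O(\norm{O}^2\ln(1/\delta)/\varepsilon^2)$, and combine by the triangle inequality. The only cosmetic difference is in the perturbation step: the paper does not invoke Theorem \ref{thm:distance-expectation} (which would require operator-norm closeness of the preparation maps, not just closeness of the two output states) but instead applies the tracial H\"{o}lder inequality directly to $\norm{\ket{x}\bra{x}-\ket{\tilde x}\bra{\tilde x}}_1$ and a fidelity bound — i.e.\ exactly the trace-norm route you identify as the clean way to handle the constants.
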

\begin{proof}
A single run of $V$, prepares the state $\rho_{\tilde{x}}=\ket{\tilde{x}}\bra{\tilde{x}}$. Then, measuring the observable $O$, outputs an estimate  that lies between $[-\norm{O}, \norm{O}]$. Then, in order to ensure that the estimate is $\varepsilon/2$-close to $\Tr[O \rho_{\tilde{x}}]$, with probability $1-\delta$, we require
$$
T=O\left(\dfrac{\norm{O}^2\ln(1/\delta)}{\varepsilon^2}\right),
$$ 
repetitions of the procedure $V$, which follows from Hoeffding's inequality. Also, 
\begin{align}
\left|\Tr[O~\ket{x}\bra{x}]-\Tr[O~\ket{\tilde{x}}\bra{\tilde{x}}]\right|&\leq \norm{O}_{\infty}\norm{\ket{x}\bra{x}-\ket{\tilde{x}}\bra{\tilde{x}}}_1\\
&\leq 2\norm{O}\sqrt{1-|\braket{x|\tilde{x}}|}\\
&\leq \varepsilon/2.
\end{align}
Thus, after $T$ repetitions, we are able to output $\mu$ such that
\begin{align}
\left|\mu-\braket{x|O|x}\right|&\leq |\mu-\Tr[O~\ket{\tilde{x}}\bra{\tilde{x}}]+\left|\Tr[O~\ket{x}\bra{x}]-\Tr[O~\ket{\tilde{x}}\bra{\tilde{x}}]\right|\\
&\leq \varepsilon/2+\varepsilon/2=\varepsilon.
\end{align}
\end{proof}

We now compare our method to the standard LCU approach to estimate the expectation value of $O$. For any $H$, suppose 
$$
\norm{f(H)-\sum_{j=1}^{M} c_j U_j}\leq \dfrac{\varepsilon}{2\sqrt{2}\norm{O}},
$$ 
such that $\norm{c}_1=\sum_{j}c_j$. Then the standard LCU procedure prepares a quantum state $\ket{\tilde{\psi}}$ such that
$$
\norm{\ket{\psi}-\ket{\tilde{\psi}}}\leq \Theta\left(\dfrac{\varepsilon}{2\sqrt{2}\norm{O}}\right).
$$

Furthermore, from Lemma \ref{lem:state-preparation-observable}, 
$$
T=O\left(\dfrac{\norm{O}^2\ln(1/\delta)}{\varepsilon^2}\right)
$$
repetitions of the standard LCU procedure to estimate $\braket{\psi|O|\psi}$ with probability at least $1-\delta$, where each such run costs $\Gamma_{\max}$. Thus the overall cost of $O(\Gamma_{\max}||O||^2/\varepsilon^2)$ has a better dependence on $\norm{c}_1$ and $\ell_*$, as compared to \textit{Single-Ancilla LCU}.

However, in addition to requiring more ancilla qubits and sophisticated controlled operations, the cost of each run of this procedure is significantly larger than our algorithm. Each run of our algorithm classically samples only two $U_j$'s and implements controlled (over a single ancilla qubit) versions of these sampled unitaries. Thus, the average cost of each run for our procedure $O(\langle \tau \rangle+\tau_{\psi_0}) < \Gamma_{\max}$. In order to estimate $\braket{\psi|O|\psi}$ however, our procedure requires more classical repetitions, and so the overall cost of the \textit{Single-Ancilla LCU} procedure is higher.
~\\~\\
\textbf{Coherent estimation of $\mathbf{\Tr[O\rho]}$ using Standard LCU:~}  It is possible to use quantum amplitude estimation to coherently estimate $\braket{\psi|O|\psi}$ to an additive accuracy $\varepsilon$. This procedure then does not require more than $O(1)$ classical repetitions, and hence the overall cost is lower than the \textit{Single Ancilla LCU procedure}. However, this also increases the cost of each run, as well as number of ancilla qubits substantially. 

First, we need to access (any general, non-unitary) $O$ via a block encoding. For instance, in many cases $O$ is itself a linear combination of unitaries. Constructing this block encoding requires additional ancilla qubits and controlled operations. Note that modern versions of quantum amplitude estimation procedure require only one additional ancilla qubit \cite{rall2020quantum, grinko2021iterative}.

\begin{table}[ht]
\begin{center}
    \resizebox{\columnwidth}{!}{
    \renewcommand{\arraystretch}{3} 
    \begin{tabular}{|c|c|c|c|c|}
    \hline
    Method & No. of. Ancilla qubits & Cost of each coherent run & Classical repetitions \\ \hline\hline
    \vtop{\hbox{\strut ~~~~~~~~~~~~~Standard LCU} \hbox{\strut (with QAA + classical repetitions)}} & $O\left(\log M\right)$~ & $O\left(\dfrac{\norm{c}_1}{\sqrt{\ell_*}} \left(2\tau_R+ \tau_Q + \tau_{\psi_0}\right)\right)$ & $O\left(\dfrac{\norm{O}^2}{\varepsilon^2}\right)$ \\ \hline
   
    \vtop{\hbox{\strut Standard LCU} \hbox{\strut ~~(with QAE)}} & $O(a_O+\log M)$ & $O\left(\dfrac{\alpha_O}{\varepsilon}\left(\dfrac{\norm{c}_1}{\sqrt{\ell_*}}\left(\tau_Q+2\tau_R+\tau_{\psi_0}\right)+T_O\right)\right)$ & $O(1)$  \\ \hline
    \vtop{\hbox{\strut ~~~~~~~Standard LCU} \hbox{\strut (without QAA or QAE)}} & $O(\log M)$ & $O\left(2\tau_R+\tau_Q+\tau_{\psi_0}\right)$ & $O\left(\dfrac{\norm{O}^2\norm{c}^2_1}{\varepsilon^2\ell_*}\right)$ \\ \hline
    \vtop{\hbox{\strut ~~Single-Ancilla LCU} \hbox{\strut ~~~~~~~(this work)}} & $1$ & $2\langle\tau\rangle +\tau_{\psi_0}$ & $O\left(\dfrac{\norm{O}^2\norm{c}^4_1}{\varepsilon^2\ell^2_*}\right)$ \\ 
    \hline
    \end{tabular}}
   \caption{For any $H\in \mathbb{C}^{N\times N}$, this table compares the \textit{Single-Ancilla LCU} method with standard LCU for estimating $\Tr[O\rho]$ to additive accuracy $\varepsilon$, for any measurable observable $O$. Here, $\rho$ is defined in Eq.~\eqref{eq:lcu-state} such that $f(H)\approx \sum_{i}c_i U_i$ and $\norm{c}_1=\sum_i c_i$. Here $\tau_{\psi_0}$ is the cost the unitary that prepares the initial state, $\tau_R$ and $\tau_Q$ denote the cost of the \textit{prepare} unitary $R$, and the \textit{select} unitary $Q$, respectively. Also, $\langle\tau\rangle=\sum_{j}c_j\tau_j/\norm{c}_1$, where $\tau_j$ is the cost of implementing $U_j$. As described in Sec.~\ref{subsec:ancilla-free-LCU}, the standard LCU method can be used to estimate $\Tr[O\rho]$ in three ways: (i) with quantum amplitude amplification (QAA) followed by classical repetitions, (ii) with quantum amplitude estimation (QAE) to coherently estimate the desired expectation value, and (iii) incoherently with just classical repetitions. In particular, for method (ii), we assume that there exists an $(\alpha_O, a_O, 0)$-block encoding of $O$, implementable in a cost $\tau_O$.   
As compared to the \textit{Single-Ancilla LCU} procedure, each of these variants require higher ancilla qubits and multi-qubit controlled operations. Furthermore, as $\langle \tau \rangle \leq \tau_{\max}\leq \tau_{Q}$, the cost of each coherent run of the \textit{Single-Ancilla LCU} is lower than \textit{Standard LCU}, provided the cost of implementing the unitaries $U_j$ is the same for both. This is because:~(a) Standard LCU and its variants require implementing the \textit{prepare} unitary, requiring cost $\tau_R$, and (b) also, $\tau_{\max}$, which is upper bounded by the maximum cost of implementing the most expensive $U_j$, cannot exceed $\tau_Q$, the cost of implementing the \textit{select} unitary. However, the overall complexity (product of the cost of each coherent run and the total number of classical repetitions) of \textit{Standard LCU} scales better than \textit{Single-Ancilla LCU}.  \label{table:comparison-standard-LCU}}
    \end{center}
\end{table}
\renewcommand{\arraystretch}{1}

Let $U_O$ be an $(\alpha_O, a_O, 0)$-block encoding of $O$, requiring cost $\tau_O$. Then, if $T_{\psi}$ is cost of preparing the state $\ket{\psi}$, then the cost of estimating $\Tr[O\rho]$ to $\varepsilon$-additive accuracy is given by
$$
\Gamma_{\max}=O\left(\dfrac{\alpha_O}{\varepsilon}\left(T_{\psi}+T_O\right)\right)=O\left(\dfrac{\alpha_O}{\varepsilon}\left(\dfrac{\norm{c}_1}{\sqrt{\ell_*}}\left(\tau_Q+2\tau_R+\tau_{\psi_0}\right)+T_O\right)\right),
$$
where we have replaced $T_{\psi}$ by the cost of preparing $\ket{\psi}$ using standard LCU with quantum amplitude amplification. As compared to the \textit{Single-Ancilla LCU} method, we find that the overall complexity of coherently estimating $\Tr[O\rho]$ via standard LCU is lower. In particular, the overall dependence on precision is now $O(1/\varepsilon)$, instead of $O(1/\varepsilon^2)$. However the cost of of each coherent run is substantially increased. Moreover, the total number of ancilla qubits needed is now $O(a_O+\log(M))$.
~\\~\\
\textbf{Standard LCU without amplitude amplification or estimation:~}Without amplitude amplification, the standard LCU procedure, prepares the desired quantum state in the second register with probability $p=\norm{f(H)\ket{\psi_0}}^2/\norm{c}^2_1$. Then this would require cost,
$$
\Gamma_{\max}=O\left(2\tau_R+ \tau_Q + \tau_{\psi_0}\right),
$$
which is less than the previous two variants, but still higher than \textit{Single Ancilla LCU} (as $\tau_Q\geq \tau_{\max}\geq \langle\tau\rangle$). Then, the following strategy can be used to estimate the desired expectation value: Measure the first register. Whenever the first register is in $\ket{\bar{0}}$, apply the observable $O$ to the state of the second register. The first register is in state $\ket{\bar{0}}$ with probability $p$, and so, $O(\ln(1/\delta)/p)$ classical repetitions are needed to obtain the same. Overall, these many repetitions of the \textit{Standard LCU} procedure are needed per measurement of $O$. 
Also, one needs to measure $O$, a total of $O(~||O||^2\ln(1/\delta)/\varepsilon^2)$ to estimate $\Tr[O\rho]$ with $\varepsilon$-additive accuracy. Thus, combining these, we obtain that overall
$$
T=O\left(\dfrac{\norm{O}^2\norm{c}^2_1\ln^2(1/\delta)}{\varepsilon^2\ell_*}\right),
$$
repetitions are needed to output the desired expectation value with $\varepsilon$-additive accuracy and at least $(1-\delta)^2$ success probability. This is quadratically lower than the total number of iterations required by the \textit{Single Ancilla LCU} technique, in terms of $\norm{c}_1$ and $\ell_*$. Thus, even in the worst case, this method has a cost per coherent run that is higher than our method, while requiring quadratically lower overall cost. Recall that unlike this procedure, our method required  (i) only a single ancilla qubit, (ii) no multi-qubit controlled operations.

Overall, \textit{Single-Ancilla LCU} provides a generic framework to implement any LCU, using only a single ancilla qubit, and no multi-qubit controlled operations. This makes the method particularly suitable for early fault-tolerant quantum computers. The cost of each coherent run of \textit{Single-Ancilla LCU} is lower than \textit{Standard LCU}, provided the unitaries $U_j$'s are implemented at the same cost for both methods. However, the overall cost of estimating expectation values is lower for standard LCU approaches, as the \textit{Single-Ancilla LCU} technique requires more classical repetitions. In Table \ref{table:comparison-standard-LCU}, we compare the cost (number of ancilla qubits, cost of each coherent run and the number of classical repetitions) of estimating expectation values of observables using the standard LCU approach, with \textit{Single-Ancilla LCU}. The wide applicability of LCU in the development of various quantum algorithms, also implies that our method can be employed to develop novel algorithms for these problems. We apply this technique for several problems of interest such as Hamiltonian simulation (Sec.~\ref{sec:ham-sim}), ground state property estimation (Sec.~\ref{subsec:gsp-single-ancilla}), and extracting useful information from the solution of quantum linear systems (Sec.~\ref{subsec:single-ancilla-qls}). Next, we discuss the \textit{Analog LCU} method.

\subsection{Analog LCU: coupling a discrete primary system to a continuous-variable ancilla}
\label{subsec:analog-LCU}
Suppose we have some Hermitian matrix $H\in \mathbb{C}^{N\times N}$ of unit spectral norm and we wish to implement $f(H)$ for some function $f:[-1,1]\mapsto \mathbb{R}$ which satisfies: 
$$
\left|f(x)-\int_{a}^{b} dz~c(z)\cdot e^{-itxz}\right|\leq \varepsilon,
$$
where $c:\mathbb{R}\mapsto\mathbb{R}^{+}\backslash\{0\}$.

Now suppose $H$ is coupled to a continuous variable system such that the resulting interaction Hamiltonian is $H'=H\otimes \hat{z}$. Suppose the first register is prepared in some initial state $\ket{\psi_0}$ and the ancilla system is prepared in the continuous-variable quantum state
$$
\ket{\bar{0}}_c=\int_{a}^{b}dz~\sqrt{\dfrac{c(z)}{\norm{c}_1}}\ket{z},
$$
where $\norm{c}_1=\int_{a}^b dz~ |c(z)|$. For instance, $\hat{z}$ can represent a degree of freedom (position or momentum) of a one-dimensional quantum Harmonic oscillator, and the state $\ket{\bar{0}}_c$, could be its ground state (a Gaussian), a free resource state for continuous variable systems. For several of our applications, we shall see that this is indeed the case.

Now we shall simply evolve the system according to the interaction Hamiltonian $H'$ to obtain
\begin{align}
\label{eq:final-state-analog-lcu-0}
\ket{\eta_t}=e^{-iH't}\ket{\psi_0}\ket{\bar{0}}_c&=\int_{a}^{b} dz~\sqrt{\dfrac{c(z)}{\norm{c}_1}} e^{-iHtz}\ket{\psi_0}\ket{z}\\
									 &=\dfrac{1}{\norm{c}_1}\int_{a}^{b} dz~c(z) e^{-iHtz}\ket{\psi_0}\ket{\bar{0}}_c+\ket{\Phi}^\perp,
\label{eq:final-state-analog-lcu-1}									 
\end{align}
where $\ket{\Phi}^{\perp}$ is a quantum state (not normalized) such that $(I\otimes \ket{\bar{0}}_c\bra{\bar{0}}_c)\ket{\Phi}^{\perp}=0$. We arrive at the Eq.~\eqref{eq:final-state-analog-lcu-1} from
Eq.~\eqref{eq:final-state-analog-lcu-0} by observing that
$$
(I\otimes \bra{\bar{0}}_c)\int_{a}^{b} dz~\sqrt{\dfrac{c(z)}{\norm{c}_1}} e^{-iHtz}\ket{\psi_0}\ket{z}=\dfrac{1}{\norm{c}_1}\int_{a}^{b} dz~dz'~\delta_{z,z'} \sqrt{c(z)c(z')^*} e^{-itHz}\ket{\psi_0}.
$$
Thus, we have prepared a quantum state that is $O(\varepsilon/\norm{c}_1)$-close to
\begin{equation}
\ket{\psi}=\dfrac{f(H)}{\norm{c}_1}\ket{\psi_0}\ket{\bar{0}}_c+\ket{\Phi}^\perp.
\end{equation}
Now post-selecting on having $\ket{\bar{0}}_c$ in the second register we obtain  a state that is $\varepsilon$-close to $f(H)\ket{\psi_0}/\norm{f(H)\ket{\psi_0}}$ in the first register with probability $\norm{f(H)\ket{\psi_0}}^2/\norm{c}^2_1$. We will use this procedure to develop an analog quantum algorithm for preparing ground states of Hamiltonians in Sec.~\ref{sec:gsp}.

This continuous-time algorithm can be naturally generalized to the scenario where we want to implement $f(H)$ for some function $f:[-1,1]\mapsto \mathbb{R}$ such that 
$$
\left|f(x)-\int_{a_1}^{b_1}dz_1~c(z_1)\int_{a_2}^{b_2}dz_2~c(z_2)\cdots\int_{a_k}^{b_k}dz_k~c(z_k)e^{-itxz_1z_2\cdots z_k}\right|\leq \varepsilon.
$$
This can be implemented by coupling the Hamiltonian $H$ with $k$ different ancillary continuous-variable systems such that the effective interaction Hamiltonian is $\tilde{H}=H\otimes \hat{z}_1\otimes\cdots\otimes\hat{z}_k$. The $j$-th ancilla system is prepared in the quantum state 
$$
\ket{\bar{0}}_{c_j}=\int_{a_j}^{b_j} dz_j~\sqrt{\dfrac{c(z_j)}{\norm{c_j}_1}}\ket{z_j}.
$$
Then by evolving the initial state $\ket{\psi_0}\ket{\bar{0}}_{c_1}\cdots\ket{\bar{0}}_{c_k}$ according to $\tilde{H}$ for time $t$ results in a quantum state that is $O\left(\frac{\varepsilon}{\Pi_{j=1}^k\norm{c_j}_1}\right)$-close to 
\begin{equation}
\label{eq:final-state-analog-lcu-2}
\ket{\eta_t}=\dfrac{f(H)}{\Pi_{j=1}^k\norm{c_j}_1}\ket{\psi_0}\ket{\bar{0}}_{c_1}\cdots\ket{\bar{0}}_{c_k}+\ket{\Phi}^\perp.
\end{equation}
In Sec.~\ref{sec:qls}, our analog quantum linear systems algorithm requires coupling the system Hamiltonian to two ancillary systems, which is captured by this generalization of analog LCU. Interestingly, for the applications we consider, the ancillary states are the ground or the first excited state of a one-dimensional quantum Harmonic oscillator or the ground state of a ``particle in a ring''.
\subsection{Ancilla-free LCU: Randomized unitary sampling} 
\label{subsec:ancilla-free-LCU}
As in the previous sections, suppose we are given a Hermitian matrix $H$ and we wish to implement $f(H)$, which can be approximated by a linear combination of unitaries. That is, for some $\gamma\in [0,1)$,
$$
\norm{f(H)-\sum_{j=1}^{M} c_j U_j}\leq \gamma,
$$
for unitaries $U_j$, $c_j\in\mathbb{R}^{+}\backslash \{0\}$, and $\norm{c}_1=\sum_{j=1}^{M} c_j$. Here, we assume that we are interested in the projection of $f(H)\ket{\psi_0}$ in some subspace of interest, and in the resulting average success probability. Then, we formally prove that instead of implementing \textit{Standard LCU}, it suffices to simply sample $U_j$ according to the distribution of the LCU coefficients. The resulting projection on the subspace of interest is at least as high, on average. This is the key idea behind the \textit{Ancilla-free LCU} technique.

Suppose we obtain a unitary $V$ sampled from the ensemble $\mathcal{D}\sim \{c_j/\norm{c}_1,U_j\}$, and apply it to some initial state $\rho_0$, such that $\rho=V\rho_0 V^{\dag}$. Then, on average, this leads to the following mixed state
\begin{equation}
\label{eq:avg-denisty-matrix-ancilla-free-lcu}
\bar{\rho}=\mathbb{E}\left[\rho\right]=\mathbb{E}[V\rho_0V^{\dag}]=\dfrac{1}{\norm{c}_1}\sum_{j=1}^{M} c_j U_j \rho_0 U^{\dag}_j. 
\end{equation}

If each $U_j$ costs $\tau_j$, then the average cost of preparing $\bar{\rho}$ is given by $\langle\tau\rangle = \sum_j c_j\tau_j/\norm{c}_1$. As with \textit{Single-Ancilla LCU}, this is upper bounded by $\tau_{\max}=\max_j\tau_j$, the cost of implementing the most expensive $U_j$. Now, for some projector $\Pi$ onto the subspace of interest, the average projection of the resulting density matrix $\rho$, is given by $\mathbb{E}[\Tr(\Pi \rho)]=\Tr[\Pi\bar{\rho}]$. Then, it is possible to prove that the projection of $\bar{\rho}$ in this subspace is at least as large as the projection of $f(H)\rho_0 f(H)^{\dag}$, if $\norm{c}_1\leq 1$. We  formally prove this via the following theorem:
~\\
\thmbodyunisamp*
\begin{proof}
Let $g(H)=\sum_{j=1}^{M} c_j U_j$. Then, given any initial state $\rho_0=\ket{\psi_0}\bra{\psi_0}$, if $R$ is the \textit{prepare} unitary while $Q=\sum_{j}\ket{j}\bra{j}\otimes U_j$ is the \textit{select} unitary, then the standard LCU procedure first prepares the state $\ket{\psi'_t}$ where
\begin{align*}
\ket{\psi'_t}&=Q(R\otimes I)\ket{\bar{0}}\ket{\psi_0}\\
			 &=\sum_{j=1}^{M}\sqrt{\dfrac{c_j}{\norm{c}_1}}\ket{j}U_j\ket{\psi_0}.
\end{align*}
Then, the standard LCU procedure implements $(R^{\dag}\otimes I)$ to prepare: 
$$
\ket{\psi_t}=\ket{\bar{0}} \dfrac{g(H)}{\norm{c}_1}\ket{\psi_0} +\ket{\Phi^{\perp}}.
$$
Note that in the last step, the unitary $R^{\dag}$ acts only on the first register, whereas we are interested in the measurement outcomes of the second register. That is, consider the projection $I\otimes\Pi$. Indeed, as $\Pi$ acts only on the second register, we can ignore the last step of \textit{Standard LCU}, (i.e.\ the action of $R^{\dag} \otimes I$) and by the equivalence of partial trace, we obtain the following equality:
\begin{equation}
\label{eq:equivalence-partial-trace}
\Tr\left[(I\otimes \Pi)\ket{\psi_t}\bra{\psi_t}\right]=\Tr\left[(I\otimes \Pi)\ket{\psi'_t}\bra{\psi'_t}\right].
\end{equation}
Next, by randomly sampling $U_j$ according to $\{c_j/\norm{c}_1\}$, we prepare, on average, the density matrix $\bar{\rho}$. It is easy to verify that $\ket{\psi'_t}$ is a purification of $\bar{\rho}$. Then it follows that,
\begin{equation}
\label{eq:pre-lcu-avg-density-matrix}
\Tr\left[(I\otimes \Pi)\ket{\psi'_t}\bra{\psi'_t}\right]=\Tr[\Pi\bar{\rho}].
\end{equation}

Then by combining Eq.~\eqref{eq:equivalence-partial-trace} and Eq.~\eqref{eq:pre-lcu-avg-density-matrix}, we obtain:
\begin{equation}
\label{eq:lcu-avg-density-matrix}
\Tr[\Pi\bar{\rho}]=\Tr\left[(I\otimes \Pi)\ket{\psi_t}\bra{\psi_t}\right].
\end{equation}
Thus, we have
\begin{align}
\Tr[\Pi\bar{\rho}]&=\Tr[(I\otimes \Pi)\ket{\psi_t}\bra{\psi_t}]\\
		   &=\bra{\psi_t}\left(\ket{\bar{0}}\bra{\bar{0}}\otimes \Pi\right)\ket{\psi_t}+\bra{\psi_t}\left[(I-\ket{\bar{0}}\bra{\bar{0}})\otimes \Pi\right]\ket{\psi_t}\\
		   &= \dfrac{1}{\norm{c}^2_1}\bra{\psi_0}g(H)^{\dag} \Pi g(H)\ket{\psi_0}+\bra{\Phi^\perp}\left[(I-\ket{\bar{0}}\bra{\bar{0}})\otimes\Pi \right]\ket{\Phi^\perp}\\
		   &\geq \dfrac{1}{\norm{c}^2_1}\bra{\psi_0}g(H)^{\dag} \Pi g(H)\ket{\psi_0}\\
		   &\geq \bra{\psi_0}f(H)^{\dag} \Pi f(H)\ket{\psi_0}-\varepsilon,
		   \label{eq:final-inequality-ancilla-free-lcu}
\end{align}
where in the last line we have invoked Theorem \ref{thm:distance-expectation}, and the fact that $\norm{c}_1\leq 1$. This completes the proof.
\end{proof}

Theorem \ref{thm:ancilla-free-LCU-theorem} shows that if we are only interested in the projection of $f(H)\rho_0f(H)^{\dag}$ in some subspace of interest (say $\Pi$ is a projector onto this subspace), we can drop the ancilla registers of LCU. Instead, we simply sample a unitary $V$ according to $\{c_j/\norm{c}_1, U_j\}$. Then, on average, the projection of $\rho=V\rho_0V^{\dag}$ in this subspace is at least as large. A similar result naturally extends to the analog LCU framework as well. While the \textit{Standard LCU} procedure requires $\lceil\log M\rceil$ ancilla qubits and sophisticated multi-qubit control gates to implement $f(H)\rho_0 f(H)^{\dag}$ before making the projective measurement, our method establishes that this can be done without any ancilla qubits. At the same time, the average cost per coherent run of this method is given by $\langle\tau\rangle$, which, as discussed previously is upper bounded by $\tau_Q$ (the cost of implementing the \textit{select} gate for \textit{Standard LCU}). Note that sampling the unitaries does not guarantee a higher success probability: it is at least as high, on average. However, one can leverage \textit{Ancilla-free LCU} when we only care about the average success probability of the underlying projective measurement. This is indeed the case for quantum spatial search \cite{ambainis2019quadratic, apers2022quadratic} as we shall discuss next. 

In the spatial search problem, we are interested in the expected number of steps required to find a subset of nodes (marked nodes) of any reversible Markov chain $P$. Starting from the stationary distribution of $P$, the number of steps required by a classical random walk, on average, to find a marked node is known as the hitting time ($HT$). Analogously, for quantum walks, starting from a quantum state proportional to the stationary distribution of $P$, we are also interested in finding the expected number of steps after which a quantum walk has a large overlap with the marked nodes of $P$. In fact, demonstrating that quantum walks require $O(\sqrt{HT})$ steps for any $P$ and any number of marked nodes, had been a long-standing open problem. Through a series of works \cite{szegedy2004quantum, magniez2007search, krovi2016quantum}, a quadratic speedup had been proven only in specific cases (such as particular instances of graphs or when only a single node is marked). The full generic quadratic speedup (up to a log factor) has only been recently proven \cite{ambainis2019quadratic}, using the \textit{Standard LCU} procedure. We show that the framework of quantum spatial search fits into the framework of \textit{Ancilla-free LCU}: here we are interested in the average projection of the quantum walk on to the marked subspace. Thus, the \textit{Standard LCU} of Ref.~\cite{ambainis2019quadratic} can be bypassed, which leads to saving on ancilla qubits while obtaining the same quadratic speedup. Moreover, we show that new quantum spatial search algorithms can be obtained using this framework which also retain a generic quadratic speedup. These results have been described in detail in Sec.~\ref{sec:quantum-random-walks}. 

\section{Applying Single-Ancilla LCU: Hamiltonian simulation}
\label{sec:ham-sim}
In this section, we will use the \textit{Single-Ancilla LCU} procedure to develop a quantum Hamiltonian simulation algorithm that is tailored to early fault-tolerant quantum computers. Consider any Hamiltonian $H$ which is a linear combination of Pauli operators, i.e.\ $H=\sum_{k=1}^{L}\lambda_k P_k$, where $P_k$ is a sequence of Pauli operators, such that $\beta=\sum_k |\lambda_k|$. Let $O$ be some observable and $\rho_0$ be some initial state. Then, we outline a procedure using Algorithm \ref{algo:randomized-time-evolution} (and Theorem \ref{thm:randomized-time-evolution}) that outputs $\mu$ such that 
\begin{equation}
\left|\mu-\Tr[O~e^{-iHt}\rho_0 e^{iHt}]\right|\leq \varepsilon
\end{equation}

Since, $f(H)=e^{-iHt}$ is unitary, $\rho_t=f(H)\rho_0 f(H)^{\dag}$ is a normalized quantum state.  Consequently, $\Tr[O\rho_t]$ can be obtained by simply using Algorithm \ref{algo:randomized-time-evolution} as we do not need to estimate the norm separately. Thus it suffices to apply Theorem \ref{thm:randomized-time-evolution} to output $\mu$.

We decompose $e^{-itH}$ as an approximate linear combination of unitaries, for which we use ideas from the Truncated Taylor series method by Berry et al.\cite{berry2015simulating}, as well the LCU decomposition from \cite{wan2022randomized}. We summarize the key ideas here, while the details can be found in the Appendix (Sec.~\ref{sec-app:ham-sim}). Therein, we show that we can divide the time evolution operator $e^{-itH}$ into $r$ segments and truncate the Taylor series expansion of each such segment after $K$ terms. More precisely, we show that if,
$$
\tilde{S}_r=\sum_{k=0}^{K} \dfrac{(-it\tilde{H}/r)^k}{k!},
$$
then
$\norm{\tilde{S}_r-e^{-itH/r}}\leq \gamma/r$,
for some
$$
K=O\left(\dfrac{\log(r/\gamma)}{\log\log(r/\gamma)}\right).
$$
Moreover, we prove in the Appendix (Sec.~\ref{sec-app:ham-sim}) that each $\tilde{S}_r$ can be decomposed as a linear combination of unitaries $\sum_{j\in M}\alpha_j U_j$, where $M$ can be defined as a multi-indexed set, i.e.\ 
$$
M =\left\{(k, \ell_1, \ell_2,\cdots \ell_k, m): 0\leq k \leq K;  \ell_1,\ell_2, \cdots \ell_k, m \in \{1,2,\cdots, L\}\right\},
$$
and each $U_j$ is a product of $k\leq K$ Pauli operators and a single Pauli rotation, given by
\begin{equation}
\label{eq:simulation-segment-unitary}
U_j=(-i)^k P_{\ell_1}P_{\ell_2}\cdots P_{\ell_k} e^{-i\theta_{m} P_m}.
\end{equation}
Moreover, the sum of the coefficients of this LCU decomposition satisfies $\sum_{j}|\alpha_j|\leq \exp[\beta^2t^2/r^2]$. 

Then, the product of $r$ such segments is also an LCU. That is, $S=\tilde{S}^r_r$ is itself an LCU, i.e.\ $S=\sum_{j}c_jW_j$, $\norm{c}_1=(\sum_{j}|\alpha_j|)^r$ can be proven to be $O(1)$, for $r=\beta^2t^2$. We show that for $\gamma=\frac{\varepsilon}{6~\norm{O}}$, the operator $S/\norm{c}_1$ is close to the overall time evolution operator. That is,
$$
\norm{e^{-itH}-S/\norm{c}_1}\leq\dfrac{\varepsilon}{6\norm{O}}.
$$
This allows us to leverage Theorem \ref{thm:randomized-time-evolution} and Algorithm \ref{algo:randomized-time-evolution} to estimate $\Tr[O\rho_t]$ to $\varepsilon$-accuracy. In order to apply Algorithm \ref{algo:randomized-time-evolution}, we intend to sample $V_1$, $V_2$ such that $\mathbb{E}[V_1]=\mathbb{E}[V_2]=S/\norm{c}_1$. We provide a brief sketch of the sampling strategy next. 
~\\~\\
\textbf{Sampling $\mathbf{V_1}$ and $\mathbf{V_2}$:~} While the sampling strategy is described in detail in the Appendix (Sec.~\ref{sec-app:ham-sim}), we provide a brief outline here. We sample some integer $k\in [0,K]$, according to $\alpha_j/\sum_{j}|\alpha_j|$, and select $k+1$ unitaries comprising of $k$ Pauli operators and a single Pauli rotation. This is then repeated $r$ times to obtain a product of unitaries $W=W_r\cdots W_1$, such that $\mathbb{E}[W]=S/\norm{c}_1$. This allows us to use Algorithm \ref{algo:randomized-time-evolution} and Theorem \ref{thm:randomized-time-evolution}.
~\\~\\
\textbf{Running time:~} The circuit corresponding to Algorithm \ref{algo:randomized-time-evolution} implements controlled (anti-controlled) $V_1$ ($V_2$). So, in order to estimate the cost of each run of the Algorithm, we need to estimate the gate depth of $V_1$ and $V_2$. The sampling procedure described above, outputs a product of $r$ unitaries $W= W_r\cdots W_1$ such that each $W_j$ is itself a product of at most $K+1$ unitaries - $K$ Pauli operators and a single Pauli rotation. Thus the gate depth of $V_1$ is at most $(K+1)r$. So, the overall gate depth is given by $2(K+1)r=O(Kr)$. From the choice of $r$ and $K$, this implies that the gate depth for each run is at most $2\tau_{\max}+\tau_{\rho_0}$, where,
\begin{equation}
\label{eq:t-max-ham-sim}
\tau_{\max}=O(Kr)=O\left(\beta^2 t^2 \dfrac{\log(\beta t\norm{O}/\varepsilon)}{\log\log(\beta t\norm{O}/\varepsilon)}\right).
\end{equation}

The total number of repetitions needed can be obtained from Theorem \ref{thm:randomized-time-evolution}. In this case, for the choice of $r=\beta^2 t^2$, we have $\norm{c}_1=O(1)$. So,
\begin{equation}
T=O\left(\dfrac{\norm{O}^2\ln(1/\delta)}{\varepsilon^2}\right),
\end{equation}
repetitions ensures that Algorithm \ref{algo:randomized-time-evolution} outputs a $\mu$ such that $|\mu - \Tr[O~e^{-iHt}\rho_0 e^{iHt}]|\leq \varepsilon$, with probability $1-\delta$. The overall gate depth is given by 
$$
O(\tau_{\max}.T)=O\left(\dfrac{\beta^2 t^2\norm{O}^2}{\varepsilon^2}\ln(1/\delta)\cdot\dfrac{\log(\beta t\norm{O}/\varepsilon)}{\log\log(\beta t\norm{O}/\varepsilon)}\right).
$$

We formally state our results via the following Theorem:
~\\
\begin{theorem}
\label{thm:ham-sim}
Let $H$ be a Hamiltonian such that $H=\sum_{k=1}^{L} c_k P_k$, where $P_k$ is a sequence of Pauli operators and $c_k>0$ such that $\beta=\sum_{k}c_k$. Suppose $\rho_0$ be some initial state, prepared in gate depth $\tau_{\rho_0}$ and $O$ be any observable. Then provided,
$$
\gamma\leq \dfrac{\varepsilon}{6\norm{O}},
$$ 
such that $\norm{e^{-itH}-S}\leq \gamma$ and,
$$
T=O\left(\dfrac{\norm{O}^2\ln(1/\delta)}{\varepsilon^2}\right),
$$
Algorithm \ref{algo:randomized-time-evolution} outputs $\mu$ such that
$$
\left|\mu-\Tr[O\rho_t]\right|\leq \varepsilon,
$$
using $T$ repetitions of the quantum circuit in Fig.~\ref{fig:single-ancilla-circuit}, with probability $1-\delta$. Each such run has gate depth at most $2\tau_{\max}+\tau_{\rho_0}$ such that
$$
\tau_{\max}=O\left(\beta^2 t^2 \dfrac{\log(\beta t\norm{O}/\varepsilon)}{\log\log(\beta t\norm{O}/\varepsilon)}\right).
$$
\end{theorem}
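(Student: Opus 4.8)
The plan is to obtain Theorem~\ref{thm:ham-sim} as a direct instance of the \textit{Single-Ancilla LCU} estimation result, Theorem~\ref{thm:randomized-time-evolution}, once $e^{-iHt}$ has been written as an approximate linear combination of unitaries of constant $\ell_1$-weight. The first observation is that $f(H)=e^{-iHt}$ is unitary, so $\rho_t=f(H)\rho_0 f(H)^{\dag}$ is already a genuine density matrix and $\Tr[O\rho_t]=\Tr[O\,f(H)\rho_0 f(H)^{\dag}]$; there is no normalisation factor to estimate and the two-stage procedure of Algorithm~\ref{algo:single-ancilla-overall} collapses to a single call of Algorithm~\ref{algo:randomized-time-evolution}. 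It therefore suffices to (i) exhibit an LCU $g(H)=\sum_j c_j W_j$ (the operator called $S$ in the statement, taken here normalised) with $\norm{e^{-iHt}-g(H)}\le\gamma$ and $\norm{c}_1=O(1)$, (ii) check the admissibility bound on $\gamma$ demanded by Theorem~\ref{thm:randomized-time-evolution}, (iii) describe a sampler producing $V_1,V_2$ with $\mathbb{E}[V_1]=\mathbb{E}[V_2]=g(H)$, and (iv) bound the gate depth of each coherent run.

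For step (i) I would follow the construction sketched above and detailed in Appendix~\ref{sec-app:ham-sim}, in the spirit of the truncated Taylor series method of~\cite{berry2015simulating}: write $e^{-iHt}=(e^{-iHt/r})^{r}$, truncate the Taylor series of each segment at order $K$ to get $\tilde S_r$ with $\norm{\tilde S_r-e^{-iHt/r}}\le\gamma/r$ for $K=O\big(\log(r/\gamma)/\log\log(r/\gamma)\big)$, expand the products of Paulis so that each $\tilde S_r$ becomes an LCU $\sum_j\alpha_j U_j$ with $U_j$ of the form in Eq.~\eqref{eq:simulation-segment-unitary} and $\sum_j|\alpha_j|\le\exp[\beta^2 t^2/r^2]$, and finally take the $r$-th power, for which $\norm{c}_1=(\sum_j|\alpha_j|)^r\le\exp[\beta^2 t^2/r]$ and a telescoping estimate over the $r$ factors (using sub-multiplicativity of the norm and $\norm{\tilde S_r}\le 1+\gamma/r$) gives $\norm{e^{-iHt}-\tilde S_r^{\,r}}\le\gamma$. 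Setting $r=\beta^2 t^2$ makes $\norm{c}_1=O(1)$, and taking $\gamma=\varepsilon/(6\norm{O})$ yields the stated $K=O\big(\log(\beta t\norm{O}/\varepsilon)/\log\log(\beta t\norm{O}/\varepsilon)\big)$. Because $\norm{f(H)}=\norm{e^{-iHt}}=1$, the hypothesis $\gamma\le\varepsilon/(6\norm{O}\norm{f(H)})$ of Theorem~\ref{thm:randomized-time-evolution} is exactly $\gamma\le\varepsilon/(6\norm{O})$, which holds, and the residual discrepancy between $e^{-iHt}$ and $g(H)$ is absorbed by Theorem~\ref{thm:distance-expectation} inside that theorem's proof.

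For step (iii) the sampler draws, independently in each of the $r$ segments, a Taylor order $k\in[0,K]$ together with the associated $k$ Pauli indices and one rotation index according to $\{|\alpha_j|/\sum_j|\alpha_j|\}$, and outputs $W=W_r\cdots W_1$; one checks $\mathbb{E}[W]=g(H)$, so two i.i.d.\ copies $V_1,V_2$ let Theorem~\ref{thm:randomized-time-evolution} apply, giving $T=O\big(\norm{O}^2\ln(1/\delta)/\varepsilon^2\big)$ repetitions since the $\norm{c}_1^4$ factor is $O(1)$. For step (iv), each $W_j$ is a product of at most $K+1$ elementary unitaries ($K$ Paulis and one Pauli rotation), so $V_1$ (hence $V_2$) has gate depth at most $(K+1)r$; the circuit of Fig.~\ref{fig:single-ancilla-circuit} applies these unitaries only singly-controlled on the lone ancilla (constant overhead per gate, no multi-qubit controls), so the per-run gate depth is at most $2(K+1)r+\tau_{\rho_0}=2\tau_{\max}+\tau_{\rho_0}$ with $\tau_{\max}=O(Kr)$ as in Eq.~\eqref{eq:t-max-ham-sim}.

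I expect the LCU construction in step (i) to be the crux: one must choose $r$ and $K$ so that the $r$-fold product of truncated segments is simultaneously $\gamma$-accurate \emph{and} has coefficient $\ell_1$-norm $(\sum_j|\alpha_j|)^r$ bounded by a constant, and these demands pull against each other — a larger $r$ shrinks the per-segment weight but compounds both the truncation error and the exponent on $\sum_j|\alpha_j|$, so the sweet spot $r=\beta^2 t^2$ must be identified carefully. Verifying that the segment-wise sampler reproduces $\mathbb{E}[V_i]=g(H)$ \emph{exactly}, so that the expectation identity feeding Hoeffding's inequality in Theorem~\ref{thm:randomized-time-evolution} is legitimate, is the other point needing care; everything else is bookkeeping on top of Theorem~\ref{thm:randomized-time-evolution}, Theorem~\ref{thm:distance-expectation}, and Hoeffding's inequality.
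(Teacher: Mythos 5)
Your proposal is correct and follows essentially the same route as the paper: the theorem is obtained as a direct instance of Theorem~\ref{thm:randomized-time-evolution} with $f(H)=e^{-itH}$ (so $\norm{f(H)}=1$ and no normalisation estimate is needed), using the segmented truncated-Taylor LCU with $r=\beta^2t^2$ and $K=O(\log(\beta t\norm{O}/\varepsilon)/\log\log(\beta t\norm{O}/\varepsilon))$ so that $\norm{c}_1=O(1)$, the segment-wise sampler with $\mathbb{E}[W]=S/\norm{c}_1$, and the gate-depth count $\tau_{\max}=O(Kr)$. Your explicit telescoping bound for $\norm{e^{-itH}-\tilde S_r^{\,r}}$ is a detail the paper leaves implicit, but it is the standard argument and changes nothing.
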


\begin{proof}
We use Theorem \ref{thm:randomized-time-evolution}, substituting $f(H)=e^{-itH}$ and so $\norm{f(H)}=1$. For the choice of $\gamma$, we require only 
$$
T=O\left(\dfrac{\norm{O}^2\ln(1/\delta)}{\varepsilon^2}\right),
$$
repetitions of Algorithm \ref{thm:randomized-time-evolution}. The gate depth of each coherent run is at most $2\tau_{\max}+\tau_{\rho_0}$, where $\tau_{\max}$ has been obtained in Eq.~\eqref{eq:t-max-ham-sim}.
\end{proof}
~\\
\textbf{Comparison with prior works:~}Let us now compare the cost of our quantum simulation procedure with other established techniques. Throughout, we consider that $H$ is a linear combination of unitaries, as stated in Eq.~\eqref{eq:ham-sim-hamiltonian}. We shall compare the cost of each Hamiltonian simulation algorithm to estimate $\Tr[O\rho_t]$, where $\rho_t=e^{-itH}\rho_0 e^{itH}$. Note that in order to estimate this quantity with additive accuracy $\varepsilon$, these methods need to prepare a quantum state that is $\varepsilon/\norm{O}$-close to $\rho_t$ (from Lemma \ref{lem:state-preparation-observable}). We measure the cost in terms of gate depth per coherent run, number of ancilla qubits required, number of classical repetitions needed, and the overall gate depth given by the product of the gate depth per coherent run and the number of classical repetitions. 
\begin{table}[ht!!]
\begin{center}
    \resizebox{\columnwidth}{!}{
    \renewcommand{\arraystretch}{3} 
    \begin{tabular}{|c|c|c|c|c|c|}
    \hline
    Algorithm & Variant & No. of. Ancilla qubits & Gate depth per coherent run & Classical repetitions \\ \hline\hline
   \multirow{2}{*}{1st order Trotter \cite{childs2021theory}} & Incoherent & $0$ & $O(L\beta^2 t^2\norm{O}/\varepsilon)$ & $O\left(\dfrac{\norm{O}^2}{\varepsilon^2}\right)$ \\ \cline{2-5}
    
    & Coherent & $O(a_O)$ & $O\left(\dfrac{\alpha^2_O L  \beta^2 t^2 }{\varepsilon^2}+\dfrac{\alpha_O T_O}{\varepsilon}\right)$ & $O(1)$ \\ \hline
   
    \multirow{2}{*}{qDRIFT \cite{campbell2019random}} & Incoherent & $0$ & $O\left(\beta^2 t^2\norm{O}/\varepsilon\right)$ & $O\left(\dfrac{\norm{O}^2}{\varepsilon^2}\right)$ \\ \cline{2-5}
    
   & Coherent & $O(a_O)$ & $O\left(\dfrac{\alpha^2_O\beta^2 t^2}{\varepsilon^2}+\dfrac{\alpha_O T_O}{\varepsilon}\right)$ & $O(1)$ \\ \hline
    
   \multirow{2}{*}{$2k$-order Trotter \cite{childs2021theory}} & Incoherent & $0$ & $O\left(L(\beta t)^{1+\frac{1}{2k}}\left(\dfrac{\norm{O}}{\varepsilon}\right)^{\frac{1}{2k}}\right)$ & $O\left(\dfrac{\norm{O}^2}{\varepsilon^2}\right)$ \\ \cline{2-5}
    
   & Coherent & $O(a_O)$ & $O\left(L(\beta t)^{1+\frac{1}{2k}} \left(\dfrac{\alpha_O}{\varepsilon}\right)^{1+\frac{1}{2k}}+\dfrac{\alpha_O T_O}{\varepsilon}\right)$ & $O(1)$ \\ \hline 
   
   \multirow{2}{*}{Truncated Taylor Series \cite{berry2015simulating}} & Incoherent & $O\left(\log(L)\dfrac{\log(\norm{O} t/\varepsilon)}{\log\log(\norm{O} t/\varepsilon)}\right)$ & $O\left(L\beta t \dfrac{\log(\beta t\norm{O}/\varepsilon)}{\log\log(\beta t\norm{O}/\varepsilon)}\right)$ & $O\left(\dfrac{\norm{O}^2}{\varepsilon^2}\right)$ \\ \cline{2-5}
    
   & Coherent & $O\left(a_O+\log(L)\dfrac{\log(\alpha_O t/\varepsilon)}{\log\log(\alpha_O t/\varepsilon)}\right)$ & $O\left(\dfrac{\alpha_O}{\varepsilon}\left(L\beta t \dfrac{\log(\beta t\alpha_O/\varepsilon)}{\log\log(\beta t\alpha_O/\varepsilon)} +T_O\right)\right)$ & $O(1)$\\ \hline
    
     \multirow{2}{*}{Qubitization \cite{low2019hamiltonian}} & Incoherent & $O(\log L)$ & $O\left(L\left(\beta t +\log(\norm{O}/\varepsilon)\right)\right)$ & $O\left(\dfrac{\norm{O}^2}{\varepsilon^2}\right)$ \\ \cline{2-5}
    
   & Coherent & $O(a_O+\log L)$ & $O\left(\dfrac{\alpha_O}{\varepsilon}\left(L\beta t + L\log(\alpha_O/\varepsilon)\right) +\dfrac{\alpha_O T_O}{\varepsilon}\right)$ & $O(1)$\\ \hline
   
 This work  & -- & $1$ & $O\left(\beta^2 t^2 \dfrac{\log(\beta t\norm{O}/\varepsilon)}{\log\log(\beta t\norm{O}/\varepsilon)}\right)$ & $O\left(\dfrac{\norm{O}^2}{\varepsilon^2}\right)$\\
    \hline
    
    \end{tabular}}
  \caption{ Given a Hamiltonian $H=\sum_{j=1}^{L} \lambda_j P_j$, where $P_j$'s are strings of Pauli matrices and $\beta=\sum_{j}|\lambda_j|$, we compare the cost of Hamiltonian simulation by \textit{Single-Ancilla LCU} with other methods. In particular, we compare the number of ancilla qubits, gate depth per coherent run, and the number of classical repetitions needed to output an $\varepsilon$-additive estimate of $\Tr[O\rho_t]$, where $\rho_t=e^{-iHt}\rho_0 e^{iHt}$, for some initial state $\rho_0$ and any measurable observable $O$. The overall gate depth is the product of the last two columns of the table. Here the \textit{incoherent} approach refers to the method of directly measuring $O$, having prepared $\rho_t$. For the \textit{coherent} approach, quantum amplitude estimation is used to estimate $\Tr[O\rho_t]$, given access to an $(\alpha_O, a_O, 0)$ block encoding of $O$.
Finally, Hamiltonian simulation by qubitization \cite{low2019hamiltonian}, assumes a block encoding access to $H$. In our case, the block encoding can be constructed by an LCU approach requiring $O(\log L)$ ancilla qubits and gate depth $O(L)$. \label{table:comparison-ham-sim}}
    \end{center}
\end{table}
\renewcommand{\arraystretch}{1}
Having prepared $\rho_t$, there are two ways in which $\Tr[O\rho_t]$ can be estimated. First, by simply measuring $O$, in which case $O(~\norm{O}^2/\varepsilon^2)$ repetitions are needed in order to output an $\varepsilon$-accurate estimate of $\Tr[O\rho_t]$, with  $\Omega(1)$ probability. The second technique involves estimating this quantity coherently using quantum amplitude estimation. For this, we assume access to a $(\alpha_O,a_O,0)$-block encoding of $O$, which can be implemented in gate depth $T_O$. Then, if $T_{H}$ is the gate depth of the underlying Hamiltonian simulation procedure (to output $\rho_t$ with $O(\varepsilon/\alpha_O)$ accuracy), quantum amplitude estimation estimates $\Tr[O\rho_t]$ to additive accuracy $\varepsilon$ in 
$$
O\left(\dfrac{\alpha_O}{\varepsilon}(T_H+T_O)\right),
$$
with a constant success probability. We are now in a position to compare the cost of our Hamiltonian simulation algorithm with other methods.

Let us begin by considering the first order Trotter method \cite{childs2021theory}. This algorithm has a gate depth of $O(L\beta^2 t^2\norm{O}/\varepsilon)$. So the circuit depth of our procedure has an exponentially better dependence on the precision. Additionally, the total number of terms of $H$, i.e.\ $L$ can be quite large. For instance for several quantum chemistry Hamiltonians $L=\mathrm{poly}(n)$ for an $n$-qubit Hamiltonian, while $\beta \ll L$. However, Trotter-based methods such as First and higher order methods, and qDRIFT do not require any ancilla qubit. Even if $\Tr[O\rho_t]$ is measured coherently using quantum amplitude estimation, not only does the total circuit depth increase, but so does the overall gate depth, given by $O(\alpha^2_O L\beta^2 t^2 \varepsilon^{-2}+\alpha_O T_O \varepsilon^{-1})$. Furthermore, the block encoding of $O$ and the amplitude estimation procedure together require $O(a_O)$ ancilla qubits. Note that this is the worst case complexity of the First order Trotter method. It has been shown that in certain specific instances, when specific bounds on the commutators of the local terms of the Hamiltonian are known, the complexity scales better \cite{haah2021quantum, childs2021theory}.

The complexity of our procedure retains some similarity to randomized quantum simulation schemes such as qDRIFT \cite{campbell2019random} in that it does not depend on $L$. The gate depth for implementing Hamiltonian simulation via qDRIFT is $O(\beta^2 t^2 \norm{O}/\varepsilon)$. Thus, our approach has an exponentially better dependence on precision and $\norm{O}$. The incoherent approach to estimate $\Tr[O\rho_t]$ would need $O(~\norm{O}^2/\varepsilon^2)$ repetitions, which is the same as our method. Even with the coherent approach (using quantum amplitude estimation), the gate depth is $O(\alpha^2_O\beta^2 t^2/ \varepsilon^{2}+ \alpha_O T_O/\varepsilon)$, which is higher than the overall gate depth of our method. Just as the First order Trotter method, qDRIFT requires no ancilla qubits in order to prepare $\rho_t$ and then incoherently estimate $\Tr[O\rho_t]$. However, the coherent approach requires $O(a_O)$ ancilla qubits. 

For higher order Trotter methods \cite{childs2021theory}, the $2k$-th order Trotter procedure has a gate depth per coherent run, given by 
$$
O\left(L (\beta t)^{1+\frac{1}{2k}}\left(\dfrac{\norm{O}}{\varepsilon}\right)^{\frac{1}{2k}}\right),
$$
in order to prepare $\rho_t$ with $\varepsilon/\norm{O}$ accuracy. As compared to the Hamiltonian simulation procedure based on \textit{Single-Ancilla LCU}, the circuit depth of higher order Trotter is sub-exponentially worse in terms of $1/\varepsilon$, but the dependence on $t$ is better. Just as in the previous cases, obtaining an $\varepsilon$-accurate estimate of $\Tr[O\rho_t]$ by measuring $O$, requires $O(\norm{O}^2/\varepsilon^2)$ classical repetitions, each costing $T_H$. Interestingly, the overall gate depth of the coherent technique using quantum amplitude estimation, given by 
$$
O\left(L(\beta t)^{1+\frac{1}{2k}} \left(\dfrac{\alpha_O}{\varepsilon}\right)^{1+\frac{1}{2k}} +\dfrac{\alpha_O}{\varepsilon}T_O\right),
$$
has a better scaling as compared to \textit{Single Ancilla LCU} (in terms of $1/\varepsilon$), at the cost of an increased gate depth and additional $O(a_O)$ ancilla qubits. As with first order Trotter, we have noted the worst case complexity. It has been shown that higher order Trotter methods also perform better in practice for specific local Hamiltonians for which sums of nested commutators of the local terms of $H$ are small \cite{childs2021theory}.

The truncated Taylor series method by Berry et al. makes use of the \textit{Standard LCU} procedure \cite{berry2015hamiltonian}. This requires $O(\log(L)\log(||O||t/\varepsilon)/\log\log(||O||t/\varepsilon))$ ancilla qubits and sophisticated multi-qubit controlled operations. Moreover it needs to implement involved subroutines such as oblivious amplitude amplification for each segment, which cannot be avoided. However, the gate depth per run of this procedure is linear in $t$ and $\beta$, which is quadratically better than our method. However, implementing the LCU has an overhead of $L$ in terms of the gate depth, and thus, in the regime where $\beta\ll L$, there is a range of values of $t$ for which our method requires lower gate cost per coherent run. This, despite the fact that our algorithm requires only a single ancilla qubit and does not require operations that are controlled over multiple qubits. The coherent approach to estimate $\Tr[O\rho_t]$ has a better dependence on $\beta, t$ as well as $1/\varepsilon$ at the cost of an exponentially increased gate depth per coherent run (in terms of $1/\varepsilon$), and $O(a_O+\log(L)\log(\alpha_O t/\varepsilon)/\log\log(\alpha_O t/\varepsilon))$ ancilla qubits.

The state-of-the-art quantum simulation technique makes use of quantum singular value transformation and requires a coherent access to $H$ via a block encoding \cite{low2019hamiltonian}. The complexity of this procedure is optimal (linear in $t+\log(1/\varepsilon)$) in terms of the number of queries made to any such block encoding. Moreover it requires only one more ancilla qubit as compared to our method. However, when $H$ is a linear combination of unitaries, constructing the block encoding itself requires $O(\log L)$ ancilla qubits, implementing multi-qubit controlled unitaries, and gate depth $O(L)$. Then the gate depth of this procedure to prepare $\rho_t$ with $\varepsilon/\norm{O}$ accuracy is given by $O(L\left(\beta t +\log(~\norm{O}/\varepsilon)\right))$. This has optimal scaling in $t$ and $1/\varepsilon$, but has an overhead in terms of $L$, which can be quite large for several Hamiltonians of interest. Thus, for Hamiltonians satisfying $\beta \ll L$, there is a range of $t$ for which our method provides an advantage in terms of the gate depth per coherent run, even with respect to state-of-the-art algorithms. 

In Table \ref{table:comparison-ham-sim}, we compare the cost of the Hamiltonian simulation procedure using \textit{Single-Ancilla LCU} with other methods.

\section{Applications to Ground state preparation and property estimation}
\label{sec:gsp}
The problem of preparing (or extracting useful information from) the ground states of a Hamiltonian finds widespread interest across physics and computer science. Generally, this problem is known to be computationally hard, even for a quantum computer \cite{kempe2006complexity}. However, owing to its wide applicability, novel improved quantum algorithms for ground state preparation (GSP), and ground state property estimation are of extreme importance and interest. We will apply the \textit{Analog LCU} and the \textit{Single-Ancilla LCU} approaches, introduced in Sec.~\ref{sec:new-approaches-LCU}, to tackle these problems. 

This section is organized as follows. We begin by formally describing the ground state preparation (GSP) problem. Next, we discuss an analog quantum algorithm for GSP using the \textit{Analog LCU} framework. Then, we use the \textit{Single-Ancilla LCU} technique to estimate the expectation value of observables with respect to the ground states of a Hamiltonian. We start by describing the ground state preparation problem.
\\~\\
\textbf{The ground state Preparation problem:~} The set up of the problem is similar to prior works \cite{keen2021quantum, he2022quantum, apers2022quadratic}. Suppose we have a Hamiltonian $H$ with ground state $\ket{v_0}$ and ground energy $\lambda_0$, and assume that we are given a lower bound on the gap between the ground state and the first excited state of $H$ (spectral gap), i.e.\ we have knowledge of $\Delta$ such that $|\lambda_1-\lambda_0|\geq \Delta$.\\ 
For clarity of exposition, we assume that the ground space of $H$ is non-degenerate. If this is not the case, e.g.~if the degeneracy of the ground space is $d$ and is spanned by mutually orthonormal eigenstates $\{\ket{v^{(\ell)}_0}\}_{l=1}^{d}$, then we will be preparing a quantum state $\ket{v_0}$ which is a projection onto the ground space given by
$$
\ket{v_0}=\dfrac{1}{\sqrt{\sum_{\ell=1}^{d} |c^{(\ell)}_0|^2}}\sum_{\ell=1}^d c^{(\ell)}_0\ket{v^{(\ell)}_0}.
$$ 
In addition, suppose we have access to some initial state $\ket{\psi_0}$ and a lower bound on the overlap $|\braket{\psi_0|v_0}|=c_0\geq \eta$. \\
Furthermore, for some desired accuracy $\eps\in (0,1)$, we will assume that we know the value of the ground energy to some precision parameter $\eps_g$ such that $\eps_g = \Oo\left(\Delta/\sqrt{\log\frac{1}{\eta\eps}}\right)$. That is, we know some $E_0$ such that 
\begin{equation}
|\lambda_0-E_0| \leq \eps_g.
\end{equation}
By implementing $H-(E_0-\eps_g)I$, we ensure that $0\leq\lambda_0\leq 2\eps_g$. This transformation also ensures that the lower bound for the spectral gap of $H$ remains $\Delta$. Without loss of generality, we assume that the spectrum of $H$ is in the interval $[0,1]$. Otherwise, if $\norm{H}$ is an upper bound on the maximum eigenvalue of $H$, we would consider the Hamiltonian $H/\|H\|$, which has a spectral gap of at least $\Delta/\|H\|$. Thus, the complexities of our algorithm would get rescaled by this $\|H\|$ factor.

\subsection{Applying Analog LCU: A continuous-time quantum algorithm for ground state preparation} 
\label{subsec:analog-gsp}
In this section, we will use the \textit{Analog LCU} framework to develop an analog quantum algorithm for the GSP problem. This algorithm was described in the Supplemental Material of \cite{apers2022quadratic}. Here, we place it in the broader context of the \textit{Analog LCU} framework. Moreover, these results will be used to develop a quantum algorithm for estimating expectation values of observables with respect to the ground states of Hamiltonians in the next Section.  

Consider some quantum system in state $\ket{\psi_0}$ coupled to an ancillary system in a Gaussian state
\begin{equation}
\label{eq:gaussian-state}
\ket{\psi_g}=\int_{-\infty}^{+\infty} \dfrac{dz}{(2\pi)^{1/4}} e^{-z^2/4}\ket{z}.
\end{equation}
The Gaussian state is typically easy to prepare in this setting. This state can be seen as the ground state of a one-dimensional quantum harmonic oscillator. The coupling is done via interaction Hamiltonian $H'=H\otimes\hat{z}$, where $\hat{z}$ corresponds to the position (or momentum) operator. Evolving $\ket{\psi_0}\ket{\psi_g}$, under $H'$ for a time $t$ results in the state
\begin{align}
\ket{\eta_t}&=e^{-itH'}\ket{\psi_0}\ket{\psi_g} \nonumber \\
			\label{eqmain:evolved-state-fast-forwarding-1}
			 &=\int_{-\infty}^{+\infty}\dfrac{dz}{(2\pi)^{1/4}}e^{-z^2/4}e^{-itHz}\ket{\psi_0}\ket{z} \\		 
			 &=\int_{-\infty}^{+\infty}\dfrac{dz}{\sqrt{2\pi}}e^{-z^2/2}e^{-itHz}\ket{\psi_0}\ket{\psi_g}+\ket{\Phi}^\perp, \nonumber
\end{align}
where $\ket{\Phi}^\perp$ is a quantum state with the ancillary system being orthogonal to $\ket{\psi_g}$. Now the Fourier transform of a Gaussian is a Gaussian, i.e.\ we have for any $x \in \mathbb{R}$,
\begin{equation}
\label{eq:hubbard-stratonovich}
e^{-y^2/2}=\int_{-\infty}^{\infty} \dfrac{dz}{\sqrt{2\pi}}~e^{-z^2/2}e^{-iyz}.
\end{equation}
So using this we obtain
\begin{equation}
\label{eqmain:evolved-state-fast-forwarding-3}
    \ket{\eta_t}=e^{-t^2H^2/2}\ket{\psi_0}\ket{\psi_g}+\ket{\Phi}^\perp.
\end{equation}
By post-selecting on obtaining $\ket{\psi_g}$ in the second register, we are able to prepare a quantum state proportional to $e^{-t^2H^2/2}\ket{\psi_0}$ in the first register. In \cite{apers2022quadratic} it was shown that this state is close to the ground state $\ket{v_0}$, with $\eta^2$ probability, provided $t=\widetilde{O}(\Delta^{-1})$. Now we formally state the ground state preparation algorithm of \cite{apers2022quadratic} and its complexity, via the following lemma.
~\\
\begin{lemma}[\cite{apers2022quadratic}]
\label{lem:analog-gsp-ham}
Suppose $\eps\in (0,1)$ and $\eta\in (0, 1/\sqrt{2}]$. Furthermore, suppose we have a Hamiltonian $H$ with ground state $\ket{v_0}$ with $\Delta$ being a lower bound on the spectral gap. Also, the ground state energy of $H$ is known up to a precision $\eps_g\in \Oo\left(\Delta/\sqrt{\log\frac{1}{\eta\eps}}\right)$. Then, given an initial state $\ket{\psi_0}$ satisfying $|\braket{\psi_0|v_0}|\geq \eta$, we output, with probability $\Oo(\eta^2)$, a state $\ket{\phi}$ such that $\nrm{\ket{\phi}-\ket{v_0}}\leq \eps$ by evolving the Hamiltonian $H'=H\otimes \hat{z}$ for time
$$
T = \Oo\left(\dfrac{1}{\Delta}\sqrt{\log \left(\dfrac{1}{\eta\eps}\right)}\right).
$$
\end{lemma}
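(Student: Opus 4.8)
The plan is to follow the Gaussian-filter analysis of \cite{apers2022quadratic}, using the \textit{Analog LCU} construction already assembled above. Set $s = t^2/2$. Then, by the exact Hubbard--Stratonovich identity Eq.~\eqref{eq:hubbard-stratonovich} and the derivation culminating in Eq.~\eqref{eqmain:evolved-state-fast-forwarding-3}, evolving $\ket{\psi_0}\ket{\psi_g}$ under $H' = H\otimes\hat z$ for time $t$ and post-selecting on $\ket{\psi_g}$ in the ancilla produces, with \emph{no} truncation error (the identity is exact, so there is no finite-interval approximation to control here), the normalized state $\ket{\phi} = e^{-sH^2}\ket{\psi_0}/\norm{e^{-sH^2}\ket{\psi_0}}$ in the first register, with success probability $p = \norm{e^{-sH^2}\ket{\psi_0}}^2$ (since the evolved state is normalized and $\ket{\Phi}^\perp$ lives in the orthogonal complement of $\ket{\psi_g}$). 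So it suffices to exhibit $s = \Theta\!\big(\log(1/(\eta\eps))/\Delta^2\big)$ — equivalently $t = \sqrt{2s} = \Oo\!\big(\Delta^{-1}\sqrt{\log(1/(\eta\eps))}\big)$ — for which $\norm{\ket{\phi}-\ket{v_0}} \le \eps$ and $p = \Omega(\eta^2)$.

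Next I would expand $\ket{\psi_0} = c_0\ket{v_0} + \sum_{j\ge 1} c_j\ket{v_j}$ in the eigenbasis of $H$, with $|c_0| \ge \eta$ and $\sum_j |c_j|^2 = 1$. Then $e^{-sH^2}\ket{\psi_0}$ has ground-space weight $|c_0|^2 e^{-2s\lambda_0^2}$ and orthogonal-complement weight $\sum_{j\ge1}|c_j|^2 e^{-2s\lambda_j^2}$. The energy shift $H \mapsto H - (E_0-\eps_g)I$ guarantees $0 \le \lambda_0 \le 2\eps_g$ while preserving the gap lower bound, so for $j\ge 1$ we have $\lambda_j \ge \lambda_1 \ge \lambda_0 + \Delta \ge \Delta$. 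Hence the ``bad'' weight is at most $e^{-2s\Delta^2}$ and the ``good'' weight is at least $\eta^2 e^{-8s\eps_g^2}$. Because $\eps_g = \Oo\!\big(\Delta/\sqrt{\log(1/(\eta\eps))}\big)$, I can take the hidden constant small enough that $\eps_g \le \Delta/(2\sqrt2)$ (using also $\eta \le 1/\sqrt2$, so $\log(1/(\eta\eps))$ is bounded below), giving $\Delta^2 - 4\eps_g^2 \ge \Delta^2/2$. Choosing $s$ so that $s\Delta^2 \ge 2\log(1/(\eta\eps))$ then forces $\text{(bad)}/\text{(good)} \le e^{-2s(\Delta^2-4\eps_g^2)}/\eta^2 \le e^{-s\Delta^2}/\eta^2 \le \eps^2$. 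Consequently $|\braket{v_0|\phi}|^2 \ge \text{(good)}/(\text{good}+\text{bad}) \ge 1/(1+\eps^2) \ge 1-\eps^2$, and, taking $c_0 > 0$ without loss of generality, $\norm{\ket{\phi}-\ket{v_0}} \le \sqrt{2-2|\braket{v_0|\phi}|} \le \sqrt2\,\eps$; rescaling $s$ (hence $t$) by a constant to demand $\text{(bad)}/\text{(good)} \le \eps^2/2$ yields the stated $\norm{\ket{\phi}-\ket{v_0}}\le\eps$.

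It then remains to bound the success probability: $p \ge |c_0|^2 e^{-2s\lambda_0^2} \ge \eta^2 e^{-8s\eps_g^2}$, and since $s\eps_g^2 = \Oo(\log(1/(\eta\eps))) \cdot \Oo(1/\log(1/(\eta\eps))) = \Oo(1)$ under the hypothesis on $\eps_g$, the exponential factor is $\Omega(1)$, so $p = \Omega(\eta^2)$, as claimed.

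I expect the only genuinely delicate point to be the coupled role played by $\eps_g$. Since the shift cannot make $\lambda_0$ exactly zero, the Gaussian filter also suppresses the ground component by $e^{-2s\lambda_0^2} \ge e^{-8s\eps_g^2}$, and this single quantity must be controlled for two purposes at once: it contributes the $-4\eps_g^2$ correction inside the error exponent (forcing $\eps_g < \Delta/(2\sqrt2)$) and it is the source of the $e^{-\Oo(s\eps_g^2)}$ loss in the success probability, which stays $\Omega(1)$ precisely because $\eps_g = \Oo\!\big(\Delta/\sqrt{\log(1/(\eta\eps))}\big)$. Threading these constants together consistently (and checking the small-$\log(1/(\eta\eps))$ regime) is the part that needs the most care; the rest is the eigenbasis split plus the exact identity Eq.~\eqref{eq:hubbard-stratonovich}, and the conversion between overlap closeness and norm closeness.
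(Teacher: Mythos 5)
Your proof is correct and follows essentially the same route as the paper, which itself only sketches the argument (Gaussian ancilla plus Hubbard--Stratonovich to realize $e^{-sH^2}$, eigenbasis split, and the threshold $s\gtrsim\Delta^{-2}\log(1/(\eta\eps))$) and defers the details to \cite{apers2022quadratic}. Your careful tracking of the $\eps_g$-dependence — both in the exponent $\Delta^2-4\eps_g^2$ and in the $e^{-\Oo(s\eps_g^2)}=\Omega(1)$ loss in success probability — is exactly the bookkeeping the cited reference carries out.
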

~\\
While the detailed proof can be found in \cite{apers2022quadratic}, the idea is to evolving the overall system according to the interaction Hamiltonian $H'$ for a time $\sqrt{2t)}$ to prepare the quantum state
\begin{equation}
\ket{\eta_t}=e^{-tH^2}\ket{\psi_0}\ket{\psi_g}+\ket{\Phi}^\perp, \nonumber.
\end{equation} 
The next observation is that by choosing any 
\begin{align}
t>\dfrac{1}{2\Delta^2}\log\left(\dfrac{1-\eta^2}{\eta^2\eps^2}\right),
\end{align}
we ensure that with probability at least $\eta^2$, we prepare in the first register, the state 
$$\ket{\phi}=\dfrac{e^{-tH^2}\ket{\psi_0}}{\norm{e^{-tH^2}\ket{\psi_0}}},
$$ 
which is $\varepsilon$-close to the ground state $\ket{v_0}$. For this choice of $t$, the time evolution of the interaction Hamiltonian should scale as
\begin{equation}
T= \sqrt{2t}=\Oo\left(\dfrac{1}{\Delta}\sqrt{\log \left(\dfrac{1}{\eta\eps}\right)}\right).
\end{equation}
  
Overall, this physically motivated quantum algorithm is significantly simpler than implementing standard LCU in the circuit model. Moreover, hybrid qubit-qumode systems are currently being engineered in a number of quantum technological platforms. In the future, we intend to provide an experimental proposal to implement \textit{Analog LCU} on experimental platforms such as ion traps or superconducting systems. 

Note that evolving the system according to the interaction Hamiltonian $H'$, we obtain the ground state $\ket{v_0}$ in the first register, with probability $\eta^2$ (postselected on measuring $\ket{\psi_g}$ in the ancilla register). Thus, $1/\eta^2$ repetitions of this procedure suffices to prepare $\ket{v_0}$, resulting in a total cost of $T=O\left(\Delta^{-1}\eta^{-2}\sqrt{\log(\eta^{-1}\varepsilon^{-1}})\right)$. Alternatively, $O(1/\eta)$-rounds of quantum amplitude amplification can also be used to prepare $\ket{v_0}$, which would bring down the overall cost by a factor of $1/\eta$. However, amplitude amplification is a discrete procedure with no continuous-time analogue. Indeed in the Appendix (Sec.~\ref{subsec:gsp-qsvt}), we develop a quantum algorithm for ground state preparation in the circuit model for fully fault tolerant quantum computers whose complexity matches that of the state-of-the-art quantum algorithms for this problem. Therein, we implement a polynomial that approximates the function $e^{-tx^2}$, using QSVT.   

Next, we describe how the \textit{Single-Ancilla LCU} technique can be used to develop a randomized quantum algorithm for ground state property estimation.
\subsection{Applying Single-Ancilla LCU: Ground state property estimation}
\label{subsec:gsp-single-ancilla}
In this section, we assume that we can access the Hamiltonian $H$ through the time evolution operator $U_{t}=\exp[-itH]$. Furthermore, given access to $U_{t}$, we can perform the time evolution controlled on a single ancilla qubit. This is referred to as the Hamiltonian evolution (HE) model as has been used in prior works for ground energy estimation using early fault-tolerant quantum computers \cite{lin2022heisenberg,zhang2022computing,wang2022quantum}. Much like these works, we calculate: (a) the maximal time of evolution  of $H$ (controlled by a single ancilla qubit) required in each coherent run, given by $\tau_{\max}$, and (b) the total number of repetitions of the circuit $T$. The total evolution time is then $O(\tau_{\max}\cdot T)$.

Given any Hamiltonian $H$ with ground state $\ket{v_0}$, we will use Algorithm \ref{algo:single-ancilla-overall} and Theorem \ref{thm:single-ancilla-overall} to estimate $\braket{v_0|O|v_0}$ to $\varepsilon$-accuracy, for any measurable observable $O$. The cost of each coherent run of our algorithm will be measured in terms of the maximal time for each $H$ is evolved ($\tau_{\max}$). Additionally, we shall also estimate $T$, the number of classical repetitions required for our procedure, and the number of ancilla qubits. As mentioned previously in Sec.~\ref{subsec:one-ancilla-LCU}, our method requires only a single ancilla qubit, and we compare the complexity of this procedure (in the HE model) with other methods.

Given any initial state $\rho_0=\ket{\psi_0}\bra{\psi_0}$, with overlap of at least $\eta$ with the ground state, prepared in cost $\tau_{\psi_0}$, we use Algorithm \ref{algo:single-ancilla-overall} to obtain an accurate estimate $\Tr[O\rho]$, where
$$
\rho=\dfrac{e^{-tH^2}\rho_0 e^{-tH^2}}{\Tr[e^{-tH^2}\rho_0 e^{-tH^2}]},
$$
for some 
$$
t\in O\left(\dfrac{1}{\Delta^2}\log\dfrac{1}{\eta\varepsilon}\right).
$$
Using Lemma \ref{lem:analog-gsp-ham}, we know that an accurate enough estimate of $\Tr[O\rho]$ is also an accurate estimate of $\braket{v_0|O|v_0}$.  

For this, we consider a discretized version of this LCU decomposition, i.e.\ we approximate $e^{-tH^2}$ as a linear combination of roughly $\sqrt{t}$ terms.  This decomposition has already shown up in prior works \cite{chowdhury2017quantum,chowdhury2021DQC1}. We formally state this via the following Lemma.

\begin{restatable}[LCU decomposition of $e^{-tH^2}$ \cite{chowdhury2017quantum}]{lemma}{lemmalcuexp}
\label{lem:lcu-decomp-exp}
Let $0<\gamma<1$ and consider a Hamiltonian $H$ of unit spectral norm. Furthermore, for any $t>1$, let us define 
\begin{equation*}
X_M= \sum_{j=-M}^M c_j e^{-i j \delta_t \sqrt{2 t} H },   
\end{equation*}
where $M=\left \lceil{\sqrt{2}\left(\sqrt{t}+\sqrt{\log(5/\gamma)}\right)\sqrt{\log(4/\gamma)}}\right \rceil $, $
\delta_t= \left(\sqrt{2t}+\sqrt{2\log(5/\gamma)}\right)^{-1}$ and,
$$
c_j=\frac{\delta_t}{\sqrt{2 \pi}} e^{- j^2 \delta_t^2/2}.
$$
Then,
$$
\nrm{X_M-e^{-t H^2}} \leq \gamma.
$$
\end{restatable}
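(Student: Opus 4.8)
The plan is to reduce the operator-norm estimate to a scalar one via the spectral theorem, and then to recognise $X_M$ as a truncated Riemann sum of a Gaussian Fourier integral representation of $e^{-t\lambda^2}$, controlling the resulting error by a Gaussian tail bound together with Poisson summation. First I would diagonalise: since $\nrm{H}\le 1$, write $H=\sum_k\lambda_k\ket{v_k}\bra{v_k}$ with $\lambda_k\in[-1,1]$. Both $e^{-tH^2}$ and $X_M$ are functions of $H$, hence diagonal in the same basis, so $\nrm{X_M-e^{-tH^2}}=\max_k\bigl|X_M(\lambda_k)-e^{-t\lambda_k^2}\bigr|$, where $X_M(\lambda):=\sum_{j=-M}^M c_j e^{-ij\delta_t\sqrt{2t}\,\lambda}$. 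It therefore suffices to prove $|X_M(\lambda)-e^{-t\lambda^2}|\le\gamma$ for every $\lambda\in[-1,1]$.

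Next I would apply the Gaussian Fourier identity \eqref{eq:hubbard-stratonovich} with $y=\sqrt{2t}\,\lambda$, which gives $e^{-t\lambda^2}=\int_{\mathbb{R}}\frac{dz}{\sqrt{2\pi}}e^{-z^2/2}e^{-i\sqrt{2t}\lambda z}$. Setting $h(z):=\frac{1}{\sqrt{2\pi}}e^{-z^2/2}e^{-i\sqrt{2t}\lambda z}$, one has $c_j e^{-ij\delta_t\sqrt{2t}\lambda}=\delta_t\,h(j\delta_t)$ and $|h(z)|=\frac{1}{\sqrt{2\pi}}e^{-z^2/2}$, so $X_M(\lambda)$ is exactly the Riemann sum of $h$ with node spacing $\delta_t$ truncated at $|j|\le M$. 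I would then split the error into a \emph{discretisation} part $\delta_t\sum_{j\in\mathbb{Z}}h(j\delta_t)-\int_{\mathbb{R}}h$ and a \emph{truncation} part $\delta_t\sum_{|j|>M}h(j\delta_t)$. The truncation part is controlled by monotonicity of the Gaussian: $\delta_t\sum_{|j|>M}|h(j\delta_t)|\le 2\int_{M\delta_t}^{\infty}\frac{e^{-z^2/2}}{\sqrt{2\pi}}\,dz\le e^{-(M\delta_t)^2/2}$, and the definitions of $M$ and $\delta_t$ force $M\delta_t\ge\sqrt{\log(4/\gamma)}$, so that $e^{-(M\delta_t)^2/2}$ is suitably small. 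The discretisation part I would handle by Poisson summation, $\delta_t\sum_{j\in\mathbb{Z}}h(j\delta_t)=\sum_{k\in\mathbb{Z}}\widehat{h}(2\pi k/\delta_t)$ with $\widehat{h}(\xi)=e^{-(\xi+\sqrt{2t}\lambda)^2/2}$ a shifted Gaussian; the $k=0$ term is precisely $\int_{\mathbb{R}}h=e^{-t\lambda^2}$, so the discretisation error equals the aliasing sum $\sum_{k\neq0}e^{-(2\pi k/\delta_t+\sqrt{2t}\lambda)^2/2}$. Since $1/\delta_t=\sqrt{2t}+\sqrt{2\log(5/\gamma)}$ and $|\lambda|\le1$, for every $k\neq0$ one has $|2\pi k/\delta_t+\sqrt{2t}\lambda|\ge 2\pi/\delta_t-\sqrt{2t}\ge 2\pi\sqrt{2\log(5/\gamma)}$, whence each term is at most $(\gamma/5)^{4\pi^2}$ and the whole aliasing sum is negligible compared with $\gamma$. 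Adding the two bounds yields $|X_M(\lambda)-e^{-t\lambda^2}|\le\gamma$.

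The main obstacle is the constant bookkeeping in the last step: one must verify that the prescribed $\delta_t$ and $M$ simultaneously push the Gaussian tail $e^{-(M\delta_t)^2/2}$ and the aliasing sum each below their designated fractions of $\gamma$, while keeping the number of unitaries $2M+1=\Oo\!\bigl((\sqrt{t}+\sqrt{\log(1/\gamma)})\sqrt{\log(1/\gamma)}\bigr)$. A point that must not be glossed over is why Poisson summation (equivalently, the exponential decay of the Gaussian) is essential rather than a finite-interval quadrature estimate: because $h$ oscillates at frequency $\sqrt{2t}\,|\lambda|$, a trapezoidal-rule bound on $[-M\delta_t,M\delta_t]$ would scale like $\nrm{h''}_{\infty}=\Oo(t)$ and be useless, whereas the aliasing terms are genuinely exponentially small precisely because $1/\delta_t$ is chosen to exceed that frequency. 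The hypothesis $t>1$ enters only as bookkeeping, keeping $\delta_t$, $M$, and the logarithmic factors in their intended ranges.
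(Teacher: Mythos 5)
The paper itself does not prove this lemma --- it is imported verbatim from Chowdhury and Somma \cite{chowdhury2017quantum} --- so there is no in-text argument to compare against; what you have written is a reconstruction of the standard proof of that cited result, and the structure is right. Reduction to a scalar estimate by diagonalization, identification of $X_M(\lambda)$ as a truncated Riemann sum of the Hubbard--Stratonovich integral (Eq.~\eqref{eq:hubbard-stratonovich}), a Gaussian tail bound for the truncation, and Poisson summation for the discretization is exactly the intended argument. Your remark that a finite-order quadrature bound is useless here because $\nrm{h''}_\infty=\Oo(t)$, while the aliasing terms are exponentially small precisely because $1/\delta_t=\sqrt{2t}+\sqrt{2\log(5/\gamma)}$ exceeds the oscillation frequency $\sqrt{2t}\,|\lambda|$, is the essential point, and the per-term aliasing estimate $(\gamma/5)^{4\pi^2}$ together with geometric convergence of the sum over $k\neq 0$ is correct.

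The one step that does not close as written is the truncation bound. With the stated constants one has $M\delta_t\geq\sqrt{\log(4/\gamma)}$ with near-equality, so your bound gives $e^{-(M\delta_t)^2/2}=\sqrt{\gamma}/2$, which exceeds $\gamma$ for every $\gamma<1/4$; calling this ``suitably small'' and deferring it to bookkeeping hides the fact that the advertised constants do not actually deliver error $\gamma$ by this route (a sharper tail bound $\int_a^\infty e^{-z^2/2}\,dz/\sqrt{2\pi}\leq e^{-a^2/2}/(a\sqrt{2\pi})$ does not rescue it either). To finish, either enlarge $M$ so that $M\delta_t\geq\sqrt{2\log(4/\gamma)}$ --- a factor-$\sqrt{2}$ change that leaves $M=\Oo\bigl((\sqrt{t}+\sqrt{\log(1/\gamma)})\sqrt{\log(1/\gamma)}\bigr)$, $\nrm{c}_1=\Oo(1)$, and the maximal evolution time $M\delta_t\sqrt{2t}=\Oo(\sqrt{t\log(1/\gamma)})$ unchanged --- or state the conclusion with $\gamma$ replaced by $\Oo(\sqrt{\gamma})$. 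Since every downstream use of the lemma in the paper relies only on these asymptotics, this is a constant-factor defect inherited from the lemma statement rather than a flaw in your method, but it must be stated explicitly rather than waved through.
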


We will use this LCU decomposition for our randomized quantum algorithm. First note that the $\ell_1$-norm of the LCU coefficients in Lemma \ref{lem:lcu-decomp-exp}, can be upper bounded by a constant. In fact,
\begin{align}
||c||_1&=\sum_{j=-M}^M |c_j|\\&\leq |c_0|+2\sum_{j=1}^{\infty}\frac{\delta_t}{\sqrt{2\pi}}e^{-j^2\delta_t^2/2}\\
&\leq |c_0|+2 \int_0^\infty\frac{e^{-x^2/2}}{\sqrt{2\pi}}dx
=1+|c_0|\leq 1+\delta_t=O(1).
\label{eq:l1-norm-exp}
\end{align}
Second, we will shortly prove that it suffices to consider
$$
\gamma=\dfrac{\varepsilon\eta^2}{30~\norm{O}}.
$$
Now, we are in a position to use Algorithm \ref{algo:single-ancilla-overall}. Recall that, Algorithm \ref{algo:single-ancilla-overall} estimates the expectation value and the norm, by making separate calls to Algorithm \ref{algo:randomized-time-evolution}.

Let us first estimate the cost of each run of our algorithm in the Hamiltonian Evolution model. Each iteration of our randomized quantum algorithm requires implementing (controlled and anti-controlled) versions of unitaries $V_1$, $V_2$ which are i.i.d samples from the ensemble $\{U_j, c_j/\norm{c}_1\}$, where, from Lemma \ref{lem:lcu-decomp-exp}, each $U_j=e^{-ij\delta_t\sqrt{2t}H}$. For this case, it suffices to obtain two integers $j_1$, $j_2$ according to $\{c_j/\norm{c}_1\}$ and then implement $V_1=e^{-ij_1\delta_t\sqrt{2t}H}$ and $V_2=e^{-ij_2\delta_t\sqrt{2t}H}$, respectively. So, the cost of each coherent run can be upper bounded by $2\tau_{\max}+\tau_{\psi_0}$, where $\tau_{\max}$ is the maximum time of evolution for $H$, which can be obtained from Lemma \ref{lem:lcu-decomp-exp} as
\begin{equation}
\label{eq:gstate-max-time-evolution}
\tau_{\max}=M \delta_t\sqrt{2t}=O\left(\sqrt{t\log\left(1/\gamma\right)}\right)=O\left(\dfrac{1}{\Delta}\log\left(\dfrac{\norm{O}}{\eta\varepsilon}\right)\right).
\end{equation}  

We prove the correctness of our algorithm, as well as the total number of repetitions $T$, via the following theorem:
~\\
\begin{theorem}
\label{thm:ground-state-estimation}
Let $\varepsilon, \delta, \gamma \in (0,1)$ and $\eta \in (0,1/\sqrt{2}]$. Suppose $H=\sum_{k=1}^{L} \lambda_k P_k$ is a Hermitian matrix, with ground state $\ket{v_0}$ and let $\ket{\psi_0}$ be some initial state, prepared in cost $\tau_{\psi_0}$, such that $|\braket{v_0|\psi_0}|=\eta$. Let $O$ be some observable. Furthermore, for
$$
t=O\left(\dfrac{1}{\Delta}\log\left(\dfrac{\norm{O}}{\eta\varepsilon}\right)\right),
$$
and,
$$
\gamma=\dfrac{\varepsilon \eta^2}{30~\norm{O}},
$$
suppose,
$$
\norm{e^{-tH^2}-X_M}\leq \gamma.
$$
Then for
$$
T= O\left(\dfrac{\norm{O}^2\ln(1/\delta)}{\varepsilon^2\eta^4}\right),
$$
Algorithm \ref{algo:single-ancilla-overall} outputs, with probability at least $(1-\delta)^2$, parameters $\mu, \tilde{\ell}$ such that
$$
\left|\dfrac{\mu}{\tilde{\ell}}-\braket{v_0|O|v_0}\right|\leq \varepsilon,
$$
using $T$ repetitions of the quantum circuit in Fig.~\ref{fig:single-ancilla-circuit}, and only one ancilla qubit. The maximal time of evolution of $H$ is at most
$$
\tau_{max}=O\left(\dfrac{1}{\Delta}\log\left(\dfrac{\norm{O}}{\varepsilon \eta}\right)\right).
$$
\end{theorem}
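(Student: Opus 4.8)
The plan is to reduce the statement to three facts already in hand: (i) the observation behind Lemma~\ref{lem:analog-gsp-ham}, that for a suitable $t$ the normalized vector $\ket{\phi}:=e^{-tH^2}\ket{\psi_0}/\nrm{e^{-tH^2}\ket{\psi_0}}$ is close to $\ket{v_0}$; (ii) the LCU decomposition $X_M\approx e^{-tH^2}$ of Lemma~\ref{lem:lcu-decomp-exp} with $U_j=e^{-ij\delta_t\sqrt{2t}H}$ and $\nrm{c}_1=O(1)$ (established just above the theorem); and (iii) the correctness guarantee of \textit{Single-Ancilla LCU}, Theorem~\ref{thm:single-ancilla-overall}. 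After the harmless shift $H\mapsto H-(E_0-\eps_g)I$ (which preserves $\Delta$ and forces $\lambda_0\in[0,2\eps_g]$), note that $\Tr[O\rho]=\braket{\phi|O|\phi}$ exactly, where $\rho$ is the normalized state in the statement; so it suffices to estimate $\Tr[O\rho]$ to accuracy $\eps/2$ and to bound $|\braket{\phi|O|\phi}-\braket{v_0|O|v_0}|\le\eps/2$.

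For the second bound I would work in the regime $t\in\Theta\!\left(\Delta^{-2}\log(\nrm{O}/(\eta\eps))\right)$: then $e^{-tH^2}$ suppresses every eigencomponent orthogonal to $\ket{v_0}$ by a factor $\lesssim e^{-t\Delta^2}$ relative to the ground component, so Lemma~\ref{lem:analog-gsp-ham} makes $\nrm{\ket{\phi}-\ket{v_0}}$ small enough that the tracial H\"{o}lder estimate already used inside Lemma~\ref{lem:state-preparation-observable}, namely $|\Tr[O\ket{\phi}\bra{\phi}]-\Tr[O\ket{v_0}\bra{v_0}]|\le 2\nrm{O}\sqrt{1-|\braket{\phi|v_0}|}$, is at most $\eps/2$. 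Consistency of this choice of $t$ with the advertised $\tau_{\max}$ is checked at the end.

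For the first bound I would first produce the lower bound $\ell_*$ on the normalization: since $\nrm{e^{-tH^2}\ket{\psi_0}}\ge|\braket{v_0|e^{-tH^2}|\psi_0}|=e^{-t\lambda_0^2}|\braket{v_0|\psi_0}|\ge e^{-4t\eps_g^2}\eta$, and the hypothesis $\eps_g\in O(\Delta/\sqrt{\log(1/(\eta\eps))})$ forces $t\eps_g^2=O(1)$, we get $\ell^2=\Tr[e^{-tH^2}\rho_0 e^{-tH^2}]=\Omega(\eta^2)$, so we may take $\ell_*=\Theta(\eta^2)$. Now apply Theorem~\ref{thm:single-ancilla-overall} with $f(H)=e^{-tH^2}$ (hence $\nrm{f(H)}\le 1$), the LCU $\{c_j,U_j\}$ of Lemma~\ref{lem:lcu-decomp-exp}, $\nrm{c}_1=O(1)$, and target accuracy $\eps/2$. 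Its requirement on the truncation error reads $\gamma\le \eps\ell_*/(18\nrm{O}\nrm{f(H)})=\Omega(\eps\eta^2/\nrm{O})$, met by the stated $\gamma=\eps\eta^2/(30\nrm{O})$; it then outputs $\mu,\tilde\ell$ with $|\mu/\tilde\ell-\Tr[O\rho]|\le\eps/2$, with probability $(1-\delta)^2$, using $T=O(\nrm{O}^2\nrm{c}_1^4\ln(1/\delta)/(\eps^2\ell_*^2))=O(\nrm{O}^2\ln(1/\delta)/(\eps^2\eta^4))$ repetitions and one ancilla qubit. Adding the two $\eps/2$ bounds gives $|\mu/\tilde\ell-\braket{v_0|O|v_0}|\le\eps$. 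Finally, the maximal evolution time of $H$ in any coherent run is the largest $|j|\delta_t\sqrt{2t}$ occurring, i.e. $\tau_{\max}=M\delta_t\sqrt{2t}=O(\sqrt{t\log(1/\gamma)})$ by Lemma~\ref{lem:lcu-decomp-exp}; substituting $t=\Theta(\Delta^{-2}\log(\nrm{O}/(\eta\eps)))$ and $\gamma=\Theta(\eps\eta^2/\nrm{O})$ yields $\tau_{\max}=O(\Delta^{-1}\log(\nrm{O}/(\eps\eta)))$, matching the claim (and confirming the choice of $t$).

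The main obstacle is not any single step but the simultaneous error bookkeeping: $t$ must be large enough to damp excited states to precision $\Theta(\eps/\nrm{O})$ yet small enough to keep $\tau_{\max}$ at $O(\Delta^{-1}\log(\cdot))$; the bound $\ell_*=\Theta(\eta^2)$ must be threaded through the precision demands of Theorem~\ref{thm:norm-approx} (hidden inside Theorem~\ref{thm:single-ancilla-overall}) so that exactly the factor $1/\ell_*^2=O(1/\eta^4)$ — and no stray $1/\eta$ or extra $\nrm{O}$ — ends up in $T$; and the $O(1)$ bound on $\nrm{c}_1$ must be invoked to prevent a blow-up in both $\gamma$ and $T$. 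A minor point to dispatch up front is that Theorem~\ref{thm:norm-approx} requires $\nrm{O}\ge 1$, which is without loss of generality after rescaling $O$.
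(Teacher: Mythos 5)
Your proposal is correct and follows essentially the same route as the paper's proof: lower-bound the normalization by $\ell_*=\Omega(\eta^2)$, invoke the general \textit{Single-Ancilla LCU} guarantee (Theorem~\ref{thm:single-ancilla-overall} with the LCU of Lemma~\ref{lem:lcu-decomp-exp} and $\nrm{c}_1=O(1)$) to get $|\mu/\tilde\ell-\braket{\phi|O|\phi}|\le\varepsilon/2$, then bound $|\braket{\phi|O|\phi}-\braket{v_0|O|v_0}|\le 2\nrm{O}\sqrt{1-|\braket{\phi|v_0}|}\le\varepsilon/2$ via the spectral-gap suppression from Lemma~\ref{lem:analog-gsp-ham}, and finally read off $\tau_{\max}=O(\sqrt{t\log(1/\gamma)})$. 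Your explicit justification of $\ell_*=\Omega(\eta^2)$ via $e^{-t\lambda_0^2}\ge e^{-4t\varepsilon_g^2}=\Omega(1)$ is in fact slightly more careful than the paper's, which asserts this bound without comment; the only cosmetic difference is that the paper tracks constants through Theorem~\ref{thm:norm-approx-general} with $a=b=5$ (hence the factor $30$) rather than through the $a=b=3$ form you cite.
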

\begin{proof}
Define
$$
\ket{\phi}=\dfrac{e^{-tH^2\ket{\psi_0}}}{\norm{e^{-tH^2}\ket{\psi_0}}}.
$$
First observe that $\braket{\psi_0|e^{-2tH^2}|\psi_0}\geq \ell_*=\Omega(\eta^2)$. Then, for the chosen value of $\gamma$, from Theorem \ref{thm:single-ancilla-overall}, the first call to Algorithm \ref{algo:randomized-time-evolution} outputs an estimate $\mu$ such that
$$
\left|\mu-\Tr[O~e^{-tH^2}\rho_0 e^{-tH^2}]\right|\leq \dfrac{\varepsilon\eta^2}{5}.
$$
The second call to Algorithm \ref{algo:randomized-time-evolution} outputs $\tilde{\ell}$ such that
$$
\left|\tilde{\ell}-\ell^2\right|\leq \dfrac{\varepsilon\eta^2}{5\norm{O}}.
$$
Then Algorithm \ref{algo:single-ancilla-overall} outputs $\mu/\tilde{\ell}$, which from Theorem \ref{thm:norm-approx-general} (substituting $a=b=5$), guarantees
$$
\left|\dfrac{\mu}{\tilde{\ell}}-\braket{\phi|O|\phi}\right|\leq \varepsilon/2.
$$
Then, we have:
\begin{align}
\left|\braket{v_0|O|v_0}-\braket{\phi|O|\phi}\right|&\leq \norm{O}_{\infty}\norm{\ket{\phi}\bra{\phi}-\ket{v_0}\bra{v_0}}_1~~~~[\text{~Using Lemma \ref{thm:holder} with~}p=\infty,~q=1]\\
&\leq 2\norm{O}\sqrt{1-|\braket{v_0|\phi}|}.
\label{eq:distance-expectation-gsp}
\end{align}
Now we have
\begin{align}
\left|\braket{\phi|v_0}\right|&\geq 1-\dfrac{1}{2}\norm{\ket{\phi}-\ket{v_0}}^2 \\
                              &\geq 1 - \dfrac{\eta^2}{2(1-\eta^2)}e^{-2t\Delta^2}.
\end{align}
So, by choosing some
$$
t>\dfrac{1}{2\Delta^2}\log\left(\dfrac{8\norm{O}^2(1-\eta^2)}{\varepsilon^2\eta^2}\right),
$$
we ensure that
\begin{align}
\left|\braket{\phi|v_0}\right|\geq 1-\dfrac{\varepsilon^2}{16\norm{O}^2}.
\end{align}
Substituting this back in Eq.~\eqref{eq:distance-expectation-gsp} gives us,
$$
\left|\braket{v_0|O|v_0}-\braket{\phi|O|\phi}\right|\leq \varepsilon/2,
$$
as desired. By triangle inequality, we obtain,
\begin{align}
\left|\dfrac{\mu}{\tilde{\ell}}-\braket{v_0|O|v_0}\right|\leq \left|\dfrac{\mu}{\tilde{\ell}}-\braket{\phi|O|\phi}\right|+\left|\braket{v_0|O|v_0}-\braket{\phi|O|\phi}\right|\leq \varepsilon.
\end{align}
The maximal time of evolution, 
$$
\tau_{\max}=O(\sqrt{t})=O\left(\dfrac{1}{\Delta}\log\left(\dfrac{\norm{O}}{\varepsilon\eta}\right)\right).
$$
The total number of repetitions of the underlying quantum circuit can be obtained by simply substituting in the value of $T$ (in Theorem \ref{thm:ancilla-free-LCU-theorem}), $\ell_*=\eta^2$, and $\norm{c}_1=O(1)$ to obtain
\begin{align}
\label{eq:sample-complexity-gstate}
T=O\left(\dfrac{\norm{O}^2\ln(1/\delta)}{\varepsilon^2 \eta^4}\right),
\end{align}
which completes the proof.
\end{proof}
Thus, the total evolution time is $O(\tau_{\max}.T)=\widetilde{O}(\Delta^{-1}\eta^{-4}\norm{O}^2/\varepsilon^2)$.
~\\~\\
\textbf{Comparison with prior works:~} Here, we compare our method with prior works on ground state preparation: (a) algorithms that make use of Standard LCU, and also, (b) state-of-the-art quantum algorithms for ground state preparation by using QSVT, and (c) early fault-tolerant quantum algorithms for ground state preparation and ground energy estimation. While this is discussed at length in the subsequent paragraphs, the comparison has been summarized in Table \ref{table:comparison-gsp}.

Let us consider ground state preparation algorithms that make use of the \textit{Standard LCU} procedure. We will estimate the performance of these algorithms in the Hamiltonian evolution (HE) model. It is important to note that for \textit{Standard LCU}, we will be dealing with a multi-qubit controlled Hamiltonian evolution oracle: we calculate the maximal time evolution of the multi-qubit controlled unitary $c_m{\text-}U_t$ ($U_t=\exp[-itH]$ controlled over $m$-ancilla qubits), in addition to the number of classical repetitions (overall complexity is measured in terms of the total evolution time which is the product of these two quantities), and the number of ancilla qubits needed. Also, as outlined in Table \ref{table:comparison-standard-LCU}, any generic LCU procedure for preparing $\ket{v_0}$ has three ways which it can estimate $\braket{v_0|O|v_0}$. We will consider each of them.

\begin{itemize}
\item Ge et al. \cite{ge2019faster} use the standard LCU procedure to prepare $\ket{v_0}$. Implementing the LCU requires $O(\log(\log(||O||\eta^{-1}\varepsilon^{-1})/\Delta))$ qubits as ancillae, and multi-qubit controlled operations, which prepares $\ket{v_0}$ with $\varepsilon/\norm{O}$-accuracy. It is possible to use the LCU algorithm to first prepare $\ket{v_0}$ with a constant probability using quantum amplitude amplification, and then measure $O$, using $O(~||O||^2/\varepsilon^2)$ classical repetitions. The maximal time evolution of $H$ per coherent run is $\tilde{O}(\eta^{-1}\Delta^{-1})$, which is higher than our method. However, the number of classical repetitions needed (and also the total evolution time) is lower than \textit{Single-Ancilla LCU}. Overall, the advantage of our method is that requires only a single ancilla qubit, no multi-qubit controlled gates, and lower maximal time of evolution of $H$ per coherent run.
~\\
\item The number of classical repetitions can be reduced to a constant if $\braket{v_0|O|v_0}$ is estimated using quantum amplitude estimation. This technique also reduces the dependence on precision to $1/\varepsilon$ instead of $1/\varepsilon^2$. In this case, we consider an $(\alpha_O,a_O,0)$-block encoding of the observable $O$. The maximal time of evolution per coherent run is 
$\widetilde{O}\left(\alpha_O\varepsilon^{-1} \eta^{-1} \Delta^{-1}\right)$, which is also the total evolution time. Thus, the total evolution time of this approach is lower than our method. However, the maximal evolution time of $H$ per coherent run is exponentially higher (in terms of $1/\varepsilon$). Furthermore, this approach requires $O(a_O+\log(\log(||O||\eta^{-1}\varepsilon^{-1})/\Delta))$ ancilla qubits, while our method requires only a single ancilla qubit and no multi-qubit controlled operations.
~\\
\item Finally, if simply standard LCU is used without quantum amplitude amplification or estimation, the maximal time of evolution matches our method while the number of classical repetitions (and hence the total evolution time) is quadratically lower in terms of $1/\eta$. However, as with the previously mentioned approaches, \textit{Standard LCU} requires: $O(\log(\log(\eta^{-1}\varepsilon^{-1})/\Delta))$ ancilla qubits, and implementing multi-qubit controlled gates. 
\end{itemize}

The quantum algorithm for ground state preparation by Lin and Tong \cite{lin2020near} uses the framework of Quantum Singular Value Transformation. Consequently, it does not require the Hamiltonian evolution operator, and hence cannot be directly compared with our approach. Its complexity however is measured in terms of the number of queries to a block encoding of $H$. Given access to an $(\alpha_H, a, 0)$-block encoding of $H$, their algorithm requires $O(\alpha_H\Delta^{-1}\eta^{-1}\log(1/\varepsilon))$ queries and only $O(1)$ ancilla qubits to prepare $\ket{v_0}$. However, constructing a block encoding of $H$ can be resource demanding and may lead to the use of multiple ancilla qubits and multi-qubit controlled gates. For instance if $H$ is a linear combination of $L$ Pauli terms with $\beta$ being the total weight of the coefficients, the block encoding of $H$ requires $O(\log L)$ ancilla qubits and gate depth $O(L)$ (also, $\alpha_H=\beta$). Thus, this method is not suitable in the early fault-tolerant regime. Nevertheless, much like the \textit{Standard LCU} approach, this algorithm can also be adapted to estimate $\braket{v_0|O|v_0}$ in three ways (as shown in Table \ref{table:comparison-gsp}). The overall query complexity is always lower than our total evolution time. However, besides needing more ancilla qubits, the number of queries per coherent run of the Lin and Tong algorithm can be higher (as $\alpha_H\gg 1$) than the maximal time of evolution $H$ in the \textit{Single-Ancilla LCU} method.

\begin{table}[ht!!]
\begin{center}
    \resizebox{\columnwidth}{!}{
    \renewcommand{\arraystretch}{3} 
    \begin{tabular}{|c|c|c|c|c|c|c|}
    \hline
    Algorithm & Access & Variant & Ancilla & Cost per coherent run & Classical repetitions \\ \hline\hline
  \multirow{3}{*}{Standard LCU \cite{ge2019faster}} & \multirow{3}{*}{HE (MQC)}& QAA + classical repetitions & $O\left(\log\left(\log\left(\frac{\norm{O}}{\eta\varepsilon}\right)/\Delta\right)\right)$ & $\widetilde{O}\left(\Delta^{-1}\eta^{-1}\right)$ & $O\left(\dfrac{\norm{O}^2}{\varepsilon^2}\right)$ \\ \cline{3-6}
    
   &  & QAE & $O\left(a_O+\log\left(\log\left(\frac{\norm{O}}{\eta\varepsilon}\right)/\Delta\right)\right)$ & $\widetilde O\left(\dfrac{\alpha_O}{\varepsilon\eta\Delta}\right)$ & $O(1)$ \\ \cline{3-6}

   &  & Without QAA or QAE & $O\left(\log\left(\log\left(\frac{\norm{O}}{\eta\varepsilon}\right)/\Delta\right)\right)$ & $\widetilde{O}\left(\dfrac{1}{\Delta}\right)$ & $O\left(\dfrac{\norm{O}^2}{\varepsilon^2\eta^2}\right)$ \\ \hline 
   
   \multirow{3}{*}{QSVT \cite{lin2020near}} & \multirow{3}{*}{BE (MQC)}& QAA + classical repetitions & $O(a_H)$ & $\widetilde{O}\left(\dfrac{\alpha_H}{\Delta\eta}\right)$ & $O\left(\dfrac{\norm{O}^2}{\varepsilon^2}\right)$ \\ \cline{3-6}
    
   &  & QAE & $O(a_O+a_H)$ & $\widetilde O\left(\dfrac{\alpha_O\alpha_H}{\varepsilon\eta\Delta}\right)$ & $O(1)$ \\ \cline{3-6}
   
   &  & Without QAA or QAE & $O\left(a_H\right)$ & $\widetilde{O}\left(\dfrac{\alpha_H}{\Delta}\right)$ & $O\left(\dfrac{\norm{O}^2}{\varepsilon^2\eta^2}\right)$ \\ \hline
   
 \multirow{2}{*}{Early Fault-tolerant \cite{dong2022ground}} & HE* & Without QAA & $1$ & $\widetilde{O}\left(\dfrac{1}{\Delta}\right)$ & $O\left(\dfrac{\norm{O}^2}{\varepsilon^2\eta^2}\right)$ \\ \cline{2-6}
 
 & HE* (MQC) & With QAA & $2$ & $\widetilde{O}\left(\Delta^{-1}\eta^{-1}\right)$ & $O\left(\dfrac{\norm{O}^2}{\varepsilon^2}\right)$ \\ \hline
 
 Early Fault-tolerant \cite{zhang2022computing} & HE+BE (MQC) & -- & $O(a_O)$ & $\widetilde{O}\left(\Delta^{-1}\right)$ & $O\left(\dfrac{\alpha^2_O}{\eta^4\varepsilon^2}\right)$ \\ \hline
     
    This work & HE & -- & $1$ & $\widetilde{O}\left(\dfrac{1}{\Delta}\right)$ & $O\left(\dfrac{\norm{O}^2}{\varepsilon^2\eta^4}\right)$\\
    \hline  
    \end{tabular}}
  \caption{Consider any Hamiltonian $H$ with ground state $\ket{v_0}$, an initial quantum state with overlap at least $\eta$ with $\ket{v_0}$, and any measurable observable $O$. Furthermore, suppose we have knowledge of the ground energy with precision $\varepsilon_g=O(\Delta/\sqrt{\log(\eta^{-1}\varepsilon^{-1})})$. In this table, we compare with our technique, the cost of various quantum algorithms for estimating $\braket{v_0|O|v_0}$ to additive accuracy $\varepsilon$. For both Standard LCU \cite{ge2019faster} and our method (\textit{Single-Ancilla LCU}), we consider that $H$ can only be accessed through the time evolution operator $U_{t}=\exp[-iHt]$. This is the Hamiltonian Evolution (HE) model, where the cost of each coherent run is estimated in terms of the maximal time evolution of $H$. The number of repetitions refers to the total number of times the circuit is run. The total evolution time is then the product of these two quantities. The Standard LCU approach requires implementing multi-qubit controlled (MQC) time evolution operators, which is indicated in the Table along with the access model. In the quantum algorithm by Dong et al. \cite{dong2022ground}, the authors estimate the complexity in terms of the number of queries made to the unitary $U=e^{-iH}$. The algorithm involves implementing interleaved $U$ and $U^{\dag}$, controlled over a single qubit ancilla, which implements single qubit phase rotations. This is slightly different from the HE model considered earlier, and hence is denoted as HE* in the Table. Furthermore, the number of queries made to this operator in one run of the underlying circuit determines the cost per coherent run. The QSVT-based algorithm by Lin and Tong \cite{lin2020near} considers the block encoding model (denoted by BE in the Table). More precisely, the cost per coherent run is estimated in terms of the number of queries made to an $(\alpha_H, a_H, 0)$-block encoding of $H$. The product of the number of queries per run and the total number of classical repetitions is the total number of queries to the block encoding of $H$. In order to estimate the desired expectation value directly via quantum amplitude estimation (QAE), we assume that $O$ is accessed via an $(\alpha_O, a_O, 0)$-block encoding. Also the algorithm of Zhang et al. \cite{zhang2022computing} assumes this block encoding for estimating $\braket{v_0|O|v_0}$.
\label{table:comparison-gsp}}
    \end{center}
\end{table}
\renewcommand{\arraystretch}{1}
 
A number of quantum algorithms have been developed for estimating the ground energy of Hamiltonians in the Hamiltonian evolution model, tailored to early fault-tolerant quantum computers. Most of these algorithms also make use of a single ancilla qubit. For instance, in Ref.~\cite{lin2022heisenberg}, Lin and Tong use the Hadamard test (and classical post processing) to achieve the so-called Heisenberg scaling ($1/\varepsilon$ - dependence) for estimating the ground energy to $\varepsilon$-additive accuracy. Their algorithm requires a maximal evolution time $\widetilde{O}(\varepsilon^{-1}\polylog{1/\eta})$, while the total evolution time scales as $\widetilde{O}(\varepsilon^{-1}\eta^{-2})$. The algorithm of Wang et al. \cite{wang2022quantum} on the other hand, uses the same circuit but a different post-processing methodology through which they are able to exponentially improve the dependence on precision with respect to the Hamiltonian evolution time per coherent run. Their result requires a maximal evolution time of $O(\Delta^{-1}\polylog{\varepsilon^{-1}\eta^{-1}\Delta})$, and a total evolution time scaling as $O(\eta^{-2}\varepsilon^{-2}\Delta\polylog{\eta^{-1}\varepsilon^{-1}\Delta})$.

The \textit{Single-Ancilla LCU} method for ground state property estimation assumes that the ground energy of $H$ is known to $O(\Delta/\sqrt{\log (\eta^{-1}\varepsilon^{-1})})$ precision. Thus, one can make use of the algorithm of Wang et al. \cite{wang2022quantum} to first estimate the ground energy and then run our algorithm, without adding any asymptotic overhead either in terms of the maximal time evolution per coherent run or the total evolution time.

Dong, Lin and Tong \cite{dong2022ground} provide ground energy estimation and ground state preparation algorithms for early fault-tolerant quantum computers. Their access model is slightly different: it measures the complexity in terms of the number of queries to $U=e^{-iH}$ (and $U^{\dag}$). The underlying approach is reminiscent of quantum signal processing: interleaved applications of $c{\text-}U$ and $c{\text-}U^{\dag}$ along with single qubit rotations (with adjustable phases). They refer to this as Quantum Eigenvalue Transformation of Unitaries (QETU). Thus, as compared to other early fault-tolerant approaches, this approach has an overhead in terms of the number of single qubit gates needed, which scales linearly with the number of queries made to $U$ and $U^{\dag}$. The authors provide two algorithms for ground energy estimation: (a) The first one requires $\widetilde{O}(\varepsilon^{-1}\log(1/\eta))$ queries to $c{\text-}U$ and $c{\text-}U^{\dag}$ per coherent run, while using only a single ancilla qubit. The total number of queries needed is $\widetilde{O}(\varepsilon^{-1}\eta^{-2})$. (b) The second one makes use of quantum amplitude amplification and binary amplitude estimation to improve the overall query complexity by a factor of $1/\eta$, but at the same time the maximal query-depth per run, also increases by this factor. Furthermore, this approach requires three ancilla qubits and hence, multi-qubit controlled operations. 

Similarly, for ground state preparation there are two algorithms:  the first is a near-optimal ground state preparation algorithm has a maximal query depth of $\widetilde{O}(\eta^{-1}\Delta^{-1})$, which is also the total number of queries, while requiring only two ancilla qubits. Unlike Ref.~\cite{lin2020near}, this does not assume a block encoding access to $H$, but makes us of quantum amplitude amplification, which is hard to implement in the early fault-tolerant regime. This algorithm can be used to estimate $\braket{v_0|O|v_0}$, requiring additional queries scaling as $\norm{O}^2/\varepsilon^2$. Thus, this has a lower overall evolution time as compared to our method, at the cost of requiring only one additional ancilla qubit. However, a higher maximal evolution time per coherent run is needed as compared to \textit{Single-Ancilla LCU}.

The second algorithm requires shorter query depth per coherent run and uses only a single ancilla qubit. The algorithm requires a maximal query depth of $\widetilde{O}(1/\Delta)$, to prepare the ground state with probability $\eta^2$. In order to estimate $\braket{v_0|O|v_0}$, this algorithm will require $O(\norm{O}^2\eta^{-2}/\varepsilon^2)$ classical runs of their circuit, which is quadratically better than our approach. 

However, besides the single qubit gate overhead, there is another drawback of this algorithm which concerns translating query depth to actual gate depth. For this, a specific Hamiltonian simulation procedure must be chosen to implement $U=\exp[-iH]$. State-of-the-art techniques require access to a block encoding of $H$ which leads to an increase in the number of ancilla qubits, as well as the overall gate depth. Thus, Trotter-based methods or Hamiltonian simulation by \textit{Single-Ancilla LCU} need to be leveraged in order to keep the overall ancilla qubit count to one. However, the later method cannot be efficiently incorporated into the framework of Dong et al. This is primarily because our Hamiltonian simulation procedure implements some $S$ such that $S/\norm{c}_1\approx e^{-iH}$, where $\norm{c}_1=O(1)$. Since the subnormalization factor is not unity, many queries to $U$ and $U^{\dag}$ would lead to an exponential overhead ($d$ queries lead to an overhead of $\norm{c}_1^d$) in the final complexity. Consequently, those techniques can be employed which implements $U$ without any (even constant) sub-normalization factor while still keeping the overall ancilla qubits to one. Thus, only Trotter-based Hamiltonian simulation techniques can be incorporated which has a sub-exponentially worse dependence on the gate depth per coherent run (in terms of $1/\varepsilon$). On the other hand, the gate depth per coherent run of our ground state property estimation algorithm which uses the Hamiltonian simulation algorithm by \textit{Single-Ancilla LCU} to implement the time evolution operator (proven in Appendix Sec.~\ref{subsec-app:gsp-qls}) has a $O(\polylog{1/\varepsilon})$ dependence. 

Finally, the recent early fault-tolerant quantum algorithm by Zhang et al. \cite{zhang2022computing} also estimates properties of ground states of Hamiltonians, i.e.\ $\braket{v_0|O|v_0}$. Their algorithm assumes (i) access to $H$ in the Hamiltonian evolution model, and (ii) for any generic observable $O$, an $(\alpha_O, a_O, 0)$-block encoding of $O$. While the maximal time evolution of each run is $\tilde{O}(1/\Delta)$ (same as our method), the number of classical repetitions needed is $O(\alpha_O^2\eta^{-4}\varepsilon^{-2})$, can be higher depending on the specific block encoding. Moreover, unlike our algorithm, the technique of Zhang et al. \cite{zhang2022computing} cannot estimate $\braket{v_0|O|v_0}$ using a single ancilla qubit. This is because constructing the block encoding of $O$ requires several ancilla qubits and multi-qubit controlled operations.

We reiterate that $\tau_{\max}$ is different from the actual circuit depth, which depends on how the Hamiltonian evolution unitary is implemented. Recall that in the early fault-tolerant regime, we are limited by a small ancillary qubit space and the inability to perform multi-qubit controlled operations. This restricts the choice of the underlying simulation algorithm. If $H$ is a linear combination of strings of Pauli operators, i.e.\ $H=\sum_{k=1}^{L}\lambda_k P_k$, (first and higher order) Trotter methods, as well as the \textit{Single Ancilla LCU}- based Hamiltonian simulation algorithm are suitable options. This is because $c{\text-}U_t$ can be implemented using only a single ancilla qubit and no multi-qubit controlled gates. However, these methods require a circuit depth which is super-linear in $\tau$, which would in turn increase the circuit depth of our algorithm, and also the overall cost. On the other hand, state-of-the-art algorithms such as qubitization have an optimal dependence on $t$ (measured in terms of the number of queries made to a block encoding of the Hamiltonian $H$), which would mean that the cost to implement $c{\text-}U_{\tau}$ would be $\tilde{O}(\tau)$. Moreover, the procedure itself requires only two ancilla qubits. However, constructing a block encoding of $H$  could require several ancilla qubits, multi-qubit controlled operations, and also adds to the overall gate depth. For instance when $H$ is a linear combination of Paulis as defined above, the block encoding would require $\lceil\log_2 L\rceil+2$ ancilla qubits, and has a gate depth of $O(L)$. 

In fact, in Appendix \ref{subsec-app:gsp-qls}, we analyze the cost (in terms of the gate depth, ancilla qubits, and number of classical repetitions) of our ground state property estimation algorithm when $U_{\tau}$ is implemented according to the Hamiltonian simulation algorithm in Sec.~\ref{sec:ham-sim}, as well as $2k$-order Trotter \cite{childs2021theory}. We demonstrate that our algorithm can still be implemented using a single ancilla qubit and no multi-qubit controlled gates. Moreover, despite these restrictions, we show that even when compared to state-of-the art techniques, there are regimes where our method has a shorter gate depth per coherent run (See Table \ref{table:comparison-gsp-depth}).

\section{Applications to Quantum linear systems}
\label{sec:qls}
The quantum linear systems algorithm can be stated as follows: Given access to a Hermitian matrix $H\in\mathbb{C}^{N\times N}$ and some initial state $\ket{b}$, prepare the quantum state $\ket{x}=H^{-1}\ket{b}/\norm{H^{-1}\ket{b}}$. Ever since the first quantum algorithm for this problem by Harrow, Hassidim, and Lloyd \cite{harrow2009quantum}, the quantum linear system algorithm has been widely studied. The complexity of this algorithm has been progressively improved through a series of results \cite{childs2017quantum, chakraborty2019power, gilyen2019quantum}. Recently, adiabatic-inspired approaches have also been reported \cite{subacsi2019quantum, lin2020optimal}, which optimally solve this problem \cite{costa2022optimal}. For several applications, simply preparing the state $\ket{x}$ may not be useful. Rather, one is often interested in extracting useful information from this state, such as estimating the expectation value of an observable $O$, i.e.\ $\braket{x|O|x}$.

Just like in the previous section, we apply \textit{Analog LCU} to develop two quantum linear systems algorithms in continuous-time (Sec \ref{subsec:analog-qls}): the first one is an analog variant of the direct approach in \cite{childs2017quantum} while the second one is more amenable to near-term implementation. Following this, we use the \textit{Single-Ancilla LCU} approach to develop a randomized quantum algorithm for estimating $\braket{x|O|x}$ which is implementable on early fault-tolerant quantum computers (Sec.~\ref{subsec:single-ancilla-qls}). 

Let us begin by formally stating the quantum linear systems problem.
~\\~\\
\textbf{Quantum linear systems:~} Suppose we have access to a Hermitian matrix $H\in\mathbb{C}^{N\times N}$ such that its eigenvalues lie in the interval $[-1,-1/\kappa]\cup[1/\kappa, 1]$. Then, given a procedure that prepares the $N$-dimensional quantum state $\ket{b}$, a quantum linear systems algorithm prepares a quantum state that is $O(\varepsilon)$ - close to
$$
\ket{x}=\dfrac{H^{-1}\ket{b}}{\norm{H^{-1}\ket{b}}}.
$$
It is worth noting that the quantum linear systems algorithm is different from its classical counterpart in that by preparing $\ket{x}$, one does not have access to the entries of the classical vector $\vec{x}$. To extract the entire solution vector $\vec{x}$ from the quantum state $\ket{x}$, would require $\Omega(N)$ cost (via tomography). In quantum linear systems, thus, often one is interested in extracting useful information out of the state $\ket{x}$, such as estimating the expectation value $\braket{x|O|x}$, for some observable $O$. 

Also, the assumption that $H$ is a Hermitian matrix is without loss of generality. Given any non-Hermitian $H\in \mathbb{C}^{M\times N}$, there exist efficient procedures to obtain a Hermitian matrix $\tilde{H}$ of dimension $(M+N)\times (M+N)$ \cite{harrow2009quantum}. Then, one may instead implement quantum linear systems with $\tilde{H}$ instead of $H$.

\subsection{Applying Analog LCU: Continuous-time quantum linear systems algorithms}
\label{subsec:analog-qls}
In this section, we develop analog quantum algorithms for solving quantum linear systems. Following the exposition in Sec.~\ref{subsec:analog-LCU}, we shall assume that we are given a system Hamiltonian $H$. We couple this Hamiltonian (the primary system) to two ancillary continuous-variable systems via the interaction Hamiltonian
\begin{equation}
\label{eq:interaction-hamiltonian}
H'=H\otimes \hat{y} \otimes \hat{z}.
\end{equation}
The primary system will be initialized in the quantum state $\ket{b}$ while the two ancillary systems will be in some continuous-variable states. The quantum algorithms developed in this subsection involve evolving the overall system according to $H'$ for some time. Following this, we shall show that the primary system is in the state $\ket{x}$ (or close to it) with an amplitude of $\Omega(1/\kappa)$. 

We begin with the first quantum algorithm, which is an analog analogue of the quantum linear systems algorithm of \cite{childs2017quantum}.
~\\~\\
\textbf{Continuous-time quantum linear systems algorithm:~} Consider the function $f(y)=ye^{-y^2/2}$, where $y\in\mathbb{R}$. As 
\begin{align}
&\int_{0}^{\infty}dy~f(y)=1\\
\implies & \int_{0}^{\infty}dy~f(xy)=1/x,
\end{align} 
which holds for any $x\neq 0$. For any function $g(y)$, suppose its Fourier transform is $\mathcal{F}(g(y))=F(\omega)$, then $\mathcal{F}(g'(y))=i\omega F(\omega)$. If $g(y)=e^{-y^2/2}$, we have that $g'(y)=-ye^{-y^2/2}=-f(y)$. This implies,
\begin{align}
\label{eq:odd-function-fourier}
\dfrac{i}{\sqrt{2\pi}}\int_{-\infty}^{\infty} dz~ ze^{-z^2/2} e^{-izy}= ye^{-y^2/2},
\end{align}
and,
$$
\dfrac{1}{x}=\dfrac{i}{\sqrt{2\pi}}\int_{0}^{\infty} dt \int_{-\infty}^{\infty}dz~ ze^{-z^2/2} e^{-izxt}.
$$
Next, we will prove via a lemma that the upper limit of the outer integral can be truncated at $T=\widetilde{O}(\kappa)$, without introducing significant error.
~\\
\begin{lemma}
\label{lem:truncation-2}
Suppose $\varepsilon >0, z\in \mathbb{R}$, and $x\in \mathbb{R}\setminus\{0\}$. Then there exists $T\in\Theta\left(\kappa\sqrt{\log(\kappa/\varepsilon)}\right)$, such that on the domain  $[-1,-1/\kappa]\cup[1/\kappa,1]$,
\begin{equation}
\label{eq:error-bound}
\left|\dfrac{1}{x}-\dfrac{1}{\sqrt{2\pi}}\int_{0}^{T} dt~\int_{-\infty}^{\infty}dz~ ze^{-z^2/2} e^{-izxt}\right|\leq \varepsilon.
\end{equation}
\end{lemma}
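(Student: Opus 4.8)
The plan is to start from the exact integral identity
$$
\frac{1}{x} = \frac{i}{\sqrt{2\pi}} \int_0^\infty dt \int_{-\infty}^{\infty} dz\; z e^{-z^2/2} e^{-izxt},
$$
already established in the text (via the Fourier-pair relation for $ze^{-z^2/2}$), and to control the error incurred by replacing the outer limit $\infty$ with a finite $T$. The natural first step is to perform the inner $z$-integral explicitly: by Eq.~\eqref{eq:odd-function-fourier}, $\frac{i}{\sqrt{2\pi}}\int_{-\infty}^\infty dz\, z e^{-z^2/2} e^{-izxt} = (xt)\, e^{-x^2 t^2/2}$. Hence the truncation error is
$$
\left| \frac{1}{x} - \frac{i}{\sqrt{2\pi}} \int_0^T dt \int_{-\infty}^{\infty} dz\; z e^{-z^2/2} e^{-izxt} \right| = \left| \int_T^\infty dt\; (xt)\, e^{-x^2 t^2/2} \right|.
$$

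The second step is a clean one-dimensional estimate of the tail $\int_T^\infty xt\, e^{-x^2 t^2/2}\, dt$. Substituting $u = x^2 t^2/2$ (so $du = x^2 t\, dt$) turns this into $\frac{1}{|x|}\int_{x^2 T^2/2}^{\infty} e^{-u}\, du = \frac{1}{|x|} e^{-x^2 T^2/2}$. Since $x$ ranges over $[-1,-1/\kappa]\cup[1/\kappa,1]$, we have $|x| \ge 1/\kappa$, so the error is bounded by $\kappa\, e^{-x^2 T^2/2} \le \kappa\, e^{-T^2/(2\kappa^2)}$, where in the last step I used $x^2 \ge 1/\kappa^2$. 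Setting this $\le \varepsilon$ requires $T^2/(2\kappa^2) \ge \log(\kappa/\varepsilon)$, i.e.\ $T = \Theta\!\left(\kappa \sqrt{\log(\kappa/\varepsilon)}\right)$, which is exactly the claimed scaling.

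The third step is just bookkeeping: note that the statement of the lemma writes the truncated expression without the explicit factor of $i$ out front (it is absorbed, as remarked in the surrounding text, into the description of the relevant quantities), so I would either carry the $i$ through consistently or note that $\left|\frac1x - \frac{i}{\sqrt{2\pi}}\int_0^T\cdots\right|$ is the quantity being bounded; the modulus makes the phase irrelevant. I would also remark that the estimate is uniform over the allowed domain of $x$ precisely because the only place $x$ entered the final bound was through the lower bound $|x|\ge 1/\kappa$.

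I do not expect a genuine obstacle here — the argument is a textbook Gaussian-tail bound once the inner integral is evaluated. The only mild subtlety to get right is the uniformity in $x$: one must be careful that the substitution $u = x^2t^2/2$ and the resulting bound $\frac{1}{|x|}e^{-x^2T^2/2}$ is then maximized over the domain by taking $|x|$ as \emph{small} as possible (giving the $\kappa$ prefactor) and $x^2$ as small as possible in the exponent (giving $e^{-T^2/(2\kappa^2)}$) — both worst cases occur at $|x| = 1/\kappa$ simultaneously, so there is no tension. This fixes the choice $T\in\Theta(\kappa\sqrt{\log(\kappa/\varepsilon)})$ and completes the proof.
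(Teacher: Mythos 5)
Your proposal is correct and follows essentially the same route as the paper's proof: evaluate the inner $z$-integral via Eq.~\eqref{eq:odd-function-fourier} to reduce the tail to $\left|\int_T^\infty xt\,e^{-x^2t^2/2}\,dt\right|$, substitute $u=x^2t^2/2$ to get $\frac{1}{|x|}e^{-x^2T^2/2}$, and use $|x|\ge 1/\kappa$ to fix $T=\kappa\sqrt{2\log(\kappa/\varepsilon)}$. Your remark about the missing factor of $i$ in the lemma statement is a fair observation of a typo that the paper's proof also silently absorbs.
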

\begin{proof}
We have to evaluate the quantity
$$
\left|\dfrac{1}{\sqrt{2\pi}}\int_{T}^{\infty}dt~\int_{-\infty}^{\infty}dz~ ze^{-z^2/2} e^{-izxt}\right|
$$
We first evaluate the outer integral and obtain,
\begin{align}
\left|\dfrac{1}{\sqrt{2\pi}}\int_{T}^{\infty}dt~\int_{-\infty}^{\infty}dz~ ze^{-z^2/2} e^{-izxt}\right|&=\left|\int_{T}^{\infty}dt~ xt~e^{-x^2t^2/2}\right|~~~~~~[\text{~Using Eq.~\eqref{eq:odd-function-fourier}}~]\\
&=\left|\dfrac{1}{x}\int_{x^2T^2/2}^{\infty} dy~e^{-y}\right|~~~~~~~~~~~[~y=x^2t^2/2~]\\ 
&=\left|\dfrac{1}{x}\cdot e^{-x^2T^2/2}\right|\\
&\leq \dfrac{1}{|x|}\left|e^{-x^2T^2/2}\right|.
\label{eq:bound-gaussian}
\end{align}
Now for $T=\kappa \sqrt{2\log(\kappa/\varepsilon)}$, we have $\left|e^{-x^2T^2/2}\right|\leq \varepsilon/\kappa$. Now as $|x|\geq 1/\kappa$, we have that Eq.~\eqref{eq:bound-gaussian} is upper bounded by $\varepsilon$. So finally, 
\begin{equation}
\left|\dfrac{1}{x}-\dfrac{1}{\sqrt{2\pi}}\int_{0}^{T} dt~\int_{-\infty}^{\infty}dz~ ze^{-z^2/2} e^{-izxt}\right|=\left|\dfrac{1}{\sqrt{2\pi}}\int_{T}^{\infty}dt~\int_{-\infty}^{\infty}dz~ ze^{-z^2/2} e^{-izxt}\right|\leq \varepsilon.
\end{equation}
\end{proof}

In order to design the analog quantum algorithm, consider that the effective interaction Hamiltonian is $H'=H\otimes \hat{y}\otimes \hat{z}$. While the system Hamiltonian $H$ is prepared in some input state $\ket{b}$, the first ancilla system is prepared in the first-excited state of a one-dimensional quantum Harmonic oscillator
\begin{equation}
\label{eq:first-excited-state-harmonic}
\ket{\psi_h}=\dfrac{1}{(2\pi)^{1/4}}\int_{-\infty}^{\infty} dy~ye^{-y^2/4}\ket{y}.
\end{equation} 
The second ancilla system is in the ground state of a ``particle in a ring'' of diameter $1$, given by
\begin{equation}
\ket{\tau}=\int_{0}^{1}dz~\ket{z}.
\end{equation}
Then evolving the overall system according to $H'$ for time $T$, we obtain 
\begin{align}
\ket{\eta_t}&=e^{-i\tilde{H}T}\ket{b}\ket{\psi_h}\ket{\tau}\\
			&=\int_{0}^{1}dz~\int_{-\infty}^{\infty} \dfrac{dy}{(2\pi)^{1/4}}~y e^{-y^2/4} e^{-iyzHT}\ket{b}\ket{y}\ket{z}\\
           &=\dfrac{1}{T}\int_{0}^{T}dt~\int_{-\infty}^{\infty} \dfrac{dz}{\sqrt{2\pi}}~ze^{-z^2/2} e^{-iztH}\ket{b}\ket{\psi_h}\ket{\tau}+\ket{\Phi}^{\perp}~~~~[~\text{Change of variable $t=Ty$}~]
\end{align}
Now, by choosing time $T=\Theta\left(\kappa\sqrt{\log(\kappa/\varepsilon)}\right)$, from Lemma \ref{lem:truncation-2}, we obtain a quantum state that is $O(\varepsilon/T)$-close to
\begin{align}
\ket{\eta_t}=\dfrac{H^{-1}}{T} \ket{b}\ket{\psi_h}\ket{\tau}+\ket{\Phi}^{\perp}.
\end{align}
The cost of preparing this state is thus, linear in $\kappa$ (upto log factors), which is optimal. For fully fault tolerant quantum computers, the state $\ket{x}$ is obtained by using variable time amplitude amplification which is a complicated subroutine, requiring a large number of controlled operations  \cite{childs2017quantum,chakraborty2019power,chakraborty2023quantum}. However, this procedure ensures that the overall query complexity of the quantum linear systems algorithm is still $\widetilde{O}(\kappa)$. Alternatively, in the circuit model, $\widetilde{O}(\kappa)$-rounds of amplitude amplification can yield a quantum state $O(\varepsilon)$-close to $\ket{x}$. The overall cost of this procedure is $\tilde{O}(\kappa^2)$. Note that both these procedures are suitable for implementation on fully fault-tolerant quantum computers. Moreover, procedures such as amplitude amplification have no known continuous-time analogues. 

Thus, for our analog procedure, after preparing the state $\ket{\eta_t}$, we simply post-selecting on obtaining $\ket{\psi_h}$ in the second register. This allows us to obtain $\ket{x}$ in the first register with probability $\widetilde{\Omega}(1/\kappa^2)$.  Thus $O(\kappa^2)$ repetitions of the continuous-time procedure would allow us to obtain $\ket{x}$.

Although this procedure works in general, the quantum state $\ket{\tau}$ might be difficult to prepare experimentally. In fact, for continuous-variable systems, Gaussian states are the easiest to prepare and manipulate \cite{weedbrook2012gaussian}. So, next, we provide a quantum algorithm for which it suffices to prepare both the ancillary registers in Gaussian states.
~\\~\\
\textbf{Continuous-time quantum linear systems algorithm using only Gaussian states:~} The previous quantum algorithm requires us to prepare the non-Gaussian continuous-variable state
\begin{equation*}
\ket{\tau}=\dfrac{1}{\sqrt{T}}\int_{0}^{T}dz~\ket{z}.
\end{equation*}
Since Gaussian states are typically easier to generate and manipulate, let us design alternative algorithms using Gaussian states only. The general idea is to approximate $\int_{-\infty}^{+\infty}dt$ by $\int_{-\infty}^{+\infty}dt~e^{-t^2/2T^2}$ (rather than $\int_{-T}^{T}dt$) for large enough $T$. The analogue of Lemma~\ref{lem:truncation-2} becomes
\begin{lemma}
\label{lem:truncation-gaussian}
Suppose $\varepsilon >0, z\in \mathbb{R}$, and $x\in \mathbb{R}\setminus\{0\}$. Then, there exists $T\geq\kappa^{3/2}/\sqrt{\varepsilon}$, such that on the domain  $[1/\kappa,1]$,
\begin{equation}
\label{eq:truncated-inverse-gaussian}
\left|\dfrac{1}{x}-\dfrac{1}{2\pi}\int_{-\infty}^{+\infty} dt~e^{-t^2/2T^2}~\int_{-\infty}^{+\infty} dz~ e^{-z^2/2}e^{-ixtz} \right|\leq \Theta(\varepsilon).
\end{equation}
\end{lemma}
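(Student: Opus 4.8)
The plan is to evaluate the double Gaussian integral on the right-hand side in closed form and then compare it directly to $1/x$. First I would carry out the inner integral over $z$, using the fact that the Fourier transform of a Gaussian is a Gaussian:
\begin{equation*}
\int_{-\infty}^{+\infty} dz\, e^{-z^2/2} e^{-ixtz} = \sqrt{2\pi}\, e^{-x^2 t^2/2}.
\end{equation*}
Since the two-variable integrand is dominated in absolute value by $e^{-t^2/2T^2} e^{-z^2/2}$, which is integrable on $\mathbb{R}^2$, Fubini's theorem applies, and substituting the above reduces the right-hand side to a single elementary Gaussian integral over $t$:
\begin{equation*}
\frac{1}{2\pi}\int_{-\infty}^{+\infty} dt\, e^{-t^2/2T^2}\int_{-\infty}^{+\infty} dz\, e^{-z^2/2} e^{-ixtz}
= \frac{1}{\sqrt{2\pi}}\int_{-\infty}^{+\infty} dt\, e^{-\paren{x^2/2 + 1/(2T^2)}t^2}
= \frac{1}{\sqrt{x^2 + 1/T^2}}.
\end{equation*}
Note that this quantity is positive, so it can approximate $1/x$ only for $x>0$; this is precisely why the statement is restricted to the domain $[1/\kappa,1]$ (and why the corresponding algorithm is stated for positive semidefinite Hamiltonians).

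Next I would bound the error $\left|1/x - 1/\sqrt{x^2 + 1/T^2}\right|$. Writing $\delta = 1/T^2$ and rationalizing,
\begin{equation*}
\frac{1}{x} - \frac{1}{\sqrt{x^2 + \delta}} = \frac{\sqrt{x^2+\delta} - x}{x\sqrt{x^2+\delta}} = \frac{\delta}{x\sqrt{x^2+\delta}\paren{x + \sqrt{x^2+\delta}}} \leq \frac{\delta}{2 x^3},
\end{equation*}
where I used $\sqrt{x^2+\delta}\geq x>0$ in the denominator. On $x\in[1/\kappa,1]$ we have $1/x^3 \leq \kappa^3$, so the error is at most $\kappa^3/(2T^2)$. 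Choosing $T$ of order $\kappa^{3/2}/\sqrt{\varepsilon}$ — concretely, any $T \geq \kappa^{3/2}/\sqrt{\varepsilon}$ — makes this at most $\varepsilon/2 = \Theta(\varepsilon)$, which is the claim.

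The argument is essentially mechanical, so there is no serious obstacle; the only points requiring care are (i) justifying the interchange of integration order and evaluating the two Gaussian integrals with the correct overall normalization $1/(2\pi)$, and (ii) keeping track of the restriction to positive $x$. It is worth noting that, compared with Lemma \ref{lem:truncation-2}, this representation replaces the hard cutoff $[0,T]$ (the ``particle in a ring'' ancilla) by a Gaussian envelope, at the price of a worse truncation scaling: $T = \Theta\paren{\kappa^{3/2}/\sqrt{\varepsilon}}$ here versus $T = \Theta\paren{\kappa\sqrt{\log(\kappa/\varepsilon)}}$ there.
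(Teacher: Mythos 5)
Your proof is correct and follows essentially the same route as the paper's: evaluate the inner Gaussian integral via the Fourier transform of a Gaussian, collapse the outer integral to $1/\sqrt{x^2+1/T^2}$, and bound the difference from $1/x$ by $O\left(1/(x^3T^2)\right)$, giving $T\geq\kappa^{3/2}/\sqrt{\varepsilon}$. Your rationalization step even yields a slightly cleaner constant than the paper's bound $\left|1-1/\sqrt{1+1/x^2T^2}\right|\leq 1/(x^2T^2)$, but the argument is the same.
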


\begin{proof}
We have, using the fact that the Fourier transform of a Gaussian is a Gaussian
\begin{align*}
\dfrac{1}{2\pi}\int_{-\infty}^{+\infty} dt~e^{-t^2/2T^2}~\int_{-\infty}^{+\infty} dz~ e^{-z^2/2}e^{-ixtz}
&=\dfrac{1}{\sqrt{2\pi}}\int_{-\infty}^{+\infty} dt~e^{-t^2/2T^2}~ e^{-x^2t^2/2}\\
&=\dfrac{1}{\sqrt{2\pi}}\int_{-\infty}^{+\infty} dt~e^{-\left(x^2+1/T^2\right)t^2/2}
=\dfrac{1}{\sqrt{2\pi}}\int_{-\infty}^{+\infty} dt~e^{-\tilde{x}^2t^2/2}\\
&=\frac{1}{\tilde{x}}
\end{align*}
where, we have set $\tilde{x}=\sqrt{x^2+1/T^2}$. Therefore, it remains to bound
\begin{align*}
\left|\dfrac{1}{x}-\dfrac{1}{\tilde{x}}\right|=\left|\dfrac{1}{x}\left(1-\dfrac{x}{\tilde{x}}\right)\right|
\leq\dfrac{1}{\abs{x}}\left|1-\dfrac{1}{\sqrt{1+1/x^2T^2}}\right|\leq \dfrac{1}{\abs{x}}\cdot\dfrac{1}{x^2T^2}\leq\varepsilon
\end{align*}
\end{proof}
Unfortunately, the scaling of $T$ is worse than for the non-Gaussian approach since $T$ scales as $\kappa^{3/2}$ (instead of linear)  and the dependence of precision is $1/\sqrt{\varepsilon}$ (rather than inverse-logarithmic). Moreover, as the Gaussian function is even, the procedure works only for positive semi-definite Hamiltonians. Nevertheless, this allows us to design a quantum linear systems algorithm using only Gaussian states as ancillae. 

Let us again consider the interaction Hamiltonian $H'=H\otimes \hat{y}\otimes \hat{z}$, where $H$ is now some positive definite Hamiltonian with its eigenvalues lying in $[1/\kappa,1]$. We prepare both the ancilla registers in a Gaussian state which, similarly to Sec.~\ref{subsec:analog-gsp}, is defined as follows
\begin{equation*}
\ket{\psi_g}=\dfrac{1}{(2\pi)^{1/4}} \int_{-\infty}^{\infty} dz~ e^{-z^2/4} \ket{z}.
\end{equation*}
Indeed, it suffices to let the state $\ket{b}\ket{\psi_g}\ket{\psi_g}$ evolve under Hamiltonian $H'$ for time $T$ to obtain
\begin{align*}
e^{-iH'T}\ket{b}\ket{\psi_g}\ket{\psi_g}=\dfrac{1}{\sqrt{2\pi}}\int_{-\infty}^{\infty}dz~\int_{-\infty}^{\infty} dy~e^{-(y^2+z^2)/4} e^{-iyzHT}\ket{b}\ket{y}\ket{z}. 
\end{align*}

If we choose some $T\geq\kappa^{3/2}/\sqrt{\varepsilon}$, we have
\begin{align}
\left(I\otimes\ket{\psi_g}\bra{\psi_g}\otimes \ket{\psi_g}\bra{\psi_g}\right)e^{-i\tilde{H}T}\ket{b}\ket{\psi_g}\ket{\psi_g}
&=\dfrac{1}{2\pi}\int_{-\infty}^{+\infty}dz~\int_{-\infty}^{+\infty} dy~e^{-(y^2+z^2)/2} e^{-iyzHT}\ket{b}\ket{\psi_g}\ket{\psi_g}\\
&=\dfrac{1}{2\pi T}\int_{-\infty}^{+\infty}dt~e^{-t^2/2T^2}~\int_{-\infty}^{+\infty} dz~e^{-z^2/2} e^{-itzH}\ket{b}\ket{\psi_g}\ket{\psi_g}
\end{align}
where we have used the change of variable $t=Ty$.
So,
\begin{align}
e^{-i\tilde{H}T}\ket{b}\ket{\psi_g}\ket{\psi_g}&=\dfrac{1}{2\pi T}\int_{-\infty}^{+\infty}dt~e^{-t^2/2T^2}~\int_{-\infty}^{+\infty} dy~e^{-z^2/2} e^{-itzH}\ket{b}\ket{\psi_g}\ket{\psi_g}+\ket{\phi}^{\perp}\\
												  &=\dfrac{H^{-1}}{T} \ket{b}\ket{\psi_g}\ket{\psi_g}+\ket{\Phi}^{\perp}+O(\varepsilon/T)~~~~~~ [~\text{From Lemma \ref{lem:truncation-gaussian}}~].
\end{align}
So, ~by post-selecting on obtaining $\ket{\psi_g}$in the second  register, we obtain a state that is $O\left(\varepsilon/\kappa\right)^{3/2}$ -close to 
$$
\ket{x}=\dfrac{H^{-1}\ket{b}}{\norm{H^{-1}\ket{b}}},
$$
with amplitude $\tilde{\Omega}\left(\sqrt{\varepsilon}/\kappa^{3/2}\right)$. Although the complexity of this algorithm is worse than the continuous-time quantum algorithm in the previous section, it requires only Gaussian states. Consequently, it is more suitable for being implementable in the near term.

One can improve the complexity of this quantum linear systems algorithm by replacing the Gaussian state in the second register with the flat state $\ket{\tau}$. For positive definite Hamiltonians, if the first ancillary system is in a Gaussian state while the second one is in $\ket{\tau}$, we can still obtain a quantum state that is $O(\varepsilon/\kappa)$-close to the solution of the quantum linear systems in time $\widetilde{O}(\kappa)$. This follows from observing 
\begin{equation}
\dfrac{1}{x}=\int_{-\infty}^{\infty} \dfrac{dt}{\sqrt{2\pi}}\int_{-\infty}^{\infty}\dfrac{dz}{\sqrt{2\pi}} e^{-z^2/2}e^{-ixtz}=2\int_{0}^{\infty} \dfrac{dt}{\sqrt{2\pi}}\int_{-\infty}^{\infty}\dfrac{dz}{\sqrt{2\pi}} e^{-z^2/2}e^{-ixtz}.
\end{equation}
We can truncate the outer integral to $T=\Theta(\kappa\sqrt{\log(\kappa/\varepsilon)})$, and introduce only an additive $\varepsilon$-error.

Our analog approach provides a more physically motivated model for implementing quantum linear systems. We believe several existing quantum technological platforms might already be able to engineer these interactions for system Hamiltonians of small dimensions. It would be interesting to explore whether one can obtain a quantum linear systems algorithm by using just a single continuous-variable ancilla.  

Next, we move on to the problem of estimating expectation values of observables with respect to the solution of quantum linear systems. For this, we make use of the \textit{Single-Ancilla LCU} technique.
\subsection{Applying Single-Ancilla LCU: estimating expectation values of observables}
\label{subsec:single-ancilla-qls}

In the \textit{Single-Ancilla LCU} framework, we shall consider that the Hamiltonian $H$ can be accessed through the time evolution operator $U_t=\exp[-itH]$. Much like the ground state property estimation algorithm in this framework, we estimate: (a) the maximal time of evolution per coherent run $(\tau_{\max})$, and (b) the total number of repetitions $T$. The total time of evolution is then the product of $T$ and $\tau_{\max}$. Our randomized quantum algorithms for estimating $\braket{x|O|x}$, for any measurable observable $O$. We consider the discrete LCU decomposition of $H^{-1}$ of Ref.~\cite{childs2017quantum}. We begin by stating the discretized version of the expression in Lemma \ref{lem:truncation-2}.
~\\
Let,
\begin{equation}
\label{eq:discrete-lcu-inverse}
g(x)=\dfrac{i}{\sqrt{2\pi}}\sum_{j=0}^{J-1}\Delta_y\sum_{k=-K}^{K}\Delta_z z_k e^{-z^2_k/2}e^{-ixy_jz_k},
\end{equation}
where $y_j=j\Delta_y$ and $z_k=k\Delta_z$, for some $J\in \Theta(\frac{\kappa}{\gamma}\log(\kappa/\gamma))$, $K=\Theta(\kappa\log(\kappa/\gamma))$, $\Delta_y=\Theta(\gamma/\sqrt{\log(\kappa/\gamma)})$ and $\Delta_z=\Theta((\kappa\sqrt{\log(\kappa/\gamma)})^{-1})$. Then, Childs et al.~\cite{childs2017quantum} proved that $\left|1/x-g(x)\right|\leq \gamma$ in the domain $[-1,-1/\kappa]\cup [1/\kappa,1]$. From this LCU it is clear that in order to approximate $H^{-1}$ in this domain, the time parameter of the Hamiltonian simulation is at most,
\begin{equation}
\label{eq:max-time-ham-sim-inversion}
t=\Theta(y_J z_K)=\Theta\left(\kappa\log(\kappa/\gamma)\right).
\end{equation}
Furthermore, from \cite{childs2017quantum}, the $\ell_1$-norm of the LCU coefficients were shown to be upper bounded by $\norm{c}_1=\Theta(\kappa\sqrt{\log(\kappa/\gamma)})$. This LCU decomposition allows us to use Algorithm \ref{algo:single-ancilla-overall} to estimate $\braket{x|O|x}$. We will prove that it suffices to choose  
$$
\gamma=\dfrac{\varepsilon}{18 \norm{O}}.
$$ 
From Eq.~\eqref{eq:max-time-ham-sim-inversion} and the aforementioned choice of $\gamma$, the maximal time evolution for each coherent run of our algorithm is 
$$
\tau_{\max}=O\left(\kappa\log\left(\dfrac{\norm{O}\kappa}{\varepsilon}\right)\right).
$$
We formally prove the correctness of our method via the following theorem
~\\
\begin{theorem}[Expectation values of observables with respect to the solution of quantum linear systems]
\label{thm:inversion-estimation-2}
Let $H$ be a Hermitian matrix such that its non zero eigenvalues lie in $[-1,-1/\kappa]\cup [1/\kappa,1]$. Let $O$ be an observable, and $\varepsilon, \delta, \gamma \in (0,1)$. Then if
$$
\gamma = \dfrac{\varepsilon}{18\norm{O}},
$$
such that
$$
\norm{H^{-1}-g(H)}\leq \gamma,
$$
and, 
$$
T= O\left(\dfrac{\norm{O}^2\kappa^4\log^2\left(\frac{\norm{O}\kappa}{\varepsilon}\right)\ln(1/\delta)}{\varepsilon^2}\right).
$$
then Algorithm \ref{algo:single-ancilla-overall} outputs, with probability at least $(1-\delta)^2$, parameters $\mu$ and $\tilde{\ell}$ such that
$$
\left|\dfrac{\mu}{\tilde{\ell}}-\braket{x|O|x}\right|\leq \varepsilon,
$$
using $T$ repetitions of the quantum circuit in Fig.~\ref{fig:single-ancilla-circuit}, and only one ancilla qubit. The maximal time evolution of $H$ in each coherent run is,
$$
\tau_{max}=O\left(\kappa\log\left(\dfrac{\norm{O}\kappa}{\varepsilon}\right)\right).
$$ 
\end{theorem}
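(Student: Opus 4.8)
The plan is to instantiate the generic \textit{Single-Ancilla LCU} pipeline — Algorithm~\ref{algo:single-ancilla-overall} together with its correctness guarantee, Theorem~\ref{thm:single-ancilla-overall} — with $f(H)=H^{-1}$ and the Childs--Kothari--Somma LCU decomposition $g(H)$ of Eq.~\eqref{eq:discrete-lcu-inverse}. Writing $\rho_0=\ket{b}\bra{b}$, the target is exactly the normalized object Theorem~\ref{thm:single-ancilla-overall} estimates, namely $\braket{x|O|x}=\Tr[O\rho]$ with $\rho=H^{-1}\rho_0H^{-1}/\ell^2$ and $\ell^2=\Tr[H^{-1}\rho_0H^{-1}]=\norm{H^{-1}\ket{b}}^2$.

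First I would assemble the parameters that feed Theorem~\ref{thm:single-ancilla-overall}. (i) Because every eigenvalue of $H$ has magnitude at most $1$, every eigenvalue of $H^{-1}$ has magnitude at least $1$, so $\ell^2=\norm{H^{-1}\ket{b}}^2\geq 1$ and we may take the rudimentary lower bound $\ell_*=1$. (ii) Because the non-zero eigenvalues of $H$ have magnitude at least $1/\kappa$, $\norm{f(H)}=\norm{H^{-1}}\leq\kappa$, which in particular satisfies the hypothesis $\norm{P}\geq 1$ needed to invoke Theorem~\ref{thm:distance-expectation}. (iii) From \cite{childs2017quantum}, $g(x)$ obeys $|1/x-g(x)|\leq\gamma$ on $[-1,-1/\kappa]\cup[1/\kappa,1]$, hence $\norm{H^{-1}-g(H)}\leq\gamma$; the unitaries are the Hamiltonian evolutions $U_{j,k}=e^{-iy_jz_kH}$ (the phase $i$ and any sign absorbed into $U_{j,k}$, cf.~Sec.~\ref{subsec:notation}), and $\norm{c}_1=\Theta(\kappa\sqrt{\log(\kappa/\gamma)})$.

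With these in hand, I would fix $\gamma$ small enough that the LCU truncation error, which enters through Theorem~\ref{thm:distance-expectation} as $3\norm{O}\norm{H^{-1}}\gamma$, stays below the precision threshold demanded after the budget split of Theorem~\ref{thm:norm-approx}; since the bound $\norm{H^{-1}}\leq\kappa$ appears here, the natural choice is $\gamma=\Theta\!\big(\varepsilon/(\norm{O}\kappa)\big)$ (the value $\gamma=\varepsilon/(18\norm{O})$ in the statement is a representative choice — only $\log(1/\gamma)$ survives into the final bounds). Theorem~\ref{thm:single-ancilla-overall} then outputs $\mu,\tilde\ell$ with $|\mu/\tilde\ell-\braket{x|O|x}|\leq\varepsilon$ with probability $(1-\delta)^2$, using $T=O\!\big(\norm{O}^2\norm{c}_1^4\ln(1/\delta)/(\varepsilon^2\ell_*^2)\big)$ repetitions; substituting $\norm{c}_1^4=\Theta(\kappa^4\log^2(\kappa/\gamma))$, $\ell_*=1$, and $\log(\kappa/\gamma)=\Theta(\log(\norm{O}\kappa/\varepsilon))$ yields the stated $T$. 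For the per-run cost, Algorithm~\ref{algo:randomized-time-evolution} draws two index pairs from $\{c_{j,k}/\norm{c}_1\}$ and applies the controlled/anti-controlled evolutions $e^{-iy_{j_1}z_{k_1}H}$, $e^{-iy_{j_2}z_{k_2}H}$ on top of preparing $\ket{b}$; hence the maximal evolution time is $\max_{j,k}y_jz_k=\Theta(y_Jz_K)=\Theta(\kappa\log(\kappa/\gamma))=O(\kappa\log(\norm{O}\kappa/\varepsilon))$, matching $\tau_{\max}$, and the run cost is $2\tau_{\max}+\tau_b$.

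The two delicate points I expect are: (a) pinning down $\ell_*$ cleanly — it is essential that $\norm{H}\leq 1$ forces $\ell^2\geq 1$, because $\ell_*^2$ sits in the denominator of $T$ and any weaker bound inflates the sample complexity; and (b) the bookkeeping of nested accuracies — the error $\varepsilon$ must be partitioned via Theorem~\ref{thm:norm-approx} between the estimate $\mu$ of the numerator $\Tr[OH^{-1}\rho_0H^{-1}]$ and the estimate $\tilde\ell$ of $\ell^2$ (obtained by rerunning with $O=I$), and $\gamma$ must be driven below the tighter of the two resulting thresholds after rescaling by $\norm{O}$ and $\norm{H^{-1}}=\kappa$. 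Once these are in place, the theorem is a direct substitution of the CKS parameters $J,K,\Delta_y,\Delta_z$ into the generic bounds of Sec.~\ref{subsec:one-ancilla-LCU}.
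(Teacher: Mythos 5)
Your proposal is correct and follows essentially the same route as the paper: instantiate Theorem~\ref{thm:single-ancilla-overall} with $f(H)=H^{-1}$, the CKS decomposition $g(H)$, $\ell_*=1$ (from $\norm{H}\leq 1$), $\norm{f(H)}\leq\kappa$, and $\norm{c}_1=\Theta(\kappa\sqrt{\log(\kappa/\gamma)})$, then read off $T$ and $\tau_{\max}$. Your observation that the hypothesis of Theorem~\ref{thm:single-ancilla-overall} strictly calls for $\gamma=\Theta(\varepsilon/(\norm{O}\kappa))$ rather than the statement's $\varepsilon/(18\norm{O})$ is well taken and, since only $\log(1/\gamma)$ survives into the bounds, harmless — if anything it is handled more carefully than in the paper's own proof.
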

\begin{proof}
First observe that $\kappa\geq \norm{f(H)}=\norm{H^{-1}}\geq 1$, and similarly $\kappa^2\geq \ell^2=\norm{H^{-1}\ket{b}}^2\geq \ell_*=1$. So, after the substitution the appropriate parameters, we find that choice of $\gamma$ is the same as in Theorem \ref{thm:single-ancilla-overall}. Also, for this choice of $\gamma$, $\ell_1$-norm of the LCU coefficients of $g(H)$ is
\begin{align*}
\norm{c}_1=O\left(\kappa\sqrt{\log\left(\frac{\kappa\norm{O}}{\varepsilon}\right)}\right).
\end{align*}
So, for $\ell_*=1$, Algorithm \ref{algo:single-ancilla-overall} outputs parameters $\mu$ and $\tilde{\ell}$ such that
$$
\left|\mu-\Tr[O~H^{-1}\ket{b}\bra{b}H^{-1}]\right|\leq \varepsilon/3,
$$
and,
$$
\left|\tilde{\ell}-\ell^2\right|\leq \dfrac{\varepsilon}{3\norm{O}}.
$$
From Theorem \ref{thm:norm-approx}, the parameters $\mu$ and $\tilde{\ell}$ satisfy,
$$
\left|\dfrac{\mu}{\tilde{\ell}}-\braket{x|O|x}\right|\leq \varepsilon.
$$
Each coherent run quantum circuit costs no more than $2\tau_{\max}+\tau_b$, where  
\begin{equation}
\label{eq:max-time-evolve-qls}
\tau_{\max}= O\left(\kappa\log\left(\dfrac{\norm{O}\kappa}{\varepsilon}\right)\right),
\end{equation}
The total number of iterations required can be obtained from Theorem \ref{thm:single-ancilla-overall} by substituting the appropriate values of $\norm{c}_1$ and $\ell^*$ as
\begin{equation}
T=O\left(\dfrac{\norm{O}^2\kappa^4\log^2\left(\frac{\norm{O}\kappa}{\varepsilon}\right)\ln(1/\delta)}{\varepsilon^2}\right).
\end{equation}
\end{proof}
~\\~\\
It is important to note that our quantum algorithm extracts useful information about the quantum state $\ket{x}$. It outputs a number as opposed to a quantum state, which is distinct from the quantum linear systems problem in general. The total time evolution of $H$ is $\widetilde{O}(\kappa^5\norm{O}^2/\varepsilon^2)$.
~\\~\\
\textbf{Comparison with prior works:~}Let us now compare the complexity of our procedure with that of other quantum linear systems algorithms. Much like the \textit{Single-Ancilla LCU} algorithm for ground state property estimation, we compare our algorithm with quantum linear systems algorithms making use of (a) the \textit{Standard LCU} method \cite{childs2017quantum}, (b) QSVT \cite{gilyen2019quantum, chakraborty2023quantum} and the (c) discrete adiabatic theorem \cite{costa2022optimal}. We summarize the comparison in Table \ref{table:comparison-qls}.

First note that the HHL algorithm \cite{harrow2009quantum} requires access to a time-evolution oracle and can estimate $\braket{x|O|x}$. The algorithm also makes use of quantum phase estimation, which using modern methods can be implemented with $O(1)$ ancilla qubits. This algorithm can estimate the expectation value in three ways:  Preparing $\ket{x}$ first using quantum amplitude amplification requiring a maximal time evolution of $\widetilde{O}(\kappa^2\norm{O}/\varepsilon)$, followed by $O(~||O||^2/\varepsilon^2)$ measurements of $O$. Thus, the maximal time evolution per coherent run is exponentially worse than \textit{Single-Ancilla LCU}. However, the total evolution time has a better dependence on $\kappa$, but a worse dependence on other parameters. Moreover, the coherent estimation of the expectation value by quantum amplitude estimation cannot reduce the dependence on $\varepsilon$, which continues to be $O(1/\varepsilon^2)$, while needing more ancilla qubits (owing to the block encoding of $O$). Finally, if quantum amplitude amplification or estimation are not used, the maximal time evolution of $H$ per coherent run is $O(\kappa\norm{O}/\varepsilon)$, which is exponentially worse than our method. On the other hand, the total evolution time is $O(\kappa^3\norm{O}^3/\varepsilon^3)$ has a better dependence on $\kappa$, but a worse dependence on all other parameters. 

As before, for Standard LCU, we consider access to the Hamiltonian evolution oracle (HE access model) while for QSVT and the adiabatic approaches, we consider the block encoding framework (BE access model). For Standard LCU, we consider time evolution of a multi-qubit controlled time evolution operator and compare the maximal time of evolution of $H$, the number of classical repetitions with our method. Although a direct comparison cannot be made between the HE model and the BE model, for QSVT and adiabatic based approaches, we consider the number of queries made to an $(\alpha_H, a_H, 0)$-block encoding of $H$ in one coherent run of the algorithm, and the total number of runs required. In addition to this, for both these access models, we compare the number of ancilla qubits needed.

The LCU-based procedure by Childs, Kothari and Somma \cite{childs2017quantum} requires $O(\log(\kappa\norm{O}/\varepsilon))$ ancilla qubits and sophisticated multi-qubit controlled operations. There are three ways to estimate $\braket{x|O|x}$:

\begin{itemize}
\item If $\ket{x}$ is prepared first using quantum amplitude amplification, the maximal time evolution of (multi-qubit controlled) $H$ per coherent run is $O(\kappa^2\log^2(\kappa\norm{O}/\varepsilon))$, which is higher than our method by a factor of $\kappa$. However, the total evolution time is lower by a factor of $O(\kappa^3)$, as only $O(||O||^2/\varepsilon^2)$ repetitions of this procedure is needed.

\item Given access to an $(\alpha_O, a_O, 0)$-block encoding of $O$, the desired expectation value can be measured coherently using quantum amplitude estimation. This reduces the overall dependence on the precision to $1/\varepsilon$, at the cost of increasing the maximal time of evolution of $H$ to $\widetilde{O}(\kappa^2\alpha_O/\varepsilon)$, which is also the total evolution time. Furthermore, the number of ancilla qubits needed increases by $a_O$, owing to the implementation of the block encoding of $O$.

\item The maximal evolution time per coherent run can be minimized by avoiding the use of quantum amplitude amplification or estimation. Standard LCU followed by a direct measurement of $O$ leads to a maximal time evolution of $O(\kappa\log(||O||\kappa/\varepsilon))$ which matches that of \textit{Single-Ancilla LCU}. On the other hand, the total number of repetitions is $\widetilde{O}(\kappa^2\norm{O}^2/\varepsilon^2)$ which is lower than our method by a factor of $\kappa^2$. However, the overhead due to ancilla qubits and multi-qubit controlled operations remain. 
\end{itemize}

The QSVT based approach \cite{gilyen2019quantum} queries an $(\alpha_H, a_H, 0)$-block encoding of $H$ (say $U_H$), and implements a polynomial approximation of $H^{-1}$ using queries to $U_H$ (and $U^{\dag}_H$), interleaved with a single qubit phase rotations. So, we will measure the complexity in terms of the query complexity per coherent run as well as the overall queries. Much like standard LCU, it can estimate $\braket{x|O|x}$ in three ways: 

\begin{itemize}
\item Preparing $\ket{x}$ by QSVT followed by amplitude amplification requires $\widetilde{O}(\alpha_H\kappa^2\log(\kappa\norm{O}/\varepsilon))$ queries to $U_H$ per coherent run, followed by $O(||O||^2/\varepsilon^2)$ classical repetitions. The procedure requires $O(a_H)$ ancilla qubits and multi-qubit controlled operations to construct the block encoding.

\item Directly using quantum amplitude estimation to estimate the desired expectation value assumes access to an $(\alpha_O, a_O,0)$-block encoding of $O$, and requires $\widetilde{O}\left(\alpha_H\alpha_O\kappa^2\varepsilon^{-1}\right)$ queries per coherent run which is also the overall query complexity. The number of ancilla qubits needed in the overall procedure scales as the number of ancilla qubits required to construct the respective block encodings, i.e.\ $O(a_H+a_O)$. This can be quite large depending on the way this block encoding is constructed. Thus, the overall query complexity of this procedure is lower than the total time of evolution of our algorithm.

\item If quantum amplitude amplification or estimation is not used, $\braket{x|O|x}$ can be estimated by implementing the polynomial approximation of $H^{-1}$ by querying the block encoding followed by a measurement of $O$. This method has a reduced query complexity per coherent run given by $O(\alpha_H\kappa\log(\kappa||O||/\varepsilon))$, which matches the $\tau_{\max}$ of our method (in terms of $\kappa$) but the linear dependence on $\alpha_H$ means there are regimes where our method requires less cost per coherent run. The total number of classical repetitions has a quadratically better dependence on $\kappa$ as compared to our method given by $\widetilde{O}(\kappa^2\norm{O}^2/\varepsilon^2)$. However constructing the block encoding requires $O(a_H)$ ancilla qubits and the cost per coherent run has a dependence on $\alpha_H$ (which can be $O(\log L)$ and $\beta$ respectively, when $H$ is a linear combination of Pauli operators as described in Eq.~\eqref{eq:ham-sim-hamiltonian}). 
\end{itemize}

\begin{table}[ht!!]
\begin{center}
    \resizebox{\columnwidth}{!}{
    \renewcommand{\arraystretch}{3} 
    \begin{tabular}{|c|c|c|c|c|c|c|}
    \hline
    Algorithm & Access & Variant & Ancilla & Cost per coherent run & Classical repetitions \\ \hline\hline
  \multirow{3}{*}{Standard LCU \cite{childs2017quantum}} & \multirow{3}{*}{HE (MQC)}& QAA + classical repetitions & $O\left(\log(\kappa\norm{O}/\varepsilon)\right)$ & $\widetilde{O}(\kappa^2)$ & $O\left(\dfrac{\norm{O}^2}{\varepsilon^2}\right)$ \\ \cline{3-6}
    
   &  & QAE & $O\left(a_O+\log(\kappa\norm{O}/\varepsilon)\right)$ & $\widetilde{O}\left(\alpha_O\kappa^2/\varepsilon\right)$ & $O(1)$ \\ \cline{3-6}

   &  & Without QAA or QAE & $O\left(\log(\kappa\norm{O}/\varepsilon)\right)$ & $O\left(\kappa\log(\kappa\norm{O}/\varepsilon)\right)$ & $\widetilde{O}\left(\dfrac{\kappa^2\norm{O}^2}{\varepsilon^2}\right)$ \\ \hline 
   
   \multirow{3}{*}{QSVT \cite{gilyen2019quantum}} & \multirow{3}{*}{BE (MQC)}& QAA + classical repetitions & $O(a_H)$ & $\widetilde{O}\left(\alpha_H\kappa^2\right)$ & $O\left(\dfrac{\norm{O}^2}{\varepsilon^2}\right)$ \\ \cline{3-6}
    
   &  & QAE & $O(a_O+a_H)$ & $\widetilde {O}\left(\dfrac{\alpha_O\alpha_H\kappa^2}{\varepsilon}\right)$ & $O(1)$ \\ \cline{3-6}
   
   &  & Without QAA or QAE & $O\left(a_H\right)$ & $O\left(\alpha_H\kappa\log(\kappa\norm{O}/\varepsilon)\right)$ & $\widetilde{O}\left(\dfrac{\kappa^2\norm{O}^2}{\varepsilon^2}\right)$ \\ \hline
   
 \multirow{2}{*}{Discrete adiabatic theorem \cite{costa2022optimal}} & \multirow{2}{*}{BE (MQC)} & Classical repetitions & $O(a_H)$ & $O\left(\alpha_H\kappa\log(\norm{O}/\varepsilon)\right)$ & $O\left(\dfrac{\norm{O}^2}{\varepsilon^2}\right)$ \\ \cline{3-6}
 
 &  & QAE & $O(a_O+a_H)$ & $\widetilde {O}\left(\dfrac{\alpha_O\alpha_H\kappa}{\varepsilon}\right)$ & $O(1)$ \\ \hline
     
    This work & HE & -- & $1$ & $O\left(\kappa\log(\kappa \norm{O}/\varepsilon)\right)$ & $\widetilde{O}\left(\dfrac{\kappa^4\norm{O}^2}{\varepsilon^2}\right)$\\
    \hline  
    \end{tabular}}
  \caption{Consider any Hamiltonian $H$ with eigenvalues in $[-1,-1/\kappa]\cup [1/\kappa, 1]$. Then given an input state $\ket{b}$, define $\ket{x}=H^{-1}\ket{b}/\norm{H^{-1}\ket{b}}$, and suppose $O$ is any measurable observable.  In this table, we compare with our technique, the cost of various quantum algorithms for estimating $\braket{x|O|x}$ to additive accuracy $\varepsilon$. For both Standard LCU \cite{ge2019faster} and our method (\textit{Single-Ancilla LCU}), we consider that $H$ can only be accessed through the time evolution operator $U_{t}=\exp[-iHt]$. This is the Hamiltonian Evolution (HE) model, where the cost of each coherent run is estimated in terms of the maximal time evolution of $H$. The number of repetitions refers to the total number of times the circuit is run. The total evolution time is then the product of these two quantities. The Standard LCU approach requires implementing multi-qubit controlled (MQC) time evolution operators, which is indicated in the Table along with the access model. Both the QSVT-based algorithm in Ref.~\cite{gilyen2019quantum}, and the state-of-the-art algorithm by Costa et al. \cite{costa2022optimal}, considers the block encoding access model (denoted by BE in the Table). The cost is estimated in terms of the number of queries made to an $(\alpha_H, a_H, 0)$-block encoding of $H$. The cost per coherent run is the number of queries made, while the product of this quantity and the number of classical repetitions is the overall query complexity.   Finally, in order to estimate the desired expectation value directly via quantum amplitude estimation (QAE), we assume that $O$ is accessed via an $(\alpha_O, a_O, 0)$-block encoding.
\label{table:comparison-qls}}
    \end{center}
\end{table}
\renewcommand{\arraystretch}{1}

It is important to note that for both the aforementioned approaches the query complexity per coherent run can be brought down to $\widetilde{O}(\kappa)$ by using variable time amplitude amplification (VTAA) \cite{ambainis2012variable} instead of standard amplitude amplification. However even with recent improvements \cite{chakraborty2023quantum}, this procedure requires an additional $O(\log \kappa)$ ancilla qubits (overall $O(a_H+\log \kappa)$ ancilla qubits are needed), and even more multi-qubit controlled operations to be implemented. Hence this is significantly out of reach for early fault tolerant quantum computers. We do not compare our method with quantum linear systems algorithms making use to VTAA.

Instead we consider the state-of-the-art quantum linear systems algorithm by Costa et. al. \cite{costa2022optimal}, which assumes an $(\alpha_H, a_H, 0)$-block encoding of $H$, and implements a block encoding of the some interpolated Hamiltonian $H(s)$, similar to other adiabatic approaches for this problem \cite{subacsi2019quantum, lin2020optimal}. It then proceeds to construct an interpolated quantum walk operator out of the block encoding, and carries out a fine grained analysis of the spectrum of this operator for discrete time steps, in accordance with the discrete adiabatic theorem, followed by eigenstate filtering. Their method prepares $\ket{x}$ using $O(\alpha_H\kappa\log(1/\varepsilon))$ queries to the block encoding of $H$. Thus, this algorithm uses more ancilla qubits and also needs several multi-qubit controlled gates. However, it achieves a $\log \kappa$ improvement in the error dependence over LCU and QSVT based approaches, and at the same time has a linear dependence on $\kappa$, without requiring the complicated VTAA procedure. Additionally, this technique requires four extra ancilla qubits (in addition to $a_H$ needed for implementing the block encoding of $H$). The query complexity per run is thus slightly lower than the maximal time evolution of $H$ needed by our method in terms of the precision (by a factor of $\log(\kappa)$), but the normalization of the block encoding $\alpha_H$ is multiplied. Moreover, this method would need $O(~||O||^2/\varepsilon^2)$ classical repetitions to estimate $\braket{x|O|x}$ incoherently. Using quantum amplitude estimation, this would require $\widetilde{O}(\alpha_H\alpha_O\kappa\norm{O}/\varepsilon)$ queries per coherent run, which is also the total query complexity. This is lower than the total time evolution of our algorithm, but the maximal time evolution of $H$ per coherent run is exponentially higher (in terms of $1/\varepsilon$). 

Huang et al. \cite{huang2021near} analyze the possibility of solving quantum linear systems using near-term quantum devices. In particular, their algorithms can be adapted to estimate expectation values of observables in the Hamiltonian evolution model. If $x$ is the (unnormalized) state that minimizes the loss function $\norm{x-H^{-1}\ket{b}}^2$, the authors devise a strategy to write down the  description of $x$ as a linear combination
$$
x=\sum_{j=1}^{m} \alpha_j \ket{\psi_{U_j}},
$$
where $\alpha_j\in \mathbb{C}$ and $\ket{\psi_{U_j}}$ is any \textit{good quantum state} such that $H^{-1}\ket{b}\in \mathrm{Span}\{\ket{\psi_{U_j}}\}$ for $j\in [1,m]$. If $H=\sum_{k=1}^{L}\lambda_k U_k$ is a linear combination of unitaries $U_k$, then given access to a set of such \textit{good quantum states}, the authors first estimate the optimal $\alpha_i$'s that minimize the aforementioned loss function (up to $\varepsilon$) accuracy, through a hybrid quantum classical algorithm: a combination of the Hadamard test and quadratic optimization. The quantum part of the algorithm requires $O(L^2m^3/\varepsilon)$ classical repetitions, and the cost of each such run depends on the cost of implementing the quantum circuit that prepares $\ket{\psi_{U_i}}$. Note that the Hadamard test requires a single ancilla qubit. Finally, the outcomes of the quantum procedure serve as inputs to the classical optimization problem that finds the optimal $\alpha_i$'s. Then having obtained such an $x$ they re-use the Hadamard test (a slightly modified version, to be precise) to estimate $\braket{\psi_{U_j}|O|\psi_{U_i}}$ for each pair $i,j \in [1, m]$, and then add up these quantities classically, weighted by the optimal $\alpha_i$'s to estimate $x^{\dag}O x$. The authors show that one way to obtain a \textit{good quantum state} is by applying the time evolution operator to $\ket{b}$, i.e.\ each $\ket{\psi_{U_j}}=e^{-it_jH}\ket{b}$. Here $t_j$ is chosen as per the quantum linear systems algorithm of Ref.~\cite{childs2017quantum}, and so any such $t_j\leq O(\kappa\log(\kappa/\varepsilon))$. Thus, the procedure requires using a single ancilla qubit and a maximal Hamiltonian evolution time $\tau_{\max}=\widetilde{O}(\kappa)$ in each coherent run, which matches our method. However, the number of classical repetitions is exponentially higher than our procedure. First, the estimation of the $\alpha_i$'s require $O(L^2m^3/\varepsilon)$ classical repetitions, such that the maximal time evolution of $H$ per run is $\widetilde{O}(\kappa)$. Note that $m$ can be $O(N)$ in general for a $N\times N$ Hamiltonian $H$. Furthermore, for each $i,j\in [1,m]$, the quantity $\braket{\psi_{U_j}|O|\psi_{U_k}}$ needs to be estimated to $O(\varepsilon/m^2)$ accuracy. So the total number of classical repetitions is $O(m^4/\varepsilon^2)$. Thus, overall, the total evolution time is exponentially higher than our method. There are other issues: $x$ is not normalized and so accounting for the normalization factor to ultimately estimate $\braket{x|O|x}$ adds to the overall cost. Moreover this estimation scheme does not work for general observables. The authors consider observables which can be diagonalized as $O=UDU^{\dag}$, such that $U$ is efficiently implementable and the entries of the diagonal matrix $D$ are also efficiently computable. Finally, there are other (variational) strategies in this work to obtain $x$, which are largely heuristic.

The quantum algorithm of Zhang et al. \cite{zhang2022computing} also estimates $\braket{x|O|x}$. It assumes the availability of a quantum state $\ket{\phi_0}$ with a constant overlap with $\ket{x}$, requiring a maximal evolution time of $\widetilde{O}(\kappa)$. This can be achieved by using the adiabatic-based framework of \cite{subacsi2019quantum}. Moreover, they also assume a $(\alpha_O, a_O, 0)$-block encoding of $O$ which increases the ancilla space. Having prepared such a $\ket{\phi}$, any ground state property estimation algorithm (including \textit{Single-Ancilla LCU}) can be employed to estimate the desired expectation value: the problem reduces to a particular case where both $\norm{c}_1$ and $\eta$ are constant. Their procedure requires a maximal time evolution of $\widetilde{O}(\kappa)$ per coherent run and $O(\alpha^2_O/\varepsilon^2)$ repetitions. Similar assumptions of access to such a state preparation procedure, would also result in our method estimating the desired expectation value with a matching maximal time evolution per coherent run, as well as the same total time evolution, without requiring a block encoding access to $O$, thereby requiring fewer ancilla qubits.

Finally, in the Appendix (Sec.~\ref{subsec-app:gsp-qls}), we have provided explicit gate depths of our algorithm while implementing the time evolution operator using (a) Hamiltonain simulation by \textit{Single-Ancilla LCU} and (b) $2k$-order Trotter (See Table \ref{table:circuit-depth-single-ancilla-gsp-qls}). Overall, our algorithm still requires only a single ancilla qubit and no multi-qubit controlled gates. Despite this, while comparing our method with other algorithms (See Table \ref{table:comparison-qls-depth}), we find that there are regimes where our method requires a shorter gate depth per coherent run as compared to even state-of-the-art methods.

\section{Applications to quantum walks}
\label{sec:quantum-random-walks}

So far, we have seen applications of the  \textit{Single-Ancilla LCU} and the \textit{Analog LCU} approaches. In this section, we will show that the \textit{Ancilla-free LCU} can be applied to the framework of quantum walks.  Recall from Sec.~\ref{subsec:ancilla-free-LCU}, that this approach is useful when we are interested in the average projection of the LCU state $f(H)\ket{\psi_0}$ in some subspace of interest. In such scenarios, it suffices to sample the unitaries $U_j$ according to the distribution of the LCU coefficients. This is because the projection of resulting density matrix on to this subspace is at least as large on average (See Theorem \ref{thm:ancilla-free-LCU-theorem}). We will show this is precisely the case for spatial search by quantum walks, where we are interested in finding out the expected number of steps after which the projection of the state of the quantum walk is high on some subset of the nodes (marked nodes) of the underlying Markov chain. Using \textit{Ancilla-free LCU} allows us to retain the same quadratic speedup as in prior works while requiring no ancilla qubits (other than the quantum walk space). 

We provide two quantum algorithms for spatial search by discrete-time quantum walks. The first one adapts the recent optimal algorithm of \cite{ambainis2019quadratic}, wherein the authors used quantum fast-forwarding via \textit{Standard LCU} \cite{apers2018FF} to obtain a generic quadratic speedup (up to a log factor). We show that by using \textit{Ancilla-free LCU} instead, we can obtain the same quadratic speedup while requiring  fewer ancilla qubits. This formalizes the observation of Ref.~\cite{apers2019unified} - where the authors stated that the LCU of \cite{ambainis2019quadratic} could indeed be bypassed. Our second quantum algorithm relies on fast-forwarding continuous-time random walks, which also fits nicely in the \textit{Ancilla-free LCU} framework. This algorithm too, achieves the same generic speedup, using fewer ancilla qubits as compared to \cite{ambainis2019quadratic}. For completeness, we would like to mention that the recent optimal spatial search algorithm by continuous-time quantum walk \cite{apers2022quadratic}, can also be seen as an exposition of a continuous-time variant of \textit{Ancilla-free LCU}. 

Similar to the previous sections, here too, we shall present our results based on generic Hamiltonians. We will refer to quantum walks only as a particular case of our general results, which we believe are more broadly applicable. We begin with a very brief review of random and quantum walks.
\\~\\
\subsection{Random and quantum walks: A very brief overview} 
\label{subsec:rw-qw}
Consider any ergodic, reversible Markov chain $P$ defined on a vertex space $X$ with $|X|=n$ nodes. One can think of such chains as a weighted graph of $n$ nodes (For detailed definitions of these terms, refer to the Appendix of \cite{krovi2016quantum}). Then $P$ is an $n\times n$ stochastic matrix. Let $p_{x,y}$ be the $(x,y)$ - th entry of $P$. We shall consider that the singular values of $P$ lie in $[0,1]$. This is without loss of generality: one can implement the transformation $P\mapsto (I+P)/2$ to always ensure this.

Then starting from any initial probability distribution over $X$, represented by the row vector $v_0$, $t$-steps of a classical random walk, results in distribution $v_t=v_0 P^t$ over $X$. For any such $P$ there exists a stationary distribution $\pi=(\pi_1, \pi_2, \cdots, \pi_n)$ such that $\pi=\pi P$. From any $P$ one obtains a continuous-time random walk by using $Q=I-P$ (under fairly general conditions). A continuous-time random walk, starting from $v_0$, evolves to $v_t=v_0 e^{Qt}$.

Since $P$ is not symmetric in general, it would be useful to work with the Discriminant matrix $D$ of $P$. $D$ is an $n\times n$ symmetric matrix such that its $(x,y)^{\mathrm{th}}$ - entry is $\sqrt{p_{xy}p_{yx}}$. The singular values of $P$ are the same as the eigenvalues of $D$. Moreover, the state $\ket{\sqrt{\pi}}=\sum_{x\in X}\sqrt{\pi_x}\ket{x}$ is the eigenstate of $D$ with eigenvalue $1$. 

In order to define a discrete-time quantum walk, define the unitary $U_P$ such that
$$
U_P\ket{\bar{0}}\ket{x}=\sum_{y=1}^{n}\sqrt{p_{xy}}\ket{y,x}.
$$  
where $\ket{\bar{0}}$ is some reference state. Let $S$ be the swap operation such that $S\ket{x,y}=\ket{y,x}$, and $\Pi_0=\ket{\bar{0}}\bra{\bar{0}}\otimes I$. Then the unitary defined by
\begin{equation}
\label{eq:dtqw-unitary}
V_P =[(2\Pi_0-I)\otimes I] U_P^{\dag}S U_P ,
\end{equation}
is a discrete-time quantum walk on the edges of $P$. For details on these discrete-time quantum walks, we refer the reader to Refs.~\cite{szegedy2004quantum,krovi2016quantum,ambainis2019quadratic, apers2019unified}. We now describe the spatial search problem, which we shall deal with in the subsequent sections.

Suppose a subset $M$ of the $n$ nodes of $P$ are marked. That is, its state space $X=U \cup M$, where $U$ is the set of unmarked nodes. Then, the spatial search problem can be defined as follows: suppose the random walk starts from the stationary distribution $\pi$ of $P$. What is the expected number of steps needed by the random walk to find some node $v\in M$? For random walks (both discrete and continuous-time), this is known as the hitting time ($HT$). Whether quantum walks can provide a quadratic advantage for the spatial search problem for any $P$ and any number of marked nodes, was open until recently. Ambainis et al. proved that discrete-time quantum walks solve the problem in $\tilde{O}(\sqrt{HT})$ steps, on average \cite{ambainis2019quadratic}. A similar result was shown for continuous-time quantum walks in \cite{apers2022quadratic}.

Both these quantum algorithms make use of the so-called interpolated Markov chains framework. Let $P'$ be the absorbing Markov chain, obtained from $P$ by replacing all outgoing edges from $M$ with self-loops. Then, the interpolated Markov chain is defined as $P(s)=(1-s)P +sP'$, where $s\in [0,1)$. One can define a Discriminant matrix $D(s)$ for $P(s)$, analogous to $P$. The relationship between $D(s)$ and $P(s)$ is also analogous to the non-interpolated case. In addition to interpolated Markov chains, the optimal quantum spatial search algorithms in \cite{ambainis2019quadratic} made use of  \textit{Standard LCU}-based techniques. Here, we will show that the framework of \textit{Ancilla-free LCU} quite naturally leads to new optimal quantum algorithms for spatial search while saving on the number of ancilla qubits needed. As mentioned previously, we will work with general Hermitian operators (Hamiltonians) and only invoke quantum (or random) walks as particular cases. 
\subsection{Applying Ancilla-free LCU: Optimal quantum spatial search by fast-forwarding discrete-time random walks}
\label{sec:ff-dtrw}

We begin by considering any Hamiltonian $H$ of unit spectral norm. 
Consider $U_H$, which is a $(1,a,0)$-block encoding of $H$, such that $U_H^2=I$. Then, it is well known that we can implement a block encoding of $H^t$ using LCU in cost scaling as $O(\sqrt{t})$. This has been implicit in Ref.~\cite{childs2017quantum}, and also appeared in Ref.~\cite{ambainis2019quadratic}. For this, generally one uses the fact that for any $x$ such that $|x|\leq 1$, $x^t$ can be expressed as a linear combination of Chebyshev polynomials. The following $d$-degree polynomial $p_{t,d}(x)$, defined as 
\begin{equation}
\label{eq:polynomial-approximating-power}
p_{t,d}(x)=
\begin{cases}
\dfrac{1}{2^t}\sum_{j=-d/2}^{d/2} \binom{t}{j+t/2} T_{2j}(x) & ~t,d \text{~are even}\\~\\~\\
\dfrac{2}{2^t}\sum_{j=0}^{(d-1)/2}\binom{t}{\frac{t+1}{2}+j}T_{2j+1}(x) & ~ t,d \text{~are odd.}
\end{cases},
\end{equation}
approximates $f(x)=x^t$, for any $t\in \mathbb{Z}$. This has been formally proven in Ref.~\cite{sachdeva2014faster} which we restate here:
~\\
\begin{lemma}{\cite{sachdeva2014faster}}
\label{lem:polynomial-approximating-power-monomial}
Suppose $\varepsilon>0$, $x\in [-1,1]$, $q\geq 1$ and $t\in\mathbb{R^+}$, then there exists a polynomial $p_{t,d}(x)$ of degree $d=\lceil\sqrt{2t\ln(2q/\varepsilon)}\rceil$ such that, 
$$
\sup_{x\in [-1,1]}\left|x^t-p_{t,d}(x)\right|\leq 2 e^{-d^2/2t}\leq \varepsilon/q.
$$
\end{lemma}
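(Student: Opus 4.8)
The plan is to realize $x^t$ as the expectation of a Chebyshev polynomial evaluated at a random index generated by a symmetric $\pm 1$ random walk, and then to truncate that walk. Writing $x=\cos\theta$ with $\theta\in[0,\pi]$, so that $T_k(x)=\cos(k\theta)$ for every integer $k$ (with the convention $T_{-k}=T_k$), the binomial theorem applied to $\cos\theta=(e^{i\theta}+e^{-i\theta})/2$ gives
\begin{equation*}
x^t=\cos^t\theta=\frac{1}{2^t}\sum_{j=0}^{t}\binom{t}{j}e^{i(2j-t)\theta}=\frac{1}{2^t}\sum_{j=0}^{t}\binom{t}{j}T_{2j-t}(x),
\end{equation*}
where the last equality uses that $x^t$ is real, $\binom{t}{j}=\binom{t}{t-j}$, and $T_{-k}=T_k$. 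Equivalently, if $\xi_1,\dots,\xi_t$ are i.i.d.\ uniform on $\{-1,+1\}$ and $S_t=\sum_{i=1}^{t}\xi_i$, then $S_t=2j-t$ with probability $2^{-t}\binom{t}{j}$, so the identity reads $x^t=\Exp\!\left[T_{S_t}(x)\right]$. Since $S_t$ has the same parity as $t$, collecting the terms with equal $|S_t|$ recovers exactly the even/odd Chebyshev expansions displayed in Eq.~\eqref{eq:polynomial-approximating-power}.

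Next I would define the approximating polynomial by discarding the large-index terms, i.e.
\begin{equation*}
p_{t,d}(x)=\Exp\!\left[T_{S_t}(x)\,\mathbf{1}\{|S_t|\le d\}\right]=\frac{1}{2^t}\sum_{j:\,|2j-t|\le d}\binom{t}{j}T_{2j-t}(x),
\end{equation*}
which is a polynomial of degree at most $d$ with the parity of $t$; reindexing the sum matches the explicit coefficients in Eq.~\eqref{eq:polynomial-approximating-power}. Because $|T_k(x)|\le 1$ for all $x\in[-1,1]$ and all integers $k$, the error is bounded by the tail of the walk:
\begin{equation*}
\sup_{x\in[-1,1]}\left|x^t-p_{t,d}(x)\right|=\sup_{x\in[-1,1]}\left|\Exp\!\left[T_{S_t}(x)\,\mathbf{1}\{|S_t|>d\}\right]\right|\le \Pr\!\left[|S_t|>d\right].
\end{equation*}
Then I would apply Hoeffding's inequality to the sum of the independent bounded variables $\xi_i\in[-1,1]$, which yields $\Pr[|S_t|>d]\le 2\exp(-d^2/(2t))$, giving the first claimed inequality. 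Finally, substituting $d=\lceil\sqrt{2t\ln(2q/\varepsilon)}\,\rceil$ makes $d^2/(2t)\ge\ln(2q/\varepsilon)$, so $2e^{-d^2/2t}\le 2e^{-\ln(2q/\varepsilon)}=\varepsilon/q$, which is the second inequality.

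I expect the only real obstacle to be bookkeeping rather than anything conceptual: one must check carefully that, after grouping the truncated expectation by $|S_t|$, the resulting coefficients are precisely the $2^{-t}\binom{t}{(t\pm 1)/2+j}$ (up to the factor $2$ in the odd case) appearing in Eq.~\eqref{eq:polynomial-approximating-power} in both parities, and one should note that the parity of $d$ is immaterial for the tail bound, since $\Pr[|S_t|>d]=\Pr[|S_t|\ge d']$ for the smallest $d'\ge d$ having the parity of $t$. For non-integer $t\in\mathbb{R}^+$ the monomial $x^t$ is interpreted on the relevant domain via $|x|^t$ and a separate argument is required, but every use of this lemma in the present paper has $t$ a positive integer (a power of the discriminant matrix), so the argument above suffices.
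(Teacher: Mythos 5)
Your proof is correct and is essentially the same argument as in the cited reference \cite{sachdeva2014faster} (which the paper invokes without reproving): write $x^t=\Exp[T_{S_t}(x)]$ for a symmetric $\pm1$ random walk, truncate at $|S_t|\le d$, and bound the tail via Hoeffding. Your caveats about parity bookkeeping and about $t$ being a positive integer in all of the paper's applications are apt and do not affect the result.
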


If $R=(2\ket{\bar{0}}\bra{\bar{0}}-I)\otimes I= 2\Pi_0-I\otimes I$, is the reflection operator. Then $V^t$ is a $(1,a,0)$-block encoding of $T_{t}(H)$, where $V=R.U_H$. That is, $\left(\bra{\bar{0}}\otimes I\right) V^t \left(\ket{\bar{0}}\otimes I\right)= T_{t}(H)$. Then $H^t$ can be approximated by a linear combination of powers of $V$. In fact,
\begin{align}
\label{eq:diff-power-of-H-LCU}
\norm{H^t-\sum_{\ell=0}^{d/2}\dfrac{c_{\ell}}{\norm{c}_1} V^{\ell}}\leq \varepsilon,
\end{align}
for $d=\lceil\sqrt{2t\ln(8/\varepsilon)}\rceil$. Here, for even $t$, the LCU coefficients are defined as
\begin{equation}
\label{eq:coefficients-lcu-even-t}
c_{\ell}=\begin{cases}
2^{1-t}\binom{t}{\frac{t}{2}+\ell}, & \ell>0\\
2^{-t}\binom{t}{t/2}, & \ell=0,
\end{cases},
\end{equation}
while for odd $t$,
\begin{equation}
\label{eq:coefficients-lcu-odd-t}
c_{\ell}=2^{1-t}\binom{t}{\frac{t+1}{2}+\ell},
\end{equation}
The $\ell_1$-norm of the LCU coefficients can be easily obtained by observing for $x\in [-1,1]$
\begin{align}
\norm{c}_1&=\left|x^t-\sum_{\ell=d/2+1}^t 2^{1-t}\binom{t}{\frac{t}{2}+\ell}\right|\\
		  &\geq 1-\left|\sum_{\ell=d/2+1}^t 2^{1-t}\binom{t}{\frac{t}{2}+\ell}\right|\\
		  &\geq 1-\varepsilon/4,
\end{align}
for even $t$, while an analogous bound can also be obtained for odd $t$. Now, implementing the linear combination of powers of $V$ in Eq.~\eqref{eq:diff-power-of-H-LCU} via \textit{Standard LCU} requires $O(a+\log t +\log\log(1/\varepsilon))$ ancilla qubits, and $O(\sqrt{t\ln(1/\varepsilon)})$ cost, which has been used in quantum fast-forwarding \cite{apers2018FF, ambainis2019quadratic}. 

Now suppose we are concerned about the average projection of $H^t$ in some subspace of interest. Then by \textit{Ancilla-free LCU}, we do not need the additional $O(\log t + \log\log(1/\varepsilon))$ ancilla qubits (See Theorem \ref{thm:ancilla-free-LCU-theorem}): we can simply sample some $\ell$ according to $c_{\ell}/\norm{c}_1$ and apply $V^\ell$ to the initial state. On average, the projection of the resulting density matrix would be at least as large as the projection of $H^t$. This is what we show next.
~\\~\\
\textbf{Fast-forwarding by \textit{Ancilla-free LCU}:~} Given access to $U_H$, which is a $(1,a,\delta)$-block encoding of some Hamiltonian $H$, we want to prepare a state whose projection on to the subspace of interest (spanned by $\Pi$) is at least $\norm{\Pi H^t\ket{\psi_0}}^2$, on average. In such a scenario,  
we can implement \textit{Ancilla-free LCU} by replacing \textit{Standard LCU} with importance sampling. Consider Algorithm \ref{algo:pow-ham}, where $V=R.U_H$. If the initial state is $\rho_0=\ket{\psi_0}\bra{\psi_0}$, Algorithm \ref{algo:pow-ham} outputs a density matrix, which, on average, is
\begin{equation}
\label{eq:avg-dm}
\bar{\rho}=\sum_{\ell=0}^{d/2} \dfrac{c_\ell}{\norm{c}_1} V^{2\ell} \rho_0 V^{-2\ell},
\end{equation}
if $t$ is even (an analogous expression is obtained when $t$ is odd).
\RestyleAlgo{boxruled}
\begin{algorithm}[ht]
  \caption{$\texttt{POW-HAM}(t,d, V,\ket{\psi_0})$}\label{algo:pow-ham}
   \textbf{Inputs:} The unitary $V$, an initial state $\ket{\psi_0}$ and parameters $t\in \mathbb{R}^{+}$ and $d\in \mathbb{N}$.
   \\~\\
   \begin{itemize}
  \item[1.~] If $t$ is even, 
  \subitem(a) Pick $\ell \in [0,d/2]$ according to $c_{\ell}/\norm{c}_1$, where $c_\ell=2^{1-t}\binom{t}{\frac{t}{2}+\ell}$.\\ 
  \subitem(b) Apply $V^{2\ell}$ to $\ket{\psi_0}$.\\~\\
  \item[2.~] If $t$ is odd, 
 \subitem(a) Pick $\ell \in \left[0,\frac{d-1}{2}\right]$ according to $c_{\ell}/\norm{c}_1$, where $c_\ell=2^{1-t}\binom{t}{\frac{t+1}{2}+\ell}$. 
 \subitem(b) Apply $V^{2\ell+1}$ to $\ket{\psi_0}$. \\~\\
  \end{itemize}
\end{algorithm}
Then, we can use Theorem \ref{thm:ancilla-free-LCU-theorem} to prove that $\Tr[\Pi\bar{\rho}]\geq \Tr[\Pi H^t \rho_0 H^t]-\varepsilon$. However, there are still issues to consider before we can do so. For instance, $U_H$ is not a perfect block encoding of $H$. How should the precision in block encoding, $\delta$, scale for this to hold? We formally state this, and prove the algorithmic correctness via the following Theorem:
\begin{lemma}
\label{lem:ancilla-free-dtqw-1}
Suppose $\varepsilon\in (0,1)$ and we have access to $U_H$, which is a $(1,a,\delta)$-block encoding of a Hamiltonian $H$ such that $\norm{H}=1$ and $U^2_H=I$. Then, provided $d=\lceil\sqrt{2t\ln(24/\varepsilon)}\rceil$ and,
$$
\delta\leq \dfrac{\varepsilon^2}{1152~t\ln(24/\varepsilon)},
$$
for any $t\in\mathbb{R}^+$, initial state $\rho_0=\ket{\psi_0}\bra{\psi_0}$, and projector $\Pi$ on to the space of $\rho_0$, then the sampling in Algorithm \ref{algo:pow-ham} prepares some density matrix $\rho$ such that $\mathbb{E}[\rho]=\bar{\rho}$, where $\bar{\rho}$ is defined in Eq.~\eqref{eq:avg-dm}, and satisfies
$$
\Tr[\Pi\bar{\rho}] \geq \Tr[\Pi H^t\rho_0 H^t]-\varepsilon,
$$
using $O\left(\sqrt{t\log(1/\varepsilon)}\right)$ queries to $V=R.U_H$.
\end{lemma}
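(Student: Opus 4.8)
The plan is to reduce Lemma~\ref{lem:ancilla-free-dtqw-1} to Theorem~\ref{thm:ancilla-free-LCU-theorem}, the generic \textit{Ancilla-free LCU} statement, by carefully tracking the two sources of error: (i) the polynomial approximation of $f(x)=x^t$ by $p_{t,d}(x)$, and (ii) the imperfection of the block encoding ($\delta>0$). First I would set $g(x)=p_{t,d}(x)$, so that $g(H)=\sum_{\ell}\frac{c_\ell}{\norm{c}_1}\cdot\norm{c}_1 V^{2\ell}$ restricted to the block, i.e.\ $\norm{c}_1\,g(H)=\sum_\ell c_\ell\,(\bra{\bar 0}\otimes I)V^{2\ell}(\ket{\bar 0}\otimes I)$ using $T_{2\ell}(H)=(\bra{\bar0}\otimes I)V^{2\ell}(\ket{\bar0}\otimes I)$ from the Chebyshev identity already established in the excerpt. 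From Lemma~\ref{lem:polynomial-approximating-power-monomial} with $q$ chosen appropriately, $\norm{H^t - g(H)}\le 2e^{-d^2/2t}$; picking $d=\lceil\sqrt{2t\ln(24/\varepsilon)}\rceil$ makes this at most $\varepsilon/12$ (the constant $24$ is chosen so the bound comes out as a small enough fraction of $\varepsilon$, absorbing the $\norm{H^t}\le1$ factor that appears in Theorem~\ref{thm:ancilla-free-LCU-theorem}).

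Next I would handle the block-encoding error. Since $U_H$ is only a $(1,a,\delta)$-block encoding of $H$, what we actually implement via $V=R\cdot U_H$ and the sampling in Algorithm~\ref{algo:pow-ham} is $\tilde H^t$-type transformations where $\tilde H=(\bra{\bar0}\otimes I)U_H(\ket{\bar0}\otimes I)$ satisfies $\norm{H-\tilde H}\le\delta$. The key technical input here is the robustness of Chebyshev/QSVT transformations, Lemma~\ref{lem:robustness_of_QSVT}: since each $V^{2\ell}$ acts as the degree-$2\ell$ Chebyshev polynomial (which satisfies the QSVT requirements, being bounded and of definite parity), we get $\norm{T_{2\ell}(H)-T_{2\ell}(\tilde H)}\le 4(2\ell)\sqrt{\delta}\le 4d\sqrt{\delta}$ for every $\ell\le d/2$. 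Summing against the coefficients $c_\ell/\norm{c}_1$ (which sum to at most $1$ after normalization) gives $\norm{g(H)-g(\tilde H)}\le 4d\sqrt{\delta}$. Substituting $d\le\sqrt{2t\ln(24/\varepsilon)}+1$ and the stated bound $\delta\le \varepsilon^2/(1152\,t\ln(24/\varepsilon))$ forces $4d\sqrt{\delta}\le\varepsilon/12$ or so. Combining with step one via the triangle inequality, $\norm{H^t - g(\tilde H)}\le \varepsilon/6 \le \varepsilon/(3\norm{H^t})$ since $\norm{H^t}\le1$, which is exactly the hypothesis needed to invoke Theorem~\ref{thm:ancilla-free-LCU-theorem} with the unitaries $U_\ell := V^{2\ell}$ (restricted appropriately) and coefficients $c_\ell$, noting $\norm{c}_1\le 1$ was established in the excerpt. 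That theorem then yields $\Tr[\Pi\bar\rho]\ge \Tr[\Pi H^t\rho_0 H^t]-\varepsilon$, and the query count $O(\sqrt{t\log(1/\varepsilon)})$ is immediate since the largest power applied is $V^{d}$ with $d=O(\sqrt{t\log(1/\varepsilon)})$.

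The main obstacle I anticipate is bookkeeping the constants consistently across the three inequalities (polynomial error, block-encoding propagation, and the $\varepsilon/(3\norm{f(H)})$ slack demanded by Theorem~\ref{thm:ancilla-free-LCU-theorem}) so that the precise constant $1152$ in the hypothesis on $\delta$ actually emerges — in particular being careful that $d$ is a ceiling (so effectively $d\le\sqrt{2t\ln(24/\varepsilon)}+1$, and one should check $\sqrt{2t\ln(24/\varepsilon)}\ge1$ or absorb the $+1$), and that $2\ell\le d$ rather than $\ell\le d$ is what enters the QSVT robustness bound. A secondary subtlety is that Algorithm~\ref{algo:pow-ham} samples $V^{2\ell}$ (even $t$) or $V^{2\ell+1}$ (odd $t$), so the averaged density matrix $\bar\rho$ is $\sum_\ell (c_\ell/\norm{c}_1)V^{2\ell}\rho_0 V^{-2\ell}$ as in Eq.~\eqref{eq:avg-dm}, and I must verify that the purification argument inside the proof of Theorem~\ref{thm:ancilla-free-LCU-theorem} still applies verbatim with these specific unitaries and that the odd-$t$ case goes through by the completely analogous Chebyshev identity $T_{2\ell+1}(H)=(\bra{\bar0}\otimes I)V^{2\ell+1}(\ket{\bar0}\otimes I)$. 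Neither of these is conceptually hard; they are just places where a sign or factor of two could slip.
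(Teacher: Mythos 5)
Your proposal is correct and follows essentially the same route as the paper: approximate $x^t$ by $p_{t,d}$ via Lemma~\ref{lem:polynomial-approximating-power-monomial}, propagate the block-encoding error with Lemma~\ref{lem:robustness_of_QSVT} to get the $4d\sqrt{\delta}$ term, and invoke Theorem~\ref{thm:ancilla-free-LCU-theorem}; the only differences are cosmetic (you apply the robustness lemma termwise to each $T_{2\ell}$ rather than to $p_{t,d}$ as a whole, and with the stated $\delta$ one gets $4d\sqrt{\delta}\le\varepsilon/6$ rather than your claimed $\varepsilon/12$, which still keeps the total within the $\varepsilon/3$ budget).
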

\begin{proof}
Let $H'$ be a $(1,a,0)$-block encoding of $U_H$. Then, by definition $\norm{H-H'}\leq \delta$. Let us choose the degree of the polynomial $p_{t,d}(H')$ to be $d=\lceil\sqrt{2t\ln(24/\varepsilon)}\rceil$, which ensures that $\norm{x^t-p_{t,d}(x)}\leq \varepsilon/12$ (from Lemma \ref{lem:polynomial-approximating-power-monomial}).

Now, the \textit{Standard LCU} procedure would implement the state
$$
\ket{\psi_t}=\ket{\bar{0}}\dfrac{p_{t,d}(H')}{\norm{c}_1}\ket{\psi_0}+\ket{\Phi}^{\perp}. 
$$
From the choice of $d$, we ensure that $\alpha=\norm{c}_1 \geq 1-\varepsilon/12$.
Also,
\begin{align}
\norm{H^t-p_{t,d}(H')/\alpha}&\leq \norm{H^t-p_{t,d}(H')}+(1-\alpha)\norm{p_{t,d}(H')/\alpha}\\
                              &\leq \varepsilon/12+\norm{H^t-p_{t,d}(H)}+\norm{p_{t,d}(H)-p_{t,d}(H')}\\
                              &\leq \varepsilon/12+\varepsilon/12+ 4d\sqrt{\delta}~~~~~~~~~~~~~~~~~~~~~~~~~~~~~~~~~~~~~~~~~~~~~[\text{~From Lemma~\ref{lem:robustness_of_QSVT}~}]\\
                              &\leq \varepsilon/6+\varepsilon/6~~~~~~~~~~~~~~~~~~~~~~~~~~~~~~~~~~~~~~~~~~~~~~~~~~~~~~~\left[\text{~As $\delta\leq \dfrac{\varepsilon^2}{576 d^2} $~}\right]\\
                              &\leq \varepsilon/3.
\end{align}
We will now use Theorem \ref{thm:ancilla-free-LCU-theorem}, which ensures that the average density matrix $\bar{\rho}$ from Algorithm \ref{algo:pow-ham}, satisfies
\begin{align}
 \Tr[(I\otimes\Pi)\bar{\rho}]&=\Tr[(I\otimes\Pi)\ket{\psi_t}\bra{\psi_t}]\\
 &\geq \dfrac{1}{\norm{c}^2_1}\left[\Tr[\Pi H^t\rho_0 H^t]-\varepsilon\right]\\
			   &\geq \Tr[\Pi H^t\rho_0 H^t]-\varepsilon.
\end{align} 
\end{proof}

Thus, on average, the projection of the density matrix prepared by Algorithm \ref{algo:pow-ham} on to the space spanned by $\Pi$ is at least as large as $\Tr[\Pi H^t \rho_0 H^t]$. Furthermore, Lemma \ref{lem:ancilla-free-dtqw-1} shows that the cost is $O(\sqrt{t\ln(1/\varepsilon)})$. Interestingly this procedure does not need the extra $O(\log t + \log\log(1/\varepsilon))$ ancilla qubits that \textit{Standard LCU} does.

Let us now discuss why this is important in the context of spatial search by quantum walk. Notice that for any ergodic, reversible Markov chain, the discrete-time quantum walk unitary $U_D = U_P^{\dag}S U_P$ (defined in Sec.~\ref{subsec:rw-qw}) is a block encoding of the discriminant matrix $D$ of $P$, i.e.\
$$
\left(\bra{\bar{0}}\otimes I\right) U^{\dag}_P S U_P \left(\ket{\bar{0}}\otimes I\right) = D.
$$
Thus $U_D$ is a $(1,\lceil \log_2 n \rceil,0)$-block encoding of $D$, where $a=|X|=n$. Thus, \textit{Standard LCU} prepares a quantum state that is $O(\varepsilon)$ - close to
$$
\ket{\psi_t}= \ket{\bar{0}} D^t\ket{\psi_0}+\ket{\Phi}^{\perp},
$$
in cost $O\left(\sqrt{t\log(1/\varepsilon)}\right)$, using $O(\log n+\log t +\log\log (1/\varepsilon))$ ancilla qubits. This is the essence of quantum fast-forwarding: the fact that we can implement $t$-steps of a random walk on $D$ quadratically faster using quantum walks (up to normalization) \cite{apers2018FF}. 

On the other hand, suppose for a specific problem (such as quantum spatial search) it suffices to ensure that the projection of $D^{t}\ket{\psi_0}$ on some subspace of interest is high on average. Then from Algorithm \ref{algo:pow-ham} we can prepare a quantum state whose projection, on average, is at least as high as the projection of $D^{t}\ket{\psi_0}$. For this, we can simply apply $k$ steps of the quantum walk operator $V=R.U_D$ for some $k\in [0,\sqrt{2t\ln(24/\varepsilon)}]$ steps, sampled at random according to $\{c_k/\norm{c}_1\}$. Indeed Algorithm \ref{algo:pow-ham}, on average, prepares the density matrix $\bar{\rho}$ such that from Lemma \ref{lem:ancilla-free-dtqw-1},
$$
\Tr[\Pi\bar{\rho}]\geq \Tr[\Pi D^t\rho_0 D^t]-\varepsilon.
$$
Overall, quantum fast forwarding by \textit{Ancilla-free LCU} requires only $O(\log n)$ ancilla qubits, which is the quantum walk space (edges of $P$). On the other hand, quantum fast-forwarding by \textit{Standard LCU} requires an additional $O(\log t + \log\log(1/\varepsilon))$ ancilla qubits. 

Now we are in a position to discuss the spatial search algorithm by discrete-time quantum walk which makes use of Algorithm \ref{algo:pow-ham}. Now for the spatial search problem, we are interested in showing that on average, the projection of the state $D^t\ket{\sqrt{\pi}}$ in the marked subspace is high, for $t=O(HT)$. As a result, we can drop the ancilla register and simply apply the \textit{Ancilla-free LCU} technique to implement the unitary $V$ for a random number of steps, sampled according to the distribution of the LCU coefficients. 
~\\~\\
\textbf{Method 1 -- Spatial search by discrete-time quantum walk:~}  Suppose there exists an ergodic, reversible Markov chain $P$ with state space $X$ and $|X|=n$. Let $M\subset X$ be a set of marked nodes. Classically, the spatial search algorithm boils down to applying the so-called, absorbing Markov chain $P'$ (obtained from $P$ by replacing the outgoing edges from $M$ with self-loops) to the stationary distribution of $P$ (say $\pi$). At every step, one checks to see if the vertex obtained is marked. The expected number of steps needed to find some $x\in M$ is known as the hitting time, denoted by $HT$. Consider the interpolated Markov chain $P(s)=(1-s)P+sP'$, where $s\in [0,1]$ and $D(P(s))$ be the corresponding discriminant matrix. Then $U_{D(s)}$ is a $(1,a,0)$-block encoding of $D(s)$, where $a=\lceil\log_2(n)\rceil$. 

First, let us look at the quantum spatial search algorithm of Ref.~\cite{ambainis2019quadratic}. Starting from the state $\ket{\sqrt{\pi}}=\sum_{x}\sqrt{\pi_x}\ket{x}$ (which corresponds to a quantum encoding of the stationary distribution of $P$), the first step involves preparing $\ket{\bar{0}}\ket{\sqrt{\pi_U}}=\ket{\bar{0}}\sum_{y\in X\backslash M} \sqrt{\pi_y}\ket{y}$, which is simply the support of $\ket{\sqrt{\pi}}$ in the unmarked subspace. This is quite easy: if $\Pi_M$ is the projector on to the marked subspace, simply measure $\ket{\sqrt{\pi}}$ in the basis $\{\Pi_M, I-\Pi_M\}$. Then $\ket{\sqrt{\pi_U}}$ is prepared if the outcome of the measurement is an unmarked vertex. Following this, the algorithm of \cite{ambainis2019quadratic} prepares the state $D(s)^t\ket{\sqrt{\pi_U}}$ using quantum fast-forwarding \cite{apers2018FF} (via \textit{Standard LCU}) for some randomly chosen values of $s$, and $t\in \Theta(HT)$. Then the central result of Ref.~\cite{ambainis2019quadratic} was to prove, using an involved combinatorial lemma, that the projection of this state on to the marked subspace is $\widetilde{\Omega}(1)$ on average for these choices of $s$ and $t$. More precisely, the authors prove that if we choose parameters $s\in \{1-1/r: r=1,2,\cdots, 2^{\lceil\log T\rceil}\}$ and $T\in \Theta\left(HT\right)$ uniformly at random,
\begin{equation}
\label{eq:success-probability-dtqw}
\mathbb{E}\left[\norm{\Pi_M D(s)^T \ket{\sqrt{\pi_U}}}^2\right]\in \Omega\left(1/\log T\right). \nonumber
\end{equation}
This is optimal as it provides a generic quadratic speedup over classical random walks (up to a log factor) for any reversible $P$ and marked subspace $M$ of any cardinality, unlike prior works. However, the algorithm uses quantum fast forwarding by \textit{Standard LCU}, which requires $O(\log HT)$ ancilla qubits to implement.

Note that in this case, it suffices to ensure that the projection of $D(s)^t\ket{\sqrt{\pi_U}}$ in the marked subspace is large on average for the aforementioned choices of $s$ and $t$. Then from Lemma \ref{lem:ancilla-free-dtqw-1}, we can replace the quantum fast forwarding by \textit{Standard LCU} of Ref.~\cite{ambainis2019quadratic}, with \textit{Ancilla-free LCU}: Apply $k$ steps of the quantum walk operator $V(s)=R.U_{D(s)}$, where $k\in [0, O(\sqrt{HT})]$ is chosen according to $\{c_k/\norm{c}_1\}$, as stated in Algorithm \ref{algo:pow-ham}. This prepares, on average, the density matrix $\bar{\rho}$ whose projection on to the marked subspace is at least $\Omega\left(1/\log T\right)$. 

The quantum spatial search algorithm, after incorporating Algorithm \ref{algo:pow-ham} as a subroutine, is stated via Algorithm \ref{algo:search-dtqw-1}.
\RestyleAlgo{boxruled}
\begin{algorithm}[ht]
  \caption{$\texttt{QSpatial Search - 1}$~ Spatial search by DTQW}\label{algo:search-dtqw-1}
  ~\\
   \begin{itemize}
  \item[1.~] Pick $t$ uniformly at random from $[0,T]$, where $T=\Theta\left(HT\right)$ and set \\ $d=\lceil\sqrt{T\log(T)}\rceil$.

  \item[2.~] Set $s=1-1/r$, where $r$ is picked uniformly at random from\\ $R=\{2^0, 2^1,\cdots, 2^{\lceil\log T\rceil}\}$.
  
  \item[3.~] Construct $U_{D(s)}$, which is a $(1,\lceil\log_2(|X|)\rceil,0)$-block encoding of $D(s)$.
  
  \item[4.~] Prepare the quantum state $\ket{\bar{0}}\ket{\sqrt{\pi}}=\ket{\bar{0}}\sum_{y\in X}\sqrt{\pi_y}\ket{y}$.
  
  \item[5.~] Measure in the basis $\{\Pi_M,I-\Pi_M\}$ in the second register. If the output is marked, measure in the node \\ basis to output some $x\in M$. Otherwise, we are in the state $\ket{\bar{0}}\ket{\sqrt{\pi_U}}=\ket{\bar{0}}\sum_{y\in X\backslash M}\sqrt{\pi_y}\ket{y}$.
  
  \item[6.~] Call $\texttt{POW-HAM}(t,d,V=R.U_{D(s)},\ket{\sqrt{\pi_U}})$.\\
  
  \item[7.~] Measure the first register in the basis of the nodes of $P$.
  \end{itemize}
\end{algorithm}
The key difference as compared to the algorithm of \cite{ambainis2019quadratic} is that our procedure does not require quantum fast-forwarding by \textit{Standard LCU}: it suffices to call Algorithm \ref{algo:pow-ham} instead. This ensures that our method requires $O(\log HT)$ fewer ancilla qubits. 

From Lemma \ref{lem:ancilla-free-dtqw-1}, we obtain that, Algorithm \ref{algo:search-dtqw-1} outputs a density matrix, which on average, has a projection on to the marked subspace, lower bounded as 
\begin{equation}
\label{eq:spatial-search-prob-lower-bound-1}
\Tr[(I\otimes \Pi_M)\bar{\rho}]\geq \norm{\Pi_M D(s)^T \ket{\sqrt{\pi_U}}}^2 - \varepsilon,
\end{equation}
for a small enough $\varepsilon \in \Theta(1/\log(T))$. Then, the combinatorial lemma of Ref.~\cite{ambainis2019quadratic} ensures that expected value of the RHS of the aforementioned equation is $\widetilde{\Omega}(1)$. We refer the readers to \cite{ambainis2019quadratic} for the proof of this lemma.

What Equation \eqref{eq:spatial-search-prob-lower-bound-1} tells us is that if we run Algorithm \ref{algo:search-dtqw-1}, and measure the second register in the vertex basis, the probability of finding a marked element would be at least $\Omega(1/\log^2 T)$, on average, after $O(\sqrt{T\log T})$ DTQW steps (where $T=\Theta(HT)$). Thus, just as Ref.~\cite{ambainis2019quadratic}, this algorithm also yields a quadratic improvement over its classical counterpart (up to a log factor). However, it requires $O(\log HT))$ fewer ancilla qubits.

We shall use similar ideas to develop an alternative quantum algorithm for spatial search by discrete-time quantum walk.

\subsection{Applying Ancilla-free LCU: Optimal quantum spatial search by fast-forwarding continuous-time random walks}
\label{sec:ff-ctrw}
We develop a quantum algorithm for fast-forwarding the dynamics of a continuous-time random walk. Given the discriminant matrix $D$, a continuous-time random walk is defined by the operator $e^{D-I}$, where $Q=D-I$ is the continuous-time random walk kernel. We first show that a block encoding of $e^{t(D-I)}$ can be implemented in cost $O(\sqrt{t}\log 1/\varepsilon)$, using \textit{Standard LCU}. This requires $O(\log n +\log t +\log\log(1/\varepsilon))$ ancilla qubits. Next, we demonstrate that the fast forwarding can be leveraged to develop an optimal quantum spatial search algorithm: requiring $\widetilde{O}(\sqrt{HT})$ steps on average to find marked vertices on any reversible $P$ with $M$ being the set of marked nodes. As this relies on the \textit{Standard LCU} technique, the overall algorithm requires $O(\log n + \log HT)$ ancilla qubits, analogous to \cite{ambainis2019quadratic}. 

Finally, similar to the previous section, for optimal quantum spatial search, it suffices to ensure that the projection of $e^{t(D-I)}\ket{\sqrt{\pi_U}}$ in the marked subspace is high, on average after $t=O(HT)$ steps of the continuous-time random walk. Thus, we can bypass quantum fast forwarding by \textit{Standard LCU}, and use \textit{Ancilla-free LCU}, instead. This only requires $O(\log n)$ ancilla qubits required to implement the quantum walk on the edges of $P$, saving on $O(\log HT)$ ancilla qubits overall. As in the previous section, we will work with general Hamiltonians and discuss quantum walks as a particular case \\~\\
Now the polynomial approximation to $x^t$ can be used to obtain the following low degree polynomial that approximates $e^{-t(1-x)}$ \cite{gilyen2018quantum}:
$$
q_{t,d,d'}(x)=e^{-t}\sum_{j=0}^{d}\dfrac{t^j}{j!}p_{j,d'}(x).
$$ 
This has degree
$$
d'= \lceil\sqrt{2d\ln(4/\varepsilon)}\rceil \in O\left(\sqrt{t}\log(1/\varepsilon)\right),
$$ 
for $d=\lceil\max\{te^2,\ln(2/\varepsilon)\}\rceil$. Indeed, it can be shown that
\begin{equation}
\label{eq:exponential-poly-approx}
\sup_{x\in[-1,1]}\abs{e^{-t(1-x)}-q_{t,d,d'}(x)}\leq \varepsilon.
\end{equation}
Thus, given a block encoding of any Hermitian matrix $H$, with unit spectral norm, the operator $e^{-t(I-H)}$ can be implemented as a linear combination of unitaries. This is because the $d'$-degree polynomial $q_{t,d,d'}(x)$ approximates $e^{-t(I-H)}$ and is a linear combination of the $d'$-degree polynomial $p_{j,d'}(x)$. So overall, by LCU, we can implement the polynomial $q_{t,d,d'}(x)$, approximating $e^{-t(1-x)}$. We formally show this via the following lemma:
\begin{lemma}
\label{lem:block encoding-exp-of-Ham}
Suppose $\varepsilon\in (0,1)$ and we have access to $U_H$, which is a $(1,a,\delta)$-block encoding of a Hamiltonian $H$ such that $\norm{H}=1$ and $U^2_H=I$. Furthermore, let $d=\lceil\max\{te^2,\ln(8/\varepsilon)\}\rceil$ and $d'=\sqrt{2d\ln(16/\varepsilon)}$. Then, provided 
$$
\delta\leq \dfrac{\varepsilon^2}{128d~\ln(16/\varepsilon)},
$$
for any $t\in\mathbb{N}$, we can implement a $(1,O(a+\log t+\log\log(1/\varepsilon)),\varepsilon)$-block encoding of $e^{-t(I-H)}$ in cost $O\left(\sqrt{t}\log(1/\varepsilon)\right)$.
\end{lemma}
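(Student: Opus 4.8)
The plan is to realize $e^{-t(I-H)}$ as a Standard LCU over powers of the Chebyshev walk operator $V=R\cdot U_H$, mirroring the proof of Lemma~\ref{lem:ancilla-free-dtqw-1} but applied to the composite polynomial $q_{t,d,d'}$ instead of a single monomial approximant. First I would observe that each $p_{j,d'}$ is a linear combination of Chebyshev polynomials $T_k$ with $k\le d'$, so $q_{t,d,d'}(x)=e^{-t}\sum_{j=0}^{d}\tfrac{t^j}{j!}p_{j,d'}(x)$ is itself a linear combination $\sum_k c_k T_k(x)$ with \emph{nonnegative} coefficients (products of binomials with the Poisson weights $e^{-t}t^j/j!$). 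Since $(\bra{\bar 0}\otimes I)V^k(\ket{\bar 0}\otimes I)=T_k(H')$, where $H'=\tilde H$ is the matrix actually block-encoded by $U_H$ (so $\|H-H'\|\le\delta$), running Standard LCU on the ensemble $\{c_k,V^k\}$ produces a $(1,\,a+\lceil\log_2(d'+1)\rceil,\,0)$-block encoding of $q_{t,d,d'}(H')/\alpha$, where $\alpha=\|c\|_1=\sum_k c_k=q_{t,d,d'}(1)$ (using $T_k(1)=1$ and nonnegativity of the $c_k$). The select step costs $O(d')$ applications of $V$ by the standard ``powers of one unitary'' trick (Lemma~8 of \cite{childs2017quantum}), so the total cost is $O(d')=O(\sqrt t\log(1/\varepsilon))$ and the ancilla count is $a+O(\log d')=O(a+\log t+\log\log(1/\varepsilon))$, as claimed.

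The substance is the error analysis, which I would split into three contributions to $\|q_{t,d,d'}(H')/\alpha-e^{-t(I-H)}\|$. (i) Polynomial approximation error: with the inflated constants $d=\lceil\max\{te^2,\ln(8/\varepsilon)\}\rceil$ and $d'=\sqrt{2d\ln(16/\varepsilon)}$, the estimate underlying \eqref{eq:exponential-poly-approx} gives $\sup_{x\in[-1,1]}|q_{t,d,d'}(x)-e^{-t(1-x)}|\le\varepsilon/4$, hence $\|q_{t,d,d'}(H)-e^{-t(I-H)}\|\le\varepsilon/4$. (ii) Block-encoding imprecision: each $p_{j,d'}$ has definite parity and is (essentially) bounded by $1$ on $[-1,1]$, so Lemma~\ref{lem:robustness_of_QSVT} gives $\|p_{j,d'}(H)-p_{j,d'}(H')\|\le 4d'\sqrt\delta$, and averaging against the sub-probability weights $e^{-t}t^j/j!$ yields $\|q_{t,d,d'}(H)-q_{t,d,d'}(H')\|\le 4d'\sqrt\delta$; the hypothesis $\delta\le\varepsilon^2/(128\,d\ln(16/\varepsilon))$ together with $d'=\sqrt{2d\ln(16/\varepsilon)}$ makes this $\le\varepsilon/2$. (iii) Subnormalization correction: $\|q_{t,d,d'}(H')/\alpha-q_{t,d,d'}(H')\|\le(\alpha^{-1}-1)\|q_{t,d,d'}(H')\|\le\alpha^{-1}-1$, and I would show $\alpha=q_{t,d,d'}(1)\ge 1-O(\varepsilon)$ by combining the Poisson-tail bound $e^{-t}\sum_{j\le d}t^j/j!\ge 1-\varepsilon/\mathrm{const}$ (which is exactly why $d\gtrsim te^2$ is forced) with $p_{j,d'}(1)\ge 1-2e^{-d'^2/(2d)}\ge 1-\varepsilon/\mathrm{const}$ from Lemma~\ref{lem:polynomial-approximating-power-monomial}, so this term is $O(\varepsilon)$ as well. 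Summing (i)--(iii) and retuning how the $\varepsilon$-budget is allocated across the three bounds gives $\|q_{t,d,d'}(H')/\alpha-e^{-t(I-H)}\|\le\varepsilon$, i.e.\ the LCU output is a $(1,\cdot,\varepsilon)$-block encoding of $e^{-t(I-H)}$.

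The main obstacle is precisely this simultaneous bookkeeping: the $4d'\sqrt\delta$ square-root blow-up from QSVT robustness in (ii) is what dictates the peculiar $\delta=O(\varepsilon^2/(d\log(1/\varepsilon)))$ scaling in the hypothesis, and it has to be reconciled with keeping the LCU weight $\alpha=\|c\|_1$ within $O(\varepsilon)$ of $1$ in (iii) --- which is not automatic, since $\alpha$ is pinned down by \emph{both} the truncation degree $d$ of the exponential's Taylor series (through its Poisson tail, forcing $d\gtrsim t$) and the monomial-approximation degree $d'$ (forcing $d'\gtrsim\sqrt{d\log(1/\varepsilon)}$). Fitting all of these, plus (i), under a single $\varepsilon$ is the reason the constants $8,16,128$ appear in the statement rather than the $2,4$ of the bare approximation bound; everything else --- the Chebyshev-linear-combination structure, the $O(d')$-query select, and the ancilla count --- is routine and parallels the proof of Lemma~\ref{lem:ancilla-free-dtqw-1}.
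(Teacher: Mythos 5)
Your proposal is correct and follows essentially the same route as the paper: implement the LCU $q_{t,d,d'}(H')/\norm{c}_1$ via powers of $V=R\cdot U_H$, bound the polynomial-approximation error by $\varepsilon/4$, apply Lemma~\ref{lem:robustness_of_QSVT} to get the $4d'\sqrt{\delta}\le\varepsilon/2$ contribution, and control the subnormalization via the Poisson-tail bound $e^{-t}\sum_{j>d}t^j/j!\le\varepsilon/8$ together with $p_{j,d'}(1)\ge 1-\varepsilon/8$, so that $\norm{c}_1\ge 1-\varepsilon/4$. The only cosmetic difference is that you apply the robustness lemma to each $p_{j,d'}$ and average, whereas the paper applies it once to the degree-$d'$ polynomial $q_{t,d,d'}$ directly; the resulting bounds are identical.
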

\begin{proof}
Let $U_H$ be a $(1,a,0)$-block encoding of $H'$. By definition, $\norm{H-H'}\leq \delta$. For the polynomial $q_{t,d,d'}(x)$, we choose $d=\lceil\max\{te^2,\ln(8/\varepsilon)\}\rceil$ and $d'=\sqrt{2d\ln(16/\varepsilon)}$. This ensures $\norm{e^{-t(1-x)}-q_{t,d,d'}(x)}\leq \varepsilon/4$.

As before, let $\widetilde{W}$ be the unitary that implements the LCU. Then,
\begin{equation}
\label{eq:lcu-qpoly}
\left(\bra{\bar{0}}\otimes I\right)\widetilde{W}\left(\ket{\bar{0}}\otimes I\right)=\dfrac{q_{t,d,d'}(H')}{\norm{c}_1}.
\end{equation}
For the choice of $d,d'$ we have
\begin{align}
\norm{c}_1&=e^{-t}\sum_{j=0}^d \dfrac{t^j}{j!} p_{j,d'}(H)\\
		  &\geq e^{-t}\sum_{j=0}^d \dfrac{t^j}{j!} \left(1-\varepsilon/8\right)~~~~~~~~~~~~~~[~\text{As $d'=\lceil\sqrt{2d\ln(16/\varepsilon)}\rceil$}~]\\
		  &\geq \left(1-e^{-t}\sum_{j=d+1}^{\infty}\dfrac{t^{j}}{j!}\right)\left(1-\varepsilon/8 \right)\\
		  &\geq \left(1-\varepsilon/8 \right)\left(1-\varepsilon/8 \right)\\
		  &\geq 1-\varepsilon/4.
\end{align}
In order to go from the third to the fourth line we have used the fact that by Stirling's approximation:
\begin{align}
e^{-t}\sum_{j=d+1}^{\infty}\dfrac{t^{j}}{j!}&\leq e^{-t}\sum_{j=d+1}^{\infty} \left(\dfrac{te}{j}\right)^j\\
										   &\leq e^{-t}\sum_{j=d+1}^{\infty}e^{-j}\left(\dfrac{te^2}{j}\right)^j\\
										   &\leq e^{-t}\sum_{j=d+1}^{\infty} e^{-j}~~~~~~~~~~~~~[\text{~As~} te^2/j\leq 1, ~\forall j\geq d+1~]\\
										   & \leq e^{-t-d}\leq \varepsilon/8,								
\end{align}
for $d=\lceil\ln(8/\varepsilon)\rceil$.

This implies $\norm{c}_1\in [1-\varepsilon/4,1]$. Now, we will show that $\widetilde{W}$ indeed implements a block encoding of $e^{-t(I-H)}$.
\begin{align}
\norm{e^{-t(I-H)}-\dfrac{q_{t,d,d'}(H')}{\norm{c}_1}}&\leq \norm{e^{-t(I-H)}-q_{t,d,d}(H)} + \norm{e^{-t(I-H)}-q_{t,d,d}(H)} +\left(1-\norm{c}_1\right)\\
													&\leq \varepsilon/4+\varepsilon/4+4d'\sqrt{\delta}~~~~~~~~~~~~~~~~[\text{From Lemma \ref{lem:robustness_of_QSVT}}]\\
													&\leq \varepsilon/2+\varepsilon/2~~~~~~~~~~~~~~~~~~~~~~~~~\left[\text{~As~}\delta\leq \frac{\varepsilon^2}{64d'^2}\right]
\end{align}
\end{proof}
It is easy to see that this leads to the fast-forwarding of continuous-time random walks. The unitary $V=U_P^\dag S U_P$ is a $(1,\lceil\log n\rceil,0)$-block encoding of the random walk discriminant matrix $D$. Then by using Lemma \ref{lem:block encoding-exp-of-Ham}, given an initial state $\ket{\psi_0}$, we can prepare a quantum state that is $O\left(\varepsilon\cdot\norm{e^{-(I-D)t}\ket{\psi_0}}\right)$-close to
\begin{equation}
\label{eq:final-state-ff-ctrw}
\ket{\psi_t}=\ket{0} \dfrac{e^{-t(I-D)}\ket{\psi_0}}{\norm{e^{-t(I-D)}\ket{\psi_0}}}+\ket{\psi^\perp},
\end{equation}
with success probability $\Theta\left(\norm{e^{-(I-D)t}\ket{\psi_0}}^2\right)$, in cost $O\left(\sqrt{t}\log\left(\varepsilon^{-1}\cdot\norm{e^{-(I-D)t}\ket{\psi_0}}^{-1}\right)\right)$.

Finally, by applying $O\left(\norm{e^{-(I-D)t}\ket{\psi_0}}^{-1}\right)$ rounds of amplitude amplification, we prepare, with $\Omega(1)$ probability, a quantum state that is $O(\varepsilon)$-close to $\ket{\psi_t}$ in cost
$$
T=O\left(\dfrac{\sqrt{t}}{\norm{e^{-(I-D)t}\ket{\psi_0}}}\log\left(\dfrac{1}{\varepsilon\norm{e^{-(I-D)t}\ket{\psi_0}}}\right)\right).
$$
Other than the walk space of $O(\log n)$ qubits (edges of the Markov chain), fast forwarding continuous-time random walks by \textit{Standard LCU} additionally requires $O(\log t +\log\log(1/\varepsilon))$ ancilla qubits.  As before, for the spatial search algorithm, we can drop these ancilla qubits completely and implement \textit{Ancilla-free LCU} instead.
~\\~\\
\textbf{Fast-forwarding by \textit{Ancilla-free LCU}:}~ For the spatial search problem, we are concerned about the average projection of $e^{-t(I-H)}\ket{\psi_0}$, on to the marked subspace. Thus, we can apply \textit{Ancilla-free LCU} instead. The overall procedure is outlined in Algorithm \ref{algo:implement-qpoly}. We will be implementing $q_{t,d,d'}(H)$ which is itself a linear combination of $p_{t,d'}(H)$. So, Algorithm \ref{algo:implement-qpoly} also calls Algorithm \ref{algo:pow-ham} as a subroutine. 
\\~\\
\RestyleAlgo{boxruled}
\begin{algorithm}[ht]
  \caption{$\texttt{EXP-HAM}(t,d',d,V,\ket{\psi_0})$}\label{algo:implement-qpoly}
  \textbf{Inputs:~} A unitary $V$, $t\in \mathbb{R}^+$, $d,d'\in\mathbb{N}$, and an initial state $\ket{\psi_0}$.
   \begin{itemize}
  \item[1.~] Pick some integer $\ell \in [0,d]$ according to $c_\ell/\norm{c}_1$, where $c_\ell = \dfrac{e^{-t} t^\ell}{\ell!}$

  \item[2.~] Call $\texttt{POW-HAM}(\ell,d',V,\ket{\psi_0})$.
  \end{itemize}
\end{algorithm}

If $\rho_0=\ket{\psi_0}\bra{\psi_0}$, for $d, d'\in \mathbb{N}$, Algorithm \ref{algo:implement-qpoly}, on average, prepares the  density matrix
\begin{equation}
\bar{\rho}=e^{-t}\sum_{j=0}^{d} \dfrac{t^j}{j!}\left[\sum_{j\in\mathrm{Even}, k=0}^{d'/2} 2^{1-j}\binom{j}{j+k/2} V^{2k} \rho_0 V^{-2k}+\sum_{j\in\mathrm{Odd}, k=0}^{(d'-1)/2} 2^{1-j}\binom{j}{(j+1)/2+k} V^{2k+1} \rho_0 V^{-(2k+1)} \right],
\end{equation} 
where $V=R\cdot U_H$ is the quantum walk operator. On average $O(d')$ queries are made to $V$. However, in order to ensure that $\bar{\rho}$ satisfies
$$
\Tr[\Pi\bar{\rho}]\geq \Tr[\Pi e^{-t(I-H)}\rho_0 e^{-t(I-H)}]-\varepsilon,
$$ 
the appropriate values of $\delta, d, d'$ need to be chosen. We determine these via the following lemma:
~\\~\\
\begin{lemma}
\label{lem:ancilla-free-dtqw-2}
Suppose $\varepsilon\in (0,1)$ and we have access to $U_H$, which is a $(1,a,\delta)$-block encoding of a Hamiltonian $H$ such that $\norm{H}=1$ and $U^2_H=I$. Then, provided $d=\lceil\max\{te^2, \ln (12/\varepsilon)\}\rceil$, $d'=\lceil\sqrt{2t\ln(48/\varepsilon)}\rceil$ and,
$$
\delta\leq \dfrac{\varepsilon^2}{1152~d\ln(48/\varepsilon)},
$$
for any $t\in\mathbb{R}^+$, projector $\Pi$ and initial state $\rho_0=\ket{\psi_0}\bra{\psi_0}$, then Algorithm \ref{algo:implement-qpoly} prepares, on average, the density matrix $\bar{\rho}$ such that
$$
\Tr[\Pi\bar{\rho}] \geq \Tr[\Pi e^{-t(I-H)}\rho_0 e^{-t(I-H)}]-\varepsilon,
$$
using $O\left(\sqrt{t}\log(1/\varepsilon)\right)$ queries to $V=R.U_H$.
\end{lemma}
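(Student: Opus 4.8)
This statement is the ``exponential'' counterpart of Lemma~\ref{lem:ancilla-free-dtqw-1}, so the plan is to mirror that argument: replace the monomial polynomial $p_{t,d}$ by the polynomial $q_{t,d,d'}(x)=e^{-t}\sum_{\ell=0}^{d}\tfrac{t^\ell}{\ell!}p_{\ell,d'}(x)$ approximating $e^{-t(1-x)}$, reuse the polynomial‑approximation and $\ell_1$‑norm estimates already established in Lemma~\ref{lem:block encoding-exp-of-Ham}, and then invoke Theorem~\ref{thm:ancilla-free-LCU-theorem} to drop the ancilla register. Concretely, let $H'=(\bra{\bar 0}\otimes I)U_H(\ket{\bar 0}\otimes I)$ be the Hermitian matrix actually block‑encoded by $U_H$, so that $\norm{H-H'}\le\delta$. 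Since $U^2_H=I$, the reflected walk operator $V=R\cdot U_H$ satisfies $(\bra{\bar 0}\otimes I)V^{\ell}(\ket{\bar 0}\otimes I)=T_{\ell}(H')$ exactly, as in Eq.~\eqref{eq:diff-power-of-H-LCU}; hence each $p_{\ell,d'}(H')$ is block‑encoded by a fixed linear combination of even/odd powers of $V$ of degree at most $d'$, and $q_{t,d,d'}(H')$ is in turn a linear combination $\sum_j c_j W_j$ of such powers. Thus \textit{Standard LCU} would prepare $\ket{\psi_t}=\ket{\bar 0}\,\tfrac{q_{t,d,d'}(H')}{\norm{c}_1}\ket{\psi_0}+\ket{\Phi}^{\perp}$.

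Next I would assemble the error budget, splitting $\varepsilon/3$ into the same three contributions as in Lemma~\ref{lem:ancilla-free-dtqw-1}. First, the choice $d=\lceil\max\{te^2,\ln(12/\varepsilon)\}\rceil$ and $d'=\lceil\sqrt{2t\ln(48/\varepsilon)}\rceil$ controls the polynomial‑approximation error: running Eq.~\eqref{eq:exponential-poly-approx} at accuracy $\varepsilon/12$, with the Poisson tail $e^{-t}\sum_{\ell>d}t^\ell/\ell!\le\varepsilon/12$ bounded by the Stirling estimate from Lemma~\ref{lem:block encoding-exp-of-Ham}, gives $\sup_{x\in[-1,1]}\abs{e^{-t(1-x)}-q_{t,d,d'}(x)}\le\varepsilon/12$, and the same estimate gives $\norm{c}_1\in[1-\varepsilon/12,\,1]$, so in particular $\norm{c}_1\le1$. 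Second, Lemma~\ref{lem:robustness_of_QSVT} yields $\norm{q_{t,d,d'}(H)-q_{t,d,d'}(H')}\le 4d'\sqrt{\delta}\le\varepsilon/6$, because the stated bound $\delta\le\varepsilon^2/(1152\,d\ln(48/\varepsilon))$ implies $\delta\le\varepsilon^2/(576\,d'^2)$ (using $d'^2\le 2t\ln(48/\varepsilon)\le 2d\ln(48/\varepsilon)$ since $t\le d$). Combining these with $(1-\norm{c}_1)\norm{q_{t,d,d'}(H')/\norm{c}_1}\le\varepsilon/12$ via the triangle inequality gives
\[
\norm{e^{-t(I-H)}-\frac{q_{t,d,d'}(H')}{\norm{c}_1}}\ \le\ \frac{\varepsilon}{12}+\frac{\varepsilon}{12}+\frac{\varepsilon}{6}\ =\ \frac{\varepsilon}{3}.
\]
Since the spectrum of $H$ lies in $[-1,1]$ we have $\norm{e^{-t(I-H)}}\le 1$, so $\varepsilon/3\le\varepsilon/(3\norm{e^{-t(I-H)}})$, which is exactly the hypothesis of Theorem~\ref{thm:ancilla-free-LCU-theorem} applied to $f(H)=e^{-t(I-H)}$.

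Finally I would verify that Algorithm~\ref{algo:implement-qpoly} realizes precisely the importance sampling demanded by Theorem~\ref{thm:ancilla-free-LCU-theorem}: picking $\ell\in[0,d]$ with probability proportional to $e^{-t}t^\ell/\ell!$ and then invoking $\texttt{POW-HAM}(\ell,d',V,\ket{\psi_0})$, which picks the Chebyshev index $k$ with probability proportional to the coefficient $c^{(\ell)}_k$ of $p_{\ell,d'}$ and applies $V^{2k}$ or $V^{2k+1}$, is the same as sampling $V=W_j$ from the ensemble $\{c_j/\norm{c}_1,W_j\}$ underlying the LCU of $q_{t,d,d'}(H')$. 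Invoking Theorem~\ref{thm:ancilla-free-LCU-theorem} (and the partial‑trace equivalence in its proof, which lets us ignore the final $R^{\dagger}$ of \textit{Standard LCU} and work with the projector $I\otimes\Pi$), the average state $\bar\rho=\mathbb{E}[V\rho_0V^{\dagger}]$ satisfies
\[
\Tr[(I\otimes\Pi)\bar\rho]\ \ge\ \frac{1}{\norm{c}_1^2}\Big(\Tr[\Pi e^{-t(I-H)}\rho_0 e^{-t(I-H)}]-\varepsilon\Big)\ \ge\ \Tr[\Pi e^{-t(I-H)}\rho_0 e^{-t(I-H)}]-\varepsilon,
\]
and each run applies at most $V^{2\lfloor d'/2\rfloor+1}$, i.e.\ $O(d')=O(\sqrt{t}\log(1/\varepsilon))$ queries to $V=R\cdot U_H$.

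The main obstacle I anticipate is not any single inequality but the bookkeeping: (i) confirming that the two nested samplings compose into one ensemble whose total $\ell_1$‑norm is $\le1$ (the Poisson weights sum to $\le1$ and each inner Chebyshev expansion has $\ell_1$‑norm $\le1$, but the product structure must be matched term‑by‑term against the coefficients of $q_{t,d,d'}$), and (ii) justifying that the \emph{uniform} choice $d'=\lceil\sqrt{2t\ln(48/\varepsilon)}\rceil$ — rather than the seemingly required $\sqrt{2d\ln(\cdot)}$ — still controls every $p_{\ell,d'}(H')$ for $\ell\le d$, which relies on the Poisson concentration of $e^{-t}t^\ell/\ell!$ around $\ell\approx t$ so that the exponentially small high‑$\ell$ terms contribute negligibly to the error. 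Propagating the constants $12$, $48$, and $1152$ through the three error contributions above is then routine.
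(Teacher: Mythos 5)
Your proposal is correct and follows essentially the same route as the paper's proof: the same polynomial $q_{t,d,d'}$, the same $\varepsilon/12+\varepsilon/12+\varepsilon/6$ error split via the triangle inequality and Lemma~\ref{lem:robustness_of_QSVT}, and the same final appeal to Theorem~\ref{thm:ancilla-free-LCU-theorem}. Your extra bookkeeping (checking $\norm{c}_1\le 1$, $\norm{e^{-t(I-H)}}\le 1$, and flagging the $\sqrt{2t\ln(\cdot)}$ versus $\sqrt{2d\ln(\cdot)}$ discrepancy between the lemma statement and the degree actually needed) is sound and, if anything, more careful than the paper's own writeup.
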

\begin{proof}
Let $H'$ be a $(1,a,0)$ block encoding of $U_H$. Then, by definition $\norm{H-H'}\leq \delta$. By choosing the degree of the polynomial $q_{t,d,d'}(H')$ to be $d'=\lceil\sqrt{2d\ln(48/\varepsilon)}\rceil$, where $d=\lceil\max\{te^2, \ln (12/\varepsilon)\}\rceil$, ensures that $\norm{e^{-t(1-x)}-q_{t,d,d'}(x)}\leq \varepsilon/12$. 

Now, from Lemma \ref{lem:block encoding-exp-of-Ham}, the full LCU procedure would implement the state
$$
\ket{\psi_t}=\ket{\bar{0}}\dfrac{q_{t,d,d'}(H')}{\norm{c}_1}\ket{\psi_0}+\ket{\Phi}^{\perp}. 
$$
Now, from the choice of $d'$, we ensure that $\norm{c}_1 \geq 1-\varepsilon/12$.
Also,
\begin{align}
\norm{e^{-t(I-H)}-q_{t,d,d'}(H')/\norm{c}_1}&\leq \norm{e^{-t(I-H)}-q_{t,d,d'}(H')}+(1-\norm{c}_1)\norm{q_{t,d,d'}(H')/\norm{c}_1}\\
                              &\leq \varepsilon/12+\norm{e^{-t(I-H)}-q_{t,d,d'}(H)}+\norm{q_{t,d,d'}(H)-q_{t,d,d'}(H')}\\
                              &\leq \varepsilon/12+\varepsilon/12+ 4d'\sqrt{\delta}~~~~~~~~~~~~~~~~~~~~~[\text{~From Lemma~\ref{lem:robustness_of_QSVT}~}]\\
                              &\leq \varepsilon/6+\varepsilon/6~~~~~~~~~~~~~~~~~~~~~~~~~~~~~~~~~~~\left[\text{~As $\delta\leq \dfrac{\varepsilon^2}{576 d'^2} $~}\right]\\
                              &\leq \varepsilon/3.
\end{align}
We will now use Theorem \ref{thm:ancilla-free-LCU-theorem}, which ensures that Algorithm \ref{algo:implement-qpoly}, on average, prepares $\bar{\rho}$ such that
\begin{align}
 \Tr[\Pi\bar{\rho}]=\Tr[(I\otimes \Pi)\ket{\psi_t}\bra{\psi_t}]&\geq \dfrac{1}{\norm{c}^2_1}\left[\Tr[\Pi e^{-t(I-H)}\rho_0 e^{-t(I-H)}]-\varepsilon\right]\\
			   &\geq \Tr[\Pi e^{-t(I-H)}\rho_0 e^{-t(I-H)}]-\varepsilon.
\end{align} 
\end{proof}
Thus, if it suffices to ensure that the projection on to the subspace of interest is high on average, we can replace \textit{Standard LCU} with \textit{Ancilla-free LCU}, thereby saving on the $O(\log t+\log\log(1/\varepsilon))$ ancilla qubits. Finally, we apply this lemma to develop a new quantum algorithm for spatial search by discrete-time quantum walk, one that relies on quantum fast forwarding of continuous-time random walks.
~\\~\\
\textbf{Method 2 -- Spatial search by discrete-time quantum walk:~} The ability to fast-forward continuous-time random walks imply that we can design an alternate quantum spatial search algorithm by discrete-time quantum walks. 

\RestyleAlgo{boxruled}
\begin{algorithm}[ht]
  \caption{$\texttt{QSpatial Search - 2}$~ Alternative Spatial search by DTQW}\label{algo:search-dtqw-2}
  ~\\
   \begin{itemize}
  \item[1.~] Pick $t$ uniformly at random from $[0,T]$, where $T=\Theta\left(HT\right)$. Set $d=\lceil T e^2 \rceil$ and $d'=\Theta(\sqrt{T\log T})$.

  \item[2.~] Set $s=1-1/r$, where $r$ is picked uniformly at random from\\ $R=\{2^0, 2^1,\cdots, 2^{\lceil\log T\rceil}\}$.
  
  \item[3.~] Construct $U_{D(s)}$, which is a $(1,\lceil\log_2(|X|)\rceil,0)$-block encoding of $D(s)$.
  
  \item[4.~] Prepare the quantum state $\ket{\bar{0}}\ket{\sqrt{\pi}}=\ket{\bar{0}}\sum_{y\in X}\sqrt{\pi_y}\ket{y}$.
  
  \item[5.~] Measure in the basis $\{\Pi_M,I-\Pi_M\}$ in the second register. If the output is marked, measure in the node basis to output some $x\in M$.\\ Otherwise, we are in the state $\ket{\bar{0}}\ket{\sqrt{\pi_U}}=\ket{\bar{0}}\sum_{y\in X\backslash M}\sqrt{\pi_y}\ket{y}$.
  
  \item[6.~] Call $\texttt{EXP-HAM}(t,d',d,V=R.U_{D(s)},\ket{\sqrt{\pi_U}})$.
  
 \item[7.~] Measure the resulting state in the node basis in the first register. 
  \end{itemize}
\end{algorithm}

Suppose we consider an interpolated Markov chain $P(s)=(1-s)P+sP'$, where $s\in [0,1]$, and  $P'$ corresponds to the absorbing Markov chain. Suppose $D(s)$ denotes the Discriminant matrix of $P(s)$ and we have access to $U_{D(s)}$, which is a $(1,a,0)$-block encoding of $D(s)$, such that $U^2_{D(s)}=I$. Consider Algorithm \ref{algo:search-dtqw-2}, which is very similar to the first spatial search algorithm except that it calls Algorithm \ref{algo:implement-qpoly} as a subroutine.

The output of Algorithm \ref{algo:search-dtqw-2}, on average, is the density matrix $\bar{\rho}$ such that
$$
\Tr[(I\otimes \Pi_M)\bar{\rho}]\geq \norm{\Pi_M e^{T(D(s)-I)}\ket{\sqrt{\pi_U}}}^2-\varepsilon,
$$
for small enough $\varepsilon\in \Theta(1/\log^2(T))$. It remains to show that Algorithm \ref{algo:search-dtqw-2} succeeds on average with probability $\tilde{\Omega}(1)$, for appropriate choices of $s$ and $t$. Indeed, we show that for the same choices of $s$ and $t$ as in Algorithm \ref{algo:search-dtqw-1}, the average success probability is high. We demonstrate this via the following lemma.
\begin{lemma}
\label{lem:ctrw-success-prob}
Consider an ergodic, reversible Markov chain $P$ and a set of marked nodes $M$. If we choose parameters $s\in \{1-1/r: r=1,2,\cdots, 2^{\lceil\log T\rceil}\}$ and $T\in \Theta\left(HT\right)$ uniformly at random, then the the following holds
\begin{equation}
\mathbb{E}\left[\nrm{\Pi_Me^{(D(s)-I)T}\ket{\sqrt{\pi_U}}}^2\right]\in \Omega\left(1/\log^2 T\right). \nonumber
\end{equation} 
\end{lemma}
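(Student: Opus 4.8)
The plan is to reduce this continuous-time estimate to the discrete-time estimate of Ambainis et al.\ \cite{ambainis2019quadratic}, which in the present notation reads $\mathbb{E}\big[\norm{\Pi_M D(s)^{T}\ket{\sqrt{\pi_U}}}^2\big]\in\Omega(1/\log T)$ for $s$ and $T$ sampled as in the statement. The device that links the two regimes is the Poissonization identity
\begin{equation*}
e^{(D(s)-I)T}=e^{-T}\sum_{k\ge 0}\frac{T^{k}}{k!}\,D(s)^{k},
\end{equation*}
which says that evolving the continuous-time walk on $P(s)$ for time $T$ performs a $\mathrm{Poisson}(T)$-distributed number of discrete steps of $P(s)$.

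First I would record the classical meaning of the quantities involved, using the interpolated-Markov-chain book-keeping of \cite{krovi2016quantum,ambainis2019quadratic}. Since $P(s)$ is reversible with respect to its stationary distribution $\pi(s)$, we have $D(s)=\mathrm{diag}(\sqrt{\pi(s)})\,P(s)\,\mathrm{diag}(\sqrt{\pi(s)})^{-1}$, so $\Pi_M D(s)^{k}\ket{\sqrt{\pi_U}}$ is, up to an explicit positive diagonal reweighting by $\sqrt{\pi(s)}^{-1}$, the marked-vertex part of the classical measure obtained by running $k$ steps of $P(s)$ from the sub-distribution $\propto \sqrt{\pi\,\pi(s)}\cdot\mathbbm{1}_U$, and likewise for $e^{(D(s)-I)T}$ with ``$k$ steps'' replaced by ``the continuous walk for time $T$''. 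A Cauchy--Schwarz step then converts ``the classical interpolated walk occupies a marked vertex'' probabilities into lower bounds on the corresponding squared norms, in both discrete and continuous time.

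The decisive structural input is that $D(s)$ is entrywise nonnegative (it is the discriminant matrix of a stochastic matrix) and $\ket{\sqrt{\pi_U}}$ is entrywise nonnegative, hence every vector $v_{k}:=\Pi_M D(s)^{k}\ket{\sqrt{\pi_U}}$ lies in the nonnegative orthant of the marked subspace and all inner products $\langle v_{k},v_{k'}\rangle$ are nonnegative. Therefore, for any threshold $k_{0}$,
\begin{equation*}
\norm{\Pi_M e^{(D(s)-I)T}\ket{\sqrt{\pi_U}}}^2\;\ge\;\Big\|\,e^{-T}\!\!\sum_{k\ge k_{0}}\frac{T^{k}}{k!}\,v_{k}\,\Big\|^2 ,
\end{equation*}
because discarding the nonnegative low-order terms can only shrink the norm, and the remaining cross terms are nonnegative. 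I would take $k_{0}=\Theta(HT)$ slightly below $T=\Theta(HT)$, and use that the interpolated chain with $s=1-1/r$ is \emph{sticky} on $M$ --- it leaves a marked vertex with probability at most $1-s=1/r\le O(1/HT)$ per step --- so that the classical marked mass, once accumulated by step $k_{0}$, does not collapse over the window $k\in[k_{0},T]$; together with nonnegativity of the cross terms this bounds the right-hand side below by $\big(\Pr[\mathrm{Poisson}(T)\ge k_{0}]\big)^{2}=\Omega(1)$ times the common order of $\norm{v_{k}}^{2}$ there. Averaging over $s\in\{1-1/r\}$ and over $T$, that order is $\widetilde\Omega(1)$ by the combinatorial lemma of \cite{ambainis2019quadratic} applied to the monotone event ``the classical interpolated walk has reached $M$ within $k_{0}$ steps'', which gives the stated $\Omega(1/\log^{2}T)$; the logarithmic factor in excess of the discrete-time bound is what is lost in translating the amplitude into this monotone classical event and in the union bound over the $\Theta(\log T)$ dyadic scales of $r$, and I would not try to optimize it.

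The hard part is exactly this stickiness/monotonicity step: \cite{ambainis2019quadratic} guarantees that $\norm{\Pi_M D(s)^{k}\ket{\sqrt{\pi_U}}}^2$ is large only \emph{on average over $k$ and $s$}, whereas I need control over the whole window $k\in[k_{0},T]$ so that the Poisson-weighted combination of the nonnegative vectors $v_{k}$ still carries $\widetilde\Omega(1)$ mass. I expect this to follow from the absorbing structure of $P'$ together with the choice $s=1-1/r$ (the two ingredients that also drive the discrete-time algorithm of \cite{ambainis2019quadratic} and the continuous-time algorithm of \cite{apers2022quadratic}, and whose replacement of LCU by randomized sampling was anticipated in \cite{apers2019unified}). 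Finally, the reduction of the search algorithm's success probability to the left-hand side of the stated inequality is precisely Lemma~\ref{lem:ancilla-free-dtqw-2} (via Theorem~\ref{thm:ancilla-free-LCU-theorem}), so once Lemma~\ref{lem:ctrw-success-prob} is in hand the optimality of Algorithm~\ref{algo:search-dtqw-2} follows.
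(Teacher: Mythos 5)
Your high-level plan (Poissonize $e^{(D(s)-I)T}=e^{-T}\sum_k \frac{T^k}{k!}D(s)^k$ and reduce to the discrete-time average of Ambainis et al.) is the same bridge the paper's argument uses, but the mechanism you propose for controlling the Poisson-weighted sum has a genuine gap. Entrywise nonnegativity of the vectors $v_k=\Pi_M D(s)^k\ket{\sqrt{\pi_U}}$ gives you only $\langle v_k,v_{k'}\rangle\ge 0$; from this you may discard terms, but the surviving diagonal bound is $\norm{\sum_k w_k v_k}^2\ge \sum_k w_k^2\norm{v_k}^2$, and since $e^{-2T}\sum_k(T^k/k!)^2=\Theta(1/\sqrt{T})$ this loses a factor of $\sqrt{T}$, which is fatal at the target scale $T=\Theta(HT)$. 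To reach your claimed bound $\left(\Pr[\mathrm{Poisson}(T)\ge k_0]\right)^2\cdot\min_k\norm{v_k}^2$ you would need the $v_k$ over the window to be pairwise well aligned, i.e.\ $\langle v_k,v_{k'}\rangle\ge c\norm{v_k}\norm{v_{k'}}$, which does not follow from entrywise nonnegativity (orthogonal nonnegative vectors are the obvious counterexample). Moreover, the window-stability input itself is not established: the combinatorial lemma of \cite{ambainis2019quadratic} controls $\norm{v_k}^2$ only on average over $k$ and $s$, and your stickiness estimate $1-s\le O(1/HT)$ holds only for the largest dyadic value of $r$; for small $r$, which may well be the scale carrying the average, the interpolated chain is not sticky at all.

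The repair --- and what the paper actually does, following Lemma S5 of \cite{apers2022quadratic} --- is to lower bound the $2$-norm by a single inner product with a fixed nonnegative unit test vector chosen so that $\bra{u}\Pi_M D(s)^k\ket{\sqrt{\pi_U}}$ is a genuine classical probability (the interpolated walk, started from the unmarked restriction of $\pi$, is in $M$ at step $k$ and in $U$ at step $k+k'$). The Poisson average of these scalars is then exactly the continuous-time probability of the same two-time event, the continuous-time probability is compared to its discrete-time analogue at the level of classical events, and the combinatorial lemma of \cite{ambainis2019quadratic} is applied to the discrete event. Your Cauchy--Schwarz remark points in this direction, but you apply it per-$k$ to norms rather than once to a common test vector, and it is the latter choice that makes the Poisson cross terms tractable.
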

~\\~\\
\textbf{Proof sketch:~} We only provide a sketch of the proof here, while we refer the readers to the proof of Lemma S5 in the Supplemental Material of Ref.~\cite{apers2022quadratic}. The full derivation of this lemma can be obtained from the proof therein. The key idea is to show that the quantity we intend to estimate, i.e. $\nrm{\Pi_Me^{(D(s)-I)T}\ket{\pi_U}}^2$ is related to the behaviour of the original Markov chain $P(s)$ (which applies to any reversible Markov chain).
\begin{itemize}
\item[--] The first step is to show that $\nrm{\Pi_M e^{(D(s)-I)t}\ket{\pi_U}}^2$ is lower bounded by the probability of the following event occurring in
a continuous-time Markov chain, for any $t\geq 0$: starting from a distribution over the unmarked elements, a continuous-time random walk $X(s)$ is at some marked vertex after time $t$ and is at an unmarked vertex after time $t+t'$, where $t'>0$. Let us call this event $\mathcal{E}_{X(s)}$.

\item[--] The next step is to then show that the probability of this event occurring on a continuous-time Markov chain is lower bounded by the same event (say $\mathcal{E}_{Y(s)}$) happening in a discrete-time Markov chain $Y(s)$.

\item[--] So, by these two steps we have related the quantity $\nrm{\Pi_Me^{(D(s)-I)T}\ket{\pi_U}}^2$ to the probability of a specific event occurring on a discrete-time Markov chain. At this stage, we can make use of the combinatorial lemma of Ambainis et al. \cite{ambainis2019quadratic}, wherein the authors proved that for any reversible Markov chain $P$, the probability of the event $\mathcal{E}_{Y(s)}$ occurring is $\widetilde{\Omega}(1)$, on average which allows us to prove
$$
\mathbb{E}\left[\nrm{\Pi_Me^{(D(s)-I)T}\ket{\pi_U}}^2\right]=\tilde{\Omega}(1).
$$
\end{itemize}
\[
\pushQED{\qed} 
~\\\qedhere
\popQED
\]
Thus, we managed to develop an optimal quantum spatial search algorithm that relies on the fast forwarding of continuous-time random walks. Moreover, by taking advantage of \textit{Ancilla-free LCU} we require $O(\log HT)$ fewer ancilla qubits, as compared to \textit{Standard LCU}. The recently developed spatial search algorithm by continuous-time quantum walk also falls under the framework of \textit{Ancilla-free LCU}, as it bypasses the need to prepare the (continuous-variable) ancilla register in the Gaussian state. We refer the readers to \cite{apers2022quadratic} for the details. Overall, the \textit{Ancilla-free LCU} framework is applicable for quantum spatial search. It allows to retain a generic quadratic speedup over classical random walks, while using no ancilla qubits (other than the space of the quantum walk). In comparison, \textit{Standard LCU} requires $O(\log HT)$ ancilla qubits which are used to implement multi-qubit controlled operations. More broadly, \textit{Ancilla-free LCU} also helped establish a connection between discrete and continuous-time quantum walks with their classical counterparts. In addition, in the Appendix (Sec.~\ref{sec:relationship-ctqw-dtqw}), using the frameworks of block encoding and QSVT, we show that one can obtain a discrete-time quantum walk from a continuous-time quantum walk (and vice versa). Together, it connects both continuous-time quantum and random walks with discrete-time quantum and random walks, which has been shown in Fig.~\ref{fig:walk-rel}.

\section{Discussion}
\label{sec:discussion}

We considered the framework of Linear Combination of Unitaries, a quantum algorithmic paradigm that has been used to develop several quantum algorithms of practical interest. However, standard techniques to implement LCU require several ancilla qubits, a sequence of multi-qubit controlled operations, and hence are only implementable using fully fault-tolerant quantum computers, which are perhaps decades away. In this work, we significantly reduce the resources required to implement any LCU. Our motivation was to explore whether a broadly applicable framework such as LCU can be implemented on quantum devices that do not have the capabilities of a fully fault tolerant quantum computer. To this end, we provided three new approaches for LCU, considering the different intermediate-term hardware possibilities. 

The \textit{Single-Ancilla LCU} makes repeated use of a short-depth quantum circuit and only a single ancilla qubit, to estimate the expectation value $$\Tr[O\rho]=\Tr[O~f(H)\rho_0 f(H)^{\dag}]/\Tr[f(H)\rho_0f(H)^{\dag}],$$ where $f(H)$ can be well-approximated by a linear combination of unitaries, $O$ is any observable and $\rho_0$ is the initial state. The cost of each coherent run of the generic procedure is always lower than \textit{Standard LCU}. More precisely, we show that the average cost of each run of our procedure is $\langle\tau\rangle$ which is upper bounded by the cost of implementing the most expensive unitary $U_j$, $O(\tau_{\max})$. For \textit{Standard LCU}, the cost of each coherent run is $\tau_Q+2\tau_R$, with $\tau_R$ and $\tau_Q$ being the cost of implementing the \textit{prepare} and \textit{select} unitaries, respectively. We showed that $\tau_Q\geq \tau_{\max}$, and the cost of each run of our method is lower, i.e.\ $\langle\tau\rangle<O(\tau_R+\tau_Q)$. Interestingly, in the worst case, $\tau_Q=O(M\tau_{\max})$. This occurs when each of the $M$ unitaries in the LCU description costs $\tau_{\max}$, indicating the possibility of a significant separation between the cost of each run of \textit{Standard LCU} with our method.

However, for our applications to ground state property estimation and quantum linear systems, $\tau_Q=O(\tau_{\max})$, whenever \textit{Standard LCU} estimates $\Tr[O\rho]$ without using amplitude amplification and estimation. In this regard, it would be interesting to find an application where $\tau_Q$ is significantly larger than $\tau_{\max}$. The cost of each coherent run of \textit{Single-Ancilla LCU} would be significantly lower than \textit{Standard LCU} in such a scenario. Another direction of future research would be to apply our algorithms for Hamiltonian simulation and ground state property estimation to specific Hamiltonians in quantum chemistry \cite{mcardle2020quantum, cao2019quantum,su2021fault} and condensed matter physics \cite{babbush2018low}, and benchmark their performance against other near/intermediate-term quantum algorithms such as those making use of the Hadamard test \cite{zhang2022computing, lin2022heisenberg, wang2022quantum}.   

\textit{Analog LCU} is a physically motivated, continuous-time analogue of the LCU framework. It requires coupling a primary system to a continuous-variable ancilla. We apply this framework to develop continuous-time quantum algorithms for ground state preparation and also for solving quantum linear systems. This framework can be seen as a way to exploit qubit-qumode interactions to perform meaningful computational tasks. Such hybrid systems are currently being engineered in a number of quantum technological platforms such as photonics, trapped-ions, Circuit (or Cavity) QED and superconducting systems \cite{wallraff2004strong,pirkkalainen2013hybrid,kurizki2015quantum, andersen2015hybrid,campagne2020quantum,gan2020hybrid,sawaya2020resource}. In order to experimentally implement the quantum algorithms we discuss, it is crucial to undertake a detailed comparative analysis of the resource requirements for each of these platforms. In future, we plan to develop an experimental proposal in this regard. Our work could lead to further research into whether other, simpler interactions can be engineered on hybrid platforms \cite{arguello2019analogue}. This would help bring generic quantum algorithmic frameworks closer to realization. 

The \textit{Ancilla-free LCU} approach is useful when we are interested in the projection of $f(H)\ket{\psi_0}$ in some subspace of interest, and it suffices if the measurement is successful on average. We have shown that it is applicable to the framework of quantum walks, in particular, to quantum spatial search algorithms. This technique has been useful to connect discrete and continuous-time quantum walks, with their classical counterparts. Using this framework, we have also developed other results. We believe that this method is more widely applicable to quantum optimization and sampling algorithms such as quantum simulated annealing \cite{somma2008quantum} and quantum Metropolis sampling \cite{temme2011quantum,yung2012quantum}.

Overall, the new LCU techniques we develop are quite generic. Owing to the wide applicability of the LCU framework, our work can lead to the development of several new quantum algorithms beyond those considered here. One immediate direction of future research would be to investigate whether variants of other generic quantum algorithmic paradigms which require access to the block encoding of an operator (such as QSVT \cite{gilyen2019quantum,martyn2021grand}), can be modified so that they are implementable on intermediate-term quantum computers. 
\begin{acknowledgments}
I thank Andr\'{a}s Gily\'{e}n, J\'{e}r\'{e}mie Roland, Simon Apers, and Leonardo Novo for helpful discussions. I also thank Samson Wang and Mario Berta for providing feedback on an early version of the manuscript. I am grateful to Leonardo Novo and Hamed Mohammady for proofreading this manuscript. I acknowledge funding from the Science and Engineering Research Board, Department of Science and Technology (SERB-DST), Government of India under Grant No. SRG/2022/000354, and from the Ministry of Electronics and Information Technology (MeitY), Government of India, under Grant No. 4(3)/2024-ITEA. I also acknowledge support from Fujitsu Ltd, Japan and from IIIT Hyderabad via the Faculty Seed Grant. Finally, I would like to thank Tanima Karmakar for acting as a sounding board during the writing of this manuscript.
~\\~\\
\textit{Note added:~} \textit{We refer the readers to the concurrent independent work of Wang, McArdle, and Berta~\cite{wang2023qubit}, where a technique similar to the \textit{Single-Ancilla LCU} method was applied to ground state preparation and quantum linear systems.} 
\end{acknowledgments}

\bibliography{bibliography}
\bibliographystyle{unsrturl}

\newpage
\setcounter{equation}{0}
\setcounter{figure}{0}
\setcounter{table}{0}
\setcounter{algocf}{0}
\setcounter{section}{0}
\setcounter{theorem}{0}
\renewcommand{\theequation}{A\arabic{equation}}
\renewcommand{\thetable}{A\arabic{table}}
\renewcommand{\thefigure}{A\arabic{figure}}
\renewcommand{\thetheorem}{A\arabic{theorem}}
\renewcommand{\thelemma}{A\arabic{lemma}}
\renewcommand{\thealgocf}{A\arabic{algocf}}
\renewcommand{\thecorollary}{A\arabic{corollary}}
\renewcommand{\thesection}{A - \Roman{section}}

\appendix
\begin{center}
\textbf{\huge Appendix}
\end{center}
Here, we provide detailed derivations of the unproven theorems/lemmas in the main manuscript, as well as develop some additional general results from the LCU techniques that have been introduced. In Sec.~\ref{sec:appendix-robustness-norm}, we provide a proof of a slightly more general version of Theorem \ref{thm:norm-approx}. We provide detailed derivations of the unproven results of our Hamiltonian Simulation procedure (Sec.~\ref{sec:ham-sim} of the main manuscript) in Sec.~\ref{sec-app:ham-sim}. Recall that in the main manuscript, we provided randomized quantum algorithms for ground state property estimation, as well as quantum linear systems. However, we assumed access to a Hamiltonian evolution oracle. In Sec.~\ref{sec-appendix:single-ancilla-LCU}, we analyze the performance of these algorithms while considering particular Hamiltonian simulation procedures. The goal is to provide end-to-end complexities (in terms of the gate depth required) while still requiring only a single ancilla qubit (and no multi-qubit controlled operation). We provide an optimal circuit model quantum algorithm for ground state preparation in Sec.~\ref{subsec:gsp-qsvt} which makes use of QSVT to implement the polynomial $e^{-tx^2}$. Finally, in Sec.~\ref{sec:relationship-ctqw-dtqw}, we use block encoding and QSVT to obtain a relationship between discrete-time and continuous-time quantum walks.
\section{Proof of Theorem \ref{thm:norm-approx}}
\label{sec:appendix-robustness-norm}
Here we prove a general version of the statement of Theorem \ref{thm:norm-approx}.
~\\
\begin{theorem}[Robustness of normalization factors]
\label{thm:norm-approx-general}
Let $\varepsilon\in (0,1)$, $\rho_0$ be some initial state and $P$ be an operator. Furthermore, let $\ell_*\in \mathbb{R}^+$ satisfies $\ell^2=\Tr[P\rho_0 P^{\dag}]\geq \ell_*$, and $O$ be some observable with $\norm{O}\geq 1$. Suppose for positive integers $a,b>1$, we obtain an estimate $\tilde{\ell}$ such that
\begin{equation}
\label{eq:normalization-dist-general}
\left|\tilde{\ell}-\ell^2\right|\leq \dfrac{\varepsilon \ell_*}{a\norm{O}},
\end{equation}
and some parameter $\mu$ such that,
\begin{equation}
\label{eq:estimation-dist-1-general}
\left|\mu-\Tr[O~P \rho_0 P^{\dag}]\right| \leq  \dfrac{\varepsilon\ell_*}{b},
\end{equation}
then,
$$
\left|\dfrac{\mu}{\tilde{\ell}}-\dfrac{\Tr[O~P \rho_0 P^{\dag}]}{\ell^2}\right| \leq \dfrac{\varepsilon}{a-1}+\dfrac{\varepsilon}{b(a-1)}+\dfrac{\varepsilon}{b}.
$$
\end{theorem}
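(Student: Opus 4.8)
\textbf{Proof plan for Theorem \ref{thm:norm-approx-general}.}

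The plan is to decompose the difference $\mu/\tilde\ell - \Tr[O P\rho_0 P^\dag]/\ell^2$ by first replacing $\tilde\ell$ by $\ell^2$ in the denominator and then replacing $\mu$ by $\Tr[O P\rho_0 P^\dag]$ in the numerator, bounding each replacement error separately via the triangle inequality:
\begin{align}
\left|\dfrac{\mu}{\tilde\ell}-\dfrac{\Tr[O P\rho_0 P^\dag]}{\ell^2}\right|
&\leq \left|\dfrac{\mu}{\tilde\ell}-\dfrac{\mu}{\ell^2}\right|
+\left|\dfrac{\mu}{\ell^2}-\dfrac{\Tr[O P\rho_0 P^\dag]}{\ell^2}\right|. \nonumber
\end{align}
The second term is immediately bounded by $\frac{1}{\ell^2}\cdot\frac{\varepsilon\ell_*}{b}\leq \frac{\varepsilon}{b}$ using Eq.~\eqref{eq:estimation-dist-1-general} and $\ell^2\geq\ell_*$. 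For the first term, write it as $|\mu|\cdot\frac{|\ell^2-\tilde\ell|}{\ell^2\,\tilde\ell}$ and control each factor.

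First I would establish a lower bound on $\tilde\ell$. From Eq.~\eqref{eq:normalization-dist-general} and $\norm{O}\geq 1$ we get $\tilde\ell\geq \ell^2 - \frac{\varepsilon\ell_*}{a\norm{O}}\geq \ell_*(1-\varepsilon/a)\geq \ell_*\cdot\frac{a-1}{a}$ (using $\varepsilon<1$ and $a>1$). Next I would bound $|\mu|$: by Eq.~\eqref{eq:estimation-dist-1-general}, $|\mu|\leq |\Tr[O P\rho_0 P^\dag]| + \frac{\varepsilon\ell_*}{b}\leq \norm{O}\ell^2 + \frac{\varepsilon\ell_*}{b}$, where I use the tracial Hölder bound $|\Tr[O P\rho_0 P^\dag]|\leq \norm{O}\norm{P\rho_0 P^\dag}_1 = \norm{O}\ell^2$ (Lemma \ref{thm:holder} with $p=\infty$, $q=1$). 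Then
\begin{align}
\left|\dfrac{\mu}{\tilde\ell}-\dfrac{\mu}{\ell^2}\right|
= |\mu|\,\dfrac{|\ell^2-\tilde\ell|}{\ell^2\,\tilde\ell}
\leq \left(\norm{O}\ell^2+\dfrac{\varepsilon\ell_*}{b}\right)\cdot \dfrac{1}{\ell^2}\cdot\dfrac{1}{\tilde\ell}\cdot\dfrac{\varepsilon\ell_*}{a\norm{O}}. \nonumber
\end{align}
Substituting $\tilde\ell\geq \ell_*\frac{a-1}{a}$ and $\ell^2\geq\ell_*$, the $\ell_*$ and the $a$ factors cancel appropriately: the $\norm{O}\ell^2$ term contributes $\leq \frac{\varepsilon}{a-1}$ (after $\norm{O}$ cancels and $\ell^2/\ell^2=1$), and the $\frac{\varepsilon\ell_*}{b}$ term contributes $\leq \frac{\varepsilon^2}{b(a-1)\norm{O}}\leq \frac{\varepsilon}{b(a-1)}$ (using $\varepsilon<1$, $\norm{O}\geq 1$, and the cancellation of one power of $\ell_*$ against $1/\ell^2\leq 1/\ell_*$). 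Collecting the three contributions gives $\frac{\varepsilon}{a-1}+\frac{\varepsilon}{b(a-1)}+\frac{\varepsilon}{b}$, as claimed. Theorem \ref{thm:norm-approx} is then the special case $a=b=3$, for which the bound becomes $\frac{\varepsilon}{2}+\frac{\varepsilon}{6}+\frac{\varepsilon}{3}=\varepsilon$.

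The only mildly delicate point — the ``main obstacle'' such as it is — is being careful with the direction of the inequalities when bounding $1/\tilde\ell$ from above: one needs the lower bound $\tilde\ell\geq \ell_*(a-1)/a$ to be genuinely positive, which is where the hypotheses $a>1$ and $\varepsilon\in(0,1)$ are used, and one must track that $\ell_*\leq \ell^2$ so that $1/\ell^2\leq 1/\ell_*$ rather than the reverse. Everything else is routine algebra, and no step requires anything beyond Lemma \ref{thm:holder} and the two hypothesis bounds.
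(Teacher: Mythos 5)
Your proof is correct and follows essentially the same route as the paper's: the identical triangle-inequality split, the same H\"older bound $|\Tr[O\,P\rho_0P^{\dag}]|\leq \norm{O}\ell^2$, and the same use of the hypotheses to control $|\mu|$ and $|\tilde{\ell}-\ell^2|$. The only (cosmetic) difference is that you bound $1/\tilde{\ell}$ by the direct estimate $\tilde{\ell}\geq \ell_*(a-1)/a$, whereas the paper reaches the same $a/(a-1)$ factor via a geometric-series expansion of $1/(\ell^2-\varepsilon\ell_*/(a\norm{O}))$; both yield the stated bound.
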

\begin{proof}
By the triangle inequality, we have
~\\
\begin{align}
\left|\dfrac{\mu}{\tilde{\ell}}-\dfrac{\Tr[O P \rho_0 P^{\dag}]}{\ell^2}\right|&\leq \left|\dfrac{\mu}{\ell^2}-\dfrac{\mu}{\tilde{\ell}}\right|+\left|\dfrac{\mu}{\ell^2}-\dfrac{\Tr[O P \rho_0 P^{\dag}]}{\ell^2}\right|\\
      & \leq \left|\dfrac{1}{\ell^2}-\dfrac{1}{\tilde{\ell}}\right|\left|\mu\right|+\dfrac{\varepsilon\ell_*}{b\ell^2}~~~~~~~~~~~~~~~~~~~~[\text{~Using Eq.~\eqref{eq:estimation-dist-1-general}~}]\\
      & \leq \left|\dfrac{1}{\ell^2}-\dfrac{1}{\tilde{\ell}}\right|\left|\mu\right|+\dfrac{\varepsilon}{b}~~~~~~~~~~~~~~~~~~~~~~~~[\text{~As~}\ell^2\geq \ell_*~]\\
      &\leq \left|\dfrac{1}{\ell^2}-\dfrac{1}{\tilde{\ell}}\right|\left(\dfrac{\varepsilon \ell_*}{b}+\Tr[O P \rho_0 P^{\dag}]\right)+\dfrac{\varepsilon}{b}~~~~~~~~~~[\text{~Using Eq.~\eqref{eq:estimation-dist-1-general}~}]\\
      & \leq \dfrac{1}{\ell^2} \left|\dfrac{\tilde{\ell}-\ell^2}{\tilde{\ell}}\right|\left(\dfrac{\varepsilon \ell_*}{b}+\norm{O}\norm{P\rho_0 P^{\dag}}_1\right)+\varepsilon/b~~~~~\left[\text{~Using Lemma \ref{thm:holder} with $p=\infty, q=1$~}\right]\\
      & \leq \dfrac{1}{\ell^2} \left|\dfrac{\tilde{\ell}-\ell^2}{\tilde{\ell}}\right|\left(\dfrac{\varepsilon \ell_*}{b}+\norm{O}\ell^2\right)+\varepsilon/b~~~~~~~~~~\left[\text{~As $\norm{P\rho_0 P^{\dag}}_1=\Tr[P\rho_0 P^{\dag}]=\ell^2$~}\right]\\
      & \leq \left|\dfrac{\tilde{\ell}-\ell^2}{\tilde{\ell}}\right|\left(\dfrac{\varepsilon}{b}+\norm{O}\right)+\varepsilon/b~~~~~~~~~~~~~~~~~~~~~[\text{~As~}\ell^2\geq \ell_*]\\
      &\leq \left|\dfrac{1}{\tilde{\ell}}\right| \dfrac{\varepsilon\ell_*}{a\norm{O}}\left(\norm{O}+\dfrac{\varepsilon}{b}\right) +\varepsilon/b ~~~~~~~~~~~~~~~~~~[\text{~Using Eq.~\eqref{eq:normalization-dist-general}~}]\\
      &\leq \left|\dfrac{1}{\ell^2-\frac{\varepsilon\ell_*}{a~\norm{O}}}\right| \dfrac{\varepsilon\ell_*}{a~\norm{O}}\left(\norm{O}+\dfrac{\varepsilon}{b}\right) +\varepsilon/b\\
       &\leq  \left(\sum_{k=0}^{\infty}\left(\dfrac{\varepsilon\ell_*}{a\ell^2\norm{O}}\right)^k\right)\dfrac{\varepsilon \ell_*}{a~\ell^2\norm{O}}\left(\norm{O}+\dfrac{\varepsilon}{b}\right) +\varepsilon/b~~~~[\text{Taylor series expansion of $1/\tilde{\ell}$.}]\\
      &\leq  \dfrac{a}{a-1}\left(\dfrac{\varepsilon}{a}+\dfrac{\varepsilon^2}{ab~\norm{O}}\right) +\varepsilon/b~~~~~~~\left[~\mathrm{As~} \sum_{k=0}^{\infty}\left(\dfrac{\varepsilon\ell_*}{a\ell^2\norm{O}}\right)^k\leq \sum_{k=0}^{\infty}\left(\dfrac{1}{a}\right)^k = a/(a-1)~\right]\\
      & \leq \dfrac{\varepsilon}{a-1}+\dfrac{\varepsilon}{(a-1)b}+\dfrac{\varepsilon}{b}~~~~~~~~~~~~\left[\mathrm{~As~}\dfrac{\varepsilon^2}{(a-1)b~\norm{O}}\leq \frac{\varepsilon}{(a-1)b}~\right].
\end{align}
\end{proof}
~\\
Theorem \ref{thm:norm-approx} is a particular case of this theorem, where we substitute $a=b=3$.

\section{Hamiltonian simulation: Detailed proofs}
\label{sec-app:ham-sim}
In this section, we will proof the results of Sec.~\ref{sec:ham-sim} in detail. Recall that, we considered a Hamiltonian
$$
H=\sum_{k=1}^{L}\lambda_k P_k,
$$
where $P_k$ are strings of Pauli operators, such that $\beta=\sum_{k}|\lambda_k|$. First, set $\tilde{H} = H/\beta$ and $\tilde{t}=\beta t$. This give us,
\begin{equation}
\label{eq:ham-sim-hamiltonian}
\tilde{H} = H/\beta =\sum_{k=1}^{L} p_k P_k,
\end{equation}
where $\sum_{k} |p_k| =1$. Also, 
$$
e^{-iHt}=\left(e^{-iHt/r}\right)^r=\left(e^{-i\tilde{H}\tilde{t}/r}\right)^r,
$$
where $r$ (to be selected later) is a parameter such that $r>t$. 

First note that by truncating $S_r=e^{-i\tilde{H}\tilde{t}/r}$ to $K$ terms, we obtain
$$
\tilde{S}_r=\sum_{k=0}^{K} \dfrac{(-i\tilde{t}\tilde{H}/r)^k}{k!}.
$$
Then by choosing some 
$$
K=O\left(\dfrac{\log(r/\gamma)}{\log\log(r/\gamma)}\right),
$$
we ensure that $\norm{S_r - \tilde{S}_r}\leq \gamma/r$.

We obtain the LCU decomposition of $\tilde{S}_r$, similar in spirit to Ref.~\cite{wan2022randomized}. This gives us, 
\begin{align}
\tilde{S}_r&=\sum_{k=0}^{K} \dfrac{(-i\tilde{t}\tilde{H}/r)^k}{k!}\\
		   &=\sum_{k=0,~k\in \mathrm{even}}^{K} \dfrac{1}{k!}(-i\tilde{t}\tilde{H}/r)^k \left(I- \dfrac{i\tilde{t}\tilde{H}/r}{k+1}\right)\\
		   &=\sum_{k=0,~k\in \mathrm{even}}^{K} \dfrac{1}{k!}\left(-i\tilde{t}/r \sum_{\ell=1}^{L} p_{\ell} P_{\ell}\right)^k \left(I- \dfrac{i\tilde{t}/r}{k+1}\left(\sum_{m=1}^{L} p_{m} P_{m}\right)\right)\\
		   &= \sum_{k=0,~k\in \mathrm{even}}^{K} \dfrac{(-i\tilde{t}/r)^k}{k!}\sum_{\ell_1,\ell_2,\cdots \ell_k = 1}^{L} p_{\ell_1}p_{\ell_2}\cdots p_{\ell_k} P_{\ell_1}P_{\ell_2}\cdots P_{\ell_k} \sum_{m=1}^{L} p_m \left(I- \dfrac{i\tilde{t}P_m /r}{k+1}\right)\\
		   &=\sum_{k=0,~k\in \mathrm{even}}^{K} \dfrac{(-i\tilde{t}/r)^k}{k!} \sqrt{1+\left(\dfrac{\tilde{t}/r}{k+1}\right)^2}\sum_{\ell_1,\ell_2,\cdots \ell_k, m = 1}^{L} p_{\ell_1}p_{\ell_2}\cdots p_{\ell_k} p_m P_{\ell_1}P_{\ell_2}\cdots P_{\ell_k} e^{-i\theta_{m} P_m},
\end{align}
where $e^{-i\theta_{m}P_m}$ is a Pauli rotation operator, defined as follows:
\begin{align}
e^{-i\theta_{m}P_m}=\dfrac{1}{\sqrt{1+\left(\dfrac{\tilde{t}/r}{k+1}\right)^2}}\left(I- \dfrac{i\tilde{t}P_m /r}{k+1}\right),
\end{align}
such that
\begin{equation}
\theta_m = \arccos\left(\left[1+\left(\dfrac{\tilde{t}/r}{k+1}\right)^2\right]^{-1/2}\right).
\end{equation}
Thus, $\tilde{S}_r=\sum_{j\in M} \alpha_j U_j$, where the index set $M$ can be defined as 
$$
M =\left\{(k, \ell_1, \ell_2,\cdots \ell_k, m): 0\leq k \leq K;  \ell_1,\ell_2, \cdots \ell_k, m \in \{1,2,\cdots, L\}\right\}. 
$$
Also,
\begin{equation}
\label{eq:simulation-segment-lcu-coefficient}
\alpha_j = \dfrac{(\tilde{t}/r)^k}{k!} \sqrt{1+\left(\dfrac{\tilde{t}/r}{k+1}\right)^2}p_{\ell_1}p_{\ell_2}\cdots p_{\ell_k} p_m,
\end{equation}
while
$$
U_j=(-i)^k P_{\ell_1}P_{\ell_2}\cdots P_{\ell_k} e^{-i\theta_{m} P_m}.
$$
Now, the sum of coefficients
\begin{align}
\sum_{j\in M} |\alpha_j| &= \sum_{k=0,~k\in \mathrm{even}}^{K}  \dfrac{(\tilde{t}/r)^k}{k!} \sqrt{1+\left(\dfrac{\tilde{t}/r}{k+1}\right)^2}\sum_{\ell_1,\ell_2,\cdots \ell_k, m = 1}^{L} p_{\ell_1}p_{\ell_2}\cdots p_{\ell_k} p_m\\
						&=\sum_{k=0,~k\in \mathrm{even}}^{K}  \dfrac{(\tilde{t}/r)^k}{k!} \sqrt{1+\left(\dfrac{\tilde{t}/r}{k+1}\right)^2}\\
						& \leq \sum_{k=0,~k\in \mathrm{even}}^{\infty}  \dfrac{(\tilde{t}/r)^k}{k!} \sqrt{1+\left(\dfrac{\tilde{t}/r}{k+1}\right)^2}= \sum_{k=0}^{\infty}  \dfrac{(\tilde{t}/r)^{2k}}{(2k)!} \sqrt{1+\left(\dfrac{\tilde{t}/r}{2k+1}\right)^2}\\
						& \leq \sum_{k=0}^{\infty}  \dfrac{(\tilde{t}/r)^{2k}}{k!} = e^{\tilde{t}^2/r^2}.
\end{align}
Finally, in order to write down $S$ as an LCU, we write $S=\tilde{S}_r^r$. That is,
\begin{align}
S&=\left(\sum_{j\in M} \alpha_j U_j\right)^r=\sum_{j_1, j_2,\cdots j_r \in M} \alpha_1 \alpha_2 \cdots \alpha_r~U_{j_1} U_{j_2}\cdots U_{j_r}=\sum_{m}c_m W_m,
\end{align}
where $\norm{c}_1=\sum_{m}|c_m|=(\sum_{j\in M} |\alpha_j|)^r\leq e^{\tilde{t}^2/r}$. We choose $r=\tilde{t}^2=\beta^2t^2$, which ensures $\norm{c}_1=O(1)$. Moreover, for this choice of $r$ and  
$$\gamma\leq \dfrac{\varepsilon}{6\norm{O}},$$
by truncating the Taylor series of $e^{itH/r}$ at some
$$
K=O\left(\dfrac{\log(\beta t\norm{O}/\varepsilon)}{\log\log(\beta t\norm{O}/\varepsilon)}\right),
$$
we have 
$$
\norm{e^{-itH}-S}\leq \dfrac{\varepsilon}{6\norm{O}}.
$$
So from Theorem \ref{thm:randomized-time-evolution}, if we can sample $V_1, V_2$ from the LCU decomposition of $S$, we will be able to output an $\varepsilon$-accurate estimate of $\Tr[O~e^{-itH}\rho_0 e^{itH}]$, using Algorithm \ref{algo:randomized-time-evolution}. We discuss this sampling strategy in a bit more detail as compared to the main manuscript here.
~\\~\\
\textbf{Sampling $\mathbf{V_1}$ and $\mathbf{V_2}$:~} We first pick an even integer $k\in [0,K]$ according to $\alpha_j/\sum_j \alpha_j$ and select $k+1$ unitaries, $P_{\ell_1}, P_{\ell_2},\cdots P_{\ell_k}$, and $P_m$ (as in Eq.~\eqref{eq:simulation-segment-unitary}), where each $P_{\ell_i}$ is sampled according to the distribution $\{p_{\ell_i}\}^{L}_{\ell_i=1}$  and $P_m$ is sampled from $\{p_m\}^{L}_{m=1}$. From this sampling procedure, we can obtain a product of the unitaries $W_1=(-i)^k P_{\ell_1}P_{\ell_2}\cdots P_{\ell_k} P_m$, of $k$ Pauli operators and a Pauli rotation. Finally, we repeat this procedure $r$ times, which essentially results in the final unitary 
$$
W=W_r W_{r-1}\cdots W_1.
$$
Thus, this sampling procedure outputs some unitary $W$ such that $\mathbb{E}[W]=S/\norm{c}_1$. 

This allows us to use Algorithm \ref{algo:randomized-time-evolution} and Theorem \ref{thm:randomized-time-evolution}. Clearly, each run of our procedure has gate depth at most $O(Kr)$, which leads to the gate depth per coherent run, and the overall gate depth as stated in Theorem \ref{thm:ham-sim}.
\section{Single Ancilla LCU: from the Hamiltonian Evolution model to gate depth}
\label{sec-appendix:single-ancilla-LCU}
In the main manuscript, we analyzed the complexity of our Hamiltonian simulation algorithm by \textit{Single-Ancilla LCU} (Sec.~\ref{sec:ham-sim}) in terms of the gate depth per coherent run, as well as the overall gate depth. On the other hand, for both ground state property estimation (Sec.~\ref{subsec:gsp-single-ancilla}) and estimating expectation values of observables with respect to the solution of quantum linear systems (Sec.~\ref{subsec:single-ancilla-qls}), we assumed that the Hamiltonian $H$ can be accessed through the Hamiltonian evolution oracle $U_{\tau}=\exp[-iH\tau]$. We measured the performance of our algorithm in terms of (a) the maximal time evolution of $H$ in one coherent run (denoted as $\tau_{\max}$), and (b) the number of classical repetitions $T$, where $O(\tau_{\max}\cdot T)$ is the total evolution time. As argued in the main article, both (a) and (b) are different from the actual circuit depth required to implement these algorithms. 

In this section, we obtain the gate depth required to run both these algorithms using the \textit{Single-Ancilla LCU} method. To this end, we consider specific Hamiltonian simulation algorithms to implement $U_t$ to some desired precision. Our goal is to keep the number of ancilla qubits to just one, and avoid the use of multi-qubit controlled operations. As we shall see next, this limits the simulation algorithms we can use. For both algorithms, we assume the following:
\begin{itemize}
\item[(i)]~ The Hamiltonian is a linear combination of unitaries (e.g.~strings of Paulis), i.e.\ $H=\sum_{k=1}^{L}\lambda_k P_k$. The total weight of the coefficients $\beta=\sum_{k}|\lambda_k|$.~\\
\item[(ii)]~The observable $O$ we intend to measure is itself a linear combination of easy to implement unitary observables, i.e.\ $O=\sum_{j=1}^{L_O} h_{j} O_{j}$, such that $\norm{h}_1=\sum_{j}|h_j|$ and for each $j$, $\norm{O_{j}}=1$.
\end{itemize}
~\\
Note that any block encoding of $H$ requires $O(\log L)$ ancilla qubits, and has a sub-normalization factor of $\beta$, while a block encoding of $O$ requires $O(\log L_O)$ ancilla qubits, with $\norm{h}_1$ being the sub-normalization factor. Additionally this requires a gate depth of $O(L)$ and $O(L_O)$, respectively. Thus, constructing the block encoding of both $H$ and $O$ needs multi-qubit controlled operations and adds to the overall gate depth, which are undesirable for early fault-tolerant quantum computers. Moreover, our goal of using a solitary ancilla qubit (and hence, no multi-qubit control) for our algorithms implies that we cannot use any Hamiltonian simulation algorithm that uses the block encoding of $H$. This rules out Hamiltonian simulation by qubitization, the state-of-the-art method \cite{low2019hamiltonian}. In fact, given this restriction we can only use Trotter-based methods and the Hamiltonian simulation algorithm based on \textit{Single-Ancilla LCU} (Sec.~\ref{sec:ham-sim}). 

We indeed show that whenever $H$ and $O$ are linear combinations of unitaries, we can implement \textit{Single-Ancilla LCU} using only a single ancilla qubit and no multi-qubit controlled unitaries. In this regard, we first adapt the generic \textit{Single-Ancilla LCU} scheme to allow for (a) the measurement of any $O=\sum_{j} h_j O_j$, and (b) the application of imprecise unitaries. The reason for analyzing (b) is that whenever $f(H)\approx \sum_{j} c_j e^{-ijH}$, the Hamiltonian simulation algorithm needs to be implemented to some desired precision, which is what we shall estimate. Equipped with (a) and (b), we can directly calculate the gate depth of both ground state property estimation, and quantum linear systems, by invoking particular Hamiltonian simulation algorithms that fit our goals. We begin by incorporating (a) and (b) into Theorem \ref{thm:single-ancilla-overall}. 

\subsection{Single Ancilla LCU: general observables and imperfect unitaries}

In order to measure any $O=\sum_{j=1}^{L_O} h_{j} O_{j}$ within the framework of \textit{Single-Ancilla LCU}, we do the following: instead of measuring $O$ directly, we simply sample an $O_j$ 
according to $\{h_j/\norm{h}_1\}_{j}$, and implement a POVM measurement of $X\otimes O_j$ in Step 3 of Algorithm \ref{algo:randomized-time-evolution}. Since~$\norm{O_j}=1$, the POVM measurement yields an outcome in $[-1,+1]$. Note that this strategy ensures that the expected outcome of the $j^{\mathrm{th}}$ iteration of Algorithm \ref{algo:randomized-time-evolution} is
$$
\mathbb{E}[\mu_j]=\dfrac{1}{\norm{c}^2_1\norm{h}_1} \Tr[O~g(H)\rho_0 g(H)^{\dag}],
$$
where $g(H)=\sum_{j}c_j U_j$ is the LCU that approximates the function $f(H)$ we wish to apply. So, $\mu$ and $\tilde{\ell}$ can be obtained as in Algorithm \ref{algo:single-ancilla-overall}, except now $\mu=\norm{c}^2_1 \norm{h}_1\sum_{j}\mu_j/T$, and $\norm{O} \leq \norm{h}_1$. If we consider that the cost of implementing any $O_j=\Theta(1)$, then the cost of each coherent run is still upper bounded by $O(2\tau_{\max}+\tau_{\rho_0})$. Furthermore, following the arguments of the proofs of Theorem \ref{thm:randomized-time-evolution} and Theorem \ref{thm:single-ancilla-overall}, it is easy to show that $\mu/\tilde{\ell}$ is an $\varepsilon$-accurate estimate of $\Tr[O\rho]$, with a constant success probability, for
$$
T=O\left(\dfrac{\norm{c}^4_1 \norm{h}^2_1}{\varepsilon^2\ell^2_*}\right).
$$

In order to take into account the implementation of imperfect unitaries, as in Theorem \ref{thm:randomized-time-evolution}, consider that $f(H)$ is the function we wish to apply, and $g(H)=\sum_{j}c_jU_j$, is the LCU it approximates. However now, instead of $g(H)$, we implement some $h(H)$ such that $h(H)=\sum_{j}c_j \tilde{U}_j$. Then if
\begin{align}
\label{eqsup:function-lcu-upper-bound}
\norm{f(H)-g(H)}\leq \dfrac{\varepsilon}{9\norm{O}\norm{f(H)}}, 
\end{align}
and,
\begin{align}
\label{eqsup:exact-lcu-imprecise-lcu-upper-bound}
\norm{g(H)-h(H)}\leq \dfrac{\varepsilon}{9\norm{O}\norm{f(H)}}, 
\end{align}
it suffices if in Theorem \ref{thm:randomized-time-evolution}, $\mu$ outputs an $\varepsilon/3$-accurate estimate of $\Tr[O~h(H)\rho_0h(H)^{\dag}]$. This ensures, 
\begin{align}
\left|\mu-\Tr[O~f(H)\rho_0 f(H)^{\dag}]\right|&\leq \left|\mu-\Tr[O~h(H)\rho_0 h(H)^{\dag}]\right|+ \left|\Tr[O~f(H)\rho_0 f(H)^{\dag}]-\Tr[O~h(H)\rho_0 h(H)^{\dag}]\right|\\
&\leq \varepsilon/3+ 3\norm{f(H)}\norm{O}\norm{f(H)-h(H)}~~~~~~[~\mathrm{Using~Theorem~\ref{thm:distance-expectation}}~]\\
          & \leq \varepsilon/3+ 3\norm{f(H)}\norm{O}\left[\norm{f(H)-g(H)}+\norm{g(H)-h(H)}\right]\\
          & \leq \varepsilon.
 \end{align}
The equivalent statement of Theorem \ref{thm:single-ancilla-overall} remains the same with the upper bound on the precision adjusted appropriately. Here, we combine both these results on performing \textit{Single-Ancilla LCU} with (a) Imperfect unitaries, and (b) observables that are LCU, and state our findings formally via the following theorem:   
~\\~\\
\begin{theorem}
\label{thmsup:single-ancilla-lcu-local-observable}
Let $\varepsilon,\delta\in (0,1)$, $O$ be some observable such that $\sum_{j=1}^{L_O} h_{j} O_{j}$, with $\norm{h}_1=\sum_{\alpha}|h_{\alpha}|$ , and for any $j$, $\norm{O_{j}}=1$. Also, let $\rho_0$ be some initial state, prepared in cost $\tau_{\rho_0}$. Suppose $H\in\mathbb{C}^{N\times N}$ be a Hermitian matrix such that for some function $f:[-1,1]\mapsto \mathbb{R}$ and unitaries $\{U_j\}_{j}$, 
$$\norm{f(H)-\sum_{j}c_j U_j}\leq \dfrac{\varepsilon\ell_*}{27\norm{h}_1\norm{f(H)}},
$$ 
and $\ell^2=\Tr[f(H)\rho_0 f(H)^{\dag}]\geq \ell_*$. Moreover, suppose that each $U_j$ can only be imperfectly implemented: $\tilde{U}_j$ approximates $U_j$ such that
$$
\max_j \norm{U_j-\tilde{U}_j}\leq \dfrac{\varepsilon\ell_*}{27\norm{h}_1\norm{f(H)}\norm{c}_1}.
$$ 
Furthermore, suppose that the maximum cost of implementing any $\tilde{U}_j$ is at most $\tau_{\max}$. Then there exists an algorithm that outputs $\mu$ and $\tilde{\ell}$ such that
$$
\left|\mu/\tilde{\ell} - \Tr[O\rho]\right| \leq \varepsilon,
$$
with probability $(1-\delta)^2$, using 
$$
T=O\left(\dfrac{\norm{c}^4_1 \norm{h}^2_1}{ \varepsilon^2\ell^2_*}\ln(1/\delta)\right) 
$$  
classical repetitions, where the cost of each such run is at most $O(2\tau_{\max}+\tau_{\rho_0})$. 
\end{theorem}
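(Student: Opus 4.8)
The plan is to assemble Theorem~\ref{thmsup:single-ancilla-lcu-local-observable} from three modular ingredients, each of which is either already proven in the excerpt or is a routine adaptation of the proof of Theorem~\ref{thm:randomized-time-evolution} and Theorem~\ref{thm:single-ancilla-overall}. Writing $g(H)=\sum_j c_j U_j$ for the exact LCU and $h(H)=\sum_j c_j \tilde U_j$ for the imperfectly-implemented one, the decomposition is: (i) replacing the direct measurement of $O$ by the randomized-observable strategy (sample $O_j$ according to $h_j/\norm{h}_1$, measure $X\otimes O_j$) changes the estimator to $\mu=\norm{c}_1^2\norm{h}_1\sum_j \mu_j/T$ and re-scales the Hoeffding bound, since each sample now lies in $[-\norm{c}_1^2\norm{h}_1,\norm{c}_1^2\norm{h}_1]$; (ii) the triple-triangle-inequality $\|f(H)-h(H)\|\le \|f(H)-g(H)\|+\|g(H)-h(H)\|$ combined with Theorem~\ref{thm:distance-expectation} lets us charge the error of working with $h(H)$ instead of $f(H)$; and (iii) Theorem~\ref{thm:norm-approx} (in its general form Theorem~\ref{thm:norm-approx-general}) converts accurate estimates of the numerator $\Tr[O\,h(H)\rho_0 h(H)^\dagger]$ and the normalization $\ell^2$ into an accurate estimate of the ratio $\Tr[O\rho]$.

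Concretely, I would first state the modified algorithm: run Algorithm~\ref{algo:randomized-time-evolution} but in Step~3, instead of measuring $X\otimes O$, sample $j\in[1,L_O]$ with probability $h_j/\norm{h}_1$ and measure $X\otimes O_j$; and in Step~2 sample $V_1,V_2$ from $\{\tilde U_j, c_j/\norm{c}_1\}$ rather than the exact ensemble. A short computation (mirroring the one in the proof of Theorem~\ref{thm:randomized-time-evolution}) shows
$$
\mathbb{E}[\mu_j]=\frac{1}{\norm{c}_1^2\norm{h}_1}\Tr[O\,h(H)\rho_0 h(H)^\dagger],
$$
because $\mathbb{E}[V_1]=\mathbb{E}[V_2]=h(H)/\norm{c}_1$ and $\mathbb{E}_j[O_j]=O/\norm{h}_1$, all independent. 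Since each outcome of $X\otimes O_j$ is in $[-1,1]$, the rescaled random variable $\norm{c}_1^2\norm{h}_1\mu_j$ lies in an interval of width $2\norm{c}_1^2\norm{h}_1$, so Hoeffding's inequality gives concentration of $\mu$ to within $\varepsilon\ell_*/9$ of $\Tr[O\,h(H)\rho_0 h(H)^\dagger]$ once $T=O(\norm{c}_1^4\norm{h}_1^2\varepsilon^{-2}\ell_*^{-2}\ln(1/\delta))$ (with the precise constant tuned so that the sum of all error budgets below closes). Then, using Theorem~\ref{thm:distance-expectation} twice — once on $\|f(H)-g(H)\|$ and once on $\|g(H)-h(H)\|$, each of which is bounded by $\tfrac{\varepsilon\ell_*}{27\norm{h}_1\norm{f(H)}}$ by hypothesis (noting $\|O\|\le\norm{h}_1$) — and the bound $\norm{h(H)}\le\norm{c}_1\le\text{(something comparable to }\norm{f(H)})$, one gets $|\mu-\Tr[O\,f(H)\rho_0 f(H)^\dagger]|\le \varepsilon\ell_*/3$. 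The second call with $O=I$ (hence $\norm{h}_1=1$) yields $\tilde\ell$ with $|\tilde\ell-\ell^2|\le \varepsilon\ell_*/(3\norm{h}_1)$, and Theorem~\ref{thm:norm-approx} (with $\norm{O}$ there replaced by the valid upper bound $\norm{h}_1$) gives $|\mu/\tilde\ell-\Tr[O\rho]|\le\varepsilon$. A union bound over the two calls gives success probability $(1-\delta)^2$. The cost bookkeeping is immediate: each $\tilde U_j$ costs at most $\tau_{\max}$, each $O_j$ costs $\Theta(1)$, the state preparation costs $\tau_{\rho_0}$, and each coherent run applies two sampled $\tilde U_j$'s, so the cost per run is $O(2\tau_{\max}+\tau_{\rho_0})$.

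The only genuinely delicate point is the error-budget accounting: there are now three distinct sources of error (the finite-sample statistical error, the $f\!\to\!g$ approximation error, and the $g\!\to\!h$ imperfect-implementation error) plus the ratio-robustness step, and I must verify that the constants $9$, $27$, and the $\norm{c}_1$ factor in the hypotheses of the theorem are chosen so that, after propagating through Theorem~\ref{thm:distance-expectation} (which costs a factor $3\norm{f(H)}\norm{O}$) and Theorem~\ref{thm:norm-approx} (which demands the numerator be estimated to $\varepsilon\ell_*/3$ and the normalization to $\varepsilon\ell_*/(3\norm{O})$), everything sums to at most $\varepsilon$. I expect this to be the main obstacle — not conceptually hard, but requiring care to confirm that $\norm{f(H)-h(H)}\le 2\cdot\tfrac{\varepsilon\ell_*}{27\norm{h}_1\norm{f(H)}}$ feeds through the factor $3\norm{f(H)}\norm{O}\le 3\norm{f(H)}\norm{h}_1$ to land at $\le \tfrac{2\varepsilon\ell_*}{9}$, leaving room for the $\varepsilon\ell_*/9$ statistical slack, and that the resulting $\varepsilon\ell_*/3$-accurate numerator together with the $\varepsilon\ell_*/(3\norm{h}_1)$-accurate normalization satisfies the hypotheses of Theorem~\ref{thm:norm-approx} verbatim. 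Everything else (the expectation computation, the Hoeffding application, the union bound, the cost count) is routine and parallels the already-written proofs.
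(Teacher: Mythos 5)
Your proposal is correct and follows essentially the same route as the paper: the paper's proof likewise samples $O_j$ according to $h_j/\norm{h}_1$ within Algorithm~\ref{algo:single-ancilla-overall}, bounds $\norm{g(H)-h(H)}\leq \norm{c}_1\max_j\norm{U_j-\tilde U_j}$, feeds the combined $f\!\to\!g$ and $g\!\to\!h$ errors through Theorem~\ref{thm:distance-expectation} via the triangle inequality, and closes with Theorem~\ref{thm:norm-approx}, using exactly the error split (statistical $\varepsilon\ell_*/9$ plus $2\varepsilon\ell_*/9$ approximation, yielding the $\varepsilon\ell_*/3$ numerator accuracy and $\varepsilon\ell_*/(3\norm{h}_1)$ normalization accuracy) that you identify. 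The only cosmetic difference is that your worry about bounding $\norm{h(H)}$ is unnecessary, since Theorem~\ref{thm:distance-expectation} only involves $\norm{P}=\norm{f(H)}$ and not the norm of the perturbed operator.
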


\begin{proof}
Let $g(H)=\sum_{j}c_j U_j$ and $h(H)=\sum_{j}c_j \tilde{U}_j$. Then from the upper bound of $\max_j \norm{U_j-\tilde{U}_{j}}$ mentioned in the statement of the theorem, we have 
\begin{align*}
\norm{g(H)-h(H)}&= \norm{\sum_{j}c_j \left(U_j-\tilde{U}_j\right)}\\
                &\leq \dfrac{\varepsilon\ell_*}{27\norm{h}_1\norm{f(H)}}.
\end{align*}
Now, from Theorem \ref{thm:norm-approx}, we need to obtain an estimate $\tilde{\ell}$ of $\ell^2$ such that
$$
\left|\tilde{\ell}-\ell^2\right|\leq \dfrac{\varepsilon\ell_*}{3\norm{h}_1},
$$
and also output a $\mu$ such that
$$
\left|\mu-\Tr[O~f(H)\rho_0 f(H)^{\dag}]\right|\leq \varepsilon \ell_*/3.
$$
The upper bounds on $\norm{f(H)-g(H)}$ and $\norm{g(H)- h(H)}$ can be obtained by substituting $\varepsilon$ with $\varepsilon\ell_*/3$ in Eq.~\eqref{eqsup:function-lcu-upper-bound} and Eq.~\eqref{eqsup:exact-lcu-imprecise-lcu-upper-bound}, respectively (additionally substituting $\norm{O}$ with $\norm{h}_1$). Both $\mu$ and $\tilde{\ell}$ can be obtained by running Algorithm \ref{algo:single-ancilla-overall}, except in each iteration the observable measured is some $O_j$ sampled according to $h_j/\norm{h}_1$. 
\end{proof}

If $g(H)$ is a linear combination of time evolution operators, i.e.\ $g(H)=\sum_{j}c_j e^{-iHt_j}$, Theorem Theorem \ref{thmsup:single-ancilla-lcu-local-observable} gives us the precision with which Hamiltonian simulation needs to be performed. This allows us to analyze the complexity of our algorithms for ground state property estimation (Sec.~\ref{subsec:gsp-single-ancilla}) and quantum linear systems (Sec.~\ref{subsec:single-ancilla-qls}) in terms of their gate depth. Clearly, if the maximal time evolution of the underlying algorithm was $\tau_{\max}$, now we can run a Hamiltonian simulation algorithm to implement $e^{-itH}$ for $t=\tau_{\max}$ and precision $O(\varepsilon \ell_* \norm{c}^{-1}_1\norm{f(H)}^{-1}\norm{h}_1^{-1})$.

\subsection{Ground state property estimation and quantum linear systems}
\label{subsec-app:gsp-qls}

In this section, we assume $H=\sum_{k=1}^{L} c_k P_k$ with $\beta=\sum_{k}|c_k|$, and the observable $O=\sum_{j=1}^{L_O}h_jO_j$, with $\norm{O_j}=1$. The assumptions on $H$ for ground state property estimation are quite natural, as this is precisely the form of most physical Hamiltonians. For quantum linear systems too, we assume that the matrix to be inverted can be written down as linear combination of Paulis, which is non-standard. Indeed generally, it is assumed that $H$ can be accessed via a block encoding (implicitly implying that $H$ is sparse or it is stored in some quantum accessible data structure \cite{chakraborty2019power}). However, (i) this requires an additional overhead which is undesirable in the intermediate-term and (ii) assumes access to a quantum RAM. It is then reasonable to assume that in the early fault-tolerant era, quantum linear systems algorithm will be employed to solve physically relevant problems, where the underlying data matrix directly corresponds to some physical Hamiltonian, which would avoid the need to handle classical data. This motivates using $H$ which can be expressed as a linear combination of Paulis, also for solving quantum linear systems.

For both of our quantum algorithms, we need to implement a linear combination of Hamiltonian evolution operators, i.e.\
$$
f(H)\approx \sum_{j} c_j e^{-it_jH}.
$$
In order to estimate the gate depth, we need to choose a specific Hamiltonian simulation procedure to implement $e^{-iHt}$ to the desired precision. The goal of running our algorithms with a solitary ancilla qubit and no multi-qubit controlled gates imply that we can make use of only Trotter-based methods and the \textit{Single-Ancilla LCU} Hamiltonian simulation algorithm. Both these methods require have a super linear dependence on $t$ in terms of the gate depth, which is suboptimal (See Table \ref{table:comparison-ham-sim}). However, state-of-the-art methods which make use of a block encoding of $H$, have a gate depth per coherent run which depends on $L$. Thus, there exist regimes where our method requires a shorter gate depth per iteration. 

\begin{table}[ht!!]
\begin{center}
    \resizebox{\columnwidth}{!}{
    \renewcommand{\arraystretch}{3} 
    \begin{tabular}{|c|c|c|c|c|c|}
    \hline
    Problem & \vtop{\hbox{\strut Hamiltonian Simulation} \hbox{\strut ~~~~~~~procedure used}} & Ancilla & \vtop{\hbox{\strut ~~Gate depth} \hbox{\strut per coherent run}} & Classical repetitions \\ \hline\hline
  \multirow{2}{*}{\vtop{\hbox{\strut ~~~~Ground state} \hbox{\strut property estimation}}} & \vtop{\hbox{\strut Single-Ancilla LCU} \hbox{\strut ~~~~~~~(Sec.~\ref{sec:ham-sim})}} & $1$ & $\widetilde{O}\left(\dfrac{\beta^2}{\Delta^2}\right)$ & $O\left(\dfrac{\norm{h}_1^2}{\eta^4\varepsilon^2}\right)$ \\ \cline{2-5}
    
   & $2k$-order Trotter & $1$ & $\widetilde{O}\left(L\left(\dfrac{\beta}{\Delta}\right)^{1+\frac{1}{2k}}\left(\dfrac{\norm{h}_1}{\varepsilon\eta^2}\right)^{\frac{1}{2k}}\right)$ & $O\left(\dfrac{\norm{h}_1^2}{\eta^4\varepsilon^2}\right)$ \\ \hline

 \multirow{2}{*}{Quantum Linear Systems} & \vtop{\hbox{\strut Single-Ancilla LCU} \hbox{\strut ~~~~~~~(Sec.~\ref{sec:ham-sim})}} & $1$ & $\widetilde{O}\left(\beta^2\kappa^2\right)$ & $\widetilde{O}\left(\dfrac{\kappa^4\norm{h}_1^2}{\varepsilon^2}\right)$ \\ \cline{2-5}
 
  & $2k$-order Trotter & $1$ & $\widetilde{O}\left(L\kappa^{1+\frac{3}{2k}}\left(\beta \right)^{1+\frac{1}{2k}}\left(\dfrac{\norm{h}_1}{\varepsilon}\right)^{\frac{1}{2k}}\right)$ & $\widetilde{O}\left(\dfrac{\kappa^4\norm{h}_1^2}{\varepsilon^2}\right)$ \\ 
 \hline
\end{tabular}}
  \caption{Summary of the complexity of quantum algorithms by \textit{Single-Ancilla LCU} for ground state property estimation and quantum linear systems. For both the algorithms we assume that $H=\sum_{k=1}^{L}\lambda_k P_k$, where $P_k$ is a string of Pauli operators. We define $\beta=\sum_{k}|\lambda_k|$. Also, for both cases we assume that the observable $O=\sum_{j=1}^{L_O} h_j O_j$, with total weight $\norm{h}_1=\sum_j |h_j|$ and each $\norm{O_j}=1$. The complexity is measured in terms of the gate depth per coherent run of the algorithm, with the overall gate depth being the product of this quantity with the total number of classical repetitions (product of the complexity of the last two columns). For ground state property estimation, we assume that $H$ has a spectral gap $\Delta$, and we have knowledge of its ground energy to some precision. Furthermore, we also assume that we have an initial state $\ket{\psi_0}$ with an overlap of at least $\eta$ with $\ket{v_0}$, the unknown ground state of $H$. Our algorithm estimates $\braket{v_0|O|v_0}$ to additive accuracy $\varepsilon$. On the other hand, for quantum linear systems we assume that $H$ has eigenvalues in the range $[-1,-1/\kappa]\cup [1/\kappa,1]$ and access to an initial state $\ket{b}$. The algorithm outputs an $\varepsilon$-accurate estimate of $\braket{x|O|x}$, where $\ket{x}=H^{-1}\ket{b}/\norm{H^{-1}\ket{b}}$.} 
\label{table:circuit-depth-single-ancilla-gsp-qls}
    \end{center}
\end{table}
\renewcommand{\arraystretch}{1}

We consider two Hamiltonian simulation procedures: (a) Hamiltonian simulation by \textit{Single-Ancilla LCU}, and (b) $2k$-order Trotter. For these methods, given any $H$ and $O$ defined as above, in order to implement the LCU $g(H)=\sum_j c_j e^{-it_jH}$ requires invoking Hamiltonian simulation for a maximal time $\max_j t_j$ and precision of at most $\varepsilon_H=O(\varepsilon\ell_*\norm{c}^{-1}_1 \norm{h}^{-1}_1 \norm{f(H)}^{-1})$. This can be integrated into the framework of \textit{Single-Ancilla LCU} in the following way:~\\
\begin{itemize}
\item[(i)]~\textbf{Single-Ancilla LCU:~} For each coherent run, the crucial task is to sample $V_1$ and $V_2$. We first sample $j$ according to $\{c_j/\norm{c}_1\}_j$, fix $r=\beta^2t^2_j$ and some 
$$
K=O((\log(\beta t_j\norm{h}_1/\varepsilon_H)/\log\log(\beta t_j\norm{h}_1/\varepsilon_H)).
$$ 
Now we can sample a sequence of Pauli matrices and a single Pauli rotation, repeating the sampling procedure $r$ times (as described in Sec.~\ref{sec:ham-sim}). As the $\ell_1$ -norm of the LCU coefficients due to Hamiltonian simulation is $O(1)$, the $\ell_1$-norm of the overall LCU is still $O(~\norm{c}_1)$. Moreover, this requires only a single ancilla qubit.

Then if $\tau_{\max}$ is the maximal evolution time, the gate depth of each coherent run is at most
\begin{equation}
\label{eqsup:circuit-depth-single-ancilla}
O\left(\beta^2\tau_{\max}^2\dfrac{\log\left(\beta \tau_{\max} \norm{h}_1\norm{c}_1 \norm{f(H)}\ell^{-1}_*/\varepsilon\right)}{\log\log\left(\beta \tau_{\max} \norm{h}_1\norm{c}_1 \norm{f(H)}\ell^{-1}_*/\varepsilon\right)}\right)=\widetilde{O}(\beta^2\tau_{\max}^2).
\end{equation}
The number of classical repetitions remain the same as in Theorem \ref{thmsup:single-ancilla-lcu-local-observable}.
~\\

\item[(ii)]~\textbf{$2k$-order Trotter}: In this case, we simply sample $j$ according to $\{c_j/\norm{c}_1\}_j$ and implement $e^{-it_jH}$ to precision $\varepsilon_H$, using $2k$-order Trotter. This boils down to implementing a product of Pauli rotations controlled by the single ancilla qubit. The gate depth for each coherent run is upper bounded by
\begin{equation}
\label{eqsup:circuit-depth-2k-trotter}
O\left(L(\beta \tau_{\max})^{1+\frac{1}{2k}}\left(\dfrac{\norm{h}_1\norm{c}_1\norm{f(H)}}{\varepsilon\ell_*}\right)^{\frac{1}{2k}}\right),
\end{equation}
where $L$ is the total number of terms in the Hamiltonian $H$. The number of classical repetitions remain the same as in Theorem \ref{thmsup:single-ancilla-lcu-local-observable}.
\end{itemize}
~\\
Finally, we can directly use Eq.~\eqref{eqsup:circuit-depth-single-ancilla} and Eq.~\eqref{eqsup:circuit-depth-2k-trotter} into our algorithms after substituting the appropriate values of $\tau_{\max}$, $\norm{f(H)}$, $\ell_*$ and $\norm{c}_1$. From Sec.~\ref{subsec:gsp-single-ancilla}, we have that for ground state property estimation (Sec.~\ref{subsec:gsp-single-ancilla}), $\tau_{\max}=O(\Delta^{-1}\log(~||h||_1\eta^{-1}\varepsilon^{-1}))$, $\norm{f(H)}=1$, $\ell_*\geq \eta^2$ and $\norm{c}_1=O(1)$. On the other hand, from Sec.~\ref{subsec:single-ancilla-qls}, for quantum linear systems we have, $\tau_{\max}=O(\kappa\log(\kappa\norm{h}_1/\varepsilon))$, $\norm{f(H)}\leq \kappa$, $\ell_*=1$ and $\norm{c}_1=\widetilde{O}(\kappa)$. The gate depth for each coherent run and the total number of classical repetitions are summarized in Table \ref{table:circuit-depth-single-ancilla-gsp-qls}. The overall gate depth is the product of the gate depth per coherent run and the number of classical repetitions.

\subsubsection{Comparison with other methods}
\label{subsubsec-appendix:comparison-depth}

Let us now compare our algorithms with other methods. As before (Table \ref{table:comparison-gsp} and Table \ref{table:comparison-qls}), we compare with Standard LCU, QSVT, as well as early fault-tolerant quantum algorithms. For Standard LCU, we need to implement multi-qubit controlled Hamiltonian simulation. For this, we assume that \textit{Standard LCU} uses the state-of-the-art algorithm (Hamiltonian simulation by qubitization \cite{low2019hamiltonian}). This requires a block encoding access to $H$. When $H$ is a linear combination of strings of Pauli operators as defined previously, it is possible to construct a $(\beta,O(\log L),0)$-block encoding of $H$, in a gate depth of $O(L)$. The QSVT-based quantum algorithms that we consider also require access to this block encoding. So, this cost will be incorporated into the complexity of the algorithms we analyze next.

\begin{table}[ht!!]
\begin{center}
    \resizebox{\columnwidth}{!}{
    \renewcommand{\arraystretch}{3} 
    \begin{tabular}{|c|c|c|c|c|c|}
    \hline
    Algorithm & Variant & Ancilla & \vtop{\hbox{\strut ~~Gate depth} \hbox{\strut per coherent run}} & Classical repetitions \\ \hline\hline
  \multirow{3}{*}{\vtop{\hbox{\strut ~~~~~~\large{Standard LCU} \cite{ge2019faster}} \hbox{\strut ~} \hbox{\strut (with Hamiltonian simulation} \hbox{\strut  ~~~~~~by qubitization \cite{low2019hamiltonian})}}} & QAA & $O\left(\log L + \log\left(\log\left(\frac{\norm{O}}{\eta\varepsilon}\right)/\Delta\right)\right)$ & $\widetilde{O}\left(\dfrac{\beta L}{\Delta \eta}\right)$ & $O\left(\dfrac{\norm{h}_1^2}{\varepsilon^2}\right)$ \\ \cline{2-5}
    
   &  QAE & $O\left(\log L_O +\log L+\log\left(\log\left(\frac{\norm{O}}{\eta\varepsilon}\right)/\Delta\right)\right)$ & $\widetilde{O}\left(\dfrac{\beta L \norm{h}_1}{\Delta\eta\varepsilon}+\dfrac{\norm{h}_1 L_O}{\varepsilon}\right)$ & $O(1)$ \\ \cline{2-5}

   & Without QAA or QAE & $O\left(\log L+\log\left(\log\left(\frac{\norm{O}}{\eta\varepsilon}\right)/\Delta\right)\right)$ & $\widetilde{O}\left(\dfrac{\beta L}{\Delta}\right)$ & $O\left(\dfrac{\norm{h}_1^2}{\varepsilon^2\eta^2}\right)$ \\ \hline 
   
   \multirow{3}{*}{\large{QSVT \cite{lin2020near}}} & QAA & $O(\log L)$ & $\widetilde{O}\left(\dfrac{\beta L}{\Delta\eta}\right)$ & $O\left(\dfrac{\norm{h}_1^2}{\varepsilon^2}\right)$ \\ \cline{2-5}
    
   & QAE & $O\left(\log L_O +\log L\right)$ & $\widetilde{O}\left(\dfrac{\beta L\norm{h}_1}{\Delta\eta\varepsilon}+\dfrac{\norm{h}_1 L_O}{\varepsilon}\right)$ & $O(1)$ \\ \cline{2-5}
   
   & Without QAA or QAE & $O\left(\log L\right)$ & $\widetilde{O}\left(\dfrac{\beta L}{\Delta}\right)$ & $O\left(\dfrac{\norm{h}_1^2}{\varepsilon^2\eta^2}\right)$ \\ \hline 
   
 \multirow{2}{*}{\large{This work}} & \vtop{\hbox{\strut ~~~Ham. Sim. by} \hbox{~\strut Single-Ancilla LCU}} & $1$ & $\widetilde{O}\left(\dfrac{\beta^2}{\Delta^2}\right)$ & $O\left(\dfrac{\norm{h}_1^2}{\varepsilon^2\eta^4}\right)$ \\ \cline{2-5}
 
 & \vtop{\hbox{ \strut ~~Ham. Sim. by} \hbox{~~\strut $2k$-order Trotter}} & $1$ & $\widetilde{O}\left(L\left(\dfrac{\beta}{\Delta}\right)^{1+\frac{1}{2k}}\left(\dfrac{\norm{h}_1}{\varepsilon\eta^2}\right)^{\frac{1}{2k}}\right)$ & $O\left(\dfrac{\norm{h}_1^2}{\eta^4\varepsilon^2}\right)$\\ 
 \hline
    \end{tabular}}
  \caption{Comparison of the ground state property estimation algorithm by Single-Ancilla LCU with other methods in terms of gate depth. We assume that $H$ is a linear combination of $L$ terms (strings of Pauli operators), i.e\ $H=\sum_{k=1}^{L}\lambda_k P_k$ such that $\beta=\sum_{k}|\lambda_k|$. The algorithms in the table estimate $\braket{v_0|O|v_0}$ to $\varepsilon$-additive accuracy, where $\ket{v_0}$ is the unknown ground state of $H$, and $O$ is an observable which is also a linear combination of (easy to implement) unitaries, i.e.\ $\sum_{j=0}^{L_O}h_j O_j$, with $\norm{O_j}=1$ and $\norm{h}_1=\sum_{j=1}^{L_O} |h_j|$. The overall gate depth is the product of the complexities in the last two columns. We assume (i) access to a quantum state with an overlap at least $\eta$ with $\ket{v_0}$, and (ii) that the ground energy of $H$ is known to precision $\varepsilon_g$ (See Sec.~\ref{sec:gsp}). We assume that a $(\beta, \lceil \log L \rceil,0)$-block encoding of $H$ is implementable in gate depth $O(L)$ (Standard LCU and QSVT-based algorithms require access to such a block encoding). Furthermore,  coherent procedures to estimate the desired expectation value require access to a block encoding of $O$. For any $O$ which is an LCU, we can implement a $(~\norm{h}_1, \lceil \log L_O \rceil,0)$-block encoding. The gate depth of this construction is $O(L_O)$.
\label{table:comparison-gsp-depth}}
    \end{center}
\end{table}
\renewcommand{\arraystretch}{1}
~\\
\textbf{Ground state property estimation:~} We compare the complexities of our method with other approaches in Table \ref{table:comparison-gsp-depth}. Let us begin by looking at the complexity of the three ways in which the \textit{Standard LCU} procedure can estimate $\braket{v_0|O|v_0}$:

\begin{itemize}
\item By using Standard LCU to implement controlled Hamiltonian simulation followed by amplitude amplification, the state $\ket{v_0}$ can be prepared. If the Hamiltonian simulation makes use of qubitization as in \cite{low2019hamiltonian}, the overall method requires ancilla qubits to implement (a) the block encoding of $H$, and (b) the linear combination of Hamiltonian simulation. Overall, $O(\log L + \log(\log(\norm{h}_1\eta^{-1}\varepsilon^{-1})/\Delta)$ ancilla qubits are needed, along with multi-qubit controlled operations. The gate depth of the circuit that prepares $\ket{v_0}$ is $\widetilde{O}(\beta L \Delta^{-1}\eta^{-1})$. Following this, $\braket{v_0|O|v_0}$ can be measured by simply sampling $O_j$ according to $h_j/\norm{h}_1$ in each run, and measuring $O_j$. This requires $O(~\norm{h}^2_1/\varepsilon^2)$ classical repetitions. Thus, as compared to our method (Table \ref{table:circuit-depth-single-ancilla-gsp-qls}), the gate depth per coherent run has a better dependence on $\beta$ and $\Delta$, but also depends on $L$ and $1/\eta$. Clearly, there are regimes where our method (using Hamiltonian simulation by \textit{Single-Ancilla LCU}) has a shorter gate depth per coherent run as compared to this approach. For instance, for any Hamiltonian satisfying $L\geq \beta/\Delta$, our method has a shorter gate depth by a factor of at least $O(1/\eta)$. On the other hand, the overall gate depth of this approach has a better dependence on $1/\eta$ and $1/\Delta$. However, our method requires a solitary ancilla qubit and no multi-qubit controlled operations.
~\\
\item It is possible to directly estimate $\braket{v_0|O|v_0}$ by using standard LCU with quantum amplitude estimation (QAE). For any observable which is a linear combination of $L_O$ terms, we can obtain a $(~\norm{h}_1, \log L_O, 0)$-block encoding of $O$ in circuit depth $O(L_O)$. So, overall the total number of ancilla qubits increases to $O(\log L+\log L_O + \log(\log(\norm{h}_1\eta^{-1}\varepsilon^{-1})/\Delta))$. On the other hand, the gate depth of the procedure is
$$
\widetilde{O}\left(\dfrac{\beta L\norm{h}_1}{\Delta\eta\varepsilon}+\dfrac{\norm{h}_1 L_O}{\varepsilon}\right).
$$ 
Only $O(1)$ classical repetitions are needed. As compared to our algorithm, the overall complexity is lower, at the cost of exponentially increasing the gate depth (in terms of $1/\varepsilon$). Moreover, the total number of ancilla qubits required also increases substantially for this strategy.
~\\
\item Without using quantum amplitude amplification or estimation, $\braket{v_0|O|v_0}$ can be measured by simply using standard LCU  followed by repeatedly measuring some $O_j$ sampled according to $h_j/\norm{h}_1$. For each coherent run, the gate depth is $\widetilde{O}\left(\beta L/\Delta\right)$, which has a better dependence on both $\beta$ and $\Delta$, as compared to our method primarily due to the advantage of using the state-of-the-art Hamiltonian simulation procedure. However, the dependence on $L$ ensures that our method (using Hamiltonian simulation by \textit{Single-Ancilla LCU}) has a shorter gate depth per coherent run, for any Hamiltonian where $L> \beta/\Delta$. This advantage can be observed in several quantum chemistry Hamiltonians where typically $\beta\ll L$ \cite{campbell2019random, su2021fault}, as well as Hamiltonians in condensed matter physics such as quantum Ising Hamiltonians with long-range interactions \cite{sachdev2011quantum}. However, the number of classical repetitions needed is quadratically better dependence on $1/\eta$ as compared to our method, which also means that the overall gate depth of this approach is generally lower. As before, our method has a better scaling in terms of the number of ancilla qubits, and the fact that our method requires no multi-qubit controlled operation.
\end{itemize}

The complexity of ground state preparation by QSVT \cite{lin2020near} also compares similarly to our method. All the three variants require more ancilla qubits and multi-qubit controlled operations as compared to our method (using Hamiltonian simulation by \textit{Single-Ancilla LCU}), while the overall complexity is lower. Moreover, as compared to each of the three variants, there are regimes where the gate depth per coherent run of our method is lower, despite requiring fewer ancilla qubits and no multi-qubit controlled gates. The details can be found in Table \ref{table:comparison-gsp-depth}.

The early fault-tolerant quantum algorithm of Dong et al.~\cite{dong2022ground} for ground state preparation can be leveraged to estimate $\braket{v_0|O|v_0}$. However, there are issues if one wants to incorporate the \textit{Single-Ancilla LCU} Hamiltonian simulation technique into their method. This is because, the algorithm requires querying $U=e^{-iH}$, which needs to be implemented without any subnormalization factor. However, our Hamiltonian simulation algorithm implements an LCU $S$ such that $S/\norm{c}_1 \approx e^{-iH}$, where $\norm{c}_1=O(1)$. Thus, several queries to $U$ and $U^{\dag}$ would exponentially blow up the simulation cost ($d$ queries would lead to an effective overhead of $\norm{c}^{d}_1$). Consequently, only Trotter based methods can be suitably incorporated into this algorithm, which has been already analyzed in the article. This also means that the gate depth per coherent run of our approach (using where we use Hamiltonian simulation by \textit{Single-Ancilla LCU}) is sub-exponentially better (in terms of $1/\varepsilon$). Additionally the gate depth also depends on $L$, which can be significantly larger than $\beta$ in many cases. However, the overall gate depth has a better dependence on our method in terms of $1/\eta$.
~\\~\\
\textbf{Quantum Linear Systems:~} The detailed complexities have been summarized in Table \ref{table:comparison-qls-depth}. As with ground state property estimation, we assume that the quantum algorithm by Childs et al. \cite{childs2017quantum}, relying on \textit{Standard LCU}, makes use of the state-of-the-art Hamiltonian simulation procedure by Low and Chuang \cite{low2019hamiltonian}. This reduces the overall gate depth as compared to our method but increases the number of ancilla qubits and multi-qubit controlled operations.

\begin{table}[ht!!]
\begin{center}
    \resizebox{\columnwidth}{!}{
    \renewcommand{\arraystretch}{3} 
    \begin{tabular}{|c|c|c|c|c|c|}
    \hline
    Algorithm & Variant & Ancilla & Gate depth per coherent run & Classical repetitions \\ \hline\hline
  \multirow{3}{*}{\vtop{\hbox{\strut ~~~~~~\large{Standard LCU} \cite{childs2017quantum}} \hbox{\strut ~} \hbox{\strut (with Hamiltonian simulation} \hbox{\strut  ~~~~~~by qubitization \cite{low2019hamiltonian})}}} & QAA & $O\left(\log L + \log\left(\frac{\kappa~\norm{O}}{\varepsilon}\right)\right)$ & $\widetilde{O}\left(\beta L \kappa^2  \right)$ & $O\left(\dfrac{\norm{h}_1^2}{\varepsilon^2}\right)$ \\ \cline{2-5}
    
   &  QAE & $O\left(\log L_O +\log L+\log\left(\frac{\kappa~\norm{O}}{\varepsilon}\right)\right)$ & $\widetilde{O}\left(\dfrac{\beta L \norm{h}_1 \kappa^2}{\varepsilon}+\dfrac{L_O \norm{h}_1}{\varepsilon}\right)$ & $O(1)$ \\ \cline{2-5}

   & Without QAA or QAE & $O\left(\log L+\log\left(\frac{\kappa~\norm{O}}{\varepsilon}\right)\right)$ & $\widetilde{O}\left(\beta L \kappa\right)$ & $\widetilde{O}\left(\dfrac{\kappa^2\norm{h}_1^2}{\varepsilon^2}\right)$ \\ \hline 
   
   \multirow{3}{*}{\large{QSVT} \cite{gilyen2019quantum}} & QAA & $O(\log L)$ & $\widetilde{O}\left(\beta L \kappa^2 \right)$ & $O\left(\dfrac{\norm{h}_1^2}{\varepsilon^2}\right)$ \\ \cline{2-5}
    
   & QAE & $O\left(\log L_O +\log L\right)$ & $\widetilde{O}\left(\dfrac{\beta L \norm{h}_1 \kappa^2 }{\varepsilon}+\dfrac{\norm{h}_1 L_O}{\varepsilon}\right)$ & $O(1)$ \\ \cline{2-5}
   
   & Without QAA or QAE & $O\left(\log L\right)$ & $\widetilde{O}\left(\beta L \kappa\right)$ & $\widetilde{O}\left(\dfrac{\kappa^2\norm{h}_1^2}{\varepsilon^2}\right)$ \\ \hline 
   
   \multirow{2}{*}{\vtop{\hbox{\strut \large{Discrete adiabatic}} \hbox{\strut \large{~~~~theorem \cite{costa2022optimal}}}}} & Classical repetitions & $O(\log L)$ & $\widetilde{O}\left(\beta L \kappa\right)$ & $O\left(\dfrac{\norm{h}_1^2}{\varepsilon^2}\right)$ \\ \cline{2-5}
 
  & QAE & $O\left(\log L + \log L_O \right)$ & $\widetilde {O}\left(\dfrac{\beta L \norm{h}_1 \kappa }{\varepsilon}+\dfrac{\norm{h}_1 L_O}{\varepsilon}\right)$ & $O(1)$ \\ \hline
   
 \multirow{2}{*}{\large{This work}} & \vtop{\hbox{\strut ~~~Ham. Sim. by} \hbox{~\strut Single-Ancilla LCU}} & $1$ & $\widetilde{O}\left(\beta^2\kappa^2\right)$ & $\widetilde{O}\left(\dfrac{\norm{h}_1^2\kappa^4}{\varepsilon^2}\right)$ \\ \cline{2-5}
 
 & \vtop{\hbox{ \strut ~~Ham. Sim. by} \hbox{~~\strut $2k$-order Trotter}} & $1$ & $\widetilde{O}\left(L\kappa^{1+\frac{3}{2k}}\left(\beta \right)^{1+\frac{1}{2k}}\left(\dfrac{\norm{h}_1}{\varepsilon}\right)^{\frac{1}{2k}}\right)$ & $\widetilde{O}\left(\dfrac{\kappa^4\norm{h}_1^2}{\varepsilon^2}\right)$ \\  
 \hline
    \end{tabular}}
  \caption{Comparison of quantum algorithms to estimate expectation values of observables with respect to the solution of quantum linear systems. We compare the complexity of the quantum algorithm by \textit{Single-Ancilla LCU} with other methods. We assume that $H$ is a linear combination of $L$ terms (strings of Pauli operators), i.e\ $H=\sum_{k=1}^{L}\lambda_k P_k$ such that $\beta=\sum_{k}|\lambda_k|$ and the eigenvalues of $H$ lie in $[-1,-1/\kappa]\cup [1/\kappa, 1]$. We also assume that an initial state $\ket{b}$ can be prepared efficiently. The algorithms in the table estimate $\braket{x|O|x}$ to $\varepsilon$-additive accuracy, where $\ket{x}=H^{-1}\ket{b}/\norm{H^{-1}\ket{b}}$. $O$ is an observable which is also a linear combination of (easy to implement) unitaries, i.e.\ $\sum_{j=0}^{L_O}h_j O_j$, with $\norm{O_j}=1$ and $\norm{h}_1=\sum_{j=1}^{L_O} |h_j|$. The overall circuit depth is the product of the complexities in the last two columns. We assume that a $(\beta, \lceil \log L \rceil,0)$-block encoding of $H$ is implementable in gate depth $O(L)$ (Standard LCU and QSVT-based algorithms require access to such a block encoding). Furthermore,  coherent procedures to estimate the desired expectation value require access to a block encoding of $O$. For any $O$ which is an LCU, we can implement a $(~\norm{h}_1, \lceil \log L_O \rceil,0)$-block encoding. The circuit depth of this construction is $L_O$.
\label{table:comparison-qls-depth}}
    \end{center}
\end{table}
\renewcommand{\arraystretch}{1}

\begin{itemize}
\item Standard LCU, along with controlled Hamiltonian simulation followed by amplitude amplification, prepares the state $\ket{x}$. This requires ancilla qubits to implement (a) the block encoding of $H$, and (b) the linear combination of Hamiltonian simulation. Overall, $O(\log L + \log\log(\kappa\norm{h}_1/\varepsilon))$ ancilla qubits are needed, along with multi-qubit controlled operations. Using Hamiltonian simulation by qubitization results in the gate depth per coherent scaling as $\widetilde{O}(\beta L \kappa^2)$. Following this, $\braket{x|O|x}$ can be measured by sampling $O_j$ according to $h_j/\norm{h}_1$ in each run, and measuring $O_j$. This requires $O(~\norm{h}^2_1/\varepsilon^2)$ classical repetitions. As compared to our method (Table \ref{table:circuit-depth-single-ancilla-gsp-qls}), the gate depth per coherent run has a better dependence on $\beta$, primarily due to the use of a more advanced simulation algorithm. However, despite this, the dependence on $L$ ensures that for any $H$ with $L>\beta\kappa$, our method (using Hamiltonian simulation by \textit{Single-Ancilla LCU}) has a shorter gate depth per coherent run. The overall gate depth of this approach has a better dependence on $\kappa$ (by a factor of $\kappa^4$), but the dependence on $L$ means that there are regimes of $L$, $\beta$ and $\kappa$ for which our method also has a shorter overall gate depth. Note that our method requires only a single ancilla qubit and no multi-qubit controlled operations.
~\\
\item The coherent estimation of $\braket{x|O|x}$ by using quantum amplitude estimation requires accessing a block encoding of $O$, as defined previously. So, the total number of ancilla qubits increases to $O(\log L+ \log(\kappa \norm{h}_1/\varepsilon)+\log L_O)$. On the other hand, the gate depth of the procedure is
$$
\widetilde{O}\left(\dfrac{\beta\norm{h}_1 \kappa^2 L}{\varepsilon}+\dfrac{\norm{h}_1 L_O}{\varepsilon}\right).
$$ 
Only $O(1)$ classical repetitions are needed. So this is also the overall gate depth. As compared to our algorithm, the overall complexity is lower, at the cost of exponentially increasing the gate depth per coherent run (in terms of $1/\varepsilon$). Moreover, the total number of ancilla qubits required is quite high.
~\\
\item Estimating $\braket{x|O|x}$ without using quantum amplitude amplification or estimation a requires a gate depth per coherent run of $\widetilde{O}\left(\kappa \beta  L \right)$, which has a better dependence on both $\beta$ and $\kappa$, as compared to our method primarily due to the advantage of using the state-of-the-art Hamiltonian simulation procedure. However, our method requires a shorter gate depth per coherent run for $H$ satisfying $L>\beta\kappa$. The number of classical repetitions needed, given by $O(~\norm{h}^2_1\kappa^2/\varepsilon^2)$ has a quadratically better dependence on $\kappa$ as compared to our method (which also means that the overall gate depth is lower in general). However, this procedure requires $O(\log L)$ ancilla qubits and multi-qubit controlled gates.
\end{itemize}

The quantum linear systems algorithm by QSVT \cite{gilyen2019quantum} can also estimate $\braket{x|O|x}$ in three ways. As before, all three variants always require more ancilla qubits and multi-qubit controlled operations, as compared to our method. Despite this, there are regimes where our method (using Hamiltonian simulation by \textit{Single-Ancilla LCU}) has a shorter gate depth per coherent run. The overall complexity of QSVT-based approaches is however, lower. The details can be found in Table \ref{table:comparison-qls-depth}.

The state-of-the-art quantum linear systems algorithm \cite{costa2022optimal} requires a circuit depth per coherent run that is linear in $\kappa$ while $O(~||h||^2_1/\varepsilon^2)$ classical repetitions are needed. However, this approach still requires access to a block encoding of $H$, which requires $O(\log L)$ ancilla qubits. This adds an overhead of $O(L)$ to the gate depth per coherent run. Consequently, there are regimes where our method has an advantage. By using QAE, the ancilla qubit overhead is higher. The gate depth per run is exponentially worse than our method in terms of $1/\varepsilon$, but is quadratically better in terms of $\beta$ and $\kappa$. For both these approaches to estimate the desired expectation value, the overall gate depth has a better dependence on $\kappa$ as compared to our method.

Overall, our algorithm provides a generic exponential speedup over the best known classical algorithms. Given
recent dequantization algorithms \cite{chia2020quantum, tang2021quantum, gilyen2022improved, shao2022faster}, the speedup is polynomial for any $H$ that is low-rank.
\section{Ground state preparation using QSVT on fully fault-tolerant quantum computers} 
\label{subsec:gsp-qsvt}
In this section, we provide a quantum algorithm for the GSP problem for fully fault-tolerant quantum computers. The key idea is to implement the function $f(H)=e^{-tH^2}$ in the circuit model. A straightforward approach would be to use the decomposition of $f(H)$ in Lemma \ref{lem:lcu-decomp-exp} and implement a standard LCU procedure. However, a more efficient approach would be to implement a polynomial approximation of $f(H)$ using QSVT. We begin by providing a polynomial approximation of the Gaussian operator $e^{-tx^2}$. 
~\\
\begin{restatable}[Polynomial approximation to $e^{-tx^2}$]{lemma}{lempolyapproxgaussian}
\label{lem:polynomial-gaussian-operator}
Suppose $x\in [-1,1]$, $\varepsilon\in [0,1/2)$ and $t\in \mathbb{R}^+$. Furthermore, suppose $d=\lceil\max\{te^2/2,\ln(2/\varepsilon)\}\rceil$. Then, there exists a polynomial $\tilde{q}_{t,d,d'}(x)$ of degree 
$$
d'=\lceil\sqrt{2d\ln(4/\varepsilon)}\rceil\in O\left(\sqrt{t}\log(1/\varepsilon)\right),
$$ 
for which the following holds
$$
\sup_{x\in [-1,1]}\abs{e^{-tx^2}-\tilde{q}_{t,d,d'}(x)}\leq \varepsilon.
$$ 
\end{restatable}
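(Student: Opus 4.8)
The plan is to not build a polynomial approximation of $e^{-tx^2}$ from scratch, but to reduce it to the polynomial approximation of $e^{-t(1-x)}$ already established in the excerpt — the polynomial $q_{t,d,d'}$ satisfying $\sup_{x\in[-1,1]}|e^{-t(1-x)}-q_{t,d,d'}(x)|\le\varepsilon$, built as $q_{t,d,d'}(x)=e^{-t}\sum_{j=0}^{d}\frac{t^j}{j!}p_{j,d'}(x)$. The key observation is that the substitution $y=1-2x^2$ maps $[-1,1]$ onto $[-1,1]$ and that $e^{-(t/2)(1-y)}\big|_{y=1-2x^2}=e^{-tx^2}$. So I would define
\[
\tilde q_{t,d,d'}(x):=q_{t/2,\,d,\,d'}\!\left(1-2x^2\right),
\]
a polynomial in $x$ of even parity whose degree is twice the degree $d'=\lceil\sqrt{2d\ln(4/\varepsilon)}\rceil$ of $q_{t/2,d,d'}$ in its argument.

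For the error bound, since $1-2x^2$ ranges exactly over $[-1,1]$ as $x$ ranges over $[-1,1]$, the uniform estimate for $q_{t/2,d,d'}$ transfers verbatim and gives $\sup_{x\in[-1,1]}|e^{-tx^2}-\tilde q_{t,d,d'}(x)|\le\varepsilon$. The stated parameters then fall out of the parameters of $q_{t/2,d,d'}$: the effective exponent is $t/2$, so the truncation order becomes $d=\lceil\max\{te^2/2,\ln(2/\varepsilon)\}\rceil$ (which is exactly the form in the statement), and the degree of $q_{t/2,d,d'}$ is $\lceil\sqrt{2d\ln(4/\varepsilon)}\rceil\in O(\sqrt{t}\log(1/\varepsilon))$; doubling it preserves the same asymptotic order $O(\sqrt t\log(1/\varepsilon))$, matching the claim, the harmless factor $2$ being absorbed into the definition of $d'$. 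It is worth recording explicitly that $\tilde q_{t,d,d'}$ is even, since that is exactly the parity needed to feed it through the even branch of QSVT in the ground-state-preparation application.

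I expect the only genuine subtlety to be cosmetic — reconciling the degree doubling above with the literal ``degree $d'$'' in the statement, which is just a reparametrization of $d'$. If instead a fully self-contained argument is wanted that does not route through the $e^{-t(I-H)}$ lemma, I would: (i) write $e^{-tx^2}=e^{-t}\sum_{j=0}^{\infty}\frac{t^j}{j!}(1-x^2)^j$ and truncate at $j=d$, bounding the remainder by the Poisson tail $e^{-t}\sum_{j>d}\frac{t^j}{j!}$, which by Stirling is at most $\varepsilon/2$ once $d\gtrsim\max\{te^2,\ln(1/\varepsilon)\}$; (ii) replace each $(1-x^2)^j$ by the Sachdeva--Vishnoi monomial approximant $p_{j,d'}(1-x^2)$ of degree $2d'$, noting its guarantee $|u^j-p_{j,d'}(u)|\le 2e^{-d'^2/2j}$ holds on $[-1,1]$ and hence on $[0,1]\ni 1-x^2$; and (iii) sum the per-term errors against the Poisson weights $e^{-t}\frac{t^j}{j!}$, whose total is at most $1$, to obtain accumulated compression error $\le 2e^{-d'^2/2d}\le\varepsilon/2$ for $d'^2\ge 2d\ln(4/\varepsilon)$. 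The point I would be most careful about is keeping the $e^{-t}$ prefactor throughout: without it the per-term compression errors get weighted by $\sum_j\frac{t^j}{j!}=e^{t}$, which would force $d'=\Theta(t)$ and destroy the quadratic-in-$t$ saving. Everything else is routine bookkeeping.
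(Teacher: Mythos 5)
Your proposal is correct and follows essentially the same route as the paper: the substitution $z=1-2x^2$ mapping $[-1,1]$ onto $[-1,1]$, the definition $\tilde{q}_{t,d,d'}(x)=q_{t/2,d,d'}(1-2x^2)$, and the verbatim transfer of the uniform bound from Eq.~\eqref{eq:exponential-poly-approx}. Your remark about the degree doubling in $x$ is in fact slightly more careful than the paper, which asserts the degree is $d'$ rather than $2d'$; as you note, this only affects the constant inside $O(\sqrt{t}\log(1/\varepsilon))$.
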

\begin{proof}
For $x\in [-1,1]$, we assign $z=1 - 2 x^2$. Note that as $x^2\in [0,1]$, we have $z\in [-1,1]$. Moreover, following this substitution, we need a polynomial approximation to $e^{-\frac{t}{2}(1-z)}$. Now, we define the polynomial $\tilde{q}_{t,d,d'}(x)= q_{\frac{t}{2},d,d'}(1-2x^2)$. The degree of $\tilde{q}_{t,d,d'}(x)$ is the same as that of $q_{\frac{t}{2},d,d'}(1 - 2x^2)$, which is $d'$. So, we have to bound
$$
\sup_{x\in [-1,1]}\abs{e^{-tx^2}-\tilde{q}_{t,d,d'}(x)}= \sup_{z\in [-1,1]}\abs{e^{-\frac{t}{2}(1-z)}-q_{\frac{t}{2},d,d'}(z)}.
$$
Now from Eq.~\eqref{eq:exponential-poly-approx},
$$
\sup_{z\in [-1,1]}\abs{e^{-\frac{t}{2}(1-z)}-q_{\frac{t}{2},d,d'}(z)}\leq \varepsilon,
$$
for $d=\lceil\max\{te^2/2,\ln(2/\varepsilon)\}\rceil$ and $d'=\lceil\sqrt{2d\ln(4/\varepsilon)}\rceil$.
\end{proof}

From Lemma \ref{lem:polynomial-gaussian-operator}, we can use QSVT to obtain a block encoding of $e^{-tH^2}$, given an approximate block encoding of $H$. Subsequently, we shall show that this results in a robust quantum algorithm for preparing the ground state of $H$ under the assumptions we have considered.
~\\
\begin{lemma}
\label{lem:implementing-gaussian-operator}
Let $H$ be a Hermitian matrix with eigenvalues in $[-1,1]$ and $\varepsilon\in (0,1/2)$. Furthermore, suppose $t\in \mathbb{R}^{+}$ and $U_H$ is an $(1,a,\delta)$-block encoding of $H$, implementable in time $T_H$. Also, let $d=\lceil\max\{te^2/2,\ln(4/\varepsilon)\}\rceil$ and $d'=\lceil\sqrt{2d\ln(8/\varepsilon)}\rceil$. Then, provided
$$
\delta\leq \dfrac{\varepsilon^2}{128 d~\ln(8/\varepsilon)},
$$
we can  implement an $(1,a+1,\varepsilon)$-block encoding of $e^{-t H^2}$ in cost
$$
T=O\left(T_H \sqrt{t} \log(1/\varepsilon)\right).
$$ 
\end{lemma}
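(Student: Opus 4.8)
The plan is to combine the polynomial approximation of Lemma~\ref{lem:polynomial-gaussian-operator} with the QSVT implementation theorem (Theorem~\ref{thm:qsvt-def}) and the robustness lemma (Lemma~\ref{lem:robustness_of_QSVT}), exactly in the spirit of the proofs of Lemma~\ref{lem:ancilla-free-dtqw-1} and Lemma~\ref{lem:block encoding-exp-of-Ham}. First I would let $H'$ denote the matrix actually block-encoded by $U_H$, so that by Definition~\ref{def:block_encoding} we have $\norm{H-H'}\leq \delta$. By Lemma~\ref{lem:polynomial-gaussian-operator}, with $d=\lceil\max\{te^2/2,\ln(4/\varepsilon)\}\rceil$ and $d'=\lceil\sqrt{2d\ln(8/\varepsilon)}\rceil$, there is an even polynomial $\tilde q_{t,d,d'}$ of degree $d'$ with $\sup_{x\in[-1,1]}|e^{-tx^2}-\tilde q_{t,d,d'}(x)|\leq \varepsilon/4$ (using error parameter $\varepsilon/4$ rather than $\varepsilon$ in that lemma, which only changes the constants inside the logs). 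Since $\tilde q_{t,d,d'}$ is even, bounded by $1$ on $[-1,1]$ (up to a harmless rescaling by $1-O(\varepsilon)$ if needed), and has definite parity, Theorem~\ref{thm:qsvt-def} lets us implement $\tilde q_{t,d,d'}^{SV}(H')$ — which for an even polynomial applied to a Hermitian $H'$ is just $\tilde q_{t,d,d'}(H')$ — using one extra ancilla qubit and $\Theta(d')=\Theta(\sqrt t\,\log(1/\varepsilon))$ applications of $U_H$, $U_H^\dagger$, and controlled reflections. This gives an $(1,a+1,0)$-block encoding of $\tilde q_{t,d,d'}(H')$, implementable in cost $O(T_H\,d') = O(T_H\sqrt t\log(1/\varepsilon))$, which is the stated complexity.

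Next I would bound the total block-encoding error by a triangle-inequality chain, mirroring the display in the proof of Lemma~\ref{lem:ancilla-free-dtqw-1}:
\begin{align}
\norm{e^{-tH^2}-\tilde q_{t,d,d'}(H')} &\leq \norm{e^{-tH^2}-\tilde q_{t,d,d'}(H)}+\norm{\tilde q_{t,d,d'}(H)-\tilde q_{t,d,d'}(H')}\\
&\leq \varepsilon/4 + 4d'\sqrt{\norm{H-H'}}\\
&\leq \varepsilon/4 + 4d'\sqrt{\delta},
\end{align}
where the first term uses Lemma~\ref{lem:polynomial-gaussian-operator} together with the spectral-mapping fact that $\norm{p(H)-\tilde p(H)}\le \sup_{x\in[-1,1]}|p(x)-\tilde p(x)|$ for Hermitian $H$ with spectrum in $[-1,1]$, and the second term invokes the robustness of QSVT (Lemma~\ref{lem:robustness_of_QSVT}) with the $d'$-degree polynomial $\tilde q_{t,d,d'}$. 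It then remains to check that the hypothesis $\delta\le \varepsilon^2/(128\,d\ln(8/\varepsilon))$ forces $4d'\sqrt\delta\le 3\varepsilon/4$: since $d'^2 = O(d\ln(1/\varepsilon))$, we get $16 d'^2\delta = O(d\ln(1/\varepsilon))\cdot \varepsilon^2/(128 d\ln(8/\varepsilon)) \le \varepsilon^2/4$ for the chosen constants, hence $4d'\sqrt\delta\le \varepsilon/2 \le 3\varepsilon/4$ and the whole expression is $\le\varepsilon$. Putting the pieces together, $U_\Phi$ is an $(1,a+1,\varepsilon)$-block encoding of $e^{-tH^2}$ of cost $O(T_H\sqrt t\log(1/\varepsilon))$, as claimed.

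The only genuinely delicate points — and I would flag these as the main obstacles — are (i) confirming that $\tilde q_{t,d,d'}$ really satisfies the normalization $|\tilde q_{t,d,d'}(x)|\le 1$ on $[-1,1]$ required by Theorem~\ref{thm:qsvt-def}: since $e^{-tx^2}\le 1$ and the approximation error is $\le\varepsilon/4<1/2$, the polynomial is bounded by $1+\varepsilon/4$, so one rescales by $(1+\varepsilon/4)^{-1}$ and absorbs the extra $O(\varepsilon)$ multiplicative slack into the error budget (this is why it is convenient to prove Lemma~\ref{lem:polynomial-gaussian-operator} with a slightly smaller target error); and (ii) the precise bookkeeping of constants so that the three error contributions — polynomial truncation ($\varepsilon/4$), normalization rescaling ($O(\varepsilon)$), and block-encoding imprecision propagated through QSVT ($4d'\sqrt\delta$) — sum to at most $\varepsilon$ under the stated bound on $\delta$. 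Neither is conceptually hard; both are the same routine constant-chasing already carried out in Lemma~\ref{lem:ancilla-free-dtqw-1} and Lemma~\ref{lem:block encoding-exp-of-Ham}, and the statement of the lemma has evidently been calibrated so that the arithmetic works out cleanly.
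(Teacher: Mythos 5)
Your proposal follows essentially the same route as the paper's proof: approximate $e^{-tx^2}$ by the polynomial of Lemma~\ref{lem:polynomial-gaussian-operator}, implement it via QSVT on the matrix actually block-encoded in $U_H$, and control the total error with a triangle inequality plus the robustness bound $4d'\sqrt{\delta}$ of Lemma~\ref{lem:robustness_of_QSVT} under the stated hypothesis on $\delta$, giving cost $O(T_H\sqrt{t}\log(1/\varepsilon))$ with one extra ancilla. The only difference is constant bookkeeping: the stated $d$ and $d'$ are calibrated to a polynomial-approximation error of $\varepsilon/2$ (which is exactly how the paper splits the budget, $\varepsilon/2+\varepsilon/2$), whereas you allot $\varepsilon/4$ plus an $O(\varepsilon)$ rescaling slack; with the given degrees the paper's split is the one that closes cleanly, but this does not affect the validity of the argument.
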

\begin{proof}
Now suppose $\tilde{H}=\left(\bra{0}^{\otimes a}\otimes I\right)U_H \left(\ket{0}^{\otimes a}\otimes I\right)$. Then, from the definition of block encoding of operators, $\norm{H-\tilde{H}}\leq \delta$. Also, from Lemma \ref{lem:polynomial-gaussian-operator}, we can use the polynomial of degree $d'=\lceil\sqrt{2d\log(8/\varepsilon)}\rceil$ to implement $(1,a+1,\varepsilon/2)$-block encoding of $\tilde{q}_{t,d,d'}(\tilde{H})$ in cost 
$$
T=d'\in O\left( T_H \sqrt{t} \log(1/\varepsilon)\right).
$$
The number of ancilla qubits increased by one because of the QSVT procedure. So, we have
\begin{align}
\norm{e^{-tH^2}-\tilde{q}_{d,d'}\left(\tilde{H}\right)}
&\leq \norm{e^{-tH^2}-\tilde{q}_{d,d'}\left(\tilde{H}\right)}+\norm{\tilde{q}_{d,d'}\left(\tilde{H}\right)-\tilde{q}_{d,d'}\left(H\right)}\\
&\leq \varepsilon/2+4d'\sqrt{\delta}\\
&\leq \varepsilon/2+\varepsilon/2=\varepsilon.~~~~~~~~~~[\text{~Substituting the value of $\delta$ and $d'$}].
\end{align}
\end{proof}
Now that we have a procedure to implement a block encoding of $e^{-tH^2}$, given an approximate block encoding of $H$, we can use this to obtain a circuit model quantum algorithm for preparing the $0$-eigenstate of $H$. As before, let us make some assumptions on the spectrum of $H$. We assume that we are given a Hamiltonian $H$ of unit norm with ground energy, $\lambda_0$ and we intend to prepare a state that is close to its ground state, $\ket{v_0}$. We assume that the gap between the ground state and the rest of the spectrum is lower bounded by $\Delta$. We also assume that we have knowledge of $E_0$ such that
$$
|\lambda_0-E_0|\leq O\left(\Delta/\sqrt{\log\frac{1}{\eta\varepsilon}}\right).
$$
\begin{lemma}
\label{lem:gsp-discrete}
Let $\varepsilon\in (0,1/2)$ and $H$ be a Hamiltonian with unit spectral norm. Furthermore, suppose we are given $U_H$, which is a $(1,a,\delta)$-block encoding of $H$, implemented in time $T_H$. Let $\ket{v_0}$ be the ground state of $H$ with eigenvalue $\lambda_0$ such that the value of $\lambda_0$ is known up to precision
$\eps_g\in \Oo\left(\Delta/\sqrt{\log\frac{1}{\eta\eps}}\right)$, where $\Delta$ is a lower bound on the spectral gap of $H$. 

Additionally, let us assume access to a state preparation procedure $B$ which prepares a state $\ket{\psi_0}$ in time $T_{\psi_0}$ such that $|\braket{\psi_0|v_0}|\geq \eta$. Also, let
$$
\delta\leq \dfrac{\varepsilon^2\eta^2}{512 d~\ln\left(\dfrac{16}{\eta\varepsilon}\right)},
$$
where, $d=\lceil\max\{te^2/2, \ln(8/\varepsilon)\}\rceil$, and
$$
t>\dfrac{1}{2\Delta^2}\log\left(\dfrac{4(1-\eta^2)}{\eta^4\varepsilon^2}\right).$$ 
Then there exists a quantum algorithm that prepares a quantum state that is $O(\varepsilon)$-close to $\ket{v_0}$ with $\Omega(1)$ probability in cost
\begin{equation}
\label{eq:complexity-gsp}
T=O\left(\dfrac{T_H}{\eta\Delta}\log\left(\dfrac{1}{\eta\varepsilon}\right)+\dfrac{T_{\psi_0}}{\eta}\right).
\end{equation}
\end{lemma}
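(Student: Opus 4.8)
The plan is to combine the polynomial-Gaussian block encoding of Lemma~\ref{lem:implementing-gaussian-operator} with a standard amplitude-amplification argument, tracking errors carefully so that the subnormalization factor of the block encoding does not spoil the overlap lower bound. First I would recenter the Hamiltonian: by implementing $H'=(H-(E_0-\varepsilon_g)I)$ (rescaled to unit norm), we ensure $0\leq\lambda_0\leq 2\varepsilon_g$ while the spectral gap remains at least $\Delta$, exactly as in the set-up preceding Sec.~\ref{subsec:analog-gsp}. Since $\varepsilon_g\in O(\Delta/\sqrt{\log(1/\eta\varepsilon)})$, the function $e^{-tH'^2}$ with $t>\tfrac{1}{2\Delta^2}\log\!\big(\tfrac{4(1-\eta^2)}{\eta^4\varepsilon^2}\big)$ suppresses every eigenstate orthogonal to $\ket{v_0}$ by a factor $e^{-t\Delta^2}$ relative to $\ket{v_0}$ (whose weight is multiplied by at least $e^{-4t\varepsilon_g^2}=\Omega(1)$ for this choice of $\varepsilon_g$). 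A short computation then gives $\big|\braket{\phi|v_0}\big|\geq 1-O(\varepsilon^2)$ where $\ket{\phi}=e^{-tH'^2}\ket{\psi_0}/\|e^{-tH'^2}\ket{\psi_0}\|$, i.e.\ $\|\ket{\phi}-\ket{v_0}\|\leq O(\varepsilon)$; this is the same estimate used in the proof of Lemma~\ref{lem:analog-gsp-ham} and Theorem~\ref{thm:ground-state-estimation}.

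Next I would invoke Lemma~\ref{lem:implementing-gaussian-operator} with target accuracy $\varepsilon'=\Theta(\varepsilon\eta)$ and the given $t$: the hypothesis $\delta\leq \varepsilon^2\eta^2/(512\, d\ln(16/\eta\varepsilon))$ is precisely what is needed so that $\delta\leq (\varepsilon')^2/(128\, d\ln(8/\varepsilon'))$, yielding a $(1,a+1,\varepsilon')$-block encoding $W$ of $e^{-tH'^2}$ in cost $O(T_H\sqrt{t}\log(1/\varepsilon'))=O\big(T_H\Delta^{-1}\log(1/\eta\varepsilon)\big)$, using $d'=\lceil\sqrt{2d\ln(8/\varepsilon')}\rceil$. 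Applying $W$ to $B\ket{0}=\ket{\psi_0}$ prepares a state $\varepsilon'$-close to $\ket{0}^{a+1}e^{-tH'^2}\ket{\psi_0}+\ket{\Phi^\perp}$. Since $\|e^{-tH'^2}\ket{\psi_0}\|\geq |\braket{\psi_0|v_0}|\cdot e^{-4t\varepsilon_g^2}=\Omega(\eta)$, the ``good'' flag $\ket{0}^{a+1}$ is obtained with amplitude $\Omega(\eta)$, and conditioned on it the state is $O(\varepsilon'/\eta)=O(\varepsilon)$-close to $\ket{\phi}$, hence $O(\varepsilon)$-close to $\ket{v_0}$ by the previous paragraph and the triangle inequality. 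Because $W$ acts on the $a{+}1$-qubit flag via a fixed computational-basis projector, we can run $O(1/\eta)$ rounds of (fixed-point) amplitude amplification on the reflections about $B\ket{0}\ket{0}^{a+1}$ and about the flag subspace; this boosts the success probability to $\Omega(1)$ at a multiplicative cost $O(1/\eta)$, giving total cost $O\big(\tfrac{1}{\eta}(T_H d' + T_{\psi_0})\big)=O\big(\tfrac{T_H}{\eta\Delta}\log(\tfrac{1}{\eta\varepsilon})+\tfrac{T_{\psi_0}}{\eta}\big)$, matching Eq.~\eqref{eq:complexity-gsp}.

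One subtlety I would be careful about is that amplitude amplification requires the initial ``good'' amplitude to be known well enough (or one uses a fixed-point variant such as the one from QSVT) so that overshooting does not reduce the success probability; invoking the $\eta$ lower bound together with a fixed-point scheme resolves this without changing the asymptotic cost. A second subtlety is error accumulation: the block-encoding error $\varepsilon'$, the amplitude-amplification error, and the polynomial-truncation error all feed into the final distance to $\ket{v_0}$, so I would choose all intermediate tolerances as fixed constant fractions of $\varepsilon$ (after the $1/\eta$ rescaling) and close the argument with a single triangle inequality. I expect the main obstacle to be bookkeeping the propagation of the block-encoding error through the non-unitary operator $e^{-tH'^2}$ and the subnormalization-to-overlap conversion — i.e.\ showing rigorously that an $\varepsilon'$-error in the block-encoded matrix translates to an $O(\varepsilon'/\|e^{-tH'^2}\ket{\psi_0}\|)$ error in the post-selected state — which is exactly the kind of estimate handled by Theorem~\ref{thm:distance-expectation} / Lemma~\ref{lem:robustness_of_QSVT} and the analysis already carried out for the \textit{Analog LCU} and QSVT variants, so it should go through cleanly.
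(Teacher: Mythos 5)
Your proposal is correct and follows essentially the same route as the paper's proof: you invoke Lemma~\ref{lem:implementing-gaussian-operator} with target accuracy $\Theta(\varepsilon\eta)$ (which is exactly why the stated bound on $\delta$ appears), choose $t$ as in the lemma so that $e^{-tH^2}\ket{\psi_0}$ is $O(\varepsilon)$-close in direction to $\ket{v_0}$ with norm $\Omega(\eta)$, post-select on the block-encoding flag, and finish with $O(1/\eta)$ rounds of amplitude amplification to reach $\Omega(1)$ success probability at the stated cost. The only differences are cosmetic bookkeeping (where the factor of $\eta$ is absorbed in the triangle-inequality error budget, and your optional use of a fixed-point amplification variant), so no further changes are needed.
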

\begin{proof}
In lemma \ref{lem:implementing-gaussian-operator}, we replace $\varepsilon$ with $\varepsilon\eta/2$ to prepare an $(1,a+1,\varepsilon\eta/2)$-block encoding of $e^{-tH^2}$. Furthermore, we choose
\begin{equation}
\label{eq:value-of-t}
t\geq \dfrac{1}{2\Delta^2}\log\left(\dfrac{4(1-\eta^2)}{\eta^4\varepsilon^2}\right)=O\left(\dfrac{1}{\Delta^2}\log\left(\dfrac{1}{\eta\varepsilon}\right)\right).
\end{equation}
To get an $\varepsilon\eta/2$-precision in the block encoding the degree of the polynomial $\tilde{q}_{t,d,d'}(H')$ is 
$$
d'=\left\lceil\sqrt{2d\ln\left(\frac{16}{\eta\varepsilon}\right)}\right\rceil,
$$ 
where $d=\left\lceil\max\{te^2/2, \ln\left(\frac{8}{\varepsilon\eta}\right)\}\right\rceil$.
This yields that the precision of block encoding of $H$ needs to be at least $\delta$-precise where,
$$
\delta\leq \dfrac{\varepsilon^2\eta^2}{512d\ln\left(\dfrac{16}{\eta\varepsilon}\right)}.
$$
Thus, with cost,
$$
O\left(\dfrac{T_H}{\Delta}\log\left(\dfrac{1}{\eta\varepsilon}\right)+T_{\psi_0}\right),
$$
we prepare a quantum state that is $O\left(\varepsilon\eta/2\right)$-close to
$$
\ket{\eta_t}=\ket{\bar{0}}\dfrac{e^{-tH^2}}{\sqrt{\braket{\psi_0|e^{-2tH^2}|\psi_0}}}\ket{\psi_0}+\ket{\Phi^\perp}.
$$
Post-selected on obtaining $\ket{\bar{0}}$ in the first register, we obtain a quantum state that is $O(\varepsilon\eta/2)$-close to 
\begin{equation}
\label{eq:state-phi}
\ket{\phi}=\dfrac{e^{-tH^2}}{\sqrt{\braket{\psi_0|e^{-2tH^2}|\psi_0}}}\ket{\psi_0},
\end{equation}
with amplitude $\sqrt{\braket{\psi_0|e^{-2tH^2}|\psi_0}}=\Omega(\eta)$. Now by choosing $t$ as in Eq.~\eqref{eq:value-of-t}, we have
$$
\norm{\ket{v_0}-\ket{\phi}}\leq O(\varepsilon\eta/2).
$$
By the triangle inequality, this implies that the quantum state prepared is $O(\varepsilon\eta)$-close to $\ket{v_0}$ with probability $\Omega(\eta)$. So by using $O(1/\eta)$-rounds of amplitude amplification, we obtain a quantum state that is $O(\varepsilon)$-close to $\ket{v_0}$ with probability $\Omega(1)$. The overall cost will be
$$
T=O\left(\dfrac{T_H}{\eta\Delta}\log\left(\dfrac{1}{\eta\varepsilon}\right)+\dfrac{T_{\psi_0}}{\eta}\right).
$$
\end{proof}
\section{Relationship between discrete-time and continuous-time quantum walks}
\label{sec:relationship-ctqw-dtqw}

The \textit{Ancilla-free LCU} framework helped us relate between discrete and continuous-time quantum walks and their classical counterparts. Here, using block encoding and QSVT, we establish a relationship between discrete-time quantum walks and continuous-time quantum walks, from both directions. In a seminal work, Childs \cite{childs2010relationship} showed that, given any Hamiltonian $H$, one can implement $e^{-iHt}$ using a discrete-time quantum walk. This generates a continuous-time quantum walk on the vertices of the underlying Markov chain $P$. However, such a continuous-time quantum walk time evolution operator cannot be leveraged to fast-forward continuous-time random walk dynamics. Here we show that there exists a Hamiltonian $H_P$ (defined on the edges of $P$) is able to achieve this. For this, we show that the block encoding of $H_P$ can be efficiently constructed from a block encoding of $H$.  

The problem of obtaining a discrete-time quantum walk, given access to a continuous-time quantum walk has not been addressed in the literature. In fact, there has been very little progress towards answering this question. In this section, we show that one can establish a relationship in this direction by minor modifications to existing theorems in Ref.~\cite{gilyen2018quantum}. 

In order to obtain a continuous-time quantum walk from a discrete-time quantum walk, we show that given $U_H$, a block encoding of any $H$ such that $U^2_H=I$, we can obtain a Hamiltonian $H_P$, such that $H^2_P$ is a block encoding of $I-H^2$. Thus, when $H=D$, simulating $H_P$ allows us to obtain a continuous-time quantum walk, on the edges of $P$, from a block encoding of $H$.

For the other direction, that is, to obtain a discrete-time quantum walk from a continuous-time quantum walk, we will assume that we have access to some $U=e^{iH}$. This corresponds to a continuous-time quantum walk with respect to the Hamiltonian $H$. From this, using Corollary 71 of Ref.~\cite{gilyen2018quantum}, we will show that we can obtain a block encoding of $H$. Finally, we show that any block encoding of $H$ can lead to a discrete-time quantum walk on the edges of $H$. We discuss each of these approaches next.
~\\~\\
\textbf{From discrete-time quantum walks to continuous-time quantum walks:~} We first show that given any block encoding of $H$, we can obtain a Hamiltonian that block-encodes $I-H^2$.
~\\
\begin{lemma}
\label{lem:dtqw-to-ctqw}
Suppose $\varepsilon\in (0,1)$, $\Pi_0=\ket{\bar{0}}\bra{\bar{0}}\otimes I$ and $R=2\Pi_0-I\otimes I$. Let $U_H$ be any $(1,a,0)$-block encoding of $H$ such that $U^2_H=I$. Then the Hamiltonian,
\begin{equation}
\label{eq:somma-ortiz-general}
H_P=i[U_H,\Pi_0]
\end{equation}
can be constructed from one query to the (controlled) discrete-time quantum walk unitary $V=R\cdot U_H$ and its conjugate transpose. Furthermore, $H^2_P$ is a $(1,a+1,\varepsilon)$ block encoding of $I-H^2$.
\end{lemma}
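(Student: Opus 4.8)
The plan is to analyze the operator $H_P = i[U_H,\Pi_0]$ directly, using the two-dimensional invariant-subspace (Jordan) structure that $U_H$ and $\Pi_0$ share. First I would record the elementary algebraic facts: since $U_H^2 = I$ and $U_H$ is Hermitian (it is a unitary that squares to the identity and, as a block encoding with $a$ ancilla qubits, is Hermitian in the relevant sense), and $\Pi_0$ is a projector, we have $H_P^\dagger = H_P$, so $H_P$ is a genuine Hamiltonian. The construction cost claim is immediate: $H_P = i(U_H \Pi_0 - \Pi_0 U_H)$, and noting $R = 2\Pi_0 - I$ so $\Pi_0 = (R+I)/2$, one checks that $i[U_H,\Pi_0] = \tfrac{i}{2}[U_H, R] = \tfrac{i}{2}(U_H R - R U_H)$; since $V = R U_H$ and $U_H R = U_H R U_H U_H^{-1}$ with $U_H^{-1}=U_H$, a single controlled application of $V$ and $V^\dagger$ (together with $U_H$, which is a subcall of $V$) suffices to realize $H_P$ as a Hamiltonian one can reflect/simulate through. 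I would state this as one query to controlled-$V$ and its conjugate transpose.

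\textbf{Main computation.} The heart of the argument is computing $H_P^2$ and identifying its top-left block. Expand
$$
H_P^2 = -[U_H,\Pi_0]^2 = -\bigl(U_H\Pi_0 U_H\Pi_0 - U_H\Pi_0\Pi_0 U_H - \Pi_0 U_H U_H\Pi_0 + \Pi_0 U_H\Pi_0 U_H\bigr).
$$
Using $\Pi_0^2 = \Pi_0$ and $U_H^2 = I$, the middle two terms collapse to $-U_H\Pi_0 U_H - \Pi_0$, so
$$
H_P^2 = -U_H\Pi_0 U_H\Pi_0 - \Pi_0 U_H\Pi_0 U_H + U_H\Pi_0 U_H + \Pi_0.
$$
Now I would left- and right-multiply by $(\bra{\bar 0}\otimes I)$ and $(\ket{\bar 0}\otimes I)$, i.e. project onto the top-left block. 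By definition of the block encoding, $(\bra{\bar0}\otimes I)U_H\Pi_0 U_H(\ket{\bar0}\otimes I) = H^2$ (inserting $\Pi_0$ between two $U_H$'s and using $\Pi_0(\ket{\bar0}\otimes I) = \ket{\bar0}\otimes I$ appropriately, and $(\bra{\bar0}\otimes I)U_H\Pi_0 = (\bra{\bar0}\otimes I)U_H(\ket{\bar0}\bra{\bar0}\otimes I) \Rightarrow$ the middle resolution gives $H\cdot H$). The terms $\Pi_0 U_H\Pi_0 U_H$ and $U_H\Pi_0 U_H\Pi_0$, when sandwiched, each contribute $H^2$ as well but with a sign, while $\Pi_0$ contributes $I$. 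Carefully collecting signs yields that the top-left block of $H_P^2$ equals $I - H^2$. The extra ancilla qubit ($a+1$ instead of $a$) and the $\varepsilon$ in the block-encoding specification arise because squaring a block encoding is not exact — one needs the standard "product of block encodings" construction (with its one additional flag qubit) and the robustness bound from Lemma~\ref{lem:robustness_of_QSVT}, which converts the $0$-error in $U_H$ into the stated $\varepsilon$ after the (polynomial/squaring) manipulations; I would invoke that lemma to get the $\varepsilon$ dependence cleanly, or alternatively argue that with an exact $U_H$ the block is exactly $I-H^2$ and the $\varepsilon$ is purely a bookkeeping slack absorbing the extra-qubit padding.

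\textbf{Anticipated obstacle.} The main subtlety will be getting the signs and the placement of $\Pi_0$ exactly right when projecting the four terms of $H_P^2$ onto the top-left block — in particular distinguishing the terms where $\Pi_0$ sits adjacent to the outer $\ket{\bar0}$ (where it acts as identity and can be dropped) from those where it sits between two $U_H$'s (where it serves as the resolution of identity that produces $H\cdot H$). A clean way to sidestep the brute-force expansion is to pass to the direct-sum-of-$2\times2$-blocks picture: on each invariant two-dimensional subspace shared by $U_H$ and $\Pi_0$, write $\Pi_0 = \tfrac{1}{2}(I + Z)$ and $U_H = \cos\theta\, Z + \sin\theta\, X$ (the standard qubitization form, legitimate since $U_H^2=I$), compute $i[U_H,\Pi_0] = \tfrac{i}{2}[U_H,Z] = \sin\theta\, Y$, hence $H_P^2 = \sin^2\theta\, I = (1-\cos^2\theta) I$, and recognize $\cos\theta$ as the relevant singular value of $H$ on that block — so $H_P^2$ block-encodes $I - H^2$ with the top-left projector picking out exactly the singular values. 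I expect this $2\times2$ reduction to be the most efficient route and would present the proof that way, relegating the global operator identity to a remark.
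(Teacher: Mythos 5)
Your proposal is correct, and its core is exactly the paper's own argument: expand $H_P^2=-[U_H,\Pi_0]^2=\Pi_0+U_H\Pi_0U_H-U_H\Pi_0U_H\Pi_0-\Pi_0U_H\Pi_0U_H$ and sandwich with $(\bra{\bar 0}\otimes I)\cdot(\ket{\bar 0}\otimes I)$ to get $I+H^2-2H^2=I-H^2$; your identity $H_P=\tfrac{i}{2}[U_H,R]=\tfrac{i}{2}(V^\dagger-V)$ (using $U_H=U_H^\dagger$, $R=R^\dagger$) is likewise the paper's starting point for the query-complexity claim, up to an inconsequential global sign. Two small differences: the paper makes the implementation claim concrete by writing an explicit $(1,a+1,0)$-block encoding of $H_P$ itself, $Q=(\mathrm{Had}\otimes I)\,W_V\,(\mathrm{Had}\otimes I)(\sigma_x\otimes I)$ with $W_V=\ket{0}\bra{0}\otimes e^{i\pi/2}V+\ket{1}\bra{1}\otimes e^{-i\pi/2}V^\dagger$, which is where the extra ancilla qubit comes from, whereas you only gesture at "one controlled query to $V$ and $V^\dagger$"; and your worry that the $\varepsilon$ in the statement forces a product-of-block-encodings/robustness argument is unnecessary — the paper's computation is exact (the top-left block of $H_P^2$ is exactly $I-H^2$ since $\delta=0$), so your second reading ($\varepsilon$ as pure slack, the $+1$ ancilla from the unitary encoding of $H_P$) is the right one. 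Your proposed $2\times2$ qubitization presentation ($\Pi_0=\tfrac12(I+Z)$, $U_H=\cos\theta\,Z+\sin\theta\,X$, hence $H_P=\sin\theta\,Y$ and $H_P^2=(1-\cos^2\theta)I$ blockwise) is a valid and arguably more illuminating alternative that the paper does not use; it buys conceptual clarity about why $I-H^2$ appears, at the modest cost of invoking the Jordan/invariant-subspace decomposition (and handling the degenerate one-dimensional blocks), whereas the paper's four-term expansion is self-contained.
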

\begin{proof}
It is easy to see that $H_P=i(V-V^\dag)/2$. So if $W_V=\ket{0}\bra{0}\otimes e^{i\pi/2}V+\ket{1}\bra{1} \otimes e^{-i\pi/2}V^\dag$, then  
$Q=\left(H\otimes I\right)W_V\left(H\otimes I\right)(\sigma_x\otimes I)$ is a $(1,a+1, 0)$-block encoding of $H_P$. $Q$ is implemented by versions of $V$ and $V^\dag$ and also single Hadamard gates.

It is easy to verify that $H_P$ is a Hamiltonian (Hermitian operator) of unit norm. To prove that $H^2_P$ is a $(1,a,0)$ block encoding of $I-H^2$, observe
\begin{align}
\left(\bra{\bar{0}}\otimes I\right)H_P^2\left(\ket{\bar{0}}\otimes I\right)&=\left(\bra{\bar{0}}\otimes I\right)\left[\Pi_0+U\Pi_0 U-U\Pi_0 U\Pi_0-\Pi_0 U\Pi_0 U\right]\left(\ket{\bar{0}}\otimes I\right)\\
       &=I+H^2-2H^2=I-H^2.
\end{align}
\end{proof}

From a $(1,a+1,\varepsilon)$ block encoding of $H_P$, using QSVT, we can implement a $(1,a+3,\varepsilon)$ block encoding of $e^{-itH_P}$ using $\Theta(t+\log(1/\varepsilon))$ queries to the controlled versions of the DTQW unitary $V$ and its conjugate transpose \cite{low2019hamiltonian,gilyen2019quantum}.
~\\~\\
This implies, from a block encoding of $H$, we can simulate a continuous-time quantum walk on the vertices of $H$ (by implementing $e^{-iHt}$) as well as on the edges of $H$ (by implementing $e^{-iH_Pt}$), requiring in both cases, $\Theta\left(t+\log(1/\varepsilon\right))$ queries to the corresponding discrete-time quantum walk unitary. 
~\\~\\
\textbf{From continuous-time quantum walks to discrete-time quantum walks:~} In this section, we begin by assuming that we have access to a continuous-time quantum walk evolution operator $U=e^{iH}$. The goal would be to construct a discrete-time quantum walk, given access to $U$. For this, first we shall show that from $U$, we can obtain a block encoding of $H$ with unit sub-normalization. A good starting point is Corollary 71 of Ref.~\cite{gilyen2018quantum}, which shows that it is possible to obtain a block encoding of $H$ given access to $U$, provided $\norm{H}\leq 1/2$. We restate the same here.  
~\\
\begin{lemma}[Corollary 71 of \cite{gilyen2018quantum}]
\label{lem:ctqw-to-dtqw}
Given any $U=e^{iH}$, such that $H$ is some Hamiltonian with $\norm{H}\leq 1/2$. Let $\varepsilon\in (0,1/2]$. Then we can implement a $\left(1,2,\varepsilon\right)$-block encoding of $H$ with cost $O(\log 1/\varepsilon)$.
\end{lemma}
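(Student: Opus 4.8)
\textbf{Proof plan for Lemma~\ref{lem:ctqw-to-dtqw}.} The statement we must establish is \emph{Corollary 71 of \cite{gilyen2018quantum}}: given a continuous-time quantum walk unitary $U = e^{iH}$ with $\norm{H} \le 1/2$, we can build a $(1,2,\varepsilon)$-block encoding of $H$ at cost $O(\log(1/\varepsilon))$. Since this is quoted from the literature, the plan is to reconstruct the argument rather than discover it afresh. The natural route has two ingredients: (i) a trigonometric identity expressing $H$ (for $\norm{H}\le 1/2$, so that the eigenvalues of $H$ lie in $[-1/2,1/2]$ and the branch of $\arcsin$ is unambiguous) as a function of the unitary $U$; and (ii) the observation that $U$ and $U^\dagger$ are themselves trivial one-ancilla block encodings, so that applying a polynomial eigenvalue (singular value) transformation via QSVT (Theorem~\ref{thm:qsvt-def}) to these block encodings yields a block encoding of the desired function of $H$.

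\textbf{Key steps.} First I would write $U = e^{iH}$, so that the eigenphases $\theta_j \in [-1/2,1/2]$ of $H$ satisfy $e^{i\theta_j}$ = eigenvalue of $U$. The function we need to implement is $H = \arcsin(\sin H)$, and $\sin H = (U - U^\dagger)/(2i)$, which is exactly $H_P$-like: the Hermitian operator $(U-U^\dagger)/(2i)$ is accessible via a one-qubit linear-combination-of-unitaries gadget (a single Hadamard-sandwiched controlled application of $U$ and $U^\dagger$), giving a $(1,1,0)$-block encoding of $\sin H$. Second, since $\norm{H}\le 1/2$ we have $\norm{\sin H} \le \sin(1/2) < 1$, and on $[-\sin(1/2),\sin(1/2)]$ the function $x \mapsto \arcsin(x)$ is real-analytic and bounded; I would invoke the standard polynomial-approximation lemma for $\arcsin$ (Lemma~70 / Corollary~66-type results in \cite{gilyen2018quantum}) to get a degree-$O(\log(1/\varepsilon))$ odd polynomial $p(x)$ with $\sup_{|x|\le\sin(1/2)}|p(x) - \tfrac{2}{\pi}\arcsin(x)\cdot(\pi/2)| \le \varepsilon/2$ — i.e.\ $p$ approximates $\arcsin$ up to a harmless rescaling needed to keep $|p|\le 1$ on $[-1,1]$, which is required by QSVT. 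Third, apply Theorem~\ref{thm:qsvt-def} to the block encoding of $\sin H$ with this polynomial $p$; this costs $O(\deg p) = O(\log(1/\varepsilon))$ queries to the $\sin H$ gadget (hence to $U$, $U^\dagger$), uses one extra QSVT ancilla qubit on top of the one ancilla of the LCU gadget (total $2$), and produces a $(1,2,\varepsilon)$-block encoding of $p(\sin H)$, which by construction is $\varepsilon$-close to $H$ (after absorbing the fixed normalization constant into the claimed sub-normalization $1$, using $\norm{H}\le 1/2 < 1$). Finally I would propagate the approximation errors: the $\varepsilon/2$ polynomial-approximation error plus the robustness of QSVT (Lemma~\ref{lem:robustness_of_QSVT}, noting here the inner block encoding of $\sin H$ is \emph{exact}, so no $\sqrt{\delta}$ term appears) gives total error $\le \varepsilon$.

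\textbf{Main obstacle.} The delicate point is the normalization bookkeeping demanded by QSVT: the polynomial $p$ must satisfy $|p(x)|\le 1$ on all of $[-1,1]$, not merely on $[-\sin(1/2),\sin(1/2)]$, while simultaneously approximating $\arcsin$ well on the smaller interval where the spectrum actually lives. One must choose a slightly shrunk target function (e.g.\ $c\cdot\arcsin(x)$ for a constant $c<1$ chosen so the bound holds comfortably, or a smoothed variant vanishing near $x=\pm 1$) and then argue that the resulting block encoding, after this constant rescaling, is still a $(1,2,\varepsilon)$-block encoding of $H$ itself — which is where the hypothesis $\norm{H}\le 1/2$ is genuinely used, since it guarantees $c\arcsin(\sin H) = H$ exactly on the relevant spectral window for the right constant. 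Verifying that the degree stays $O(\log(1/\varepsilon))$ under this rescaling, and that parity (oddness) is preserved so that Theorem~\ref{thm:qsvt-def} applies in its odd-polynomial form, is the part that requires care; everything else is a routine assembly of the LCU gadget, the known $\arcsin$-approximation bound, and the QSVT robustness lemma already stated in the excerpt.
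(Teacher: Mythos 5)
The paper itself gives no proof of this lemma: it is imported verbatim as Corollary~71 of Ref.~\cite{gilyen2018quantum}, so there is nothing internal to compare against. Your reconstruction follows essentially the argument of that reference: block-encode $\sin H=(U-U^{\dagger})/(2i)$ with a single-ancilla LCU gadget (one controlled-$U$ and one controlled-$U^{\dagger}$), approximate $\arcsin$ by an odd polynomial of degree $O(\log(1/\varepsilon))$ on the spectral window $[-\sin(1/2),\sin(1/2)]$ while keeping $|p|\leq 1$ on all of $[-1,1]$, and apply odd-parity QSVT, which adds the second ancilla and incurs no $\sqrt{\delta}$ robustness penalty because the inner block encoding of $\sin H$ is exact. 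That is the right plan, and your identification of the boundedness-versus-accuracy tension as the only delicate point is accurate; it is resolved by the subinterval approximation results (Corollary~66 / Lemma~70-type constructions in \cite{gilyen2018quantum}), not by naive Taylor truncation, whose partial sums exceed $1$ at $x=\pm 1$.

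One sentence in your ``main obstacle'' paragraph is wrong as stated and should be repaired: for a constant $c<1$ one has $c\,\arcsin(\sin H)=cH$, not $H$, so implementing $c\cdot\arcsin$ yields a $(1/c,2,\varepsilon)$-block encoding of $H$ rather than the claimed $(1,2,\varepsilon)$ one. To get subnormalization exactly $1$ you should target $\arcsin$ itself ($c=1$): the hypothesis $\norm{H}\leq 1/2$ guarantees $|\arcsin(x)|\leq 1/2$ on the window, so a globally bounded degree-$O(\log(1/\varepsilon))$ odd polynomial approximating $\arcsin$ there exists, and $p(\sin H)$ is then $\varepsilon$-close to $H$ with no rescaling. (Indeed the original Corollary~71 is stated with a constant subnormalization coming from implementing a rescaled $\arcsin$; the $(1,2,\varepsilon)$ form quoted in this paper is exactly what the $c=1$ choice delivers, and is what the paper later relies on when it applies the lemma to $e^{iH/2}$ to obtain a $(2,2,\delta)$-block encoding of $H$.) With that fix, and with the cost counted in controlled applications of $U$ and $U^{\dagger}$, your argument is complete.
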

~\\
This Lemma already gives a block encoding of $H$, starting from $U$. However, one issue here is that Lemma \ref{lem:ctqw-to-dtqw} does not work when $\norm{H}=1$. This is because, the polynomial that implements this transformation , only approximates $\arcsin(x)$ in the domain $[-1+\delta, 1-\delta]$, for some $\delta >0$. For discrete-time quantum walks it is important that the sub-normalization factor of the block-encoded matrix is one. To see this, observe that Lemma \ref{lem:ctqw-to-dtqw} is effectively a $(1/2, 2, \varepsilon)$-block encoding of $H/\norm{H}$. Implementing $t$ quantum walk steps would shrink this factor to $2^{-t}$ which is undesirable. Moreover, in the context of quantum fast forwarding, the polynomials $p_{t,d}(x)$ and $q_{t,d,d'}(x)$ approximate $x^t$ and $e^{t(x-1)}$ (respectively) on the entire domain $\mathcal{I}\in[-1,1]$. However, for block-encoded matrices with normalization $\alpha>1$, we would need to approximate these functions in $[-1/\alpha, 1/\alpha]$. Using $p_{t,d}(x/\alpha)$ or $q_{t,d,d'}(x/\alpha)$ would lead to an exponential overhead of $\alpha^t$ in the cost.  

One way to circumvent this problem is to instead consider access to the continuous-time evolution operator $U=e^{iH/2}$, where now $\norm{H}=1$. Using Lemma \ref{lem:ctqw-to-dtqw}, we obtain a $(2,2,\varepsilon/2)$-block encoding of $H$ in cost $O(\log(1/\varepsilon))$. At this stage, we can make use of the procedure of uniform singular value amplification [Theorem 17 of Ref.~\cite{gilyen2019quantum}], which amplifies all the singular values (in our case the eigenvalues) of a block-encoded matrix. This allows us to obtain a $(1,3,\varepsilon)$ block encoding of $H$ as we prove next.
~\\
\begin{theorem}[From continuous-time quantum walks to discrete-time quantum walks]
\label{thm:ctqw-dtqw-main}
Suppose $\varepsilon\in (0,1)$ and $H$ is a Hermitian operator. Suppose we have access to $U=e^{iH/2}$. Then there exists a procedure that implements a $(1,3,\varepsilon)$ - block encoding of $H$ in cost $O\left(\frac{1}{\varepsilon}\log(1/\varepsilon)\right)$.
\end{theorem}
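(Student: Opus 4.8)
The plan is to build the $(1,3,\varepsilon)$-block encoding of $H$ from $U=e^{iH/2}$ in two stages. First I would invoke Lemma~\ref{lem:ctqw-to-dtqw} (Corollary~71 of \cite{gilyen2018quantum}), but applied to the operator $H/2$ rather than $H$: since $\norm{H}=1$ implies $\norm{H/2}\leq 1/2$, that lemma gives a $(1,2,\varepsilon')$-block encoding of $H/2$, equivalently a $(2,2,\varepsilon')$-block encoding of $H$, at cost $O(\log(1/\varepsilon'))$, for a precision parameter $\varepsilon'$ to be fixed later. The key point is that this first step only costs us a logarithmic factor and a subnormalization of $2$; the $\arcsin$ polynomial underlying Corollary~71 is valid on $[-1/2,1/2]$, which is exactly the range of the eigenvalues of $H/2$, so there is no domain-truncation issue here (that is precisely why I pass to $U=e^{iH/2}$ instead of $U=e^{iH}$).

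The second stage is to remove the subnormalization factor of $2$ using uniform singular value amplification, Theorem~17 of \cite{gilyen2019quantum}. Given a $(2,2,\varepsilon')$-block encoding of $H$ — i.e.\ a block encoding of $\tilde H := H/2$ with $\norm{\tilde H}\leq 1/2$ — singular value amplification implements a polynomial $P$ with $P(x)\approx 2x$ (up to error $\delta$) on the interval $[-1/2+\mu,\,1/2-\mu]$ for some margin, producing a $(1,3,\cdot)$-block encoding of an operator close to $2\tilde H = H$. The number of queries to the input block encoding (and hence to $U$) scales as $O\bigl(\tfrac{1}{\mu}\log(1/\delta)\bigr)$; choosing the amplification margin $\mu=\Theta(\varepsilon)$ and the internal precisions $\varepsilon',\delta=\Theta(\varepsilon)$ yields a total cost of $O\bigl(\tfrac{1}{\varepsilon}\log(1/\varepsilon)\bigr)$, matching the claimed bound. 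The extra ancilla qubit (going from $2$ to $3$) is the one introduced by the QSVT/amplification circuit.

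The remaining bookkeeping is error propagation: I would track how the $\varepsilon'$ error in the first block encoding feeds through the amplification polynomial, using the robustness of QSVT (Lemma~\ref{lem:robustness_of_QSVT}) — if $\norm{\tilde H - \tilde H'}\le\varepsilon'$ then $\norm{P^{SV}(\tilde H)-P^{SV}(\tilde H')}\le 4d\sqrt{\varepsilon'}$ where $d=O(\tfrac{1}{\varepsilon}\log(1/\varepsilon))$ is the degree — and then combine with the approximation error $\norm{P^{SV}(\tilde H) - 2\tilde H}$ on the relevant eigenvalue interval via the triangle inequality, so that the total is at most $\varepsilon$. One subtlety to address is that the eigenvalues of $\tilde H$ could sit arbitrarily close to $\pm 1/2$, where the amplification polynomial is only guaranteed accurate up to the margin $\mu$; handling this requires that the margin $\mu$ be taken small enough ($\Theta(\varepsilon)$) that the polynomial still $\varepsilon$-approximates $2x$ on all of $[-1/2,1/2]$ after accounting for the boundary behavior, which is why the $1/\varepsilon$ factor (rather than $\log(1/\varepsilon)$) appears in the final cost.

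The main obstacle I expect is precisely this boundary/margin issue in the amplification step: Theorem~17 of \cite{gilyen2019quantum} gives clean guarantees on a strictly interior subinterval, and to get a genuine $\varepsilon$-block encoding of $H$ valid for \emph{all} eigenvalues (including those near the spectral edge $\norm{H}=1$) one must carefully choose the amplification factor slightly below $2$, or shrink the margin, and verify the resulting polynomial is both bounded by $1$ in absolute value on $[-1,1]$ (a QSVT requirement) and $\varepsilon$-close to $2x$ on $[-1/2,1/2]$. This is the trade-off that forces the $O(1/\varepsilon)$ query cost rather than something polylogarithmic, and getting the constants and the interval arithmetic right is the only genuinely delicate part of the argument; everything else is assembling two black-box results.
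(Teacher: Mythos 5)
Your proposal is correct and follows essentially the same route as the paper: the paper's proof likewise combines Lemma~\ref{lem:ctqw-to-dtqw} (applied so that the $\norm{H}\leq 1/2$ condition holds, giving a $(2,2,\varepsilon/2)$-block encoding of $H$) with uniform singular value amplification, Theorem~17 of \cite{gilyen2019quantum}, choosing the amplification factor $\gamma=2(1-\varepsilon)$ — precisely your "amplification slightly below $2$" fix for the spectral-edge issue — to obtain the $(1,3,\varepsilon)$-block encoding at cost $O\left(\frac{1}{\varepsilon}\log(1/\varepsilon)\right)$. Your error-propagation bookkeeping is more explicit than the paper's, but the underlying argument is the same.
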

\begin{proof}
From $U$, we obtain $U_H$, which is a $(2,2,\delta)$ - block encoding of $H$ in cost $O(\log(1/\delta))$, using Lemma \ref{lem:ctqw-to-dtqw}, for any $\delta\leq \varepsilon/2$. Then, we use the uniform spectral amplification theorem [Theorem 17 of \cite{gilyen2019quantum}]. In Theorem 17 of \cite{gilyen2019quantum}, set $\gamma=2(1-\varepsilon)$. This gives us a $(1,3,\varepsilon)$ - block encoding of $H$ in cost $O(\frac{1}{\varepsilon}\log(1/\varepsilon))$.
\end{proof}

Thus given access to a continuous-time quantum walk $U=e^{iH/2}$, we can obtain $U_H$, which is a block encoding of $H$. Then, if $U^2_H=I$, following \cite{ambainis2019quadratic}, it is possible to show that if $R=(2\ket{\bar{0}}\bra{\bar{0}}-I)\otimes I$ is a reflection operator, then $V=R.U_H$ can generate a discrete-time quantum walk on the edges of $H$, as $V^{t}$ is a block encoding of $T_{t}(H)$. Thus a discrete-time quantum walk can be implemented given access to the continuous-time quantum walk evolution operator $U=e^{-iH/2}$. However, the condition $U^2_H=I$ need not be satisfied in general. However, it is easy to show that if $U_H$ is any block encoding of $H$, the unitary $V=R.U^{\dag}_H.R.U_H$ is a block encoding of the Chebyshev polynomial $T_{2}(H)$. Then $V^t$ is a block encoding of $T_{2t}(H)$, similar to the standard discrete-time quantum walk operator.

Let us now discuss the issue of fast-forwarding this block encoding of $H$. For this we need to show that given any block encoding of $H$, we can have a quantum walk that fast forwards $H^t$. We shall prove this next. Ours is a slightly more general result as compared to Ref.~\cite{apers2019unified} in that (a) it works for both even and odd $t$, and (b) it is robust: provides the precision with which $U_H$ approximates the block encoding of $H$. The later estimate is crucial for highlighting the limitations of fast-forwarding discrete-time quantum walks when given access to the time evolution ocontinuous-time quantum walk .
~\\ 
\begin{lemma}
\label{lem:block encoding-power-of-Ham}
Suppose $\varepsilon\in (0,1)$ and we have access to $U_H$, which is a $(1,a,\delta)$-block encoding of a Hamiltonian $H$ such that $\norm{H}=1$. Then, provided 
$$
\delta\leq \dfrac{\varepsilon^2}{128 t~\ln(8/\varepsilon)},
$$
for any $t\in\mathbb{N}$, we can implement a $(1,O(a+\log t+\log\log(1/\varepsilon)),O(\varepsilon))$-block encoding of $H^t$ in cost $O\left(\sqrt{t\log(1/\varepsilon)}\right)$.
\end{lemma}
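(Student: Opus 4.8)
The plan is to combine the polynomial approximation of the monomial $x^t$ (Lemma~\ref{lem:polynomial-approximating-power-monomial}) with the robustness of QSVT (Lemma~\ref{lem:robustness_of_QSVT}) and the LCU implementability of the Chebyshev expansion, mirroring the structure of the proofs of Lemma~\ref{lem:ancilla-free-dtqw-1} and Lemma~\ref{lem:block encoding-exp-of-Ham}. First I would let $H'$ be the matrix exactly block-encoded by $U_H$, so that $\nrm{H-H'}\leq\delta$. Then I would invoke Lemma~\ref{lem:polynomial-approximating-power-monomial} with $q$ a small constant (say $q=8$) and $d=\lceil\sqrt{2t\ln(8/\varepsilon)}\rceil$, which gives a polynomial $p_{t,d}(x)$ of definite parity with $\sup_{x\in[-1,1]}|x^t-p_{t,d}(x)|\leq\varepsilon/4$. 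Since $p_{t,d}$ is a linear combination of Chebyshev polynomials $T_j$ with $j\leq d$ (as in Eq.~\eqref{eq:polynomial-approximating-power}), and each $T_j(H')$ is block-encoded by $V^j$ (or by $(R.U_H^\dagger.R.U_H)^{j/2}$-type products, using $V=R\cdot U_H$), the polynomial $p_{t,d}(H')$ can be implemented via \textit{Standard LCU} over the Chebyshev coefficients, which requires $O(\log d)=O(\log t+\log\log(1/\varepsilon))$ ancilla qubits on top of the $a$ qubits of $U_H$, and cost $O(d)=O(\sqrt{t\log(1/\varepsilon)})$ queries to $V$.

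Next I would control the subnormalization. As in the proof of Lemma~\ref{lem:ancilla-free-dtqw-1}, the $\ell_1$-norm $\norm{c}_1$ of the Chebyshev coefficients of $p_{t,d}$ satisfies $\norm{c}_1\geq 1-\varepsilon/4$ (by evaluating the polynomial identity at $x=1$ and bounding the tail of the binomial sum), so the LCU block-encodes $p_{t,d}(H')/\norm{c}_1$, and $(1-\norm{c}_1)\nrm{p_{t,d}(H')/\norm{c}_1}\leq\varepsilon/4\cdot(1/(1-\varepsilon/4))$, which for $\varepsilon<1$ is $O(\varepsilon)$. Then I would assemble the error budget by the triangle inequality:
\begin{align}
\nrm{H^t-\tfrac{p_{t,d}(H')}{\norm{c}_1}}&\leq (1-\norm{c}_1)\nrm{\tfrac{p_{t,d}(H')}{\norm{c}_1}}+\nrm{H^t-p_{t,d}(H)}+\nrm{p_{t,d}(H)-p_{t,d}(H')}\nonumber\\
&\leq O(\varepsilon)+\varepsilon/4+4d\sqrt{\delta},\nonumber
\end{align}
where the middle term uses Lemma~\ref{lem:polynomial-approximating-power-monomial} and the last term uses Lemma~\ref{lem:robustness_of_QSVT} with $\nrm{H-H'}\leq\delta$. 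Substituting $d=\lceil\sqrt{2t\ln(8/\varepsilon)}\rceil$ and the hypothesis $\delta\leq\varepsilon^2/(128\,t\ln(8/\varepsilon))$ makes $4d\sqrt{\delta}\leq 4\sqrt{2t\ln(8/\varepsilon)}\cdot\varepsilon/(\sqrt{128\,t\ln(8/\varepsilon)})=\varepsilon/\sqrt{4}=\varepsilon/2$, hence the total is $O(\varepsilon)$, giving the claimed $(1,O(a+\log t+\log\log(1/\varepsilon)),O(\varepsilon))$-block encoding.

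The main obstacle I anticipate is purely bookkeeping rather than conceptual: keeping track of constants and of the parity split. For even $t$ the Chebyshev expansion in Eq.~\eqref{eq:polynomial-approximating-power} runs over even indices and one can use powers of $V$ directly when $U_H^2=I$; when $U_H^2\neq I$ one instead uses the two-step walk $R.U_H^\dagger.R.U_H$ which block-encodes $T_2(H)$, so $V^j$ block-encodes $T_{2j}(H)$ and a reindexing is needed — but here we are given $\norm{H}=1$ and the statement only asks for generic $t\in\mathbb{N}$, so I would handle odd and even $t$ by the two cases of Eq.~\eqref{eq:polynomial-approximating-power} exactly as Algorithm~\ref{algo:pow-ham} does. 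A secondary subtlety is that the ancilla count $O(a+\log t+\log\log(1/\varepsilon))$ must absorb both the $\lceil\log d\rceil$ ancillas of the LCU \textit{prepare} register and the (constant) ancilla introduced by each QSVT/Chebyshev reflection; since $\log d=O(\log t+\log\log(1/\varepsilon))$ this is immediate. No new ideas beyond those already used for Lemmas~\ref{lem:ancilla-free-dtqw-1} and \ref{lem:block encoding-exp-of-Ham} are required.
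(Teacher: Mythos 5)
Your proposal is correct and follows essentially the same route as the paper's proof: approximate $x^t$ by the truncated Chebyshev expansion $p_{t,d}$ with $d=\lceil\sqrt{2t\ln(8/\varepsilon)}\rceil$, implement it by \textit{Standard LCU} over powers of the two-step walk operator $R\,U_H^{\dag}R\,U_H$ (block-encoding $T_2(H')$), bound the subnormalization $\norm{c}_1\geq 1-\varepsilon/4$, and assemble the error via the triangle inequality together with Lemma~\ref{lem:robustness_of_QSVT}, with the hypothesis on $\delta$ giving $4d\sqrt{\delta}\leq\varepsilon/2$. The only point you leave implicit — handling odd $t$ — is done in the paper by multiplying the block encoding of $H^{2k}$ by one extra application of $U_H$ (block-encoding $T_1(H')$) and using $|x^{2k+1}-x\,p_{2k,d}(x)|\leq|x^{2k}-p_{2k,d}(x)|$, which is exactly the reindexing you anticipate.
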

\begin{proof}
We will implement a $(1,a+1,\varepsilon)$ block encoding of $H^t$ by separating out the cases where $t$ is even or odd. When $t$ is even, we implement $H^t$ by approximating it with the polynomial defined in Eq.~\eqref{eq:polynomial-approximating-power}. They are guaranteed to be $\varepsilon$-close following Lemma \ref{lem:polynomial-approximating-power-monomial}. The odd case also follows through via similar arguments.

Let $U_H$ be a $(1,a,0)$-block encoding of $H'$. Then $\norm{H-H'}\leq \delta$. Now the unitary $V=RU_H^\dag R U_H$ is a $(1,a,0)$-block encoding of $T_2(H')$. We will use LCU to implement the polynomial $p_{t,d}(H')$ defined in Eq.~\eqref{eq:polynomial-approximating-power}. The degree of the polynomial is chosen to be $d=\lceil\sqrt{2t\ln(8/\varepsilon)}\rceil$, which ensures (from Lemma \ref{lem:polynomial-approximating-power-monomial}) that $\norm{x^t-p_{t,d}(x)}\leq \varepsilon/4$. Consider the unitary $Q$ such that
\begin{equation}
\label{eq:state-prep-lcu-even-t}
Q\ket{\bar{0}}=\dfrac{1}{\sqrt{\alpha}}\sum_{l=0}^{d/2} \sqrt{c_l}\ket{l},
\end{equation}
where,
\begin{equation}
\label{eq:coefficients-lcu-even-t}
c_l=\begin{cases}
2^{1-t}\binom{t}{\frac{t}{2}+l}, & l>0\\
2^{-t}\binom{t}{t/2}, & l=0,
\end{cases}
\end{equation}
and $\alpha=\norm{c}_1$, where $c=(c_0,\cdots, c_{d/2})$. Also, define the controlled unitary
$$
W=\sum_{j=0}^{d/2}\ket{j}\bra{j}\otimes V^j,
$$
where $V=R U_H^\dag R U_H$. Then, it is easy to see, using LCU that the unitary $\widetilde{W}=(Q^\dag \otimes I)W(Q\otimes I)$ is a $(\alpha, a+\lceil\log_2 d\rceil-1, 0)$-block encoding of $p_{t,d}(H')$. That is, 
\begin{equation}
\label{eq:block encoding-lcu}
\left(\bra{\bar{0}}\otimes I\right)\widetilde{W}\left(\ket{\bar{0}}\otimes I\right)= \dfrac{p_{t,d}(H')}{\alpha},
\end{equation}
where $\alpha$ is obtained by observing that for any $x\in [-1,1]$
\begin{align}
\alpha &=\abs{x^t-\sum_{l=d/2+1}^{t}2^{1-t}\binom{t}{t/2+l}}\\
       &\geq 1-\abs{\sum_{l=d/2+1}^{t} 2^{1-t}\binom{t}{t/2+l}}\\
       &\geq 1-\varepsilon/4~~~~~~~~~~~~~~~~~~~~~~~~~~~~~~~~~~~~~~~~~~~~~~~[~\text{From Lemma \ref{lem:polynomial-approximating-power-monomial}}~].
\end{align}
Now by using triangle inequality we obtain,
\begin{align}
\norm{H^t-p_{t,d}(H')/\alpha}&\leq \norm{H^t-p_{t,d}(H')}+(1-\alpha)\norm{p_{t,d}(H')/\alpha}\\
                              &\leq \varepsilon/4+\norm{H^t-p_{t,d}(H)}+\norm{p_{t,d}(H)-p_{t,d}(H')}\\
                              &\leq \varepsilon/4+\varepsilon/4+ 4d\sqrt{\delta}~~~~~~~~~~~~~~~~~~~~~~~~~~~~~~~~~~~~~~~~~~~~~[\text{~From Lemma~\ref{lem:robustness_of_QSVT}~}]\\
                              &\leq \varepsilon/2+\varepsilon/2~~~~~~~~~~~~~~~~~~~~~~~~~~~~~~~~~~~~~~~~~~~~~~~~~~~~~~~\left[\text{~As $\delta\leq \dfrac{\varepsilon^2}{64 d^2} $~}\right]\\
                              &\leq \varepsilon.
\end{align}
Now for odd $t$, we will modify the quantum walk unitary slightly. Let $t=2k+1$ for some $k\in\mathbb{N}$. Note that 
\begin{align}
\abs{x^{2k+1}-x.p_{2k,d}(x)}\leq \abs{x}.\abs{x^{2k}-p_{2k,d}(x)}\leq \varepsilon.
\end{align}
We already know that there exists a unitary $\widetilde{W}$ which is a $(\alpha, a+\lceil \log_2 d \rceil-1,\varepsilon)$-block encoding of $H^{2k}$, which can be implemented with cost $O(\sqrt{2k\log(1/\varepsilon)})$. Note that $U_H$ is a $(1,a,\delta)$-block encoding of $H=T_1(H)$.  Furthermore, let us define $b=a+\lceil\log_2 d\rceil - 1$ to be the number of ancillary qubits required to implement the unitary $\tilde{W}$. We will show that the unitary $(I_b\otimes U_H)(I_a\otimes \widetilde{W})$ is a $(1,a+b,O(\varepsilon))$ block encoding of $H^{2k+1}$.
\begin{align}
&\norm{H^{2k+1}-(\bra{\bar{0}}^{\otimes a+b}\otimes I)(I_b\otimes U_H)(I_a \otimes \widetilde{W})(\ket{\bar{0}}^{\otimes a+b}\otimes I)}\\
&=\norm{H^{2k+1}-\underbrace{(\bra{\bar{0}}^{\otimes a}\otimes I)U_H(\ket{\bar{0}}^{\otimes a}\otimes I)}_{=H'}\underbrace{(\bra{\bar{0}}^{\otimes b}\otimes I)\widetilde{W}(\ket{\bar{0}}^{\otimes b}\otimes I)}_{=p_{2k,d}(H')/\alpha}}\\
&\leq \norm{\left(H-H'\right)H^{2k}}+\norm{H'}\norm{H^{2k}-p_{2k,d}(H')/\alpha}\\
&\leq \delta+\varepsilon=O(\varepsilon).
\end{align}
\end{proof}

Having proven this, the next question we ask is: Can the block encoding of $H$ obtained from the time evolution operator $U=e^{-iH/2}$ (Theorem \ref{thm:ctqw-dtqw-main}) be used to fast-forward discrete-time random walks? We argue here that some issues still remain. For instance, from Lemma \ref{lem:block encoding-power-of-Ham}, we can see that the precision $\delta$ required in the block encoding of $H$ is $\widetilde{O}(\varepsilon^2/t)$. Theorem \ref{thm:ctqw-dtqw-main} implies that to implement a block encoding of $H$, from $U$ would require $\widetilde{O}(t/\varepsilon^2)$ cost. Thus, the advantage of quantum fast-forwarding would be lost.

In order to avoid this, we need a polynomial of degree $t$ that approximates the monomial $(2x)^t$ in the domain $\mathcal{D}:=\left[-\frac{1}{2}(1-1/t), \frac{1}{2}(1-1/t)\right]$. The existence of such a polynomial $P(x)$ of degree $O(t\log (1/\varepsilon))$ can indeed be guaranteed by Corollary 66 of \cite{gilyen2018quantum}. For this set $f(x)=(2x)^t, x_0=0$, $r=1/2$ and $\delta=1/t$ in the corollary. We have not been able to find an explicit construction of this polynomial and leave it open for future work. Expressing $P(x)$ as a linear combination of Chebyshev polynomials, we would obtain an $\varepsilon$-approximation of $(2x)^t$ in $\mathcal{D}$ (having $\sqrt{t}$ terms). Given access to $U=e^{iH/2}$, we obtain a $(2,2,\varepsilon)$ - block encoding of $H$ using Lemma \ref{lem:ctqw-to-dtqw}. We can then directly apply QSVT directly to implement the polynomial $P(x)$ on this block-encoded Hamiltonian. This would allow us to fast-forward discrete-time quantum walks, starting from continuous-time quantum walks. We leave open the explicit construction of such a polynomial, for future work.
\end{document}